\documentclass[a4paper]{article}
\addtolength{\hoffset}{-1.6cm}
\addtolength{\textwidth}{3.2cm}
\usepackage{enumitem}
\usepackage{xcolor}
\usepackage{upgreek}

\usepackage{amsmath,amssymb,amsthm,mathtools}
\usepackage{tikz,graphicx}
\usepackage{tkz-euclide}
\usepackage{pict2e}
\usepackage{todonotes}
\usetkzobj{all}
\theoremstyle{plain}

\newtheorem*{theorem*}{Theorem}
\newtheorem{theorem}{Theorem}[section] 
\newtheorem{lemma}[theorem]{Lemma}
\newtheorem{proposition}[theorem]{Proposition}
\newtheorem{corollary}[theorem]{Corollary}

\theoremstyle{definition}
\newtheorem{definition}[theorem]{Definition}
\newtheorem{remark}[theorem]{Remark}

\usepackage{tikz}
\usepackage{pict2e}
\usetikzlibrary{arrows,calc,chains, positioning, shapes.geometric,shapes.symbols,decorations.markings,arrows.meta}
\usetikzlibrary{patterns,angles,quotes}
\usetikzlibrary{decorations.markings}
\tikzset{middlearrow/.style={
			decoration={markings,
				mark= at position 0.6 with {\arrow{#1}} ,
			},
			postaction={decorate}
		}
	}
\tikzset{->-/.style={decoration={
				markings,
				mark=at position #1 with {\arrow{latex}}},postaction={decorate}}}
	
	\tikzset{-<-/.style={decoration={
				markings,
				mark=at position #1 with {\arrowreversed{latex}}},postaction={decorate}}}

\newcommand{\ds}{\displaystyle}

\numberwithin{equation}{section}


\DeclareMathOperator{\supp}{supp}
\def\bigO{{\cal O}}

\newcommand{\lozr}{
	--++(1,1)--++(0,1)--++(-1,-1)
	--++(0,-1)
}

\newcommand{\lozd}{--++(1,1)--++(1,0)--++(-1,-1)--++(-1,0)
}

\newcommand{\lozu}{--++(1,0)--++(0,1)--++(-1,0)--++(0,-1)
}

\tikzset{
	master/.style={
		execute at end picture={
			\coordinate (lower right) at (current bounding box.south east);
			\coordinate (upper left) at (current bounding box.north west);
		}
	},
	slave/.style={
		execute at end picture={
			\pgfresetboundingbox
			\path (upper left) rectangle (lower right);
		}
	}
}

\tikzset{middlearrow/.style={
		decoration={markings,
			mark= at position 0.6 with {\arrow{#1}} ,
		},
		postaction={decorate}
	}
}

\newcommand{\re}{\text{\upshape Re\,}}
\newcommand{\im}{\text{\upshape Im\,}}

\newcommand{\C}{{\mathbb C}}

\usepackage[colorlinks=true]{hyperref}
\hypersetup{urlcolor=blue, citecolor=red, linkcolor=blue}

\DeclareMathOperator{\diag}{diag}


\def\XXint#1#2#3{{\setbox0=\hbox{$#1{#2#3}{\int}$ }
\vcenter{\hbox{$#2#3$ }}\kern-.59\wd0}}


\tikzset{
    master/.style={
        execute at end picture={
            \coordinate (lower right) at (current bounding box.south east);
            \coordinate (upper left) at (current bounding box.north west);
        }
    },
    slave/.style={
        execute at end picture={
            \pgfresetboundingbox
            \path (upper left) rectangle (lower right);
        }
    }
}

\setlength{\marginparwidth}{4cm}

\begin{document}

\title{\vspace{-1cm}Doubly periodic lozenge tilings of a hexagon \\ and matrix valued orthogonal polynomials}
\author{Christophe Charlier\footnote{Department of Mathematics, 
				Royal Institute of Technology (KTH),
				Stockholm, Sweden.  Email: cchar@kth.se.}}

\maketitle

\begin{abstract}
We analyze a random lozenge tiling model of a large regular hexagon, whose underlying weight structure is periodic of period $2$ in both the horizontal and vertical directions. This is a determinantal point process whose correlation kernel is expressed in terms of non-Hermitian matrix valued orthogonal polynomials. This model belongs to a class of models for which the existing techniques for studying asymptotics cannot be applied. The novel part of our method consists of establishing a connection between matrix valued and scalar valued orthogonal polynomials. This allows to simplify the double contour formula for the kernel obtained by Duits and Kuijlaars by reducing the size of a Riemann-Hilbert problem. The proof relies on the fact that the matrix valued weight possesses eigenvalues that live on an underlying Riemann surface $\mathcal{M}$ of genus $0$. We consider this connection of independent interest; it is natural to expect that similar ideas can be used for other matrix valued orthogonal polynomials, as long as the corresponding Riemann surface $\mathcal{M}$ is of genus $0$.

The rest of the method consists of two parts, and mainly follows the lines of a previous work of Charlier, Duits, Kuijlaars and Lenells. First, we perform a Deift-Zhou steepest descent analysis to obtain asymptotics for the scalar valued orthogonal polynomials. The main difficulty is the study of an equilibrium problem in the complex plane. Second, the asymptotics for the orthogonal polynomials are substituted in the double contour integral and the latter is analyzed using the saddle point method. 

Our main results are the limiting densities of the lozenges in the disordered flower-shaped region. However, we stress that the method allows in principle to rigorously compute other meaningful probabilistic quantities in the model. 


\end{abstract}

\section{Introduction}

A lozenge tiling of a hexagon is a collection of three different types of lozenges ($\tikz[scale=.25]{ \draw (0,0)  \lozr; }$, $\tikz[scale=.25] { \draw (0,0) \lozu; }$ and $\tikz[scale=.25] { \draw (0,0) \lozd;}$) which cover this hexagon without overlaps, see Figure \ref{fig: non-intersecting paths} (left). There are finitely many such tilings; hence by assigning to each tiling $\mathcal{T}$ a non-negative weight $\mathrm{W}(\mathcal{T})$, we define a probability measure on the tilings by
\vspace{-0.15cm}\begin{align}\label{prob over tilings}
\mathbb{P}(\mathcal{T}) = \frac{\mathrm{W}(\mathcal{T})}{\sum_{\mathcal{T}'}\mathrm{W}(\mathcal{T'})},
\end{align}
\vspace{-0.05cm}where the sum is taken over all the tilings (and is assumed to be non-zero). Uniform random tilings of a hexagon (i.e. when $\mathrm{W}(\mathcal{T})=1$ for all $\mathcal{T}$) is a well-studied model. As the size of the hexagon tends to infinity (while the size of the lozenges is kept fixed), the local statistical properties of this model are described by universal processes \cite{Jptrf,BKMM,Gorin,J17}. We also refer to \cite{CKP,KO,KOS} for important early results and to \cite{BG,K} for general references on tiling models. Uniform lozenge tilings of more complicated domains (non-necessarily convex) have also been widely studied in recent years \cite{Petrov1,Petrov2,BuGo2,AvMJ,Aggarwal}.


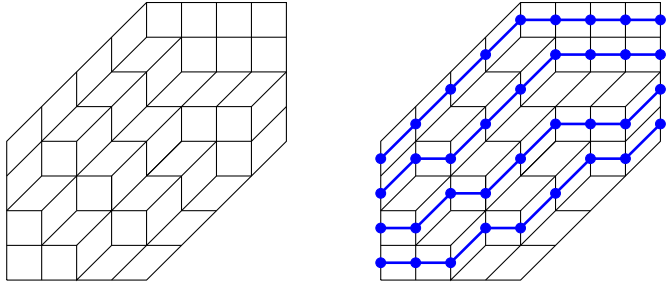
\begin{figure}
\begin{center}
\vspace{-0.5cm}
\begin{tikzpicture}[master,scale = 0.92]
\node at (0,0) {};
\draw (0,0)--(2,0)--(4,2)--(4,4)--(2,4)--(0,2)--(0,0);
\draw (0,0.5)--(1,0.5)--(1,0)--(1.5,0.5)--(2.5,0.5);
\draw (0,1)--(0.5,1)--(1,1.5)--(1.5,1.5)--(1.5,1)--(2,1)--(3,2)--(3.5,2)--(4,2.5);
\draw (0,1)--(0.5,1.5)--(1,1.5)--(2.5,3)--(4,3);
\draw (0,1.5)--(0.5,2)--(1,2)--(2.5,3.5)--(4,3.5);
\draw (0.5,2)--(2,3.5)--(2.5,3.5);
\draw (2.5,3)--(2.5,4);
\draw (1.5,1.5)--(3,3)--(3,4);
\draw (1.5,0)--(3,1.5)--(3,2.5)--(3.5,3)--(3.5,4);
\draw (3.5,1.5)--(3.5,2.5)--(4,3);
\draw (0.5,0)--(0.5,1);
\draw (0.5,1.5)--(0.5,2.5);
\draw (1,1)--(1,2); \draw (1,2.5)--(1,3);
\draw (1.5,0.5)--(1.5,1.5);
\draw (1.5,2)--(1.5,2.5); \draw (1.5,3)--(1.5,3.5);
\draw (2,0.5)--(2,1); \draw (2,1.5)--(2,2); \draw (2,2.5)--(2,3); \draw (2,3.5)--(2,4);
\draw (2.5,1)--(2.5,1.5); \draw (2.5,2)--(2.5,2.5);
\draw (2,2.5)--(3.5,2.5);
\draw (0.5,0.5)--(1,1); \draw (1,0.5)--(1.5,1)--(1,1);
\draw (1.5,1)--(2.5,2)--(3,2);
\draw (1,2.5)--(1.5,2.5);
\draw (1.5,3)--(2,3);
\draw (1.5,2)--(2,2);
\draw (2,1.5)--(2.5,2);
\draw (2,1.5)--(2.5,1.5);
\draw (2.5,1)--(3,1); 
\draw (3,1.5)--(3.5,1.5);
\end{tikzpicture} \hspace{1cm}\begin{tikzpicture}[slave,scale = 0.92]
\node at (0,0) {};
\draw (0,0)--(2,0)--(4,2)--(4,4)--(2,4)--(0,2)--(0,0);
\draw (0,0.5)--(1,0.5)--(1,0)--(1.5,0.5)--(2.5,0.5);
\draw (0,1)--(0.5,1)--(1,1.5)--(1.5,1.5)--(1.5,1)--(2,1)--(3,2)--(3.5,2)--(4,2.5);
\draw (0,1)--(0.5,1.5)--(1,1.5)--(2.5,3)--(4,3);
\draw (0,1.5)--(0.5,2)--(1,2)--(2.5,3.5)--(4,3.5);
\draw (0.5,2)--(2,3.5)--(2.5,3.5);
\draw (2.5,3)--(2.5,4);
\draw (1.5,1.5)--(3,3)--(3,4);
\draw (1.5,0)--(3,1.5)--(3,2.5)--(3.5,3)--(3.5,4);
\draw (3.5,1.5)--(3.5,2.5)--(4,3);
\draw (0.5,0)--(0.5,1);
\draw (0.5,1.5)--(0.5,2.5);
\draw (1,1)--(1,2); \draw (1,2.5)--(1,3);
\draw (1.5,0.5)--(1.5,1.5);
\draw (1.5,2)--(1.5,2.5); \draw (1.5,3)--(1.5,3.5);
\draw (2,0.5)--(2,1); \draw (2,1.5)--(2,2); \draw (2,2.5)--(2,3); \draw (2,3.5)--(2,4);
\draw (2.5,1)--(2.5,1.5); \draw (2.5,2)--(2.5,2.5);
\draw (2,2.5)--(3.5,2.5);
\draw (0.5,0.5)--(1,1); \draw (1,0.5)--(1.5,1)--(1,1);
\draw (1.5,1)--(2.5,2)--(3,2);
\draw (1,2.5)--(1.5,2.5);
\draw (1.5,3)--(2,3);
\draw (1.5,2)--(2,2);
\draw (2,1.5)--(2.5,2);
\draw (2,1.5)--(2.5,1.5);
\draw (2.5,1)--(3,1); 
\draw (3,1.5)--(3.5,1.5);

\draw[blue,fill] (0,0.25) circle (0.7mm);
\draw[blue,fill] (0,0.75) circle (0.7mm);
\draw[blue,fill] (0,1.25) circle (0.7mm);
\draw[blue,fill] (0,1.75) circle (0.7mm);

\draw[blue,fill] (0.5,0.25) circle (0.7mm);
\draw[blue,fill] (0.5,0.75) circle (0.7mm);
\draw[blue,fill] (0.5,1.75) circle (0.7mm);
\draw[blue,fill] (0.5,2.25) circle (0.7mm);

\draw[blue,fill] (1,0.25) circle (0.7mm);
\draw[blue,fill] (1,1.25) circle (0.7mm);
\draw[blue,fill] (1,1.75) circle (0.7mm);
\draw[blue,fill] (1,2.75) circle (0.7mm);

\draw[blue,fill] (1.5,0.75) circle (0.7mm);
\draw[blue,fill] (1.5,1.25) circle (0.7mm);
\draw[blue,fill] (1.5,2.25) circle (0.7mm);
\draw[blue,fill] (1.5,3.25) circle (0.7mm);

\draw[blue,fill] (2,0.75) circle (0.7mm);
\draw[blue,fill] (2,1.75) circle (0.7mm);
\draw[blue,fill] (2,2.75) circle (0.7mm);
\draw[blue,fill] (2,3.75) circle (0.7mm);

\draw[blue,fill] (2.5,1.25) circle (0.7mm);
\draw[blue,fill] (2.5,2.25) circle (0.7mm);
\draw[blue,fill] (2.5,3.25) circle (0.7mm);
\draw[blue,fill] (2.5,3.75) circle (0.7mm);

\draw[blue,fill] (3,1.75) circle (0.7mm);
\draw[blue,fill] (3,2.25) circle (0.7mm);
\draw[blue,fill] (3,3.25) circle (0.7mm);
\draw[blue,fill] (3,3.75) circle (0.7mm);

\draw[blue,fill] (3.5,1.75) circle (0.7mm);
\draw[blue,fill] (3.5,2.25) circle (0.7mm);
\draw[blue,fill] (3.5,3.25) circle (0.7mm);
\draw[blue,fill] (3.5,3.75) circle (0.7mm);

\draw[blue,fill] (4,2.25) circle (0.7mm);
\draw[blue,fill] (4,2.75) circle (0.7mm);
\draw[blue,fill] (4,3.25) circle (0.7mm);
\draw[blue,fill] (4,3.75) circle (0.7mm);

\draw[blue,line width=0.35 mm] (0,0.25)--(1,0.25)--(1.5,0.75)--(2,0.75)--(3,1.75)--(3.5,1.75)--(4,2.25);
\draw[blue,line width=0.35 mm] (0,0.75)--(0.5,0.75)--(1,1.25)--(1.5,1.25)--(2.5,2.25)--(3.5,2.25)--(4,2.75);
\draw[blue,line width=0.35 mm] (0,1.25)--(0.5,1.75)--(1,1.75)--(2.5,3.25)--(4,3.25);
\draw[blue,line width=0.35 mm] (0,1.75)--(2,3.75)--(4,3.75);
\end{tikzpicture}
\end{center}
\vspace{-0.5cm}\caption{\label{fig: non-intersecting paths}A tiling of a hexagon, and the associated non-intersecting paths.}
\end{figure}


In this work, we consider the regular hexagon of (large) size $n$
\begin{align}\label{def of Hn}
\mathcal{H}_{n} := \{(x,y)\in \mathbb{R}^{2}: 0 \leq x \leq 2n, 0 \leq y \leq 2n, -n \leq x-y \leq n\}, \qquad n \in \mathbb{N}_{\geq 1},
\end{align}
but we deviate from the uniform measure and study instead measures with periodic weightings. To explain what this means, we first briefly recall a well-known one-to-one correspondence between tilings of a hexagon and certain non-intersecting paths. This bijection can be written down explicitly, but is best understood informally. The paths are obtained by drawing lines on top of two types of lozenges
\vspace{-0.3cm}
\begin{align}\label{non-intersecting paths}
\tikz[scale=.5]{ \draw (0,0)  \lozr; \draw[very thick] (0,.5)--(1,1.5); \filldraw (0,0.5) circle(3pt);  \filldraw (1,1.5) circle(3pt); }  \quad \tikz[scale=.5] { \draw (0,0) \lozu; \draw[very thick] (0,.5)--(1,.5); \filldraw (0,0.5) circle(3pt);  \filldraw (1,0.5) circle(3pt); }\quad \mbox{and} \quad   \tikz[scale=.5] { \draw (0,0) \lozd;},
\end{align}
as shown in Figure \ref{fig: non-intersecting paths} (right). The paths associated to the tilings of $\mathcal{H}_{n}$ lie on a graph $\mathcal{G}_{n}$ which depends only on the size of the hexagon, see Figure \ref{fig: lattice} (left). To each edge $\mathfrak{e}$ of $\mathcal{G}_{n}$, we assign a non-negative weight $w_{\mathfrak{e}}$. The weight of a path $\mathfrak{p}$ is then defined as $w_{\mathfrak{p}}=\prod_{\mathfrak{e} \in \mathfrak{p}} w_{\mathfrak{e}}$, and the weight of a tiling $\mathcal{T}$ as $\mathrm{W}(\mathcal{T})=\prod_{\mathfrak{p} \in \mathcal{T}} w_{\mathfrak{p}}$. Provided that at least one tiling has a positive weight, this defines a probability measure on the set of tilings by \eqref{prob over tilings}. If each edge is assigned the same weight, then we recover the uniform measure over tilings. We say that a lozenge tiling model has $p \times q$ periodic weightings if the weight structure on the edges is periodic of period $p$ in the vertical direction, and periodic of period $q$ in the horizontal direction, see Figure \ref{fig: lattice} (right) for an illustration with $p=2$ and $q=3$. Thus a $p \times q$ periodic weighting is completely determined by $2pq$ edge weights. Note that all paths share the same number of horizontal edges, and also the same number of oblique edges; hence lozenge tiling models with $1 \times 1$ periodic weightings are all equivalent to the uniform measure.



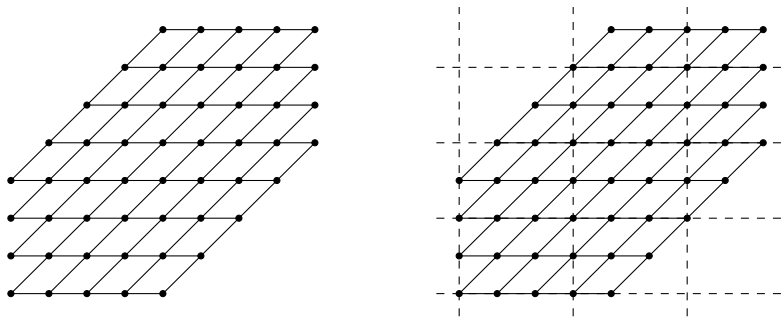
\begin{figure}[h]
\begin{center}
\vspace{0cm}
\begin{tikzpicture}[master]
\node at (0,0) {};

\draw[fill] (0,0) circle (0.4mm);
\draw[fill] (0.5,0) circle (0.4mm);
\draw[fill] (1,0) circle (0.4mm);
\draw[fill] (1.5,0) circle (0.4mm);
\draw[fill] (2,0) circle (0.4mm);

\draw[fill] (0,0.5) circle (0.4mm);
\draw[fill] (0.5,0.5) circle (0.4mm);
\draw[fill] (1,0.5) circle (0.4mm);
\draw[fill] (1.5,0.5) circle (0.4mm);
\draw[fill] (2,0.5) circle (0.4mm);
\draw[fill] (2.5,0.5) circle (0.4mm);

\draw[fill] (0,1) circle (0.4mm);
\draw[fill] (0.5,1) circle (0.4mm);
\draw[fill] (1,1) circle (0.4mm);
\draw[fill] (1.5,1) circle (0.4mm);
\draw[fill] (2,1) circle (0.4mm);
\draw[fill] (2.5,1) circle (0.4mm);
\draw[fill] (3,1) circle (0.4mm);

\draw[fill] (0,1.5) circle (0.4mm);
\draw[fill] (0.5,1.5) circle (0.4mm);
\draw[fill] (1,1.5) circle (0.4mm);
\draw[fill] (1.5,1.5) circle (0.4mm);
\draw[fill] (2,1.5) circle (0.4mm);
\draw[fill] (2.5,1.5) circle (0.4mm);
\draw[fill] (3,1.5) circle (0.4mm);
\draw[fill] (3.5,1.5) circle (0.4mm);

\draw[fill] (0.5,2) circle (0.4mm);
\draw[fill] (1,2) circle (0.4mm);
\draw[fill] (1.5,2) circle (0.4mm);
\draw[fill] (2,2) circle (0.4mm);
\draw[fill] (2.5,2) circle (0.4mm);
\draw[fill] (3,2) circle (0.4mm);
\draw[fill] (3.5,2) circle (0.4mm);
\draw[fill] (4,2) circle (0.4mm);

\draw[fill] (1,2.5) circle (0.4mm);
\draw[fill] (1.5,2.5) circle (0.4mm);
\draw[fill] (2,2.5) circle (0.4mm);
\draw[fill] (2.5,2.5) circle (0.4mm);
\draw[fill] (3,2.5) circle (0.4mm);
\draw[fill] (3.5,2.5) circle (0.4mm);
\draw[fill] (4,2.5) circle (0.4mm);

\draw[fill] (1.5,3) circle (0.4mm);
\draw[fill] (2,3) circle (0.4mm);
\draw[fill] (2.5,3) circle (0.4mm);
\draw[fill] (3,3) circle (0.4mm);
\draw[fill] (3.5,3) circle (0.4mm);
\draw[fill] (4,3) circle (0.4mm);

\draw[fill] (2,3.5) circle (0.4mm);
\draw[fill] (2.5,3.5) circle (0.4mm);
\draw[fill] (3,3.5) circle (0.4mm);
\draw[fill] (3.5,3.5) circle (0.4mm);
\draw[fill] (4,3.5) circle (0.4mm);


\draw (0,0)--(0.5,0);
\draw (0.5,0)--(1,0);
\draw (1,0)--(1.5,0);
\draw (1.5,0)--(2,0);

\draw (0,0.5)--(0.5,0.5);
\draw (0.5,0.5)--(1,0.5);
\draw (1,0.5)--(1.5,0.5);
\draw (1.5,0.5)--(2,0.5);
\draw (2,0.5)--(2.5,0.5);

\draw (0,1)--(0.5,1);
\draw (0.5,1)--(1,1);
\draw (1,1)--(1.5,1);
\draw (1.5,1)--(2,1);
\draw (2,1)--(2.5,1);
\draw (2.5,1)--(3,1);

\draw (0,1.5)--(0.5,1.5);
\draw (0.5,1.5)--(1,1.5);
\draw (1,1.5)--(1.5,1.5);
\draw (1.5,1.5)--(2,1.5);
\draw (2,1.5)--(2.5,1.5);
\draw (2.5,1.5)--(3,1.5);
\draw (3,1.5)--(3.5,1.5);

\draw (0.5,2)--(1,2);
\draw (1,2)--(1.5,2);
\draw (1.5,2)--(2,2);
\draw (2,2)--(2.5,2);
\draw (2.5,2)--(3,2);
\draw (3,2)--(3.5,2);
\draw (3.5,2)--(4,2);

\draw (1,2.5)--(1.5,2.5);
\draw (1.5,2.5)--(2,2.5);
\draw (2,2.5)--(2.5,2.5);
\draw (2.5,2.5)--(3,2.5);
\draw (3,2.5)--(3.5,2.5);
\draw (3.5,2.5)--(4,2.5);

\draw (1.5,3)--(2,3);
\draw (2,3)--(2.5,3);
\draw (2.5,3)--(3,3);
\draw (3,3)--(3.5,3);
\draw (3.5,3)--(4,3);

\draw (2,3.5)--(2.5,3.5);
\draw (2.5,3.5)--(3,3.5);
\draw (3,3.5)--(3.5,3.5);
\draw (3.5,3.5)--(4,3.5);


\draw (0,0)--(0.5,0.5);
\draw (0.5,0)--(1,0.5);
\draw (1,0)--(1.5,0.5);
\draw (1.5,0)--(2,0.5);
\draw (2,0)--(2.5,0.5);

\draw (0,0.5)--(0.5,1);
\draw (0.5,0.5)--(1,1);
\draw (1,0.5)--(1.5,1);
\draw (1.5,0.5)--(2,1);
\draw (2,0.5)--(2.5,1);
\draw (2.5,0.5)--(3,1);

\draw (0,1)--(0.5,1.5);
\draw (0.5,1)--(1,1.5);
\draw (1,1)--(1.5,1.5);
\draw (1.5,1)--(2,1.5);
\draw (2,1)--(2.5,1.5);
\draw (2.5,1)--(3,1.5);
\draw (3,1)--(3.5,1.5);

\draw (0,1.5)--(0.5,2);
\draw (0.5,1.5)--(1,2);
\draw (1,1.5)--(1.5,2);
\draw (1.5,1.5)--(2,2);
\draw (2,1.5)--(2.5,2);
\draw (2.5,1.5)--(3,2);
\draw (3,1.5)--(3.5,2);
\draw (3.5,1.5)--(4,2);

\draw (0.5,2)--(1,2.5);
\draw (1,2)--(1.5,2.5);
\draw (1.5,2)--(2,2.5);
\draw (2,2)--(2.5,2.5);
\draw (2.5,2)--(3,2.5);
\draw (3,2)--(3.5,2.5);
\draw (3.5,2)--(4,2.5);

\draw (1,2.5)--(1.5,3);
\draw (1.5,2.5)--(2,3);
\draw (2,2.5)--(2.5,3);
\draw (2.5,2.5)--(3,3);
\draw (3,2.5)--(3.5,3);
\draw (3.5,2.5)--(4,3);

\draw (1.5,3)--(2,3.5);
\draw (2,3)--(2.5,3.5);
\draw (2.5,3)--(3,3.5);
\draw (3,3)--(3.5,3.5);
\draw (3.5,3)--(4,3.5);

\end{tikzpicture}
\hspace{1.5cm}
\begin{tikzpicture}[slave]
\node at (0,0) {};

\draw[fill] (0,0) circle (0.4mm);
\draw[fill] (0.5,0) circle (0.4mm);
\draw[fill] (1,0) circle (0.4mm);
\draw[fill] (1.5,0) circle (0.4mm);
\draw[fill] (2,0) circle (0.4mm);

\draw[fill] (0,0.5) circle (0.4mm);
\draw[fill] (0.5,0.5) circle (0.4mm);
\draw[fill] (1,0.5) circle (0.4mm);
\draw[fill] (1.5,0.5) circle (0.4mm);
\draw[fill] (2,0.5) circle (0.4mm);
\draw[fill] (2.5,0.5) circle (0.4mm);

\draw[fill] (0,1) circle (0.4mm);
\draw[fill] (0.5,1) circle (0.4mm);
\draw[fill] (1,1) circle (0.4mm);
\draw[fill] (1.5,1) circle (0.4mm);
\draw[fill] (2,1) circle (0.4mm);
\draw[fill] (2.5,1) circle (0.4mm);
\draw[fill] (3,1) circle (0.4mm);

\draw[fill] (0,1.5) circle (0.4mm);
\draw[fill] (0.5,1.5) circle (0.4mm);
\draw[fill] (1,1.5) circle (0.4mm);
\draw[fill] (1.5,1.5) circle (0.4mm);
\draw[fill] (2,1.5) circle (0.4mm);
\draw[fill] (2.5,1.5) circle (0.4mm);
\draw[fill] (3,1.5) circle (0.4mm);
\draw[fill] (3.5,1.5) circle (0.4mm);

\draw[fill] (0.5,2) circle (0.4mm);
\draw[fill] (1,2) circle (0.4mm);
\draw[fill] (1.5,2) circle (0.4mm);
\draw[fill] (2,2) circle (0.4mm);
\draw[fill] (2.5,2) circle (0.4mm);
\draw[fill] (3,2) circle (0.4mm);
\draw[fill] (3.5,2) circle (0.4mm);
\draw[fill] (4,2) circle (0.4mm);

\draw[fill] (1,2.5) circle (0.4mm);
\draw[fill] (1.5,2.5) circle (0.4mm);
\draw[fill] (2,2.5) circle (0.4mm);
\draw[fill] (2.5,2.5) circle (0.4mm);
\draw[fill] (3,2.5) circle (0.4mm);
\draw[fill] (3.5,2.5) circle (0.4mm);
\draw[fill] (4,2.5) circle (0.4mm);

\draw[fill] (1.5,3) circle (0.4mm);
\draw[fill] (2,3) circle (0.4mm);
\draw[fill] (2.5,3) circle (0.4mm);
\draw[fill] (3,3) circle (0.4mm);
\draw[fill] (3.5,3) circle (0.4mm);
\draw[fill] (4,3) circle (0.4mm);

\draw[fill] (2,3.5) circle (0.4mm);
\draw[fill] (2.5,3.5) circle (0.4mm);
\draw[fill] (3,3.5) circle (0.4mm);
\draw[fill] (3.5,3.5) circle (0.4mm);
\draw[fill] (4,3.5) circle (0.4mm);


\draw (0,0)--(0.5,0);
\draw (0.5,0)--(1,0);
\draw (1,0)--(1.5,0);
\draw (1.5,0)--(2,0);

\draw (0,0.5)--(0.5,0.5);
\draw (0.5,0.5)--(1,0.5);
\draw (1,0.5)--(1.5,0.5);
\draw (1.5,0.5)--(2,0.5);
\draw (2,0.5)--(2.5,0.5);

\draw (0,1)--(0.5,1);
\draw (0.5,1)--(1,1);
\draw (1,1)--(1.5,1);
\draw (1.5,1)--(2,1);
\draw (2,1)--(2.5,1);
\draw (2.5,1)--(3,1);

\draw (0,1.5)--(0.5,1.5);
\draw (0.5,1.5)--(1,1.5);
\draw (1,1.5)--(1.5,1.5);
\draw (1.5,1.5)--(2,1.5);
\draw (2,1.5)--(2.5,1.5);
\draw (2.5,1.5)--(3,1.5);
\draw (3,1.5)--(3.5,1.5);

\draw (0.5,2)--(1,2);
\draw (1,2)--(1.5,2);
\draw (1.5,2)--(2,2);
\draw (2,2)--(2.5,2);
\draw (2.5,2)--(3,2);
\draw (3,2)--(3.5,2);
\draw (3.5,2)--(4,2);

\draw (1,2.5)--(1.5,2.5);
\draw (1.5,2.5)--(2,2.5);
\draw (2,2.5)--(2.5,2.5);
\draw (2.5,2.5)--(3,2.5);
\draw (3,2.5)--(3.5,2.5);
\draw (3.5,2.5)--(4,2.5);

\draw (1.5,3)--(2,3);
\draw (2,3)--(2.5,3);
\draw (2.5,3)--(3,3);
\draw (3,3)--(3.5,3);
\draw (3.5,3)--(4,3);

\draw (2,3.5)--(2.5,3.5);
\draw (2.5,3.5)--(3,3.5);
\draw (3,3.5)--(3.5,3.5);
\draw (3.5,3.5)--(4,3.5);


\draw (0,0)--(0.5,0.5);
\draw (0.5,0)--(1,0.5);
\draw (1,0)--(1.5,0.5);
\draw (1.5,0)--(2,0.5);
\draw (2,0)--(2.5,0.5);

\draw (0,0.5)--(0.5,1);
\draw (0.5,0.5)--(1,1);
\draw (1,0.5)--(1.5,1);
\draw (1.5,0.5)--(2,1);
\draw (2,0.5)--(2.5,1);
\draw (2.5,0.5)--(3,1);

\draw (0,1)--(0.5,1.5);
\draw (0.5,1)--(1,1.5);
\draw (1,1)--(1.5,1.5);
\draw (1.5,1)--(2,1.5);
\draw (2,1)--(2.5,1.5);
\draw (2.5,1)--(3,1.5);
\draw (3,1)--(3.5,1.5);

\draw (0,1.5)--(0.5,2);
\draw (0.5,1.5)--(1,2);
\draw (1,1.5)--(1.5,2);
\draw (1.5,1.5)--(2,2);
\draw (2,1.5)--(2.5,2);
\draw (2.5,1.5)--(3,2);
\draw (3,1.5)--(3.5,2);
\draw (3.5,1.5)--(4,2);

\draw (0.5,2)--(1,2.5);
\draw (1,2)--(1.5,2.5);
\draw (1.5,2)--(2,2.5);
\draw (2,2)--(2.5,2.5);
\draw (2.5,2)--(3,2.5);
\draw (3,2)--(3.5,2.5);
\draw (3.5,2)--(4,2.5);

\draw (1,2.5)--(1.5,3);
\draw (1.5,2.5)--(2,3);
\draw (2,2.5)--(2.5,3);
\draw (2.5,2.5)--(3,3);
\draw (3,2.5)--(3.5,3);
\draw (3.5,2.5)--(4,3);

\draw (1.5,3)--(2,3.5);
\draw (2,3)--(2.5,3.5);
\draw (2.5,3)--(3,3.5);
\draw (3,3)--(3.5,3.5);
\draw (3.5,3)--(4,3.5);

\draw[dashed] (-0.3,0)--(4.3,0);
\draw[dashed] (-0.3,1)--(4.3,1);
\draw[dashed] (-0.3,2)--(4.3,2);
\draw[dashed] (-0.3,3)--(4.3,3);
\draw[dashed] (0,-0.3)--(0,3.8);
\draw[dashed] (1.5,-0.3)--(1.5,3.8);
\draw[dashed] (3,-0.3)--(3,3.8);
\end{tikzpicture}
\end{center}
\vspace{-0.35cm}
\caption{\label{fig: lattice}The graph $\mathcal{G}_{4}$, and the periods of a $2 \times 3$ periodic weighting.}
\end{figure}


\vspace{-0.1cm}By putting points on the paths as shown in \eqref{non-intersecting paths}, each tiling of the hexagon gives rise to a point configuration, see also Figure \ref{fig: non-intersecting paths} (right). Thus the probability measure \eqref{prob over tilings} on tilings can be viewed as a discrete point process \cite{Borodin,Soshnikov}. For lozenge tiling models with $p \times q$ periodic weightings, it follows from the Lindstr\"{o}m-Gessel-Viennot theorem \cite{GV,L} combined with the Eynard-Mehta theorem \cite{EM} that this point process is determinantal. Therefore, to understand the fine asymptotic structure (as $n \to + \infty$) it suffices to analyze the asymptotic behavior of the correlation kernel. However, until recently \cite{DK,CDKL}, the existing techniques were not appropriate for such analysis. 

\vspace{0.2cm}The main result of \cite{DK} is a double contour formula for the correlation kernels of various tiling models with periodic weightings (including lozenge tiling models of a hexagon as considered here). In this formula, the integrand is expressed in terms of the solution (denoted $Y$) to a $2p \times 2p$ Riemann-Hilbert (RH) problem. This RH problem is related to certain orthogonal polynomials (OPs), which are non-standard in two aspects:
\begin{itemize}
\item \vspace{-0.2cm}the OPs and the weight are $p \times p$ matrix valued,
\item \vspace{-0.2cm}the orthogonality conditions are non-hermitian.
\end{itemize}
The size of the RH problem, the size of the weight, and the size of the OPs all depend on $p$, but quite interestingly not on $q$.

%
%

\vspace{0.2cm}Lozenge tiling models of the hexagon with $p \times q$ periodic weightings are rather unexplored up to now. To the best of our knowledge, the model considered in \cite{CDKL} is the only one (other than the uniform measure) prior to the present work for which results on fine asymptotics exist. The model considered in \cite{CDKL} is $1 \times 2$ periodic and uses the formula of \cite{DK} as the starting point of the analysis. The techniques of \cite{CDKL} combine the Deift/Zhou steepest descent method \cite{DZ} of $Y$ (of size $2 \times 2$) with a non-standard saddle point analysis of the double contour integral. However, since $p=1$, the associated OPs are scalar (this fact was extensively used in the proof) and it is not clear how to generalize these techniques to the case $p \geq 2$.

\vspace{0.2cm} The aim of this paper is precisely to develop a method to handle a situation involving matrix valued orthogonal polynomials. We will implement this method on a particular lozenge tiling model with $2 \times 2$ periodic weightings, which presents one simply connected liquid region (which has the shape of a flower with $6$ petals), $6$ frozen regions, and $6$ staircase regions (also called semi-frozen regions); it will be presented in more detail in Section \ref{Section: model}. The starting point of our analysis is the double contour formula from \cite{DK} which expresses the kernel in terms of a $4 \times 4$ RH problem related to $2 \times 2$ matrix valued OPs. The method can be summarized in three steps as follows:
\begin{enumerate}
\item First, we establish a connection between matrix valued and scalar valued orthogonal polynomials. In particular, in Theorem \ref{thm: correlation kernel final scalar expression} we obtain a new expression for the kernel in terms of the solution (denoted $U$) to $2 \times 2$ RH problem related to scalar OPs. This formula allows for a simpler analysis than the original formula from \cite{DK}. 
\end{enumerate}
\begin{enumerate}\setcounter{enumi}{1}
\item Second, we perform an asymptotic analysis of the RH problem for $U$ via the Deift-Zhou steepest descent method. The construction of the equilibrium measure and the associated $g$-function is the main difficulty. The remaining part of the RH analysis is rather standard.
\item Third, the asymptotics for the orthogonal polynomials are substituted in the double contour integral and the latter is analyzed using the saddle point method.
\end{enumerate}
The first step is the main novel part of the paper. The remaining two steps were first developped in \cite{CDKL} for a tiling model with $1 \times 2$ periodic weightings.

\medskip 

Our main results, which are stated in Theorem \ref{thm:main}, are the limiting densities of the different lozenges in the liquid region. However, we emphasize that the method also allows in principle to rigorously compute more sophisticated asymptotic behaviors in the model (such as the limiting process in the bulk).


\paragraph*{An expression for the kernel in terms of scalar OPs.} The eigenvalues and eigenvectors of the $2 \times 2$ orthogonality weight play an important role in the first step of the analysis. They are naturally defined on a $2$-sheeted Riemann surface $\mathcal{M}$, which turns out to be of genus $0$. This fact is crucial to obtain the new formula for the kernel in terms of scalar OPs. We expect that ideas similar to the ones presented here can be applied to other tiling models with periodic weightings, as long as the corresponding Riemann surface $\mathcal{M}$ is of genus $0$.

\vspace{0.2cm}Lozenge tiling models of large hexagons with periodic weightings can feature all of the three possible types of phases known in random tiling models: the solid, liquid and gas phases. A solid region (also called frozen region) is filled with one type of lozenges. In the liquid and gas phases, all three types of lozenges coexist. The difference between these two phases is reflected in the correlations between two points: in the liquid region, the correlation decay is polynomial with the distance between the points, while in the gas region the decay is exponential. It is known that there is no gas phase for the uniform measure (corresponding to $p=q=1$). In fact, it is expected that the smallest periods that lead to the presence of a gas phase are either $p=2,q=3$ or $p=3, q=2$. For models that present gas phases, we expect $\mathcal{M}$ to have genus at least $1$, and then new techniques are required. This is left for future works. 

\paragraph*{Related works.} Random lozenge tilings of the regular hexagon is a particular example of a tiling model. We briefly review here other tiling models with periodic weightings that have been studied in the literature and for which more results are known. We also discuss the related techniques and explain why they cannot be applied in our case.

\vspace{0.2cm}The Aztec diamond is a well-studied tiling model \cite{JPS,CEP,CKP,J05}. It consists of covering the region $\{(x,y):|x|+|y| \leq n+1\}$ with $2 \times 1$ or $1 \times 2$ rectangles (called dominos), where $n  > 0$ is an integer which parametrizes the size of the covered region. Uniform domino tilings of the Aztec diamond features $4$ solid regions and one liquid region. The associated  discrete point process is determinantal, and turns out to belong to the class of Schur processes (introduced in \cite{OR1}), for which there exists a double contour integral for the kernel that is suitable for an asymptotic analysis as $n \to + \infty$. Another important Schur process is the infinite hexagon with $1 \times k$ periodic weightings. The infinite hexagon is a non-regular hexagon whose vertical side is first sent to infinity either from above or from below, see e.g. \cite[Figure 14]{BeD} for an illustration. For more examples of other interesting tiling models that fall in the Schur process class, see e.g. \cite{BG}. Uniform lozenge tilings of the finite hexagon (such as $\mathcal{H}_{n}$) do not belong to the Schur class, but have been studied using other techniques based on some connections with Hahn polynomials \cite{Jptrf}: the limiting kernel in the bulk scaling regime has been established in \cite{BKMM} using a discrete RH problem, and in \cite{Gorin} using the approach developed in \cite{BO}.

\vspace{0.2cm}The doubly periodic Aztec diamond exhibits all three phases. It still defines a determinantal point process, but it falls outside of the Schur process class. However, Chhita and Young found in \cite{CY2014} a formula for the correlation kernel by performing an explicit inversion of the Kasteleyn matrix. This formula was further simplified in \cite{CJ} and then used in \cite{CJ, BCJ} to obtain fine asymptotic results on the fluctuations of the liquid-gas boundary as $n \to + \infty$. This same model was analyzed soon afterward in \cite{DK} via a different (and more general) method based on matrix valued orthogonal polynomials and a related RH problem. For the doubly periodic Aztec diamond, this RH problem is surprisingly simple in the sense that it can be solved explicitly for finite $n$. The analysis of \cite{CY2014,CJ,DK} rely on the rather special integrable structure of the doubly periodic Aztec diamond. However, the approach of \cite{DK} applies to a much wider range of tiling models. Berggren and Duits \cite{BeD} have recently identified a whole class of tiling/path models for which it is possible to simplify significantly the formula of \cite{DK}. Quite remarkably, their final expression for the kernel does not involve any RH problem or OPs, which simplifies substantially the saddle point analysis. Using the results from \cite{BeD}, Berggren in \cite{Berggren} recently studied the $2 \times k$ periodic Aztec diamond, for an arbitrary $k$. The class of models for which the formula from \cite{BeD} applies roughly consists of the models with an infinite number of paths whose (possibly matrix valued) orthogonality weight has a Wiener-Hopf type factorization. This class of models contains the Schur class, but also (among others) the doubly periodic Aztec diamond and doubly periodic lozenge tilings of an infinite hexagon.

\vspace{0.2cm}However, lozenge tiling models of the finite hexagon cannot be represented as models with infinitely many paths (as opposed to the Aztec diamond and the infinite hexagon). In particular, they do not belong to the class of models studied in \cite{BeD} and thus the simplified formula from \cite{BeD} cannot be used. This fact makes lozenge tiling models of the finite hexagon harder to analyze asymptotically (see also the comment in \cite[beginning of Section 6]{BeD}). 


\paragraph{The figures.} In addition to being in bijection with non-intersecting paths, lozenge tilings of the hexagon are also in bijection with \textit{dimer coverings}, which are perfect matchings of a certain bipartite graph. We refer to \cite{J17} for more details on the correspondence with dimers (see also \cite[Figure 1]{Petrov1} for an illustration). The bijection with dimers is not used explicitly in this paper, but we do use it to generate the pictures via the shuffling algorithm \cite{ProppShuffling}.


\paragraph{Acknowledgment.} The first step of the method is based on an unpublished idea of Arno Kuijlaars. I am very grateful to him for allowing me and even encouraging me to use his idea. I also thank Jonatan Lenells, Arno Kuijlaars and Maurice Duits for interesting discussions, and for a careful reading of the introduction. This work is supported by the European Research Council, Grant Agreement No. 682537.



\section{Model and background}\label{Section: model}
In this section, we present a lozenge tiling model with $2 \times 2$ periodic weightings. We also introduce the necessary material to invoke the double contour formula from \cite{DK} for the kernel. In particular, we present the relevant $2 \times 2$ matrix valued OPs and the associated $4 \times 4$ RH problem. 
\vspace{-0.2cm}
\subsubsection*{Affine transformation for certain figures of lozenge tilings}
For the presentation of the model and the results, it is convenient to define the hexagon and the lozenges as in \eqref{def of Hn}--\eqref{non-intersecting paths}. However, for the purpose of presenting certain figures of lozenge tilings, it is more pleasant to modify the hexagon and the lozenges by the following simple transformation:
\begin{align}\label{affine transformation on lozenges}
\tikz[scale=.5,baseline={([yshift=-0.5ex]current bounding box.center)}]{ \draw (0,0)--(0,1)--(1,2)--(1,1)--(0,0); \node at (1.6,0.9) {$\to$}; \draw[cm={1,0,0,1,(2.2,0)}] (0,0)--(0,1)--(0.866025,1.5)--(0.866025,0.5)--(0,0); }, \quad \tikz[scale=.5,baseline={([yshift=-0.5ex]current bounding box.center)}] { \draw (0,0)--(0,1)--(1,1)--(1,0)--(0,0); \node at (1.6,0.4) {$\to$}; \draw[cm={1,0,0,1,(2.2,0)}] (0,0)--(0,1)--(0.866025,0.5)--(0.866025,-0.5)--(0,0);} \quad \mbox{ and }  \quad   \tikz[scale=.5,baseline={([yshift=-0.5ex]current bounding box.center)}] { \draw (0,0)--(1,1)--(2,1)--(1,0)--(0,0); \node at (2.4,0.0) {$\to$}; \draw[cm={1,0,0,1,(3,0)}] (0,0)--(0.866025,0.5)--(1.73205,0)--(0.866025,-0.5)--(0,0);},
\end{align}
so that $\mathcal{H}_{n}$ is mapped by this transformation to a hexagon whose $6$ sides are of equal length. Above the definition \eqref{def of Hn} of $\mathcal{H}_{n}$, we used the standard terminology and called $\mathcal{H}_{n}$ ``the regular hexagon"; note however that $\mathcal{H}_{n}$ becomes truly regular only after applying the transformation \eqref{affine transformation on lozenges}. In the figures, we will assign the colors red, green and yellow for the three lozenges in \eqref{affine transformation on lozenges}, from left to right, respectively.
\subsection{Definition of the model}
The regular hexagon $\mathcal{H}_{n}$ has corners located at $(0,0)$, $(0,n)$, $(n,2n)$, $(2n,2n)$, $(2n,n)$ and $(n,0)$. We normalize the lozenges such that they cover each a surface of area $1$, and the vertices of the lozenges have integer coordinates. We recall that each lozenge tiling of $\mathcal{H}_{n}$ gives rise, through \eqref{non-intersecting paths}, to a system of $n$ non-intersecting paths. These paths live on the graph $\mathcal{G}_{n}$, which is illustrated in Figure \ref{fig: lattice} (left) for $n=4$. The vertices of $\mathcal{G}_{n}$ form a subset of $\mathbb{Z}\times (\mathbb{Z} + \frac{1}{2})$, and the bottom left vertex has coordinates $(0,\frac{1}{2})$. We denote the paths by
\begin{align}\label{def of the paths}
\mathfrak{p}_{j} : \{0,1,\ldots,2n\} \to \frac{1}{2}+\mathbb{Z},  \qquad j = 0,\ldots,n-1,
\end{align}
and they satisfy the initial positions $\mathfrak{p}_{j}(0) = j + \frac{1}{2}$ and ending positions $\mathfrak{p}_{j}(2n) = n + j + \frac{1}{2}$. The particular $2 \times 2$ periodic lozenge tiling model that we consider depends on a parameter $\alpha \in (0,1]$. The weightings are defined on the $2 \times 2$ bottom left block of the lattice as shown in Figure \ref{fig: 2x2 periodic weightings} (left), and is then extended by periodicity as shown in Figure \ref{fig: 2x2 periodic weightings} (right). More formally, if $\mathfrak{e} = \big( (x_{1},y_{1}+\frac{1}{2}),(x_{2},y_{2}+\frac{1}{2}) \big)$ is an edge of $\mathcal{G}_{n}$, then 
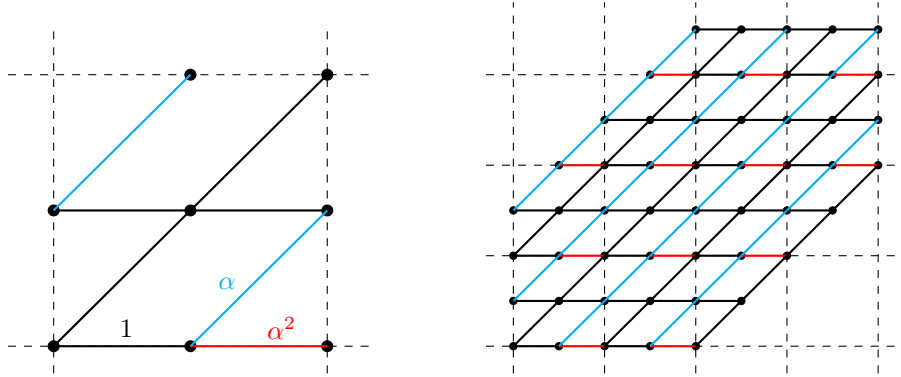
\begin{figure}
\begin{center}
\vspace{0.2cm}
\begin{tikzpicture}[master, scale = 1.2]
\node at (0,0) {};
\draw[dashed] (-0.5,0)--(3.5,0);
\draw[dashed] (-0.5,3)--(3.5,3);
\draw[dashed] (0,-0.3)--(0,3.5);
\draw[dashed] (3,-0.3)--(3,3.5);

\draw[fill] (0,0) circle (0.6mm);
\draw[fill] (1.5,0) circle (0.6mm);
\draw[fill] (3,0) circle (0.6mm);

\draw[fill] (0,1.5) circle (0.6mm);
\draw[fill] (1.5,1.5) circle (0.6mm);
\draw[fill] (3,1.5) circle (0.6mm);

\draw[fill] (1.5,3) circle (0.6mm);
\draw[fill] (3,3) circle (0.6mm);

\draw[thick] (0,0)--(1.5,0);
\draw[red,thick] (1.5,0)--(3,0);

\draw[thick] (0,1.5)--(1.5,1.5);
\draw[thick] (1.5,1.5)--(3,1.5);

\draw[thick] (0,0)--(1.5,1.5);
\draw[cyan,thick] (1.5,0)--(3,1.5);

\draw[cyan,thick] (0,1.5)--(1.5,3);
\draw[thick] (1.5,1.5)--(3,3);

\node at (0.8,0.2) {$1$};
\node at (1.9,0.7) {\color{cyan} $\alpha$};
\node at (2.5,0.2) {\color{red} $\alpha^{2}$};
\end{tikzpicture}
\hspace{1cm}
\begin{tikzpicture}[slave,scale = 1.2]
\node at (0,0) {};

\draw[dashed] (-0.3,0)--(4.3,0);
\draw[dashed] (-0.3,1)--(4.3,1);
\draw[dashed] (-0.3,2)--(4.3,2);
\draw[dashed] (-0.3,3)--(4.3,3);
\draw[dashed] (0,-0.3)--(0,3.8);
\draw[dashed] (1,-0.3)--(1,3.8);
\draw[dashed] (2,-0.3)--(2,3.8);
\draw[dashed] (3,-0.3)--(3,3.8);
\draw[dashed] (4,-0.3)--(4,3.8);

\draw[fill] (0,0) circle (0.4mm);
\draw[fill] (0.5,0) circle (0.4mm);
\draw[fill] (1,0) circle (0.4mm);
\draw[fill] (1.5,0) circle (0.4mm);
\draw[fill] (2,0) circle (0.4mm);

\draw[fill] (0,0.5) circle (0.4mm);
\draw[fill] (0.5,0.5) circle (0.4mm);
\draw[fill] (1,0.5) circle (0.4mm);
\draw[fill] (1.5,0.5) circle (0.4mm);
\draw[fill] (2,0.5) circle (0.4mm);
\draw[fill] (2.5,0.5) circle (0.4mm);

\draw[fill] (0,1) circle (0.4mm);
\draw[fill] (0.5,1) circle (0.4mm);
\draw[fill] (1,1) circle (0.4mm);
\draw[fill] (1.5,1) circle (0.4mm);
\draw[fill] (2,1) circle (0.4mm);
\draw[fill] (2.5,1) circle (0.4mm);
\draw[fill] (3,1) circle (0.4mm);

\draw[fill] (0,1.5) circle (0.4mm);
\draw[fill] (0.5,1.5) circle (0.4mm);
\draw[fill] (1,1.5) circle (0.4mm);
\draw[fill] (1.5,1.5) circle (0.4mm);
\draw[fill] (2,1.5) circle (0.4mm);
\draw[fill] (2.5,1.5) circle (0.4mm);
\draw[fill] (3,1.5) circle (0.4mm);
\draw[fill] (3.5,1.5) circle (0.4mm);

\draw[fill] (0.5,2) circle (0.4mm);
\draw[fill] (1,2) circle (0.4mm);
\draw[fill] (1.5,2) circle (0.4mm);
\draw[fill] (2,2) circle (0.4mm);
\draw[fill] (2.5,2) circle (0.4mm);
\draw[fill] (3,2) circle (0.4mm);
\draw[fill] (3.5,2) circle (0.4mm);
\draw[fill] (4,2) circle (0.4mm);

\draw[fill] (1,2.5) circle (0.4mm);
\draw[fill] (1.5,2.5) circle (0.4mm);
\draw[fill] (2,2.5) circle (0.4mm);
\draw[fill] (2.5,2.5) circle (0.4mm);
\draw[fill] (3,2.5) circle (0.4mm);
\draw[fill] (3.5,2.5) circle (0.4mm);
\draw[fill] (4,2.5) circle (0.4mm);

\draw[fill] (1.5,3) circle (0.4mm);
\draw[fill] (2,3) circle (0.4mm);
\draw[fill] (2.5,3) circle (0.4mm);
\draw[fill] (3,3) circle (0.4mm);
\draw[fill] (3.5,3) circle (0.4mm);
\draw[fill] (4,3) circle (0.4mm);

\draw[fill] (2,3.5) circle (0.4mm);
\draw[fill] (2.5,3.5) circle (0.4mm);
\draw[fill] (3,3.5) circle (0.4mm);
\draw[fill] (3.5,3.5) circle (0.4mm);
\draw[fill] (4,3.5) circle (0.4mm);


\draw[thick] (0,0)--(0.5,0);
\draw[red,thick] (0.5,0)--(1,0);
\draw[thick] (1,0)--(1.5,0);
\draw[red,thick] (1.5,0)--(2,0);

\draw[thick] (0,0.5)--(0.5,0.5);
\draw[thick] (0.5,0.5)--(1,0.5);
\draw[thick] (1,0.5)--(1.5,0.5);
\draw[thick] (1.5,0.5)--(2,0.5);
\draw[thick] (2,0.5)--(2.5,0.5);

\draw[thick] (0,1)--(0.5,1);
\draw[red,thick] (0.5,1)--(1,1);
\draw[thick] (1,1)--(1.5,1);
\draw[red,thick] (1.5,1)--(2,1);
\draw[thick] (2,1)--(2.5,1);
\draw[red,thick] (2.5,1)--(3,1);

\draw[thick] (0,1.5)--(0.5,1.5);
\draw[thick] (0.5,1.5)--(1,1.5);
\draw[thick] (1,1.5)--(1.5,1.5);
\draw[thick] (1.5,1.5)--(2,1.5);
\draw[thick] (2,1.5)--(2.5,1.5);
\draw[thick] (2.5,1.5)--(3,1.5);
\draw[thick] (3,1.5)--(3.5,1.5);

\draw[red,thick] (0.5,2)--(1,2);
\draw[thick] (1,2)--(1.5,2);
\draw[red,thick] (1.5,2)--(2,2);
\draw[thick] (2,2)--(2.5,2);
\draw[red,thick] (2.5,2)--(3,2);
\draw[thick] (3,2)--(3.5,2);
\draw[red,thick] (3.5,2)--(4,2);

\draw[thick] (1,2.5)--(1.5,2.5);
\draw[thick] (1.5,2.5)--(2,2.5);
\draw[thick] (2,2.5)--(2.5,2.5);
\draw[thick] (2.5,2.5)--(3,2.5);
\draw[thick] (3,2.5)--(3.5,2.5);
\draw[thick] (3.5,2.5)--(4,2.5);

\draw[red,thick] (1.5,3)--(2,3);
\draw[thick] (2,3)--(2.5,3);
\draw[red,thick] (2.5,3)--(3,3);
\draw[thick] (3,3)--(3.5,3);
\draw[red,thick] (3.5,3)--(4,3);

\draw[thick] (2,3.5)--(2.5,3.5);
\draw[thick] (2.5,3.5)--(3,3.5);
\draw[thick] (3,3.5)--(3.5,3.5);
\draw[thick] (3.5,3.5)--(4,3.5);


\draw[thick] (0,0)--(0.5,0.5);
\draw[cyan,thick] (0.5,0)--(1,0.5);
\draw[thick] (1,0)--(1.5,0.5);
\draw[cyan,thick] (1.5,0)--(2,0.5);
\draw[thick] (2,0)--(2.5,0.5);

\draw[cyan,thick] (0,0.5)--(0.5,1);
\draw[thick] (0.5,0.5)--(1,1);
\draw[cyan,thick] (1,0.5)--(1.5,1);
\draw[thick] (1.5,0.5)--(2,1);
\draw[cyan,thick] (2,0.5)--(2.5,1);
\draw[thick] (2.5,0.5)--(3,1);

\draw[thick] (0,1)--(0.5,1.5);
\draw[cyan,thick] (0.5,1)--(1,1.5);
\draw[thick] (1,1)--(1.5,1.5);
\draw[cyan,thick] (1.5,1)--(2,1.5);
\draw[thick] (2,1)--(2.5,1.5);
\draw[cyan,thick] (2.5,1)--(3,1.5);
\draw[thick] (3,1)--(3.5,1.5);

\draw[cyan,thick] (0,1.5)--(0.5,2);
\draw[thick] (0.5,1.5)--(1,2);
\draw[cyan,thick] (1,1.5)--(1.5,2);
\draw[thick] (1.5,1.5)--(2,2);
\draw[cyan,thick] (2,1.5)--(2.5,2);
\draw[thick] (2.5,1.5)--(3,2);
\draw[cyan,thick] (3,1.5)--(3.5,2);
\draw[thick] (3.5,1.5)--(4,2);

\draw[cyan,thick] (0.5,2)--(1,2.5);
\draw[thick] (1,2)--(1.5,2.5);
\draw[cyan,thick] (1.5,2)--(2,2.5);
\draw[thick] (2,2)--(2.5,2.5);
\draw[cyan,thick] (2.5,2)--(3,2.5);
\draw[thick] (3,2)--(3.5,2.5);
\draw[cyan,thick] (3.5,2)--(4,2.5);

\draw[cyan,thick] (1,2.5)--(1.5,3);
\draw[thick] (1.5,2.5)--(2,3);
\draw[cyan,thick] (2,2.5)--(2.5,3);
\draw[thick] (2.5,2.5)--(3,3);
\draw[cyan,thick] (3,2.5)--(3.5,3);
\draw[thick] (3.5,2.5)--(4,3);

\draw[cyan,thick] (1.5,3)--(2,3.5);
\draw[thick] (2,3)--(2.5,3.5);
\draw[cyan,thick] (2.5,3)--(3,3.5);
\draw[thick] (3,3)--(3.5,3.5);
\draw[cyan,thick] (3.5,3)--(4,3.5);
\end{tikzpicture}
\end{center}
\caption{\label{fig: 2x2 periodic weightings} The black edges have weight $1$, the cyan edges have weight $\alpha$, and the red edges have weight $\alpha^{2}$.}
\end{figure}
\begin{align}\label{def weightings}
w_{\mathfrak{e}} =\begin{cases}
\alpha^{2}, & \mbox{if } x_{1} \mbox{ is odd, } y_{1}=y_{2}, \mbox{ and } y_{1} \mbox{ is even,} \\
\alpha, & \mbox{if } x_{1}+y_{1} \mbox{ is odd, and } y_{2} = y_{1}+1, \\
1 & \mbox{otherwise}.
\end{cases}
\end{align}

\vspace{0.2cm}For any values of $\alpha \in (0,1]$, the weightings \eqref{def weightings} are such that $\mathrm{W}(\mathcal{T)} > 0$ for all $\mathcal{T}$, and thus we have a well-defined probability measure via \eqref{prob over tilings}. On the other hand, if $\alpha = 0$, then several edges have weights $0$, and it is easy to see (e.g. from Figure \ref{fig: 2x2 periodic weightings} (right)) that $\mathrm{W}(\mathcal{T}) = 0$ for all $\mathcal{T}$. So in this case, \eqref{def weightings} does not induce a probability measure, and this explains why we excluded $\alpha = 0$ in the definition of the model. If $\alpha = 1$, all tilings have the same weight, and we recover the uniform distribution. Proposition \ref{prop: frozen} states that for $\alpha < 1$, there is a particular tiling $\mathcal{T}_{\max}$ that is more likely to appear than any other tiling. $\mathcal{T}_{\max}$ is illustrated in Figure \ref{fig: alpha 0 vs alpha 1} (left) for $n=60$. 
\begin{proposition}\label{prop: frozen}
Let $\alpha \in (0,1)$ and let $n \geq 1$ be an integer. There exists a unique tiling $\mathcal{T}_{\max}$ of $\mathcal{H}_{n}$ such that $\mathrm{W}(\mathcal{T}) \leq \alpha \mathrm{W}(\mathcal{T}_{\max})$ for all $\mathcal{T} \neq \mathcal{T}_{\max}$. Furthermore,
\begin{align*}
\mathrm{W}(\mathcal{T}_{\max}) = \begin{cases}
\alpha^{\frac{n^{2}}{4}}, & \mbox{if } n \mbox{ is even}, \\
\alpha^{\frac{n^{2}-1}{4}}, & \mbox{if } n \mbox{ is odd.} 
\end{cases}
\end{align*}
\end{proposition}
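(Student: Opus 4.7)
The plan is to exhibit $\mathcal{T}_{\max}$ explicitly, compute its weight by direct counting, and derive the gap via a local flip analysis.

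Since every edge weight in \eqref{def weightings} lies in $\{1,\alpha,\alpha^{2}\}$, one may write $\mathrm{W}(\mathcal{T}) = \alpha^{k(\mathcal{T})}$ with $k(\mathcal{T}) = N_{\mathrm{cyan}}(\mathcal{T}) + 2 N_{\mathrm{red}}(\mathcal{T})$, where $N_{\mathrm{cyan}}(\mathcal{T})$ and $N_{\mathrm{red}}(\mathcal{T})$ count the cyan ($\alpha$-weighted) and red ($\alpha^{2}$-weighted) edges used by the paths of $\mathcal{T}$. Since $\alpha \in (0,1)$, maximizing $\mathrm{W}(\mathcal{T})$ is equivalent to minimizing the non-negative integer $k(\mathcal{T})$, and the desired inequality $\mathrm{W}(\mathcal{T}) \leq \alpha\,\mathrm{W}(\mathcal{T}_{\max})$ reads $k(\mathcal{T}) \geq k(\mathcal{T}_{\max}) + 1$ for every $\mathcal{T} \neq \mathcal{T}_{\max}$. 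First I would write down a specific path system $\mathcal{T}_{\max}$, readable from Figure~\ref{fig: alpha 0 vs alpha 1} (left), in which at every local decision point the paths are arranged so as to avoid all red edges and to use cyan edges only when forced. A direct check shows that the resulting paths are non-intersecting with the correct endpoints, that $N_{\mathrm{red}}(\mathcal{T}_{\max}) = 0$, and that $N_{\mathrm{cyan}}(\mathcal{T}_{\max}) = \lfloor n^{2}/4 \rfloor$; the split $n$ even versus $n$ odd arises from the parity of the starting and ending positions $j+\tfrac{1}{2}$ and $n+j+\tfrac{1}{2}$. This yields the claimed formula for $\mathrm{W}(\mathcal{T}_{\max})$.

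For the gap I would invoke the classical fact (Thurston) that any two lozenge tilings of a simply connected region are connected by a finite sequence of hexagon flips, together with the distributive-lattice structure on tilings under the height-function order (Kenyon--Propp--Wilson). Each hexagon flip modifies exactly one horizontal and one oblique edge of the path system, so its effect on $k$ can be computed purely from the local parity of the flip location in the $2\times 2$ periodic pattern of weights. A case-by-case check across the four parity classes $(x_{1} \bmod 2,\, y_{1} \bmod 2)$ shows that the change $\Delta k$ under a single flip always belongs to $\{\pm 1,\pm 3\}$, and in particular is never $0$. Consequently every admissible flip at $\mathcal{T}_{\max}$ strictly increases $k$. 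Iterating a greedy descent in the tiling lattice, in which from any $\mathcal{T}$ one applies a flip lowering $k$ in the direction of $\mathcal{T}_{\max}$, yields $k(\mathcal{T}) \geq k(\mathcal{T}_{\max}) + d$ where $d \geq 1$ is the length of the descent; this proves the inequality and the uniqueness simultaneously.

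The main obstacle is connecting the single-flip parity calculation to a global statement: one has to verify that at any $\mathcal{T} \neq \mathcal{T}_{\max}$ there is an admissible flip whose $\Delta k$ is negative, and that iterating such flips terminates at $\mathcal{T}_{\max}$. This requires combining the parity-based sign rule for $\Delta k$ with the monotonicity of the flip graph provided by the height-function lattice, and is likely the most delicate piece of the proof; the explicit computation of $\mathrm{W}(\mathcal{T}_{\max})$ itself reduces to a bookkeeping of the cyan edges traversed by the prescribed paths of $\mathcal{T}_{\max}$.
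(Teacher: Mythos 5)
Your reduction to the integer exponent is the right first move: since every edge weight is a power of $\alpha$, one has $\mathrm{W}(\mathcal{T})=\alpha^{k(\mathcal{T})}$ with $k(\mathcal{T})\in\mathbb{Z}_{\geq 0}$, and in fact once you know that $\mathcal{T}_{\max}$ is the \emph{unique} minimizer of $k$, the inequality $\mathrm{W}(\mathcal{T})\leq\alpha\,\mathrm{W}(\mathcal{T}_{\max})$ follows immediately from integrality -- no flip analysis is needed for the gap itself. Your local computation is also correct: a hexagon flip swaps consecutive steps $DH\leftrightarrow HD$ of a single path, and working out the four parity classes of the starting vertex with the weights \eqref{def weightings} indeed gives $\Delta k\in\{\pm 1,\pm 3\}$, never $0$ (for instance, checking $n=2$ by hand confirms the unique optimum with $k=1$ and all other tilings having $k\geq 2$).

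The genuine gap is the global step. Everything rests on the claim that from any $\mathcal{T}\neq\mathcal{T}_{\max}$ there is an admissible flip with $\Delta k<0$ and that greedy descent terminates at $\mathcal{T}_{\max}$; this is precisely global minimality and uniqueness, and neither Thurston flip-connectivity nor the height-function lattice supplies it. Connectivity only gives \emph{some} flip path from $\mathcal{T}$ to $\mathcal{T}_{\max}$, along which $\Delta k$ need not have constant sign, and the lattice order does not help because $k$ is not monotone with respect to it: your own parity table shows that height-lowering flips ($DH\to HD$) have $\Delta k=+1$ in three of the four parity classes and $-3$ in the fourth, so moving ``toward'' $\mathcal{T}_{\max}$ in the lattice can increase $k$. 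A priori $k$ could therefore have strict local minima in the flip graph other than the global one, and your descent argument does not exclude this; ruling it out (or replacing it by a direct global argument, e.g.\ a column-by-column lower bound on the number of weighted edges forced on each path together with an exchange argument for uniqueness) is exactly the missing content. In addition, the explicit description of $\mathcal{T}_{\max}$, the claims $N_{\mathrm{red}}(\mathcal{T}_{\max})=0$ and $N_{\mathrm{cyan}}(\mathcal{T}_{\max})=\lfloor n^{2}/4\rfloor$, and the even/odd bookkeeping are asserted rather than proved. (For comparison, the paper itself omits the proof, stating only that it follows from a careful inspection of $\mathcal{G}_{n}$, so there is no published argument to align with; but as written, the central step of your proposal is a conjecture, not a proof.)
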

\begin{proof}
The proof of Proposition \ref{prop: frozen} is based from a careful inspection of $\mathcal{G}_{n}$, and we omit the details.
\end{proof}

\begin{figure}[h]
\begin{center}
\includegraphics[width=5cm]{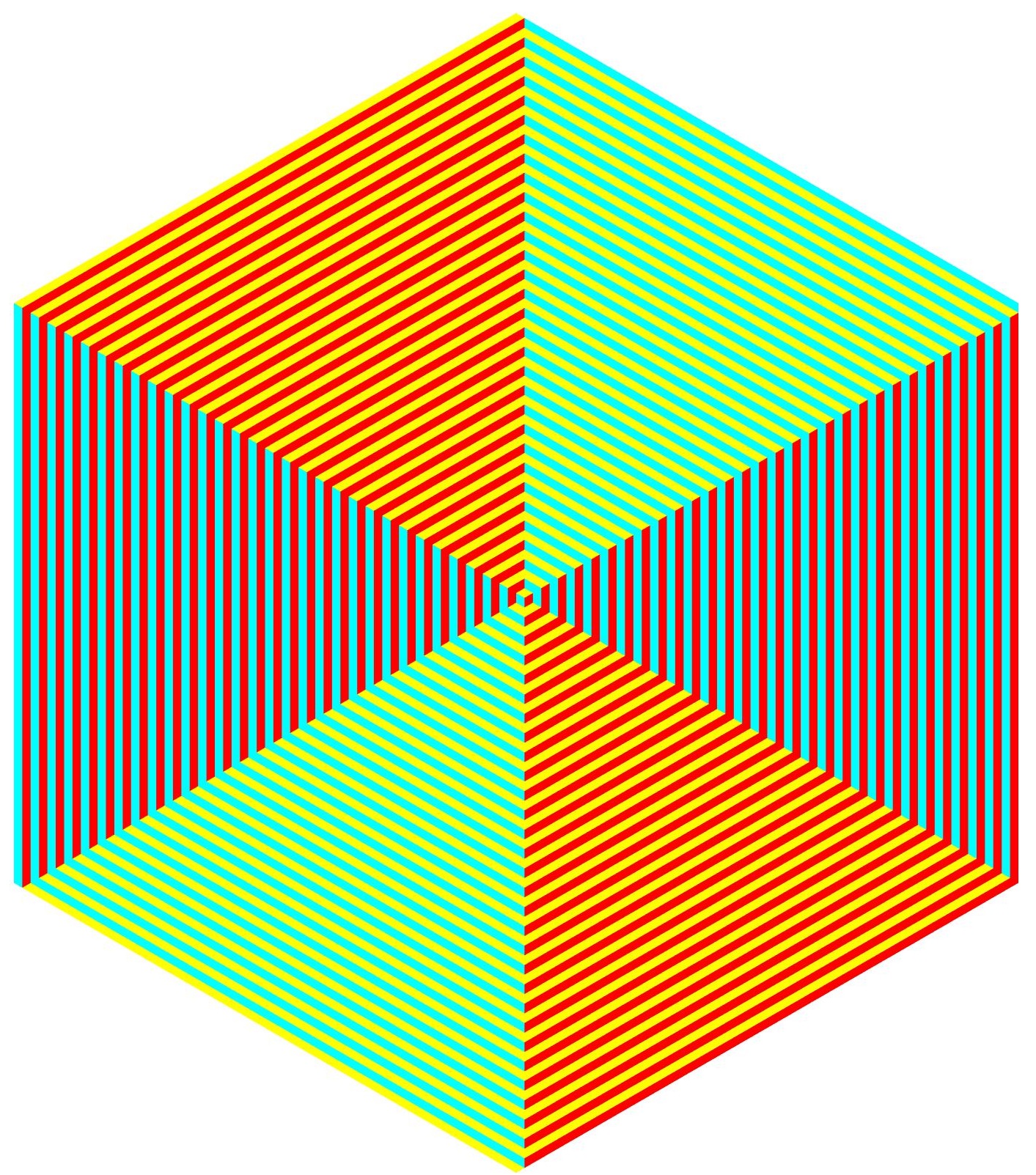}
\hspace{1cm}
\includegraphics[width=5cm]{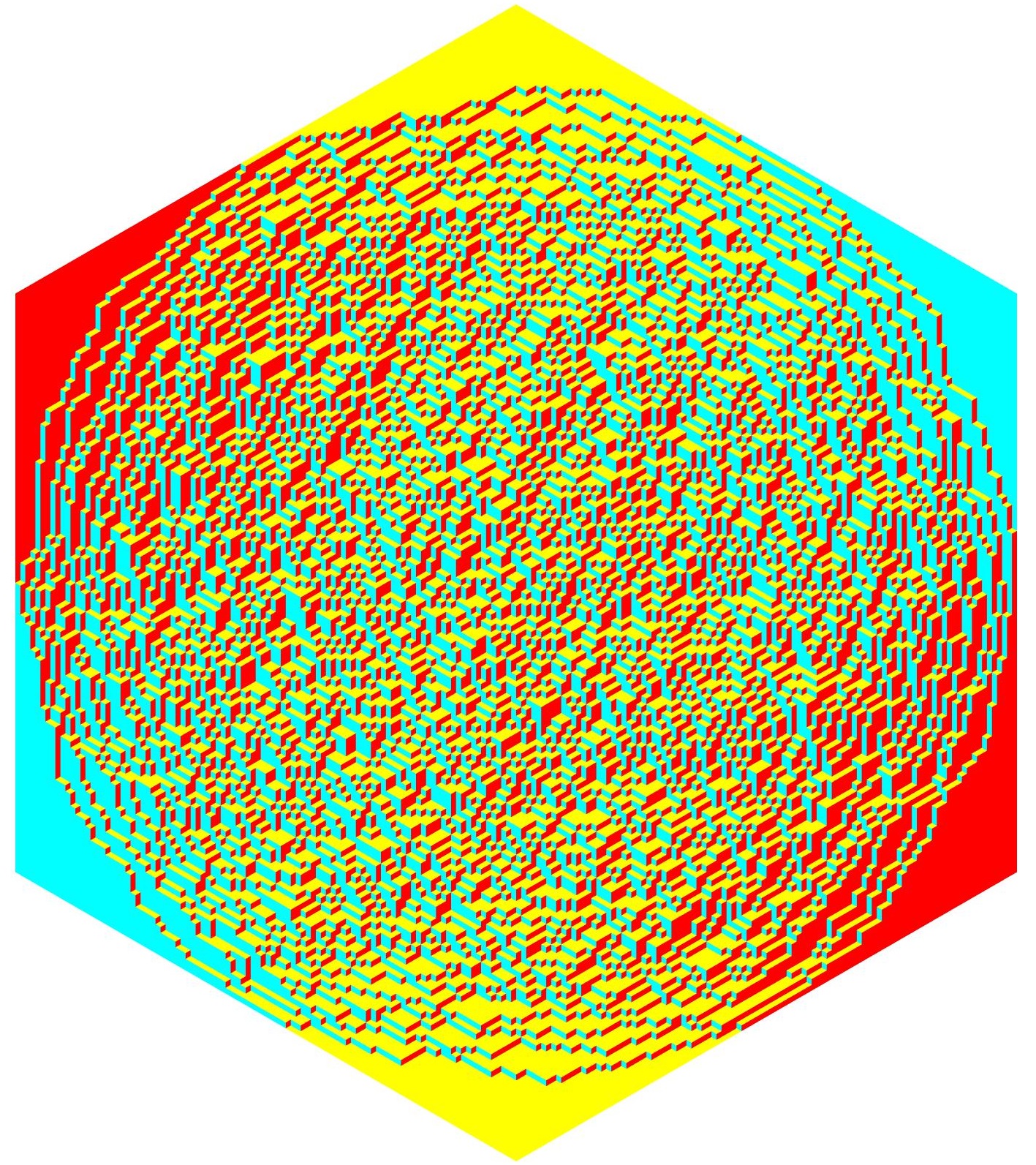}
\end{center}
\caption{\label{fig: alpha 0 vs alpha 1}Two tilings taken at random accordingly to the measure induced by \eqref{def weightings}, for $n = 60$ and $\alpha = 5\times 10^{-4}$ (left), and $n = 100$ and $\alpha = 1$ (right).}
\end{figure}

It follows from Proposition \ref{prop: frozen} that, as $\alpha \to 0$, the randomness disappears because the tiling $\mathcal{T}_{\max}$ becomes significantly more likely than any other tiling. Therefore, our model interpolates between the uniform measure over the tilings (for $\alpha = 1$) and a particular totally frozen tiling $\mathcal{T}_{\max}$ (as $\alpha \to 0$), see Figures \ref{fig: alpha 0 vs alpha 1} and \ref{fig: alpha between 0 and 1}. Intriguingly, these figures show similarities with the rectangle-triangle tiling of the hexagon obtained by Keating and Sridhar in \cite[Figure 18]{KeaSri}.
\begin{figure}
\begin{center}
\includegraphics[width=4.5cm]{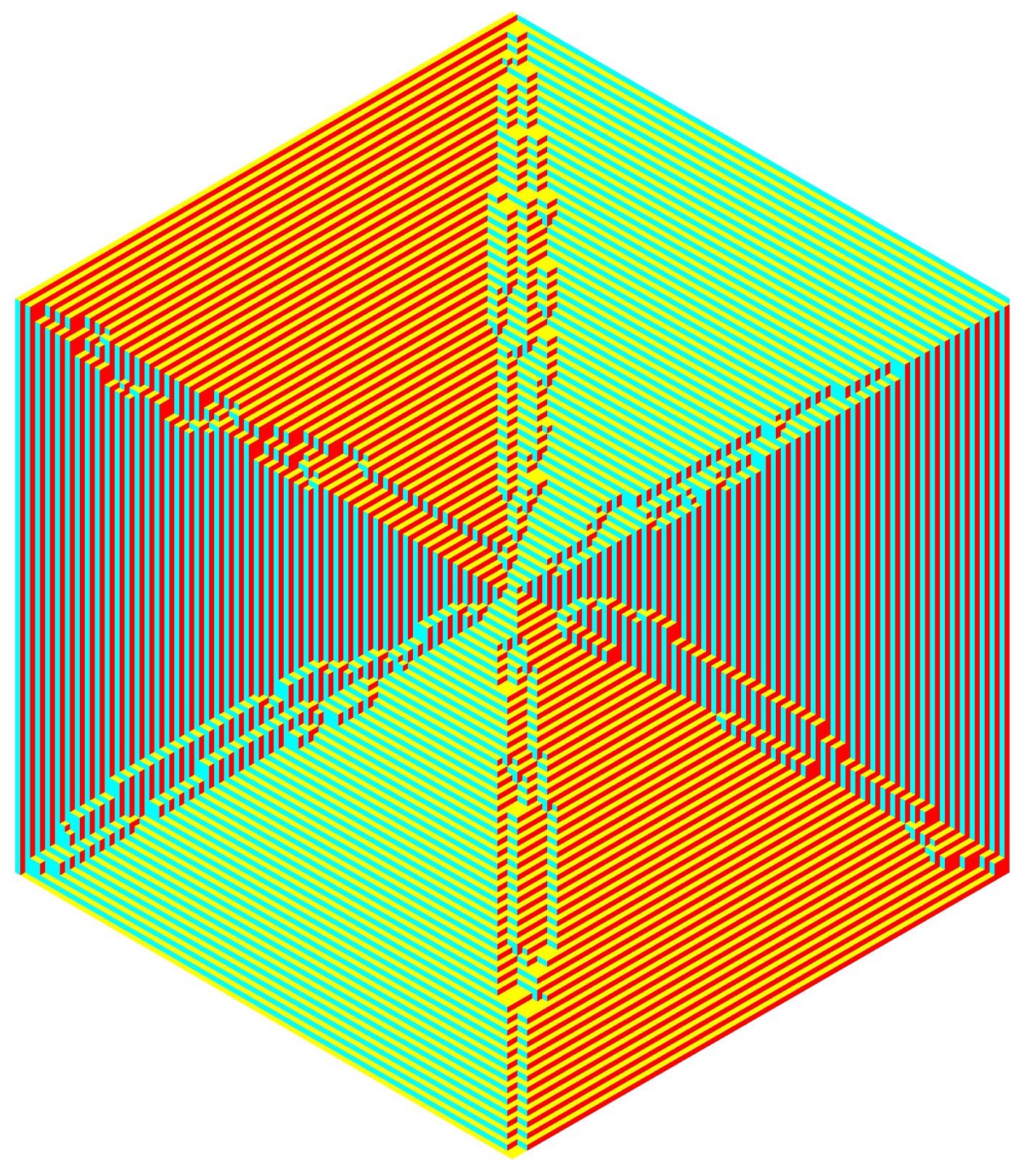}
\hspace{0.5cm}
\includegraphics[width=4.5cm]{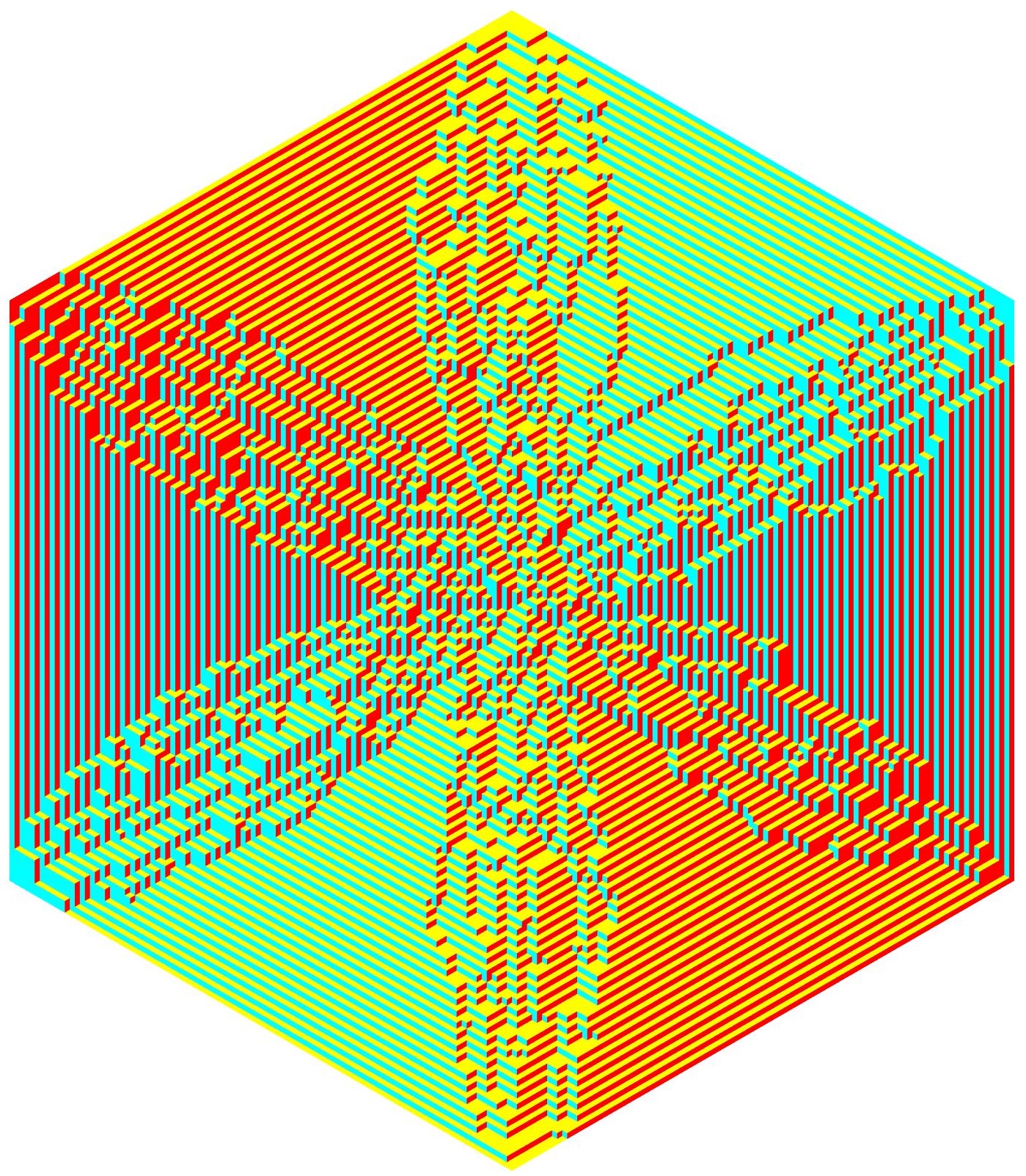}
\hspace{0.5cm}
\includegraphics[width=4.5cm]{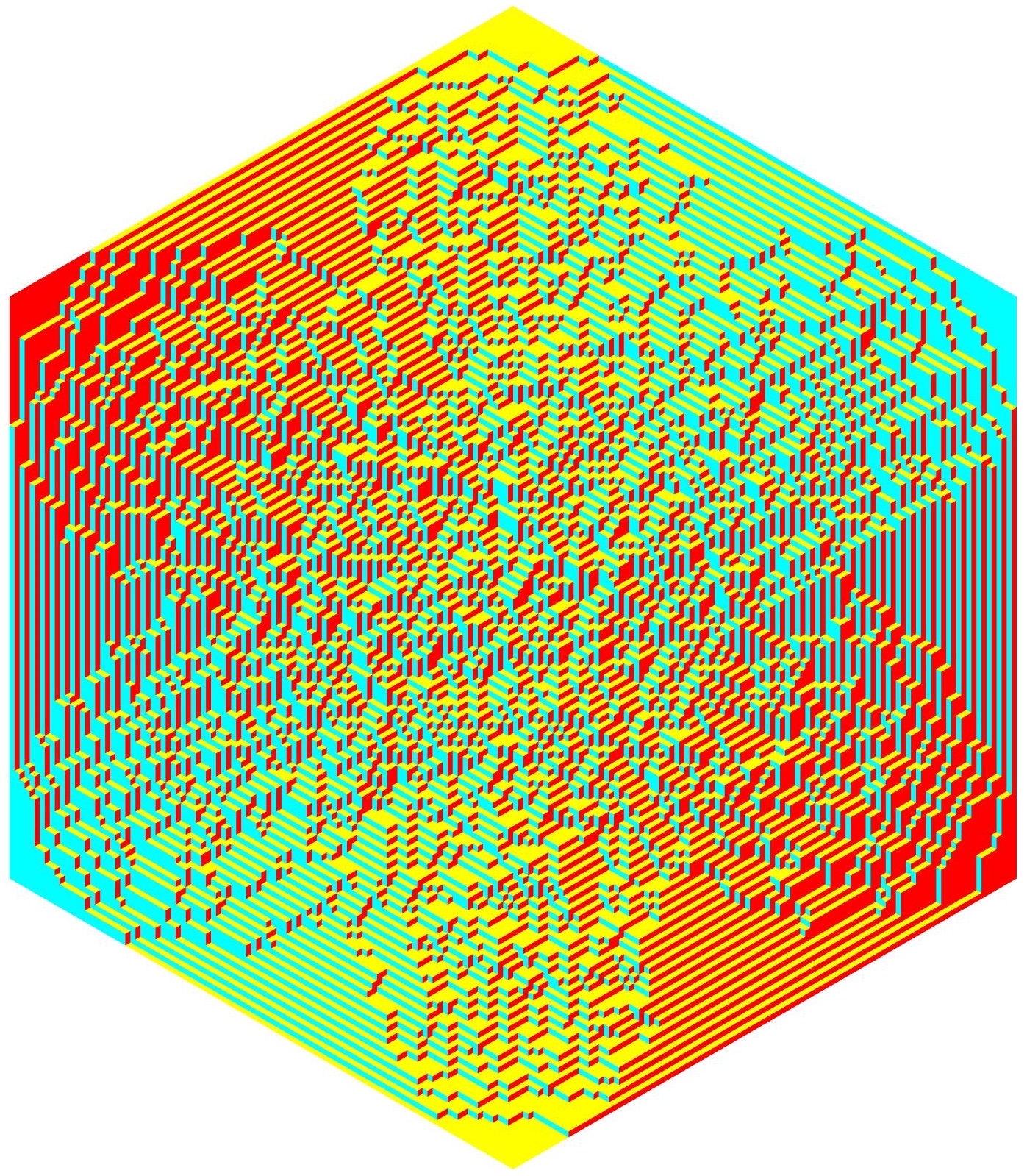}
\end{center}
\caption{\label{fig: alpha between 0 and 1}Three tilings taken at random accordingly to the measure induced by \eqref{def weightings} with $n=100$ and $\alpha = 0.01$ (left), $\alpha = 0.05$ (middle), $\alpha = 0.2$ (right).}
\end{figure}

\vspace{0.2cm}Several tiling models in the literature (e.g. those considered in \cite{BGR} and \cite{CDKL}) are defined by weightings on the lozenges, instead of weightings on the edges. To ease possible comparisons with these models, we give an alternative definition of our model. The weight $\mathrm{W}(\mathcal{T})$ of a tiling $\mathcal{T}$ can alternatively be defined as
\begin{align*}
\mathrm{W}(\mathcal{T}) = \prod_{{\tikz[scale=.18]  \draw (0,0) \lozu;} \; \in \mathcal{T}} w(\tikz[scale=.27] { \draw (0,0) \lozu;}) \prod_{\tikz[scale=.18]  {\draw (0,0) \lozr;} \; \raisebox{0.11cm}{\scriptsize $\in \hspace{-0.08cm}\mathcal{T}$}} w\Big(\raisebox{-0.15cm}{\tikz[scale=.27] \draw (0,0) \lozr; } \hspace{-0.05cm}\Big),
\end{align*}
where $w$ is the weight function over the lozenges given by
\begin{align}
& w\bigg(	\tikz[scale=.3,baseline=(current bounding box.center)]{\draw (0,0) \lozu; \draw[fill] (0,0) circle(5pt);  \node[below] (i) at (0,-.2) {\tiny{$(i,j)$}};} 	\bigg)=
\begin{cases}
\alpha^{2}, & \mbox{if }i \text{ is odd and }j \mbox{ is even,}\\
1, &  \text{otherwise}, 
\end{cases} \label{weight horizontal} 
\\
& w\bigg( \hspace{-0.15cm}\raisebox{-0.5cm}{
\tikz[scale=.27] {
\draw (0,0) \lozr; 
\draw[fill] (0,0) circle(5pt); 
\node[below] (i) at (0,-.2) {\tiny{$(i,j)$}};
}
} \hspace{-0.1cm} \bigg)=
\begin{cases}
\alpha, & \mbox{if }i+j \text{ is odd,}\\
1, &  \text{otherwise},
\end{cases}  \label{weight oblique}
\end{align}
where $\alpha \in (0,1]$. The above lozenge weightings depend only on the parity of $i$ and $j$, and thus are periodic of period $2$ in both directions. By using the correspondence \eqref{non-intersecting paths} between lozenge tilings and non-intersecting paths, it is straightforward to verify that the weightings \eqref{weight horizontal}--\eqref{weight oblique} define the same measure as the weightings \eqref{def weightings}.


\subsection{Matrix valued orthogonal polynomials}\label{subsection: MVOPs}
It will be convenient for us to define $\mathcal{G}_{\infty}$ as the graph whose vertex set is $\mathbb{Z}\times (\mathbb{Z} + \frac{1}{2})$, and whose edges are of the form $\mathfrak{e}= \big( (x_{1},y_{1}+\frac{1}{2}),(x_{2},y_{2}+\frac{1}{2}) \big)$ with $x_{2} = x_{1}+1$ and $y_{2}-y_{1} \in \{0,1\}$. The weighting \eqref{def weightings} was defined on the edges of $\mathcal{G}_{n}$, but it can be straightforwardly extended to the edges of $\mathcal{G}_{\infty}$. We follow the notations of \cite[equation (4.3)]{DK}, and denote $T_{x,x+1}(y_{1},y_{2})$ for the weight associated to the edge $\mathfrak{e} = \big( (x,y_{1}+\frac{1}{2}),(x+1,y_{2}+\frac{1}{2}) \big)$ of $\mathcal{G}_{\infty}$. This weight can be obtained from \eqref{def weightings} and only depends on the parity of $x$. If $x$ is even, it is given by
\begin{align}\label{Transition x even}
T_{x,x+1}(y_{1},y_{2}) = \begin{cases}
1 & \mbox{if }y_{2}=y_{1}, \\
1 & \mbox{if }y_{2} = y_{1}+1 \mbox{ and } y_{1} \mbox{ is even}, \\
\alpha & \mbox{if } y_{2} = y_{1}+1 \mbox{ and } y_{1} \mbox{ is odd}, \\
0 & \mbox{otherwise},
\end{cases}
\end{align}
while if $x$ is odd, we have
\begin{align}\label{Transition x odd}
T_{x,x+1}(y_{1},y_{2}) = \begin{cases}
\alpha^{2} & \mbox{if }y_{2}=y_{1}, \mbox{ and } y_{1} \mbox{ is even}, \\
1 & \mbox{if }y_{2}=y_{1}, \mbox{ and } y_{1} \mbox{ is odd}, \\
\alpha & \mbox{if }y_{2} = y_{1}+1 \mbox{ and } y_{1} \mbox{ is even}, \\
1 & \mbox{if } y_{2} = y_{1}+1 \mbox{ and } y_{1} \mbox{ is odd}, \\
0 & \mbox{otherwise}.
\end{cases}
\end{align}
For each $x \in \mathbb{Z}$, $T_{x,x+1}$ is periodic of period $2$, namely $T_{x,x+1}(y_{1}+2,y_{2}+2) = T_{x,x+1}(y_{1},y_{2})$ for all $y_{1},y_{2} \in \mathbb{Z}$. The weightings $T_{x,x+1}$ can be represented as two $2 \times 2$ block Toeplitz matrices (one for $x$ even, and one for $x$ odd) that are infinite in both directions. 
These two infinite matrices can be encoded in two $2 \times 2$ matrix symbols $A_{x,x+1}(z)$, whose entries $(A_{x,x+1}(z))_{i+1,j+1}$, $0 \leq i,j \leq 1$, are given by
\begin{align*}
(A_{x,x+1}(z))_{i+1,j+1} = T_{x,x+1}(i,j) + zT_{x,x+1}(i,j+2).
\end{align*}
%
%
%
%
%
%
%
%
%
%
%
%
More explicitly, this gives
\begin{align}\label{1-transition symbol}
A_{x,x+1}(z) = \begin{cases}
\begin{pmatrix}
1 & 1 \\
\alpha z & 1
\end{pmatrix}, &  \mbox{if }x \mbox{ is even}, \\[0.4cm]
\begin{pmatrix}
\alpha^{2} & \alpha \\
z & 1
\end{pmatrix}, &  \mbox{if }x \mbox{ is odd},
\end{cases}
\end{align}
and we can retrieve the entries of $T_{x,x+1}$ from its symbol by
\begin{align*}
\begin{pmatrix}
T_{x,x+1}(2y_{1},2y_{2}) & T_{x,x+1}(2y_{1},2y_{2}+1) \\
T_{x,x+1}(2y_{1}+1,2y_{2}) & T_{x,x+1}(2y_{1}+1,2y_{2}+1)
\end{pmatrix} = \frac{1}{2\pi i}\int_{\gamma} A_{x,x+1}(z) z^{y_{1}-y_{2}} \frac{dz}{z},
\end{align*}
where $\gamma$ is any closed contour going around $0$ once in the positive direction. The symbol associated to $\mathcal{G}_{n}$ is then obtained by taking the following product (see \cite[equation (4.9)]{DK}):
\begin{align*}
A_{0,2n}(z) = \prod_{x=0}^{2n-1}A_{x,x+1}(z) = A(z)^{n},
\end{align*}
where
\begin{align}\label{def of A}
A(z) := \begin{pmatrix}
1 & 1 \\
\alpha z & 1
\end{pmatrix} \begin{pmatrix}
\alpha^{2} & \alpha \\
z & 1
\end{pmatrix} = \begin{pmatrix}
\alpha^{2}+z & 1+\alpha \\
(1+\alpha^{3})z & 1+\alpha^{2}z
\end{pmatrix}.
\end{align}
In order to limit the length and technicalities of the paper, from now we take the size of the hexagon even, i.e. $n = 2N$ where $N$ is a positive integer. This is made for convenience; the case of odd integer $n$ could also be analyzed in a similar way, but then a discussion on the partity of $n$ is needed. Since $n=2N$, following \cite[equation (4.12)]{DK}, the relevant orthogonality weight to consider is
\begin{align}
\frac{A(z)^{2N}}{z^{2N}}. \label{def of weight}
\end{align}
We consider two families $\{P_{j}\}_{j\geq 0}$ and $\{Q_{j}\}_{j\geq 0}$ of $2 \times 2$ matrix valued OPs defined by
\begin{align}
& P_{j}(z) = z^{j}I_{2}+\bigO(z^{j-1}), \qquad \mbox{as } z \to \infty \label{cond for Pn} \\
& \frac{1}{2\pi i}\int_{\gamma} P_{j}(z) \frac{A(z)^{2N}}{z^{2N}}z^{k}dz = 0_{2}, & & k = 0,\ldots,j-1, \nonumber 
\end{align}
and
\begin{align}
& \frac{1}{2\pi i}\int_{\gamma} Q_{j}(z) \frac{A(z)^{2N}}{z^{2N}}z^{j}dz = -I_{2}, \label{cond for Qn} \\
& \frac{1}{2\pi i}\int_{\gamma} Q_{j}(z) \frac{A(z)^{2N}}{z^{2N}}z^{k}dz = 0_{2}, & & k = 0,\ldots,j-1, \nonumber
\end{align}
where $0_{2}$ denotes the $2\times 2$ zero matrix, $I_{2}$ is the identity matrix, and $\gamma$ is, as before, a closed contour surrounding $0$ once in the positive direction. Since the weight \eqref{def of weight} is not hermitian, there is no guarantee that the above OPs exist for every $j$. However, it follows from \cite[Lemma 4.8 and equation (4.32)]{DK} that $P_{N}$ and $Q_{N-1}$ exist.

\subsection{The $4 \times 4$ Riemann-Hilbert problem for $Y$}
RH problems for scalar orthogonal polynomials have been introduced by Fokas, Its and Kitaev in \cite{FIK}. Here, we need the generalization of this result for matrix valued OPs, which can be found in \cite{CassaManas2012, Delvaux, GrunIglesia}. Consider the $4 \times 4$ matrix valued function $Y(z) = Y(z;\alpha,N)$ defined by
\begin{equation}\label{Y definition}
Y(z) = \begin{pmatrix}
P_{N}(z) & \displaystyle \frac{1}{2\pi i} \int_{\gamma} P_{N}(s)\frac{A^{2N}(s)}{s^{2N}}\frac{ds}{s-z} \\
Q_{N-1}(z) & \displaystyle \frac{1}{2\pi i} \int_{\gamma} Q_{N-1}(s)\frac{A^{2N}(s)}{s^{2N}}\frac{ds}{s-z}
\end{pmatrix}, \qquad z \in \mathbb{C}\setminus \gamma.
\end{equation}
The matrix $Y$ is characterized as the unique solution to the following RH problem. 
\subsubsection*{RH problem for $Y$}
\begin{itemize}
\item[(a)] $Y : \mathbb{C}\setminus \gamma \to \mathbb{C}^{4\times 4}$ is analytic.
\item[(b)] The limits of $Y(z)$ as $z$ approaches $\gamma$ from inside and outside exist, are continuous on $\gamma$ and are denoted by $Y_+$ and $Y_-$ respectively. Furthermore, they are related by
\begin{equation}\label{jump relations of Y}
Y_{+}(z) = Y_{-}(z) \begin{pmatrix}
I_{2} & \frac{A^{2N}(z)}{z^{2N}} \\ 0_{2} & I_{2}
\end{pmatrix}, \hspace{0.5cm} \mbox{ for } z \in \gamma.
\end{equation}
\item[(c)] As $z \to \infty$, we have $Y(z) = \left(I_{4} + \bigO(z^{-1})\right) \begin{pmatrix}
z^{N}I_{2} & 0_{2} \\ 0_{2} & z^{-N}I_{2}
\end{pmatrix}$.  
\end{itemize}

\subsection{Double contour formula from \cite{DK} for the kernel}
As mentioned in the introduction, the point process obtained by putting points on the paths, as shown in \eqref{non-intersecting paths}, is determinantal. We let $K$ denote the associated kernel. By definition of determinantal point processes, for integers $k \geq 1$, and $x_{1},\ldots,x_{k},y_{1},\ldots,y_{k}$ with $(x_{i},y_{i}) \neq (x_{j},y_{j})$ if $i \neq j$ we have
\begin{align}\label{def of determinantal}
\mathbb{P}\bigg[ \begin{matrix}
\mathfrak{p}_{0},\ldots,\mathfrak{p}_{2N-1} \mbox{ go through each of the points} \\
(x_{1},y_{1}+\frac{1}{2}),\ldots,(x_{k},y_{k}+\frac{1}{2})
\end{matrix} \bigg] = \det \big[ K(x_{i},y_{i},x_{j},y_{j}) \big]_{i,j=1}^{k}. 
\end{align}
The following proposition follows after specifying the general result \cite[Theorem 4.7]{DK} to our situation.\footnote{The quantities $N,M$ and $L$ in the notation of \cite{DK} are equal to $N$, $N$, and $4N$ in our notation.}
\begin{proposition}\emph{(from \cite{DK})}
Let $\alpha \in (0,1]$. For integers $x_{1},x_{2} \in \{1,2,\ldots,4N-1\}$ and $y_{1},y_{2} \in \mathbb{Z}$, we have
\begin{align}
\big[ K(x_{1},2y_{1}+j,x_{2},2y_{2}+i) \big]_{i,j=0}^{1} = - \frac{\chi_{x_{1}>x_{2}}}{2\pi i}\int_{\gamma} A_{x_{2},x_{1}}(z)z^{y_{2}-y_{1}} \frac{dz}{z} \nonumber \\
 + \frac{1}{(2\pi i)^{2}}\int_{\gamma}\int_{\gamma}\frac{A_{x_{2},4N}(w)}{w^{2N-y_{2}}}\mathcal{R}^{Y}(w,z)\frac{A_{0,x_{1}}(z)}{z^{y_{1}+1}} dzdw \label{kernel general}
\end{align}
where, $A_{a,b}$ is defined by
\begin{align*}
A_{a,b}(z) = \prod_{x=a}^{b-1}A_{x,x+1}(z), \qquad b>a,
\end{align*}
and $\mathcal{R}^{Y}$ is given by
\begin{align}\label{def of mathcal R}
\mathcal{R}^{Y}(w,z) = \frac{1}{z-w} \begin{pmatrix}
0_{2} & I_{2}
\end{pmatrix}Y^{-1}(w)Y(z)\begin{pmatrix}
I_{2} \\ 0_{2}
\end{pmatrix}.
\end{align}
\end{proposition}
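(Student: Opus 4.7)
The plan is to derive this proposition as a direct specialization of \cite[Theorem 4.7]{DK}; the work is essentially a careful translation between the general framework of Duits--Kuijlaars and the concrete setup of Section \ref{Section: model}, with no new mathematical content beyond what is in \cite{DK}.

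First, I would establish the dictionary with \cite{DK}. The vertical period is $p = 2$, so that scalar transition entries $T_{x,x+1}(y_1, y_2)$ are packed into $2 \times 2$ blocks indexed by the parities $(y_1 \bmod 2, y_2 \bmod 2)$ and encoded in the symbols $A_{x,x+1}(z)$ through the Fourier transform on $\mathbb{Z}$; this is precisely the correspondence between \eqref{Transition x even}--\eqref{Transition x odd} and \eqref{1-transition symbol}, with the inversion formula stated just before \eqref{1-transition symbol}. The number of paths is $n = 2N$ and the horizontal extent is $2n = 4N$, so that the parameters $(N, M, L)$ of \cite{DK} correspond to $(N, N, 4N)$ as noted in the footnote. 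The boundary conditions $\mathfrak{p}_j(0) = j + \tfrac{1}{2}$ and $\mathfrak{p}_j(2n) = n + j + \tfrac{1}{2}$, once grouped by parity in both endpoints, match the initial/final densities assumed in \cite[Theorem 4.7]{DK}.

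Second, I would verify that the $4 \times 4$ RH problem in \eqref{Y definition}--\eqref{jump relations of Y} coincides with the one in \cite{DK}: the weight $A(z)^{2N}/z^{2N}$ in \eqref{def of weight} is built from the symbols via the product $A_{0,2n}(z) = A(z)^n$ specialized to $n = 2N$, the orthogonality conditions \eqref{cond for Pn}--\eqref{cond for Qn} for $P_N$ and $Q_{N-1}$ match the matrix valued OP conditions of \cite{DK}, and the normalization in the asymptotic condition (c) of the RH problem is the same. Hence the $Y$ of \eqref{Y definition} is literally the object denoted $Y$ in \cite{DK}, and $\mathcal{R}^Y(w,z)$ in \eqref{def of mathcal R} agrees with the resolvent-type object appearing in \cite[Theorem 4.7]{DK}.

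Third, I would substitute into the general formula of \cite[Theorem 4.7]{DK}. That formula expresses the correlation kernel as a ``free'' part (a single contour integral of the transition symbol, present only when $x_1 > x_2$) plus a double contour integral involving $\mathcal{R}^Y$. The powers $w^{y_2 - 2N}$ and $z^{-y_1-1}$ arise from recovering the $2 \times 2$ block entries of $A_{x_2,4N}$ and $A_{0,x_1}$ from their symbols via the inverse Fourier transform on $\gamma$, with the specific shifts $2y_1, 2y_2$ in the arguments of $K$ accounting for the parity-grouping built into the block structure. The main (and only) obstacle is therefore bookkeeping: tracking the exponents of $z$ and $w$, the correct sign and indicator on the ``free'' term, and the order of matrix factors in $\mathcal{R}^Y$. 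Once these align, the claimed identity is exactly the statement of \cite[Theorem 4.7]{DK} in our notation.
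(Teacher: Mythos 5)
Your proposal matches the paper's treatment: the paper offers no independent proof, stating only that the proposition "follows after specifying the general result \cite[Theorem 4.7]{DK} to our situation" with the parameter dictionary $(N,M,L)=(N,N,4N)$ recorded in a footnote, which is exactly the specialization and bookkeeping you carry out. Your added verification that the symbols, weight $A(z)^{2N}/z^{2N}$, orthogonality conditions, and the RH problem for $Y$ coincide with those of \cite{DK} is the same (implicit) content, just spelled out.
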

As particular cases of the above, we obtain the following formulas.
\begin{corollary}
Let $\alpha \in (0,1]$. For integers $x \in \{1,\ldots,2N-1\}$ and $y \in \mathbb{Z}$, we have
\begin{align}
& \big[ K(2x,2y+j,2x,2y+i) \big]_{i,j=0}^{1} =  \frac{1}{(2\pi i)^{2}}\int_{\gamma}\int_{\gamma}\frac{A(w)^{2N-x}}{w^{2N-y}}\mathcal{R}^{Y}(w,z)\frac{A(z)^{x}}{z^{y+1}} dzdw \label{kernel diag even}
\end{align}
and
\begin{align}
& \big[ K(2x+1,2y+j,2x+1,2y+i) \big]_{i,j=0}^{1} = \label{kernel diag odd} \\
& \frac{1}{(2\pi i)^{2}}\int_{\gamma}\int_{\gamma}\begin{pmatrix}
\alpha^{2} & \alpha \\
w & 1
\end{pmatrix}\frac{A(w)^{2N-x-1}}{w^{2N-y}}\mathcal{R}^{Y}(w,z)\frac{A(z)^{x}}{z^{y+1}}\begin{pmatrix}
1 & 1 \\
\alpha z & 1
\end{pmatrix} dzdw. \nonumber
\end{align}
\end{corollary}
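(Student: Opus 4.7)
The plan is a direct specialization of the general kernel formula \eqref{kernel general} to the on-diagonal cases $x_{1}=x_{2}$ with $y_{1}=y_{2}=y$, combined with repeated use of the fact that the one-step symbols $A_{x,x+1}(z)$ depend only on the parity of $x$, as seen from \eqref{1-transition symbol}. In particular, both expressions reduce to cleaning up the block-Toeplitz products $A_{x_{2},4N}(w)$ and $A_{0,x_{1}}(z)$ in terms of the two-step symbol $A(z)$ from \eqref{def of A}.

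The key observation is that for every integer $k$,
\begin{align*}
A_{2k,2k+2}(z) = A_{2k,2k+1}(z)\,A_{2k+1,2k+2}(z) = \begin{pmatrix} 1 & 1 \\ \alpha z & 1 \end{pmatrix}\begin{pmatrix} \alpha^{2} & \alpha \\ z & 1 \end{pmatrix} = A(z),
\end{align*}
because the right-hand side depends only on the parities of the left endpoints. Iterating yields $A_{2j,2k}(z) = A(z)^{k-j}$ for any integers $j\leq k$.

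For \eqref{kernel diag even}, take $x_{1}=x_{2}=2x$ and $y_{1}=y_{2}=y$ in \eqref{kernel general}. The indicator $\chi_{x_{1}>x_{2}}$ vanishes, so the single-integral term disappears; the above identity then gives $A_{2x,4N}(w) = A(w)^{2N-x}$ and $A_{0,2x}(z) = A(z)^{x}$, and \eqref{kernel diag even} follows after inserting $w^{y_{2}-2N}=w^{y-2N}$ and $z^{-(y_{1}+1)}=z^{-(y+1)}$.

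For \eqref{kernel diag odd}, take $x_{1}=x_{2}=2x+1$ and $y_{1}=y_{2}=y$; again $\chi_{x_{1}>x_{2}}=0$. Now the product runs between points of opposite parity, so we peel off one factor on each side: since $2x+1$ is odd and $2x$ is even,
\begin{align*}
A_{2x+1,4N}(w) = A_{2x+1,2x+2}(w)\,A_{2x+2,4N}(w) = \begin{pmatrix} \alpha^{2} & \alpha \\ w & 1 \end{pmatrix} A(w)^{2N-x-1},
\end{align*}
and
\begin{align*}
A_{0,2x+1}(z) = A_{0,2x}(z)\,A_{2x,2x+1}(z) = A(z)^{x}\begin{pmatrix} 1 & 1 \\ \alpha z & 1 \end{pmatrix}.
\end{align*}
Substituting these into \eqref{kernel general} with the $w^{y-2N}$ and $z^{-(y+1)}$ factors produces \eqref{kernel diag odd}. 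Since the corollary is purely an algebraic specialization, there is no genuine obstacle; the only point requiring attention is the correct ordering of the matrix factors when peeling off the boundary term in the odd case, where the factor $\begin{pmatrix} \alpha^{2} & \alpha \\ w & 1 \end{pmatrix}$ must appear to the left and $\begin{pmatrix} 1 & 1 \\ \alpha z & 1 \end{pmatrix}$ to the right of $\mathcal{R}^{Y}(w,z)$.
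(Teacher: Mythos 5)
Your proof is correct and follows essentially the same route as the paper: specialize \eqref{kernel general} to $x_{1}=x_{2}$ (so the single-integral term drops) and use $A_{0,2x}(z)=A(z)^{x}$, $A_{2x,4N}(w)=A(w)^{2N-x}$, $A_{0,2x+1}(z)=A(z)^{x}\begin{pmatrix}1 & 1\\ \alpha z & 1\end{pmatrix}$, $A_{2x+1,4N}(w)=\begin{pmatrix}\alpha^{2} & \alpha\\ w & 1\end{pmatrix}A(w)^{2N-x-1}$, which is exactly the paper's argument. Your explicit check of the parity-based two-step identity and of the factor ordering only makes the same computation more detailed.
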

\begin{proof}
This simply follows from
\begin{align*}
& A_{0,2x}(z) = A(z)^{x}, & & A_{2x,4N}(w) = A(w)^{2N-x}, \\
& A_{0,2x+1}(z) = A(z)^{x}\begin{pmatrix}
1 & 1 \\
\alpha z & 1
\end{pmatrix}, & & A_{2x+1,4N}(w) = \begin{pmatrix}
\alpha^{2} & \alpha \\
w & 1
\end{pmatrix}A(w)^{2N-x-1},
\end{align*}
where we have used \eqref{1-transition symbol} and \eqref{def of A}.
\end{proof}
From \cite[Lemma 4.6]{DK}, $\mathcal{R}^{Y}(w,z)$ is the unique bivariate polynomial of degree $\leq N-1$ in both variables $w$ and $z$ which satisfies the following reproducing property
\begin{equation}\label{reproducing property Y}
\frac{1}{2\pi i} \int_{\gamma} P(w) \frac{A^{2N}(w)}{w^{2N}}\mathcal{R}^{Y}(w,z)dw = P(z),
\end{equation}
for every $2 \times 2$ matrix valued polynomial $P$ of degree $\leq N-1$. Because it satisfies \eqref{reproducing property Y}, $\mathcal{R}^{Y}(w,z)$ is called a reproducing kernel.

\section{Statement of results}\label{Section: main results}
The new double contour formula for the kernel in terms of scalar OPs is stated in Theorem \ref{thm: correlation kernel final scalar expression}. In this formula, the large $N$ behavior of the integrand is roughly $e^{N \Xi}$, for a certain phase function $\Xi$ which in our case is defined on a two-sheeted Riemann surface $\mathcal{R}_{\alpha}$. The restriction of $\Xi$ on the first and second sheet are denoted by $\Phi$ and $\Psi$, respectively. The saddle points are the solutions $\zeta \in \mathbb{C}$ for which either $\Phi'(\zeta) = 0$ or $\Psi'(\zeta) = 0$. In the liquid region, Proposition \ref{prop:saddle} states that there is a unique saddle, denoted $s$, lying in the upper half plane. This saddle plays an important role in our analysis, and some of its properties are stated in Propositions \ref{prop:hightemp} and \ref{prop: s on the Riemann surface}. The limiting densities for the lozenges in the liquid region are given explicitly in terms of $s$ in Theorem \ref{thm:main}.
\begin{remark}\label{rem:alpha is not 1}
If $\alpha = 1$, our model reduces to the uniform measure and the kernel can be expressed in terms of scalar-valued OPs. However, our approach is based on the formulas \eqref{kernel diag even}--\eqref{kernel diag odd}, and even though these formulas are still valid for $\alpha = 1$, this case requires a special attention (because of a different branch cut structure in the analysis). Since the limiting densities for the lozenges in this case are already well-known \cite{CLP}, from now we will assume that $\alpha \in (0,1)$ to avoid unnecessary discussions.
\end{remark}
\subsection{New formula for the kernel in terms of scalar OPs}\label{subsection: new formula for the kernel}
We define the scalar weight $W$ by
\begin{equation}\label{def of W in intro}
W(\zeta) = \bigg( \frac{(\zeta-\alpha c)(\zeta-\alpha c^{-1})}{\zeta (\zeta-c)(\zeta-c^{-1})} \bigg)^{2N}, \qquad \mbox{ where } \qquad c = \sqrt{\frac{\alpha}{1-\alpha + \alpha^{2}}},
\end{equation}
and consider the following $2 \times 2$ RH problem.
\subsubsection*{RH problem for $U$}
\begin{itemize}
\item[(a)] $U : \mathbb{C}\setminus \gamma_{\mathbb{C}} \to \mathbb{C}^{2\times 2}$ is analytic, where $\gamma_{\mathbb{C}}$ is a closed curve surrounding $c$ and $c^{-1}$ once in the positive direction, but not surrounding $0$.
\item[(b)] The limits of $U(\zeta)$ as $\zeta$ approaches $\gamma_{\mathbb{C}}$ from inside and outside exist, are continuous on $\gamma_{\mathbb{C}}$ and are denoted by $U_+$ and $U_-$ respectively. Furthermore, they are related by
\begin{equation}\label{jump relations of U}
U_{+}(\zeta) = U_{-}(\zeta) \begin{pmatrix}
1 & W(\zeta) \\ 0 & 1
\end{pmatrix}, \hspace{0.5cm} \mbox{ for } \zeta \in \gamma_{\mathbb{C}}.
\end{equation}
\item[(c)] As $\zeta \to \infty$, we have $U(\zeta) = \left(I_{2} + \bigO(\zeta^{-1})\right) \begin{pmatrix}
\zeta^{2N} & 0 \\ 0 & \zeta^{-2N}
\end{pmatrix}$.
\end{itemize}
It is known \cite{FIK} that the solution $U$ to the above RH problem is unique (provided it exists), and can be expressed in terms of scalar-valued orthogonal polynomials as follows
\begin{align*}
U(\zeta) = \begin{pmatrix}
p_{2N}(\zeta) & \frac{1}{2\pi i} \int_{\gamma_{\mathbb{C}}} \frac{p_{2N}(\xi)W(\xi)}{\xi-\zeta}d\xi \\
q_{2N-1}(\zeta) & \frac{1}{2\pi i} \int_{\gamma_{\mathbb{C}}} \frac{q_{2N-1}(\xi)W(\xi)}{\xi-\zeta}d\xi
\end{pmatrix}, \qquad \zeta \in \mathbb{C}\setminus \gamma_{\mathbb{C}},
\end{align*}
where $p_{2N}$ and $q_{2N-1}$ are polynomials of degree $2N$ and $2N-1$ respectively, satisfying the following conditions
\begin{align}
& p_{2N}(\zeta) = \zeta^{2N}+\bigO(\zeta^{2N-1}), \qquad \mbox{as } \zeta \to \infty, \label{cond for pn} \\
& \frac{1}{2\pi i}\int_{\gamma_{\mathbb{C}}} p_{2N}(\zeta) W(\zeta)\zeta^{k}d\zeta = 0, & & k = 0,\ldots,2N-1, \nonumber 
\end{align}
and
\begin{align}
& \frac{1}{2\pi i}\int_{\gamma_{\mathbb{C}}} q_{2N-1}(\zeta) W(\zeta)\zeta^{2N-1}d\zeta = -1, \label{cond for qn} \\
& \frac{1}{2\pi i}\int_{\gamma_{\mathbb{C}}} q_{2N-1}(\zeta) W(\zeta)\zeta^{k}d\zeta = 0, & & k = 0,\ldots,2N-2. \nonumber
\end{align}
The reproducing kernel $\mathcal{R}^{U}$ is defined by
\begin{equation}\label{reproducing kernel in terms of U intro}
\mathcal{R}^{U}(\omega,\zeta) = \frac{1}{\zeta-\omega} \begin{pmatrix}
0 & 1
\end{pmatrix} U^{-1}(\omega)U(\zeta) \begin{pmatrix}
1 \\ 0
\end{pmatrix}.
\end{equation}
Now, we state our first main result.
\begin{theorem}\label{thm: correlation kernel final scalar expression}

For $x \in \{1,\ldots,2N-1\}$, $y \in \mathbb{Z}$ and $\epsilon_{x} \in \{0,1\}$, we have
\begin{align}
& \big[ K(2x+\epsilon_{x},2y+j,2x+\epsilon_{x},2y+i) \big]_{i,j=0}^{1} = \frac{1}{(2\pi i)^{2}}\int_{\gamma_{\mathbb{C}}}\int_{\gamma_{\mathbb{C}}} H_{K}(\omega,\zeta;\epsilon_{x}) \label{new formula for the kernel in the thm} \\ 
&  W(\omega) \mathcal{R}^{U}(\omega,\zeta) \frac{\omega^{N+x-y}}{\zeta^{N+x-y}} \frac{(\omega-c)^{y}(\omega-c^{-1})^{y}}{(\zeta-c)^{y}(\zeta-c^{-1})^{y}} \frac{(\zeta - \alpha c)^{x} (\zeta - \alpha c^{-1})^{x}}{(\omega - \alpha c)^{x} (\omega - \alpha c^{-1})^{x}}d\zeta d\omega, \nonumber
\end{align}
where $\gamma_{\mathbb{C}}$ is a closed curve surrounding $c$ and $c^{-1}$ once in the positive direction that does not go around $0$, and where $H_{K}(\omega,\zeta;0)$ and $H_{K}(\omega,\zeta;1)$ are given by 
\begin{align}
& H_{K}(\omega,\zeta;0) = \begin{pmatrix}
\frac{1}{\zeta - c} & \frac{c(1-\alpha)}{\alpha (\zeta -c)(\zeta - c^{-1})} \\
\frac{\alpha}{(1-\alpha)c^{2} \omega}\frac{\omega-c}{\zeta -c} & \frac{\omega - c}{c \omega (\zeta -c)(\zeta -c^{-1})}
\end{pmatrix}, \label{HK epsx 0} \\
& H_{K}(\omega,\zeta;1) = \begin{pmatrix}
\frac{c (\zeta - \alpha c)}{\zeta(\zeta -c)(\omega - \alpha c)} & \frac{(1-\alpha)c(\zeta - \alpha c)}{(\zeta -c)(\zeta - c^{-1})(\omega - \alpha c)} \\
\frac{(\zeta - \alpha c)(\omega -c)}{(1-\alpha)\zeta(\zeta -c)(\omega-\alpha c)} & \frac{(\zeta - \alpha c)(\omega -c)}{(\zeta -c)(\zeta -c^{-1})(\omega - \alpha c)}
\end{pmatrix}. \label{HK epsx 1}
\end{align}
\end{theorem}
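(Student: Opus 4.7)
The plan is to exploit the fact that the matrix valued weight $A(z)^{2N}/z^{2N}$ can be diagonalized, with eigenvalues that live on a Riemann surface $\mathcal{M}$ of genus $0$. Since $\mathcal{M}$ admits a rational uniformization, I can parametrize it by a complex variable $\zeta$, transform the matrix orthogonality into a scalar orthogonality with weight $W(\zeta)$, and translate the $4 \times 4$ RH problem for $Y$ into the $2 \times 2$ RH problem for $U$. The double contour formulas \eqref{kernel diag even}--\eqref{kernel diag odd} then become \eqref{new formula for the kernel in the thm} after a careful change of variables that tracks all scalar and matrix factors.

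Concretely, I would first compute the eigenvalues and eigenvectors of $A(z)$ from \eqref{def of A}. Since $A(z)$ has two distinct eigenvalues $\lambda_{\pm}(z)$ away from the branch points, one gets a decomposition $A(z) = E(z)\,\mathrm{diag}(\lambda_+(z),\lambda_-(z))\,E(z)^{-1}$ defined on the Riemann surface $\mathcal{M}$ attached to $\lambda^{2} - \mathrm{tr}\,A(z)\,\lambda + \det A(z) = 0$. A count of branch points shows that $\mathcal{M}$ has genus $0$, hence is rationally parametrizable; one chooses a uniformizing variable $\zeta$ so that the branch points, the zeros of $\det A$, and $z=0,\infty$ all correspond to distinguished points in $\zeta$. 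The normalized principal eigenvalue then pulls back to $\lambda_+(z)/z = \mu(\zeta):=\frac{(\zeta-\alpha c)(\zeta-\alpha c^{-1})}{\zeta(\zeta-c)(\zeta-c^{-1})}$, so that $\lambda_+(z)^{2N}/z^{2N} = W(\zeta)$, recovering \eqref{def of W in intro}. The conjugate sheet is parametrized by a rational involution $\tau(\zeta)$ on the $\zeta$-sphere.

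Next I would use this diagonalization to encode the matrix orthogonality conditions \eqref{cond for Pn}--\eqref{cond for Qn} as scalar orthogonality with respect to $W$ on a contour $\gamma_{\mathbb{C}}$ in the $\zeta$-plane. The contour $\gamma$ in the $z$-plane lifts to two curves on $\mathcal{M}$, one of which is $\gamma_{\mathbb{C}}$ (chosen to encircle $c,c^{-1}$, the $\zeta$-preimages of the relevant singularity of $A(z)^{2N}/z^{2N}$) and the other its $\tau$-image. Multiplying by $E(z)$ and $E(z)^{-1}$ on the appropriate sides, each $2\times 2$ block of the $4\times 4$ RH problem for $Y$ decouples into two copies of a scalar RH problem, one of which is exactly the RH problem for $U$ while the other can be solved explicitly using only the analytic data on the $\tau$-sheet. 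From this one obtains an expression for the matrix reproducing kernel $\mathcal{R}^{Y}(w,z)$ in terms of $\mathcal{R}^{U}(\omega,\zeta)$, the Jacobian of $z=z(\zeta)$, and matrix factors built from $E(\zeta)$ and $E(\tau(\zeta))$. The uniqueness of $\mathcal{R}^{Y}$ via the reproducing property \eqref{reproducing property Y} then locks the identification.

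The final step is algebraic: substitute the new expression for $\mathcal{R}^{Y}$ into \eqref{kernel diag even}--\eqref{kernel diag odd}, rewrite $A(z)^{x}$ as $E(z)\,\mathrm{diag}(\lambda_+^x,\lambda_-^x)\,E(z)^{-1}$, and pull back to the $\zeta$-variable. The ratios of eigenvalues produce the factors $\frac{\omega^{N+x-y}}{\zeta^{N+x-y}}$, $\frac{(\omega-c)^{y}(\omega-c^{-1})^{y}}{(\zeta-c)^{y}(\zeta-c^{-1})^{y}}$ and $\frac{(\zeta-\alpha c)^{x}(\zeta-\alpha c^{-1})^{x}}{(\omega-\alpha c)^{x}(\omega-\alpha c^{-1})^{x}}$ seen in \eqref{new formula for the kernel in the thm}, while the entries of $E(\zeta)^{-1}E(\omega)$ (together, when $\epsilon_{x}=1$, with the outer matrix factors in \eqref{kernel diag odd}) supply $H_{K}(\omega,\zeta;\epsilon_{x})$ as given by \eqref{HK epsx 0}--\eqref{HK epsx 1}. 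I expect the main obstacle to lie in the third step: the lift of $\gamma$ to $\mathcal{M}$ produces contributions on both sheets, and one must verify that the $\tau$-sheet contribution either vanishes or can be deformed away thanks to the analyticity properties of $U$ and the absence of residues, so that a genuine single scalar reproducing kernel emerges. Once this analytic identification is settled, the explicit form of $H_{K}$ reduces to a direct computation with the fixed eigenvector basis.
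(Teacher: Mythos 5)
Your overall strategy (eigendecomposition of $A$, the genus-zero surface $\mathcal{M}$, rational uniformization producing the scalar weight $W$, then a change of variables in the double integral) is exactly the route the paper takes, and your first and last steps would go through essentially as you describe. The genuine gap is in your middle step. After the transformation $Y\mapsto X$ the $4\times 4$ RH problem does \emph{not} decouple in the $z$-plane into ``two copies of a scalar RH problem, one of which is exactly the RH problem for $U$ while the other can be solved explicitly'': because of the jump $E_{+}=E_{-}\sigma_{1}$ on $(z_{-},z_{+})$ in \eqref{jumps for E}, the column pairs that would constitute your two scalar problems are swapped across the branch cut, so they are coupled and only become well defined as \emph{single} functions on the two-sheeted surface $\mathcal{M}$. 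There is no second, explicitly solvable scalar problem; instead the whole matrix orthogonality collapses onto one scalar orthogonality of \emph{doubled} degree (the $2\times 2$ OPs of degree $N$ correspond to the scalar OPs $p_{2N},q_{2N-1}$ of \eqref{cond for pn}--\eqref{cond for qn}), reflecting the fact that the relevant function space $L_{N}$ on $\mathcal{M}$ has dimension $2N$ by Riemann--Roch.

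Relatedly, the ``main obstacle'' you flag at the end is resolved by neither of the mechanisms you propose: the contribution of the second sheet neither vanishes nor can be deformed away. In the paper one defines a scalar kernel $\mathcal{R}^{\mathcal{M}}$ on $\mathcal{M}$ from $\mathcal{R}^{X}$ as in \eqref{def of reproducing kernel M}, and the reproducing property \eqref{reproducing property M} is an integral over a contour $\gamma_{\mathcal{M}}$ that genuinely visits \emph{both} sheets; the two lifts of $\gamma$ are fused into this single contour, and under the uniformizing coordinate it becomes the single curve $\gamma_{\mathbb{C}}$ around $c$ and $c^{-1}$ (images of $0^{(1)}$ and $0^{(2)}$), with the portions of $\gamma_{\mathbb{C}}$ outside and inside the unit circle carrying the first- and second-sheet data respectively. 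The second-sheet contribution is thus an essential part of the scalar orthogonality \eqref{reproducing property C} (it is precisely what doubles the degree), so an argument trying to kill it would fail. If you replace your decoupling step by this construction --- prove that $z\mapsto\mathcal{R}^{\mathcal{M}}(w,z)$ and $w\mapsto\mathcal{R}^{\mathcal{M}}(w,z)$ lie in $L_{N}$ and $L_{N}^{*}$ using the RH problem for $X$, establish the reproducing property on $\mathcal{M}$ by testing \eqref{reproducing property X} against $p(w)e_{1}^{T}$ and $p(w)e_{2}^{T}$, then pass to $\zeta$ via \eqref{def of Reproducing kernel C} and identify the result with the $U$-expression by uniqueness of the bivariate polynomial reproducing kernel --- the rest of your outline (substitution into \eqref{kernel diag even}--\eqref{kernel diag odd} and the bookkeeping yielding $H_{K}$) is sound.
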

\begin{remark}
Theorem \ref{thm: correlation kernel final scalar expression} is proved in Section \ref{section: reducing the size}. It is based on an unpublished idea of A. Kuijlaars that matrix valued orthogonal polynomials in a genus zero situation can be reduced to scalar orthogonality. In our case, the scalar orthogonality appears in \eqref{cond for pn}--\eqref{cond for qn} and a main part of the proof of Theorem \ref{thm: correlation kernel final scalar expression} consists of relating the matrix valued reproducing kernel $\mathcal{R}^Y$ from \eqref{def of mathcal R} to the scalar reproducing kernel $\mathcal{R}^U$ from \eqref{reproducing kernel in terms of U intro}.
\end{remark}

\subsection{The rational function $\mathcal{Q}$} \label{sec:zeta}

The function $\mathcal{Q}$ is a meromorphic function that appears in the equilibrium problem associated to the varying weight $W$. Its explicit expression is obtained after solving a non-linear system of 5 equations with 5 unknowns. Here, we just state the formula for $\mathcal{Q}$, and refer to Section \ref{section: g-function} for a more constructive approach. We define $\mathcal{Q}$ as follows
\begin{align}\label{def of Q in statement of results}
\mathcal{Q}(\zeta) = \frac{(\zeta-r_{1})^{2}(\zeta-r_{2})^{2}(\zeta-r_{3})^{2}(\zeta-r_{+})(\zeta-r_{-})}{4\zeta^{2} (\zeta-\alpha c)^{2} (\zeta-\alpha c^{-1})^{2} (\zeta-c)^{2} (\zeta-c^{-1})^{2}},
\end{align}
where $c$ is given by \eqref{def of W in intro}, $r_{1}$, $r_{2}$ and $r_{3}$ are given by
\begin{align}\label{def of r1 r2 r3}
& r_{1} = - \sqrt{\alpha}, & & r_{2} = \sqrt{\alpha} \frac{\alpha c + \sqrt{\alpha}}{c + \sqrt{\alpha}}, & & r_{3} = \sqrt{\alpha} \frac{c + \sqrt{\alpha}}{\alpha c + \sqrt{\alpha}},
\end{align}
and $r_{+}$ and $r_{-}$ are given by
\begin{align}
& r_{+} = c \, \bigg( \frac{1+\alpha}{2} + i\sqrt{3} \frac{1-\alpha}{2} \bigg), & & r_{-} = c \, \bigg( \frac{1+\alpha}{2} - i\sqrt{3} \frac{1-\alpha}{2} \bigg). \label{def of r+ r-}
\end{align}
The zero $r_{+}$ of $\mathcal{Q}$ lies in the upper half plane, $r_{-} = \overline{r_{+}}$, and the other zeros and poles of $\mathcal{Q}$ are real. Furthermore, for all $\alpha \in (0,1)$, they are ordered as follows:
\begin{align}\label{ordering of the zeros and the poles}
r_{1} < 0 < \alpha c < r_{2} < \alpha c^{-1} < c < r_{3} < c^{-1}.
\end{align}
\subsection{Lozenge probabilities}
The densities for the three types of lozenges at a point $(x,y)$, $x,y \in \{0,1,\ldots,4N\}$, are denoted by
\begin{align}\label{def of mathcal P1 P2 P3}
\mathcal{P}_{1}(x,y) = \mathbb P\Bigg(\tikz[scale=.3,baseline=(current bounding box.center)] {\draw (0,-1) \lozr; \filldraw (0,-1) circle(5pt); \draw (0,-1) node[below] {$(x,y)$}} \Bigg), \qquad \mathcal{P}_{2}(x,y) = \mathbb P\Bigg(\tikz[scale=.3,baseline=(current bounding box.center)] {\draw (0,0) \lozu; \filldraw (0,0) circle(5pt); \draw (0,0) node[below] {$(x,y)$} }\Bigg), \qquad \mathcal{P}_{3}(x,y) = \mathbb P\Bigg(\tikz[scale=.3,baseline=(current bounding box.center)] {\draw (0,0) \lozd; \filldraw (1,0) circle(5pt); \draw (1,0) node[below] {$(x,y)$}} \Bigg),
\end{align}
and satisfy $\sum_{j=1}^{3}\mathcal{P}_{j}(x,y) = 1$. Because our model is $2\times 2$ periodic, $\mathcal{P}_{1}(x,y)$, $\mathcal{P}_{2}(x,y)$ and $\mathcal{P}_{3}(x,y)$ depend crucially on the parity of $x$ and $y$, and it is convenient to consider the following matrices
\begin{align}\label{def of P1 P2 P3}
P_{j}(x,y) = \begin{pmatrix}
\mathcal{P}_{j}(2x,2y+1) & \mathcal{P}_{j}(2x+1,2y+1) \\
\mathcal{P}_{j}(2x,2y) & \mathcal{P}_{j}(2x+1,2y)
\end{pmatrix}, \qquad j=1,2,3,
\end{align}
where $x,y \in \{0,1,...,2N-1\}$. Let $\{(x_{N},y_{N})\}_{N\geq 1}$ be a sequence satisfying
\begin{align}\label{good sequence}
& \begin{cases}
\frac{x_{N}}{N} = 1+\xi + o(1),  \\
\frac{y_{N}}{N} = 1+\eta + o(1),
\end{cases} \qquad  \mbox{as } N \to + \infty,
\end{align}
where the point $(\xi,\eta)$ lies in the hexagon
\begin{equation} \label{eq:Hhexagon}
\mathcal H= \left\{(\xi,\eta) \mid  -1\leq \xi \leq 1, \ -1 \leq  \eta \leq 1,\ -1\leq  \eta-\xi \leq 1 \right\}.
\end{equation}
In Theorem \ref{thm:main}, we give explicit expressions for
\begin{align}\label{limits intro in intro}
\lim_{N \to + \infty} P_{j}(x_{N},y_{N}), \qquad j=1,2,3,
\end{align}
in case $(\xi,\eta)$ belongs to the liquid region $\mathcal{L}_{\alpha} \subset \mathcal{H}$.



\subsection{Saddle points and the liquid region}\label{subsection: saddle points and liquid region}

For each $(\xi,\eta)\in \mathcal{H}$, there are in total $8$ saddles for the double contour integral \eqref{new formula for the kernel in the thm}, which are the solutions to  the algebraic equation 
\begin{align}\label{eq:saddlepointeq} 
\left[\frac{\xi-\eta}{2}\frac{1}{\zeta} - \frac{\xi}{2}\left( \frac{1}{\zeta-\alpha c} + \frac{1}{\zeta - \alpha c^{-1}} \right) + \frac{\eta}{2}\left( \frac{1}{\zeta - c} + \frac{1}{\zeta-c^{-1}} \right)\right]^{2} = \mathcal{Q}(\zeta),
\end{align}
where $\mathcal{Q}(\zeta)$ is given by \eqref{def of Q in statement of results}. 
Following the previous works \cite{BF,Duits1,Ok2,Petrov1, CDKL}, we define the liquid region as the subset of $\mathcal{H}$ for which there exists a saddle lying in the upper half-plane $\mathbb{C}^{+} = \{\zeta \in \mathbb{C}: \im \zeta >0\}$. Proposition \ref{prop:saddle} states that there is a unique such saddle (whenever it exists), which is denoted by $s(\xi,\eta;\alpha)$. This saddle plays a particular role in the analysis of Section \ref{section: saddle point analysis} and appears in the final formulas for the limiting densities \eqref{limits intro in intro}.
\begin{proposition}\label{prop:saddle}
Let $(\xi, \eta) \in \mathcal{H}^{\mathrm{o}}$ (the interior set of $\mathcal{H}$).
Then there exists at most one solution $\zeta=s(\xi, \eta; \alpha)$ to \eqref{eq:saddlepointeq} in $\mathbb C^+=\{\zeta \in \mathbb C \mid \im \zeta>0\}$.
\end{proposition}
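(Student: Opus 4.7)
The plan is to invert the saddle equation---i.e., to express $(\xi,\eta)$ as a function of the saddle $\zeta\in\C^+$---and then conclude uniqueness from the injectivity of the resulting map. A short partial-fraction computation rewrites the bracketed expression in \eqref{eq:saddlepointeq} as $\tfrac12\bigl(\xi F(\zeta)+\eta G(\zeta)\bigr)$, where
\[
F(\zeta)=\frac{\alpha^{2}-\zeta^{2}}{\zeta(\zeta-\alpha c)(\zeta-\alpha c^{-1})},\qquad G(\zeta)=\frac{\zeta^{2}-1}{\zeta(\zeta-c)(\zeta-c^{-1})}.
\]
The saddle equation thus becomes $\xi F(\zeta)+\eta G(\zeta)=\pm 2\sqrt{\mathcal Q(\zeta)}$, the two signs corresponding to the two sheets of $\mathcal R_\alpha$.

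For $\zeta\in\C^+$ and a fixed choice of sign, this is a single complex linear equation in the two real unknowns $(\xi,\eta)$; splitting into real and imaginary parts converts it into a $2\times 2$ real linear system with determinant $\Delta(\zeta)=\im\!\bigl(\overline{F(\zeta)}\,G(\zeta)\bigr)$. The key lemma to establish is that $\Delta(\zeta)\ne 0$ for every $\zeta\in\C^+$. Granted this, Cramer's rule produces two real-analytic maps $S_{\pm}\colon\C^+\to\R^2$ (with $S_{-}=-S_{+}$, because flipping the branch of $\sqrt{\mathcal Q}$ negates the right-hand side), and any saddle $\zeta\in\C^+$ associated with a given $(\xi,\eta)$ must satisfy $(\xi,\eta)\in\{S_{+}(\zeta),S_{-}(\zeta)\}$. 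Uniqueness then reduces to (a) injectivity of each $S_{\pm}$ on $\C^+$, and (b) the statement that for $(\xi,\eta)\in\mathcal H^\circ$ one cannot simultaneously have $(\xi,\eta)=S_{+}(\zeta_{1})$ and $(\xi,\eta)=-S_{+}(\zeta_{2})$ at two distinct points $\zeta_{1},\zeta_{2}\in\C^+$.

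For (a), the non-vanishing of $\Delta$ makes $S_{\pm}$ a local diffeomorphism, and global injectivity is obtained by a degree/winding-number argument: extend $S_{+}$ continuously to $\overline{\C^+}$, analyze its behavior near the real singularities $0,\alpha c,\alpha c^{-1},c,c^{-1}$ and near the branch point $r_{+}$, and verify that the boundary trace wraps $\partial\mathcal L_\alpha$ exactly once. For (b), the two values $\pm\sqrt{\mathcal Q(\zeta)}$ correspond to genuinely distinct sheets of $\mathcal R_\alpha$ away from the unique branch point $r_{+}\in\C^+$; the compatibility between the branch chosen on $\C^+$ and the orientation of $\mathcal R_\alpha$ fixes a chirality ruling out simultaneous saddles from opposite sheets inside $\mathcal H^\circ$, with the sole potential coincidence $(\xi,\eta)=(0,0)$ coming from the single doubled saddle at $\zeta=r_{+}$ itself.

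The technical core is the non-vanishing of $\Delta$ on $\C^+$. A useful observation is the involution $\zeta\mapsto\alpha/\zeta$, which sends the poles $\{0,\alpha c,\alpha c^{-1}\}$ of $F$ to $\{\infty,c^{-1},c\}$ and thus swaps $F$ and $G$ up to the multiplier $\zeta^{2}/\alpha$; a direct calculation then yields $\Delta(\alpha/\zeta)=-(|\zeta|^{4}/\alpha^{2})\,\Delta(\zeta)$, so $\Delta$ is sign-reversing across the fixed-point locus $\zeta=\pm\sqrt{\alpha}$ and the involution exchanges $\C^+$ with $\C^-$, consistent with the expected constant sign of $\Delta$ on $\C^+$. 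Combining this symmetry with a direct expansion of $\im\bigl(\overline{F}\,G\bigr)$ as an explicit polynomial in $(\re\zeta,\im\zeta)$ with coefficients in $\alpha,c$, and, if needed, a continuity argument in $\alpha\in(0,1)$ together with an explicit boundary check on $\partial\C^+=\R\cup\{\infty\}$ that forbids any zero of $\Delta$ from migrating into $\C^+$, should produce a manifestly sign-definite factorization of $\Delta$ on $\C^+$ and complete the proof.
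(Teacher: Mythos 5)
Your route is genuinely different from the paper's, and as written it is a plan rather than a proof, with one pivotal claim that does not hold as stated. The assertion that ``the non-vanishing of $\Delta$ makes $S_{\pm}$ a local diffeomorphism'' is unjustified: $\Delta(\zeta)=\im\big(\overline{F(\zeta)}G(\zeta)\big)\neq 0$ only guarantees that the Cramer solution $(\xi,\eta)=S_{\pm}(\zeta)$ is well defined and smooth; local injectivity of $\zeta\mapsto S_{\pm}(\zeta)$ requires the non-vanishing of the Jacobian of that map with respect to $(\re\zeta,\im\zeta)$, which involves derivatives of $F$, $G$ and $\mathcal{Q}^{1/2}$ and is a different (unexamined) condition. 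Without a sign-definite local Jacobian, the degree/winding argument you invoke only controls a signed count of preimages, not uniqueness, and the winding analysis itself (boundary trace along $\mathbb{R}\cup\{\infty\}$, behaviour at the poles $0,\alpha c,\alpha c^{-1},c,c^{-1}$) is not carried out. There is also a domain problem: $\mathcal{Q}$ has a simple zero at $r_{+}\in\mathbb{C}^{+}$, so no single-valued branch of $\mathcal{Q}^{1/2}$ exists on all of $\mathbb{C}^{+}$; your maps $S_{\pm}$ are only defined off a cut emanating from $r_{+}$ and swap across it, which your argument ignores. Step (b) (excluding one saddle on each sheet for the same $(\xi,\eta)$) is settled by a vague ``chirality'' appeal; the actual content here is that the sheet determines the sign of $\xi$, which is exactly what the paper proves later via Lemmas \ref{lemma: im is 0 for diffeo} and \ref{lemma: proof of part (a)} in Proposition \ref{prop: s on the Riemann surface} --- and note that the paper's proof of that proposition already uses Proposition \ref{prop:saddle}, so you must be careful not to argue circularly. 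Finally, your ``key lemma'' $\Delta\neq 0$ on $\mathbb{C}^{+}$ is itself left to a hoped-for factorization; it is in fact true and follows in one line from the partial-fraction identity \eqref{im is neg in diffeo}, since $\Delta=-|G|^{2}\,\im(F/G)$ and $F/G$ is precisely the rational function appearing there.

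For comparison, the paper's proof is elementary and avoids all of this: clearing denominators in \eqref{eq:saddlepointeq} gives the real polynomial $M$ of \eqref{eq:saddleequationhigh}, of degree $8$ with positive leading coefficient $1-(\xi-\eta)^{2}$ on $\mathcal{H}^{\mathrm{o}}$; evaluating $M$ at $r_{1},0,\alpha c,r_{2},\alpha c^{-1},c,r_{3},c^{-1}$ shows $M\leq 0$ at $r_{1},r_{2},r_{3}$ and $M\geq 0$ at the other points (with $M'(r_{j})=0$ in the degenerate cases $\eta=-\xi$, $\eta=\xi/2$, $\eta=2\xi$), so $M$ has at least six real zeros counted with multiplicity; since the coefficients are real, at most one conjugate pair can lie off the axis, hence at most one zero in $\mathbb{C}^{+}$. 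If you want to salvage your approach, the honest way is to work on $\mathcal{R}_{\alpha}^{+}$ rather than with two branches on $\mathbb{C}^{+}$, prove $\Delta\neq0$ via \eqref{im is neg in diffeo}, and then genuinely establish the injectivity (e.g.\ by computing the Jacobian of the explicit inverse \eqref{inverse of the diffeomorphism} or by a completed argument-principle computation); as it stands, the injectivity you need is essentially equivalent to the statement being proved.
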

\begin{definition}
We define the liquid region $\mathcal L_\alpha \subset \mathcal H$ by
\begin{equation} \label{eq:liquidLalpha}
\mathcal L_\alpha= \left\{ (\xi,\eta) \in \mathcal{H}^{\mathrm{o}} \mid 
\mbox{there exists a solution } \zeta = s(\xi, \eta; \alpha) \in \mathbb{C}^+ \mbox{ to \eqref{eq:saddlepointeq}} \right\}
\end{equation}
and we define the map $s: \mathcal L_\alpha \to\mathbb{C}^+$ by $(\xi,\eta) \mapsto s(\xi,\eta; \alpha)$. 
\end{definition}
\begin{figure}
\begin{center}
\begin{tikzpicture}[master]
\node at (0,0) {\includegraphics[width=4.05cm]{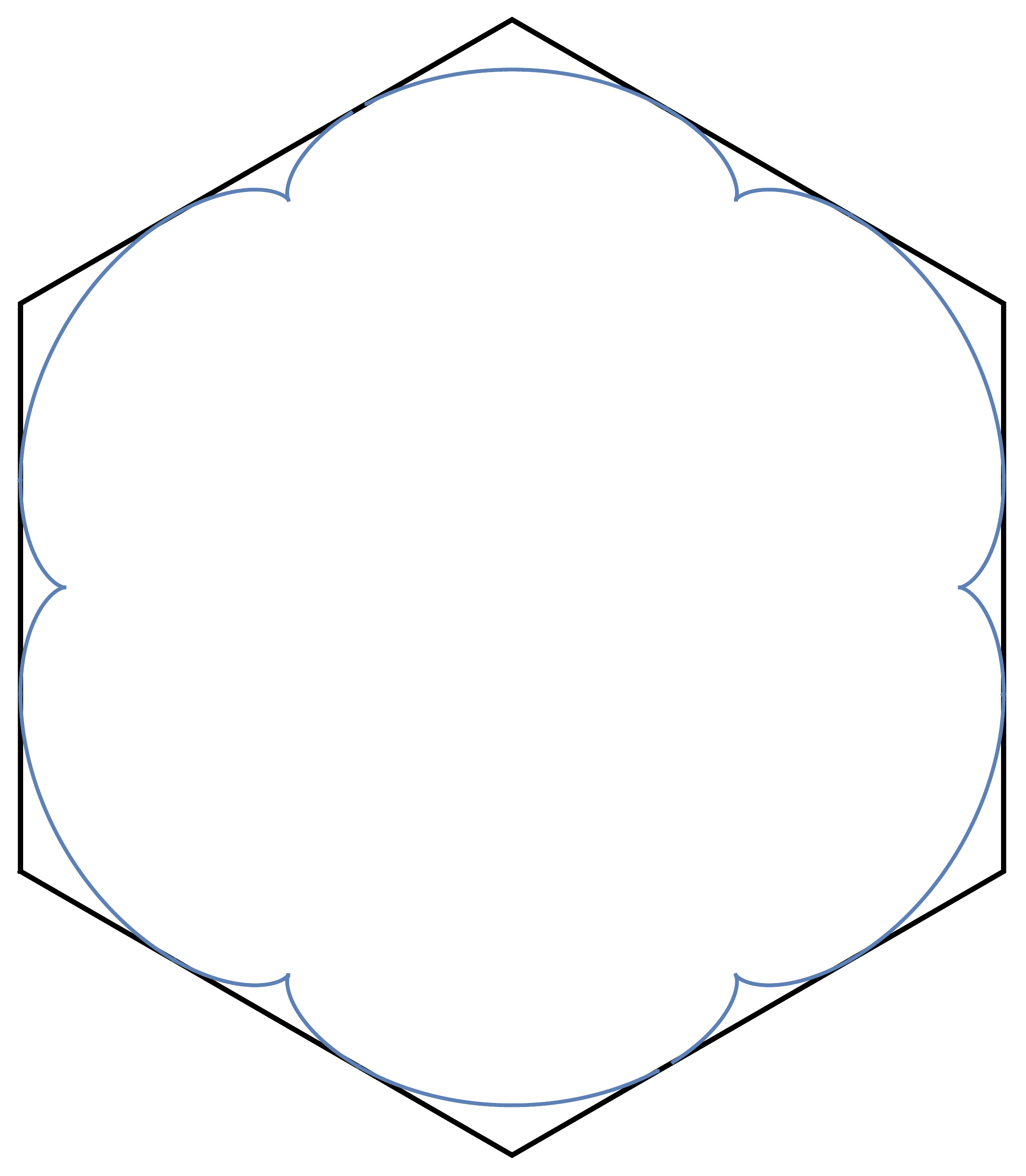}};
\draw [line width=0.65 mm,red] (0,-2.05)--(0,2.04);
\draw [line width=0.65 mm,red,dashed] (-1.78,0)--(1.77,0);
\draw [line width=0.65 mm,blue] (30:2.05)--(-150:2.05);
\draw [line width=0.65 mm,blue,dashed] (120:1.78)--(-60:1.77);
\draw [line width=0.65 mm,green] (150:2.05)--(-30:2.05);
\draw [line width=0.65 mm,green,dashed] (60:1.78)--(-120:1.77);
\end{tikzpicture}
\hspace{1cm}
\begin{tikzpicture}[slave,scale = 0.5]
\node at (0,0) {};

\draw[red,line width=0.65 mm] ([shift=(0:3.6cm)]2.1,0) arc (0:156:3.6cm);
\draw[red,dashed,line width=0.65 mm] ([shift=(155:3.6cm)]2.1,0) arc (155:180:3.6cm);

\draw[blue,line width=0.65 mm] ([shift=(0:2.4cm)]-3.13,0) arc (0:35:2.4cm);
\draw[blue,dashed,line width=0.65 mm] ([shift=(35:2.4cm)]-3.13,0) arc (35:180:2.4cm);

\draw[green,line width=0.65 mm] ([shift=(95:1.45cm)]-1.05,0) arc (95:180:1.45cm);
\draw[green,dashed,line width=0.65 mm] ([shift=(0:1.45cm)]-1.05,0) arc (0:95:1.45cm);

\node at (2.1,0) {\color{black} \small $\bullet$};
\node at (-0.38,0) {\color{black} \small $\bullet$};
\node at (-1.05,0) {\color{black} \small $\bullet$};
\node at (-2.035,0) {\color{black} \small $\bullet$};
\node at (-3.13,0) {\color{black} \small $\bullet$};

\end{tikzpicture}
\end{center}
\caption{\label{fig: mapping s} On the left, we draw for $\alpha = 0.4$ the parts of the lines $\xi = 0$ (red), $\eta = \frac{\xi}{2}$ (dashed red), $\eta = \xi$ (blue), $\eta = - \xi$ (dashed blue), $\eta = 0$ (green) and $\eta = 2 \xi$ (dashed green) that are in the liquid region. On the right, we draw the corresponding location of $s(\xi,\eta;\alpha)$ in the upper half plane. The black dots are, from left to right, $0$, $\alpha c$, $\alpha c^{-1}$, $c$ and $c^{-1}$.}
\end{figure}
It is clear from \eqref{def of Q in statement of results} and \eqref{eq:saddlepointeq} that $(0,0) \in \mathcal{L}_{\alpha}$ and $s(0,0;\alpha) = r_{+}$ for all $\alpha \in (0,1)$. We now describe some properties of $(\xi,\eta) \mapsto s(\xi,\eta;\alpha)$. Consider the following three circles:
\begin{align*}
& \gamma_{0} = \{\zeta \in \mathbb{C}:|\zeta|= R_{0}\}, \qquad \gamma_{\alpha} = \{\zeta \in \mathbb{C}:|\zeta - \alpha c^{-1}|= R_{\alpha}\}, \qquad \gamma_{1} = \{\zeta \in \mathbb{C}:|\zeta - c^{-1}|= R_{1}\}
\end{align*}
where $R_{0} = \sqrt{\alpha}$,  $R_{\alpha} = (1-\alpha)\sqrt{\alpha}$ and $R_{1} = \frac{1-\alpha}{\sqrt{\alpha}}$ (see also Figure \ref{fig: crit traj alpha 04}). 
It is a direct computation to verify that
\begin{align}\label{r2 r+ r- on the circle}
r_{+},r_{-},r_{2} \in \gamma_{1}, \qquad r_{+},r_{-},r_{3} \in \gamma_{\alpha} \qquad \mbox{ and } \qquad r_{+},r_{-},r_{1} \in \gamma_{0}.
\end{align}
In particular, we can write
\begin{align*}
r_{\pm} = c^{-1} + R_{1}e^{\pm i \theta_{1}} = \alpha c^{-1} + R_{\alpha} e^{\pm i \theta_{\alpha}} = R_{0} e^{\pm i \theta_{0}},
\end{align*}
for certain angles $\theta_{1} \in (\frac{2\pi}{3},\pi)$, $\theta_{\alpha} \in (\frac{\pi}{3},\frac{2\pi}{3})$ and $\theta_{0} \in (0,\frac{\pi}{3})$. We also define
\begin{align}
& \Sigma_{1} = \{\zeta \in \mathbb{C} : |\zeta - c^{-1}| = R_{1}, \; \arg z \in (-\theta_{1},\theta_{1}) \} \subset \gamma_{1}, \label{def of Sigma 1} \\
& \Sigma_{\alpha} = \{\zeta \in \mathbb{C} : |\zeta - \alpha c^{-1}| = R_{\alpha}, \; \arg z \in (-\pi,-\theta_{\alpha})\cup (\theta_{\alpha},\pi] \} \subset \gamma_{\alpha}, \label{def of Sigma alpha} \\
& \Sigma_{0} = \{\zeta \in \mathbb{C} : |\zeta| = R_{0}, \; \arg z \in (-\theta_{0},\theta_{0}) \} \subset \gamma_{0}. \label{def of Sigma 0}
\end{align}
The following proposition is illustrated in Figure \ref{fig: mapping s}.
\begin{remark}
For a given set $A$, the notation $\overline{A}$ stands for the closure of $A$.
\end{remark}
\begin{proposition}\label{prop:hightemp}
The map $(\xi,\eta) \mapsto s(\xi,\eta;\alpha)$ satisfies $s(-\xi,-\eta;\alpha) = s(\xi,\eta;\alpha)$, and 
\begin{enumerate}[label=(\alph*)]
\item\label{item a in prop mapping s} it maps $\{\xi = 0\}\cap \mathcal{L}_{\alpha}$ onto $\overline{\Sigma_{1}} \cap \mathbb{C}^{+}$,
\item\label{item b in prop mapping s} it maps $\{\eta = \frac{\xi}{2}\}\cap \mathcal{L}_{\alpha}$ onto $(\gamma_{1}\setminus \Sigma_{1}) \cap \mathbb{C}^{+}$,
\item\label{item c in prop mapping s} it maps $\{\eta = \xi\}\cap \mathcal{L}_{\alpha}$ onto $\overline{\Sigma_{0}} \cap \mathbb{C}^{+}$,
\item\label{item d in prop mapping s} it maps $\{\eta = -\xi\}\cap \mathcal{L}_{\alpha}$ onto $(\gamma_{0}\setminus \Sigma_{0}) \cap \mathbb{C}^{+}$,
\item\label{item e in prop mapping s} it maps $\{\eta = 0\}\cap \mathcal{L}_{\alpha}$ onto $\overline{\Sigma_{\alpha}} \cap \mathbb{C}^{+}$,
\item\label{item f in prop mapping s} it maps $\{\eta = 2\xi \}\cap \mathcal{L}_{\alpha}$ onto $(\gamma_{\alpha}\setminus \Sigma_{\alpha}) \cap \mathbb{C}^{+}$.
\end{enumerate}
\end{proposition}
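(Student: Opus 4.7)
The symmetry $s(-\xi,-\eta;\alpha)=s(\xi,\eta;\alpha)$ is immediate: the bracketed expression in \eqref{eq:saddlepointeq} is linear in $(\xi,\eta)$ with real coefficients, so it merely changes sign under $(\xi,\eta)\mapsto(-\xi,-\eta)$; squaring makes the saddle equation invariant, and Proposition \ref{prop:saddle} supplies uniqueness of the upper-half-plane saddle. For the six mapping claims, I would first simplify the bracket via partial fractions to
\begin{align*}
\eta\,\phi(\zeta)-\xi\,\psi(\zeta), \qquad \phi(\zeta):=\frac{\zeta^{2}-1}{2\zeta(\zeta-c)(\zeta-c^{-1})}, \qquad \psi(\zeta):=\frac{\zeta^{2}-\alpha^{2}}{2\zeta(\zeta-\alpha c)(\zeta-\alpha c^{-1})}.
\end{align*}
Restricting \eqref{eq:saddlepointeq} to each of the six lines then reduces it to a single-parameter equation $t^{2}R(\zeta)^{2}=\mathcal{Q}(\zeta)$, where $t\in\{\xi,\eta\}$ is the free parameter and $R$ is a concrete real linear combination of $\phi$ and $\psi$ (for instance $R=\phi$ on $\{\xi=0\}$, $R=\phi-\psi$ on $\{\eta=\xi\}$, $R=-\psi$ on $\{\eta=0\}$, and the three complementary $R$'s for the dashed lines).

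Writing $\mathcal{Q}(\zeta)=\tfrac14 S(\zeta)^{2}(\zeta-r_{+})(\zeta-r_{-})$ with $S$ rational and real, the equation becomes $t^{2}=\chi(\zeta)^{2}(\zeta-r_{+})(\zeta-r_{-})$ with $\chi:=S/(2R)$, so reality of $t$ for $\zeta\in\mathbb{C}^{+}$ is equivalent to reality and non-negativity of the right-hand side. I would prove that the resulting real locus coincides with the appropriate circle $\gamma_{j}$, $j\in\{0,\alpha,1\}$, via the Schwarz reflection principle. Since $r_{\pm}\in\gamma_{j}$ by \eqref{r2 r+ r- on the circle}, the Schwarz involution $\iota(\zeta)=a+\rho^{2}/(\zeta-a)$ satisfies the universal identity
\begin{align*}
(\iota(\zeta)-r_{+})(\iota(\zeta)-r_{-})=\frac{\rho^{2}}{(\zeta-a)^{2}}(\zeta-r_{+})(\zeta-r_{-})
\end{align*}
(with $a,\rho$ the center and radius of $\gamma_{j}$), which reduces the reality condition on $\gamma_{j}$ to the functional identity $\chi(\iota(\zeta))=\pm\chi(\zeta)(\zeta-a)/\rho$. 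I would verify this identity case by case by matching poles and zeros under $\iota$. On $\gamma_{1}$, for example, one has the swaps $\iota(1)=-1$, $\iota(\alpha c)=\alpha c^{-1}$, $\iota(0)=c$, $\iota(r_{1})=r_{3}$, and the fixed point $\iota(r_{2})=r_{2}$, all of which follow from the key identity $c^{-2}-1=R_{1}^{2}$ combined with the explicit formulas \eqref{def of r1 r2 r3}; analogous pole/zero permutations hold for $\gamma_{0}$ and $\gamma_{\alpha}$. In each of the six cases the pole-zero data of $\chi$ fits these permutations exactly, so the functional identity follows after matching leading behavior at infinity.

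It then remains to identify the correct arc of $\gamma_{j}\cap\mathbb{C}^{+}$ and to prove surjectivity. Parametrizing $\zeta=a+\rho e^{i\theta}$ and using $r_{\pm}=a+\rho e^{\pm i\theta_{j}}$ gives $(\zeta-r_{+})(\zeta-r_{-})=2\rho^{2}e^{i\theta}(\cos\theta-\cos\theta_{j})$, whose real factor changes sign precisely at $\theta=\pm\theta_{j}$, i.e.\ at the endpoints $r_{\pm}$; combined with the functional identity for $\chi$, this attaches a fixed sign of $t^{2}$ to each of the two arcs of $\gamma_{j}\cap\mathbb{C}^{+}$. Because all six lines pass through the origin and $s(0,0;\alpha)=r_{+}$, the correct arc is singled out by a local analysis at $r_{+}$: comparing the tangent directions along which $s$ leaves $r_{+}$ as one moves along each of the six lines (obtained by linearizing the saddle equation around $r_{+}$) distinguishes $\overline{\Sigma_{j}}\cap\mathbb{C}^{+}$ from $(\gamma_{j}\setminus\Sigma_{j})\cap\mathbb{C}^{+}$. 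Surjectivity in each case then follows from continuity and monotonicity: as $t$ ranges over its admissible real interval, $s$ sweeps the arc continuously from $r_{+}$ to the point where the arc meets the real axis, which corresponds to the intersection of the line with $\partial\mathcal{L}_{\alpha}$.

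The main obstacle is the second paragraph: although each of the six functional identities $\chi(\iota(\zeta))=\pm\chi(\zeta)(\zeta-a)/\rho$ is a routine rational-function calculation once the action of $\iota$ on the poles/zeros of $\chi$ is worked out, the bookkeeping is delicate---both in verifying all six cases (three circles, with one solid line and one dashed line per circle) and in tracking the signs that drive the arc selection in the third paragraph, which must ensure that the dashed-line images genuinely land in the complementary arcs.
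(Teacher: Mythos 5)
Your reduction of the bracket in \eqref{eq:saddlepointeq} to $\eta\,\phi-\xi\,\psi$, the symmetry argument for $s(-\xi,-\eta;\alpha)=s(\xi,\eta;\alpha)$, and the reflection identities under $\iota$ (which for $\gamma_{1}$ is exactly the star-operation \eqref{star operation}, cf.\ \eqref{sym for Q and f}) are all sound. The genuine gap is in the sentence ``I would prove that the resulting real locus coincides with the appropriate circle $\gamma_{j}$ via the Schwarz reflection principle.'' The functional identity $\chi(\iota(\zeta))=\pm\chi(\zeta)(\zeta-a)/\rho$ only yields the inclusion $\gamma_{j}\subseteq\{\zeta:\ F(\zeta)\in\mathbb{R}\}$, i.e.\ that points of the circle are \emph{candidate} saddle locations for the given line. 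It does not exclude other points of $\mathbb{C}^{+}$ where $F(\zeta)=\mathcal{Q}(\zeta)/R(\zeta)^{2}$ is real and non-negative: a rational function symmetric with respect to $\mathbb{R}$ and to $\gamma_{j}$ can perfectly well have further components of its real locus in $\mathbb{C}^{+}$. But it is precisely this converse inclusion that constitutes the content of each item of the proposition (that $s(\xi,\eta;\alpha)$ \emph{lies on} the stated arc when $(\xi,\eta)$ is on the given line). Your third paragraph silently assumes it: the phrase ``as $t$ ranges over its admissible real interval, $s$ sweeps the arc continuously'' presupposes that the continuous path $t\mapsto s(t)$ cannot leave the circle, which is not guaranteed once the real locus might have crossings with, or components other than, $\mathbb{R}\cup\gamma_{j}$ in $\mathbb{C}^{+}$.

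The missing ingredient is a global argument pinning down the locus $\{\zeta\in\mathbb{C}^{+}:F(\zeta)\in[0,\infty)\}$. The paper closes exactly this hole by root counting (Lemma \ref{lemma: im is 0 for diffeo} for the case $\xi=0$, where your $F$ coincides with the function $f$ there): for each $x\geq 0$ the equation $F(\zeta)=x$ is of degree $8$, and one exhibits $8$ roots on $\mathbb{R}\cup\overline{\Sigma_{1}}$ ($6$ real ones from the interlacing of the real zeros and poles of $F$, plus $2$ on the arc from the explicit parametrization $\zeta=c^{-1}+R_{1}e^{it}$ and the monotonicity of $t\mapsto F(\zeta(t))$), so no solutions exist elsewhere in $\mathbb{C}^{+}$; the analogous computation on $\gamma_{0}$ and $\gamma_{\alpha}$ handles the other lines, and the paper then gets the ``onto'' part (and the bound $|t|<1$ needed for $(\xi,\eta)\in\mathcal{H}^{\mathrm{o}}$) essentially for free from the diffeomorphism of Proposition \ref{prop: s on the Riemann surface}, rather than from a separate monotonicity/endpoint analysis as you propose. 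If you add a counting argument of this type (or invoke the diffeomorphism), your scheme goes through; as written, the bookkeeping you flag as the main obstacle is not the real difficulty---the unproved reverse inclusion is.
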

By definition, the saddles lie in the complex plane. We show here that they can be naturally projected on a Riemann surface. Define $\mathcal{Q}(\zeta)^{1/2}$ with a branch cut joining $r_{-}$ to $r_{+}$ along $\Sigma_{1}$, such that $\mathcal{Q}(\zeta)^{1/2}\sim \frac{1}{2\zeta}$ as $\zeta \to \infty$, and denote the associated Riemann surface by $\mathcal{R}_{\alpha}$:
\begin{align*}
\mathcal{R}_{\alpha} := \{(\zeta,w) \in \mathbb{C}^{2} : w^{2}=\mathcal{Q}(\zeta) \}.
\end{align*}
This is a two-sheeted covering of the $\zeta$-plane glued along $\Sigma_{1}$, and the sheets are ordered such that $w = \mathcal{Q}(\zeta)^{1/2}$ on the first sheet and $w = -\mathcal{Q}(\zeta)^{1/2}$ on the second sheet. For each solution $\zeta$ to \eqref{eq:saddlepointeq}, there exists a $w$ satisfying $w^{2} = \mathcal{Q}(\zeta)$, and such that
\begin{align}\label{eq:saddlepointeq sqrt} 
\frac{\xi-\eta}{2}\frac{1}{\zeta} - \frac{\xi}{2}\left( \frac{1}{\zeta-\alpha c} + \frac{1}{\zeta - \alpha c^{-1}} \right) + \frac{\eta}{2}\left( \frac{1}{\zeta - c} + \frac{1}{\zeta-c^{-1}} \right) = w.
\end{align}
\begin{definition}
The map $(\xi,\eta)\mapsto w(\xi,\eta;\alpha)$ is defined by $w(\xi,\eta;\alpha)^{2} = \mathcal{Q}(s(\xi,\eta;\alpha))$, such that \eqref{eq:saddlepointeq sqrt} holds with $\zeta = s(\xi,\eta;\alpha)$ and $w = w(\xi,\eta;\alpha)$.
\end{definition}
\begin{proposition}\label{prop: s on the Riemann surface}
The map $(\xi,\eta) \mapsto \big( s(\xi,\eta;\alpha),w(\xi,\eta;\alpha) \big)$ is a diffeomorphism from $\mathcal{L}_{\alpha}$ to 
\begin{align}\label{calRplusdef}
\mathcal{R}_{\alpha}^+ := \{(\zeta,w) \in \mathcal R_{\alpha} \mid \im \zeta > 0 \}.
\end{align}
It maps the left half $\mathcal L_\alpha^{l}=\left\{ (\xi,\eta) \in \mathcal L_\alpha  \mid  \xi < 0 \right\}$ to the upper half-plane of the first sheet of $\mathcal{R}_{\alpha}$, and it maps $\mathcal L_\alpha^{r}=\left\{ (\xi,\eta) \in \mathcal L_\alpha  \mid  \xi > 0 \right\}$ to the upper half-plane of the second sheet. Moreover, its inverse $(s,w)\mapsto (\xi,\eta)=(\xi(s,w;\alpha),\eta(s,w;\alpha))$ is explicitly given by
\begin{align}\label{inverse of the diffeomorphism}
\begin{pmatrix}
\xi \\ \eta
\end{pmatrix} = \begin{pmatrix}
\re \left( \frac{-(s - \alpha)(s +\alpha)(s -c)(s - \frac{1}{c})}{(s - \alpha c)(s - \frac{\alpha}{c})(s - 1)(s + 1)} \right) & 1 \\[0.2cm]
\im \left( \frac{-(s - \alpha)(s +\alpha)(s -c)(s - \frac{1}{c})}{(s - \alpha c)(s - \frac{\alpha}{c})(s - 1)(s + 1)} \right) & 0
\end{pmatrix}^{-1} \begin{pmatrix}
\re \left( \frac{2s (s -c)(s - \frac{1}{c})}{(s - 1)(s + 1)}w \right) \\[0.2cm]
\im \left( \frac{2s (s -c)(s - \frac{1}{c})}{(s - 1)(s + 1)}w  \right)
\end{pmatrix}.
\end{align}
\end{proposition}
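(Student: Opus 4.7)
The saddle equation \eqref{eq:saddlepointeq sqrt} is linear in $(\xi,\eta)$, which is the algebraic fact driving the proposition. My plan is: (i) rewrite \eqref{eq:saddlepointeq sqrt} as a $2\times 2$ real linear system and read off the explicit inverse formula, (ii) use Proposition \ref{prop:saddle} together with the implicit function theorem to get a smooth local diffeomorphism, and (iii) extend this to a global diffeomorphism onto $\mathcal{R}_{\alpha}^{+}$ via a boundary-tracing argument based on Proposition \ref{prop:hightemp}.

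For (i), the elementary partial-fraction identities
\[
\frac{1}{2\zeta}-\tfrac{1}{2}\Big(\tfrac{1}{\zeta-\alpha c}+\tfrac{1}{\zeta-\alpha c^{-1}}\Big)=\frac{-(\zeta-\alpha)(\zeta+\alpha)}{2\zeta(\zeta-\alpha c)(\zeta-\alpha c^{-1})},\qquad -\frac{1}{2\zeta}+\tfrac{1}{2}\Big(\tfrac{1}{\zeta-c}+\tfrac{1}{\zeta-c^{-1}}\Big)=\frac{(\zeta-1)(\zeta+1)}{2\zeta(\zeta-c)(\zeta-c^{-1})}
\]
cast \eqref{eq:saddlepointeq sqrt} as $\xi\,f_{1}(s)+\eta\,f_{2}(s)=w$ with $f_{1},f_{2}$ rational and real-coefficiented. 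Dividing by $f_{2}(s)$ gives $A(s)\,\xi+\eta=B(s)\,w$, where $A(s)=f_{1}(s)/f_{2}(s)$ and $B(s)=1/f_{2}(s)$ are precisely the rational functions appearing inside the real and imaginary parts in \eqref{inverse of the diffeomorphism}. Since $\xi,\eta\in\mathbb{R}$, taking real and imaginary parts produces the $2\times 2$ real linear system whose Cramer's-rule solution is exactly \eqref{inverse of the diffeomorphism}.

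For (ii), define $\Theta(\xi,\eta):=(s(\xi,\eta;\alpha),w(\xi,\eta;\alpha))$; this is well-defined by Proposition \ref{prop:saddle}. Smoothness of $\Theta$ follows from the implicit function theorem applied to \eqref{eq:saddlepointeq}: were the $\zeta$-derivative of the defining algebraic relation to vanish at $s$, two saddles in $\mathbb{C}^{+}$ would coalesce at $s$, contradicting the uniqueness from Proposition \ref{prop:saddle}. Letting $\Xi$ denote the map given by the RHS of \eqref{inverse of the diffeomorphism}, the identity $\Xi\circ\Theta=\mathrm{id}_{\mathcal{L}_{\alpha}}$ is immediate from step (i) by direct substitution; this forces the matrix determinant $-\im A(s)$ to be non-zero on $\Theta(\mathcal{L}_{\alpha})$, so $\Xi$ is smooth there and $\Theta$ is an injective local diffeomorphism onto its image. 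For the sheet identification, note that by Proposition \ref{prop:hightemp} the map $s(\,\cdot\,;\alpha)$ is even in $(\xi,\eta)$, while the LHS of \eqref{eq:saddlepointeq sqrt} is odd, whence $w(-\xi,-\eta;\alpha)=-w(\xi,\eta;\alpha)$. This matches the sheet involution $(\zeta,w)\mapsto(\zeta,-w)$ on $\mathcal{R}_{\alpha}$, so $\mathcal{L}_{\alpha}^{l}$ and $\mathcal{L}_{\alpha}^{r}$ are sent to opposite sheets of $\mathcal{R}_{\alpha}^{+}$; a single sign check (e.g.\ using the branch convention $\mathcal{Q}^{1/2}(\zeta)\sim 1/(2\zeta)$ at $\infty$ and following $w$ along a short test arc in $\mathcal{L}_{\alpha}^{l}$ emanating from $(\xi,\eta)=(0,0)$, where $s=r_{+}$ lies on the branch cut) pins down left $\leftrightarrow$ first sheet.

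The main obstacle is (iii), surjectivity of $\Theta$ onto $\mathcal{R}_{\alpha}^{+}$. My plan is a boundary/properness argument: Proposition \ref{prop:hightemp} parts \ref{item a in prop mapping s}--\ref{item f in prop mapping s} describe how the six symmetry arcs of $\mathcal{L}_{\alpha}^{l}$ are mapped by $s$ onto $\Sigma_{0},\Sigma_{\alpha},\Sigma_{1}$ and their complements in $\gamma_{0},\gamma_{\alpha},\gamma_{1}$; together with the connection at $r_{\pm}$ on the branch cut, these arcs bound the upper half of each sheet of $\mathcal{R}_{\alpha}$. Combining this boundary information with the local-diffeomorphism and injectivity properties established in (ii) and a degree/connectedness argument forces $\Theta(\mathcal{L}_{\alpha}^{l})$ to fill the entire upper half of the first sheet, and analogously $\Theta(\mathcal{L}_{\alpha}^{r})$ fills the second, yielding the desired global diffeomorphism onto $\mathcal{R}_{\alpha}^{+}$.
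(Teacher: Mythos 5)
Your step (i) reproduces the paper's own derivation of \eqref{inverse of the diffeomorphism} from \eqref{lol23}--\eqref{lol24}, but two of the substantive points are left with genuine gaps. First, invertibility of the $2\times2$ matrix: Cramer's rule, and hence the very definition of your map $\Xi$, needs $\im\bigl(\tfrac{-(s-\alpha)(s+\alpha)(s-c)(s-c^{-1})}{(s-\alpha c)(s-\alpha c^{-1})(s-1)(s+1)}\bigr)\neq 0$ for all $\im s>0$, and your justification (``$\Xi\circ\Theta=\mathrm{id}$ forces the determinant to be non-zero on the image'') is circular, since that identity can only be asserted after the non-vanishing is known; note also that Proposition \ref{prop:saddle} does not exclude a one-parameter family of liquid points sharing the same saddle, which is exactly what a vanishing determinant would allow. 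The paper settles this by the partial-fraction computation behind \eqref{im is neg in diffeo} (the rational function is $-1+\sum_j a_j/(s-p_j)$ with $a_j>0$ and real poles, so its imaginary part is strictly negative in $\mathbb{C}^+$). Second, surjectivity: your degree/boundary-tracing plan is not carried out, and its stated input is incorrect --- the arcs provided by Proposition \ref{prop:hightemp} are the images of the six symmetry lines of $\mathcal{L}_\alpha$, i.e.\ the upper halves of the circles $\gamma_0,\gamma_\alpha,\gamma_1$, which are \emph{interior} curves of $\mathcal{R}_\alpha^+$, not its boundary; a degree argument would further need properness of the map (coalescence of $s,\overline{s}$ on $\mathbb{R}$ as $(\xi,\eta)\to\partial\mathcal{L}_\alpha$, which the paper only derives after this proposition), connectedness statements, and control at the branch point $r_+$. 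None of this is needed: once the determinant is non-zero throughout $\mathbb{C}^+$, surjectivity is immediate --- given $(s,w)\in\mathcal{R}_\alpha^+$, formula \eqref{inverse of the diffeomorphism} produces a real pair $(\xi,\eta)$ for which \eqref{eq:saddlepointeq sqrt} holds at $\zeta=s\in\mathbb{C}^+$, so $(\xi,\eta)\in\mathcal{L}_\alpha$ and, by the uniqueness in Proposition \ref{prop:saddle}, $s(\xi,\eta;\alpha)=s$ and $w(\xi,\eta;\alpha)=w$. This one-line observation is how the paper concludes bijectivity, and it is the idea missing from your write-up.

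The sheet identification is also incomplete. The symmetry gives $w(-\xi,-\eta;\alpha)=-w(\xi,\eta;\alpha)$, which only pairs each point of $\mathcal{L}_\alpha^{l}$ with its antipode on the opposite sheet; it does not show that all of $\mathcal{L}_\alpha^{l}$ lands on a single sheet, so a ``single sign check'' near $(0,0)$ (where $s=r_+$ is the branch point, the one place where the sheets are indistinguishable) says nothing about the rest of the left half. To complete your route you would additionally need that the image of $\{\xi<0\}\cap\mathcal{L}_\alpha$ never meets the cut $\overline{\Sigma_1}$ and that $\mathcal{L}_\alpha^{l}$ is connected, neither of which you establish. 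The paper avoids both issues by a direct sign computation: by \eqref{inverse of the diffeomorphism} and \eqref{im is neg in diffeo}, $\xi$ has the sign of $-\im\bigl(\tfrac{2s(s-c)(s-c^{-1})}{(s-1)(s+1)}w\bigr)$; Lemma \ref{lemma: im is 0 for diffeo} shows this quantity vanishes only on $\mathbb{R}\cup\overline{\Sigma_1}$, hence it has constant sign on the upper half of each sheet, and the expansion at $\infty$ on the first sheet fixes that sign (Lemma \ref{lemma: proof of part (a)}). An argument of this strength (or the connectedness-plus-cut-avoidance alternative, fully proved) is required before you may conclude $\mathcal{L}_\alpha^{l}\to$ first sheet and $\mathcal{L}_\alpha^{r}\to$ second sheet.
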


\paragraph*{Description of the liquid region.}
After clearing the denominator in \eqref{eq:saddlepointeq}, we get
\begin{multline} \label{eq:saddleequationhigh}
(\zeta -r_{1})^{2}(\zeta -r_{2})^{2}(\zeta -r_{3})^{2}(\zeta -r_{+})(\zeta -r_{-}) =  \\ 
\Big[ (\zeta-1)(\zeta +1)(\zeta-\alpha c)(\zeta-\tfrac{\alpha}{c})\eta - (\zeta-\alpha)(\zeta+\alpha)(\zeta-c)(\zeta-\tfrac{1}{c})\xi \Big]^{2}. 
\end{multline}
Since \eqref{eq:saddleequationhigh} is invariant under the map $(\xi,\eta) \mapsto (-\xi,-\eta)$, we conclude that $\mathcal{L}_{\alpha}$ is symmetric with respect to the origin. Also, this equation has real coefficients, so $s(\xi,\eta;\alpha)$ and $\overline{s(\xi,\eta;\alpha)}$ are both solutions whenever $(\xi,\eta) \in \mathcal{L}_{\alpha}$. At the boundary $\partial \mathcal L_\alpha$ of the liquid region, $s(\xi,\eta;\alpha)$ and $\overline{s(\xi,\eta;\alpha)}$ coalesce in the real line, so $\partial \mathcal L_\alpha$ is part of the zero set of the discriminant of \eqref{eq:saddleequationhigh} (whose expression is too long to be written down). The curve $\partial \mathcal{L}_{\alpha}$ is tangent to the hexagon at $12$ points and possesses $6$ cusp points. The tangent points can be obtained by letting $s \to s_{\star}\in\{ 0,\alpha c, \alpha c^{-1}, c,c^{-1},\infty\}$ in \eqref{inverse of the diffeomorphism}, and the cusp points by letting $s \to s_{\star}  \in \{r_{1},r_{2},r_{3}\}$ in \eqref{inverse of the diffeomorphism} (see also Figure \ref{fig: mapping s}). Figure \ref{fig: flower} illustrates $\partial \mathcal{L}_{\alpha}$ for different values of $\alpha$ (and has been generated using \eqref{inverse of the diffeomorphism}). Denote $\mathcal{F}_{\alpha,j}$, $j=1,\ldots,6$ for the regions shown in Figure \ref{fig: frozen regions} (left). They are disjoint from each other and from $\mathcal{L}_{\alpha}$, and are symmetric under $(\xi,\eta)\mapsto (-\xi,-\eta)$. As we will see, these regions are frozen (or semi-frozen). 
\begin{figure}
\begin{center}
\includegraphics[width=3.5cm]{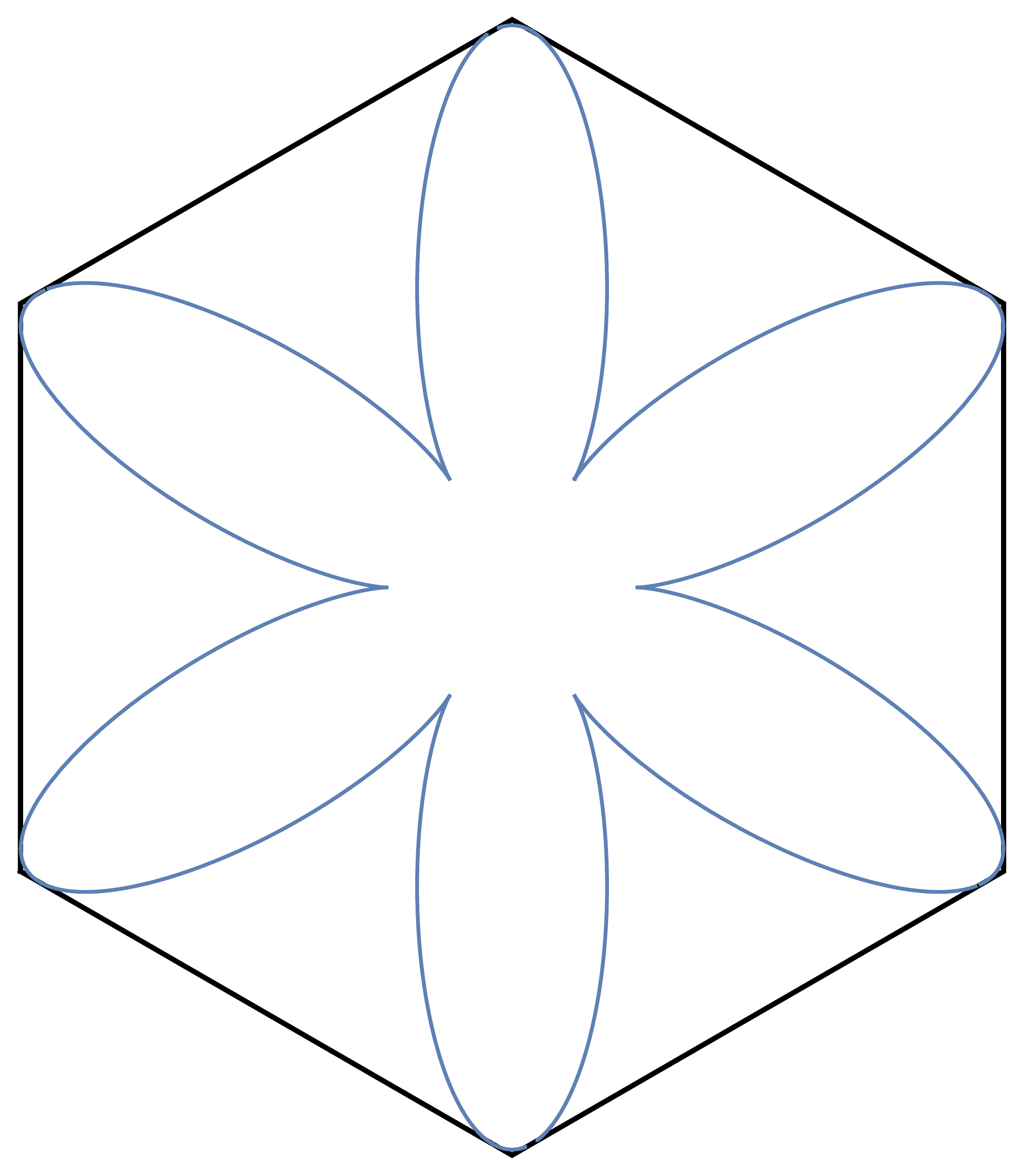}
\includegraphics[width=3.5cm]{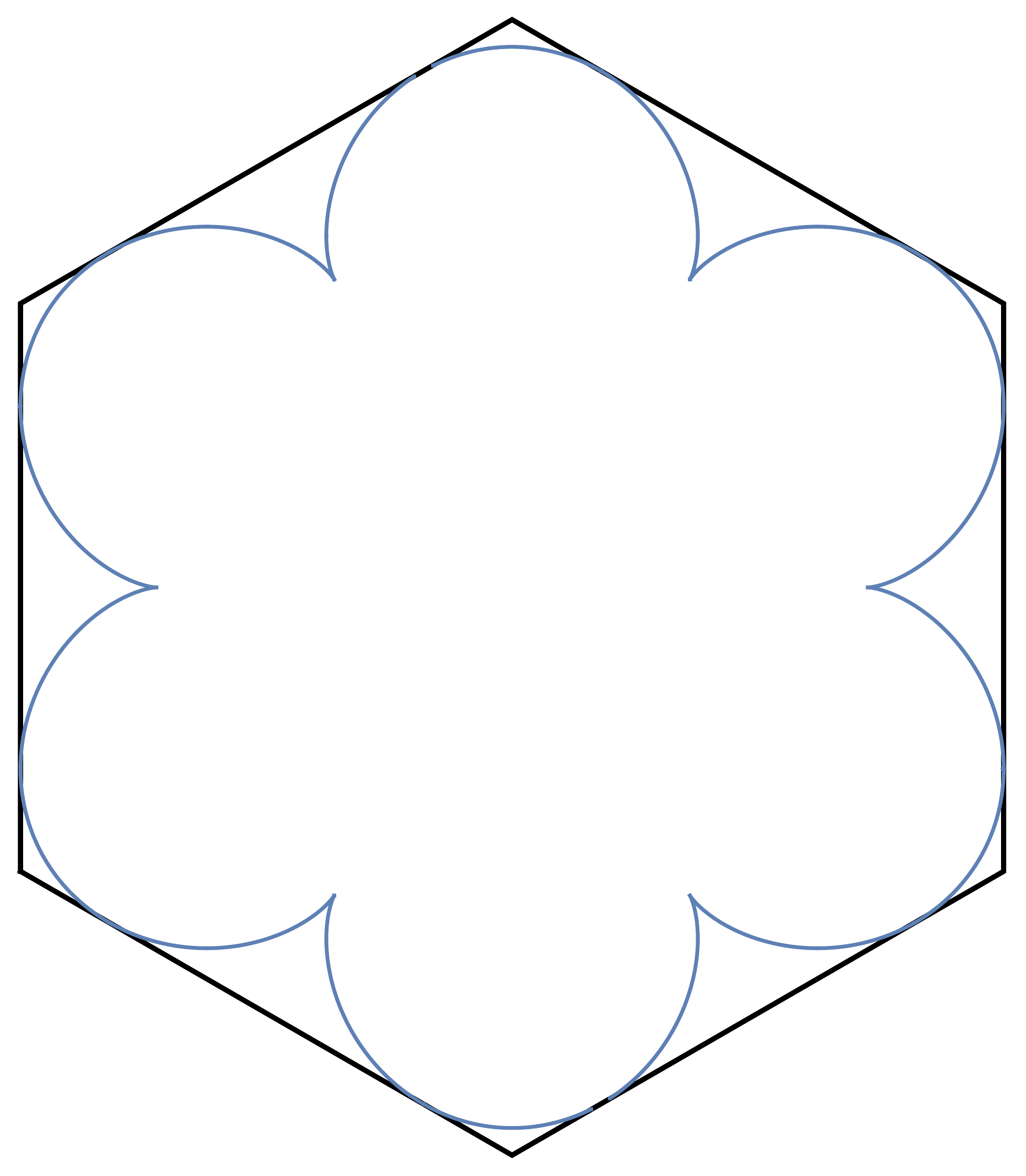}
\includegraphics[width=3.5cm]{alpha_04.pdf}
\end{center}
\caption{\label{fig: flower}The curve $\partial \mathcal{L}_{\alpha}$ with $\alpha=0.04$, $\alpha=0.2$ and $\alpha=0.4$ (from left to right).}
\end{figure}
From Propositions \ref{prop:hightemp} and \ref{prop: s on the Riemann surface}, we already infer the following:
\begin{subequations}\label{limits s sbar as xi eta}
\begin{align}
& s,\overline{s} \to s_{\star} \in (0,\alpha c), & & \mbox{as } (\xi,\eta) \to (\xi^{\star},\eta^{\star}) \in \partial \mathcal{L}_{\alpha} \cap \partial \mathcal{F}_{1,\alpha}, \\[0.1cm]
& s,\overline{s} \to s_{\star} \in (\alpha c^{-1},c), & & \mbox{as } (\xi,\eta) \to (\xi^{\star},\eta^{\star}) \in \partial \mathcal{L}_{\alpha} \cap \partial \mathcal{F}_{2,\alpha}, \\[0.1cm]
& s,\overline{s} \to s_{\star} \in (c^{-1},+\infty), & & \mbox{as } (\xi,\eta) \to (\xi^{\star},\eta^{\star}) \in \partial \mathcal{L}_{\alpha} \cap \partial \mathcal{F}_{3,\alpha}, \\[0.1cm]
& s,\overline{s} \to s_{\star} \in (\alpha c,\alpha c^{-1}), & & \mbox{as } (\xi,\eta) \to (\xi^{\star},\eta^{\star}) \in \partial \mathcal{L}_{\alpha} \cap \partial \mathcal{F}_{4,\alpha}, \\[0.1cm]
& s,\overline{s} \to s_{\star} \in (-\infty,0), & & \mbox{as } (\xi,\eta) \to (\xi^{\star},\eta^{\star}) \in \partial \mathcal{L}_{\alpha} \cap \partial \mathcal{F}_{5,\alpha}, \\[0.1cm]
& s,\overline{s} \to s_{\star} \in (c,c^{-1}), & & \mbox{as } (\xi,\eta) \to (\xi^{\star},\eta^{\star}) \in \partial \mathcal{L}_{\alpha} \cap \partial \mathcal{F}_{6,\alpha}.
\end{align} 
\end{subequations}

\newpage 

\subsection{Limiting densities in the liquid region}	
Theorem \ref{thm:main} states that the limits \eqref{limits intro in intro} are expressed in terms of the angles shown in Figure \ref{fig: the four triangles}. 

\begin{theorem} \label{thm:main}
Let $\{(x_{N},y_{N}\}_{N \geq 1}$ be a sequence satisfying \eqref{good sequence} with $(\xi,\eta) \in \mathcal{L}_{\alpha}$. We obtain the following limits:
\begin{align}
& \lim_{N \to \infty} P_{1}(x_{N},y_{N}) = \frac{1}{\pi}\begin{pmatrix}
\arg(s-\alpha c) - \arg(s) & \arg(s-\alpha c^{-1}) \\
\arg (s-\alpha c) & \arg(s-\alpha c^{-1}) - \arg s
\end{pmatrix}, \label{P1 limit main result} \\
& \lim_{N \to \infty} P_{2}(x_{N},y_{N}) = \frac{1}{\pi}\begin{pmatrix}
\arg(s-c^{-1}) - \arg(s-\alpha c) & \arg(s-c^{-1}) - \arg(s-\alpha c^{-1}) \\
\arg(s-c)-\arg(s-\alpha c) & \arg(s-c) - \arg(s-\alpha c^{-1})
\end{pmatrix}, \label{P2 limit main result} \\
& \lim_{N \to \infty} P_{3}(x_{N},y_{N}) = \frac{1}{\pi} \begin{pmatrix}
\pi-\arg(s-c^{-1})+\arg(s) & \pi-\arg(s-c^{-1}) \\
\pi-\arg(s-c) & \pi-\arg(s-c)+\arg(s)
\end{pmatrix}. \label{P3 limit main result}
\end{align}
These limits can equivalently be stated as follows:
\begin{align*}
& \lim_{N \to \infty} P_{1}(x_{N},y_{N}) = \frac{1}{\pi}\begin{pmatrix}
\phi_{1,11} & \phi_{1,12} \\
\phi_{1,21} & \phi_{1,22}
\end{pmatrix}, \\
& \lim_{N \to \infty} P_{2}(x_{N},y_{N}) = \frac{1}{\pi}\begin{pmatrix}
\phi_{2,11} & \phi_{2,12} \\
\phi_{2,21} & \phi_{2,22}
\end{pmatrix}, \\
& \lim_{N \to \infty} P_{3}(x_{N},y_{N}) = \frac{1}{\pi} \begin{pmatrix}
\phi_{3,11}^{(l)}+\phi_{3,11}^{(r)} & \phi_{3,12} \\
\phi_{3,21} & \phi_{3,22}^{(l)}+\phi_{3,22}^{(r)}
\end{pmatrix},
\end{align*}
where $\phi_{k,ij}$, $1\leq i,j,k \leq 2$, and $\phi_{3,11}^{(l)}$, $\phi_{3,11}^{(r)}$, $\phi_{3,12}$, $\phi_{3,21}$, $\phi_{3,22}^{(l)}$ and $\phi_{3,22}^{(r)}$ are the angles represented in Figure \ref{fig: the four triangles}.
\end{theorem}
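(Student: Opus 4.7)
The proof follows the pattern of Step 3 in the paper's outline: a saddle point analysis of the double contour formula for the kernel, substituted with Deift--Zhou asymptotics for the scalar OPs. The plan is as follows.

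First, I would express each entry of the matrices $P_1, P_2, P_3$ in terms of the diagonal correlation kernel entries appearing in \eqref{new formula for the kernel in the thm}. By the correspondence \eqref{non-intersecting paths} between the three lozenge types and the presence or absence of path points, each $\mathcal{P}_j(2x+\epsilon_x, 2y+\epsilon_y)$ is a linear combination of the $1\times 1$ and $2\times 2$ determinants of the matrix $\big[K(2x+\epsilon_x,2y+j,2x+\epsilon_x,2y+i)\big]_{i,j=0}^{1}$ and of the complementary constants $1,0$; the exact combination is dictated by which of the three mutually exclusive path configurations (a diagonal step through, a horizontal step through, or no step through) corresponds to each lozenge. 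This reduces the theorem to computing the large-$N$ limit of these $2\times 2$ kernel matrices for $\epsilon_x\in\{0,1\}$ via Theorem \ref{thm: correlation kernel final scalar expression}.

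Second, I would substitute the asymptotics of $\mathcal R^U$ obtained from the Deift--Zhou steepest descent analysis of the RH problem for $U$ (Sections referenced later in the paper, built on the equilibrium problem whose $g$-function satisfies $(g')^2=\mathcal Q$ with $\mathcal Q$ as in \eqref{def of Q in statement of results}). After the standard chain of transformations $U\to T\to S\to R$, the $N$-dependence of $\mathcal R^U(\omega,\zeta)$ is confined to exponential factors $e^{\pm 2Ng(\cdot)}$. Combining these with the explicit $N$-dependent prefactors in \eqref{new formula for the kernel in the thm} (namely $\omega^{N+x-y}/\zeta^{N+x-y}$, the $y$th powers of $(\omega-c^{\pm1})/(\zeta-c^{\pm1})$, and the $x$th powers of $(\zeta-\alpha c^{\pm1})/(\omega-\alpha c^{\pm1})$), the integrand assumes the schematic form
\[
H_K(\omega,\zeta;\epsilon_x)\,\exp\!\big(N[\Xi(\zeta)-\Xi(\omega)]\big)\cdot(\text{non-exponential prefactors}),
\]
where $\Xi$ equals $\Phi$ on the first and $\Psi$ on the second sheet of $\mathcal R_\alpha$, and whose saddle point equation reduces, after squaring, to \eqref{eq:saddlepointeq}.

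Third, I would apply the saddle point method. In the liquid region, Proposition \ref{prop:saddle} identifies $s:=s(\xi,\eta;\alpha)\in\mathbb{C}^+$ and its conjugate $\bar s$ as the only nonreal saddles. I would deform the $\zeta$- and $\omega$-contours onto steepest descent paths through $s$ and $\bar s$; since one contour must be pushed outward and the other inward, they necessarily cross each other along an arc connecting $\bar s$ to $s$, picking up a residue at the simple pole $\omega=\zeta$ of $\mathcal R^U$. By the reproducing property of $\mathcal R^U$, this residue equals $1$, so on the diagonal $\omega=\zeta$ all exponential factors cancel, and the leading term is a genuinely $O(1)$ single integral from $\bar s$ to $s$ of a rational $1$-form, while the remaining double integrals along the steepest descent paths are exponentially small. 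Each matrix entry then decomposes into a sum of elementary contributions of the form $\frac{1}{2\pi i}\int_{\bar s}^{s}\frac{d\zeta}{\zeta-a}$ with $a\in\{0,\alpha c,\alpha c^{-1},c,c^{-1}\}$, each equal to $\frac{1}{\pi}\arg(s-a)$ with the natural choice of branch.

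The main obstacle will be the bookkeeping in this final step: matching the explicit rational entries of $H_K(\omega,\zeta;\epsilon_x)$ from \eqref{HK epsx 0}--\eqref{HK epsx 1} together with the $x$- and $y$-dependent rational prefactors, at $\omega=\zeta$, against the precise combinations of $\arg(s-a)$ appearing in \eqref{P1 limit main result}--\eqref{P3 limit main result}, with the correct signs and multiplicities, and then identifying those combinations with the angles $\phi_{k,ij}$ of Figure \ref{fig: the four triangles} via the standard interpretation of $\arg(s-a)$ as the angle subtended at $s$ by $a$ and $+\infty$. The orientation of the crossing arc from $\bar s$ to $s$, which is determined by the side on which one deforms the $\omega$- versus the $\zeta$-contour, governs all the overall signs.
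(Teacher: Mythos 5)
Your steps 2--3 follow the paper's route in spirit (they correspond to Proposition \ref{prop:doubleintegrallimit}: the pole of $\frac{1}{\omega-\zeta}$ crossed when the two contours separate gives a residue equal to $1$ since $\begin{pmatrix}1&0\end{pmatrix}T^{-1}(\zeta)T(\zeta)\begin{pmatrix}1\\0\end{pmatrix}=1$, producing $\frac{1}{2\pi i}\int_{\overline s}^{s}H(\zeta,\zeta)\,d\zeta$, and each $\frac{1}{2\pi i}\int_{\overline s}^{s}\frac{d\zeta}{\zeta-a}=\frac{1}{\pi}\arg(s-a)$). But your step 1 has a genuine gap. You claim every $\mathcal P_j(2x+\epsilon_x,2y+\epsilon_y)$ is a linear combination of the $1\times1$ and $2\times2$ determinants of the single-column kernel matrix $\big[K(2x+\epsilon_x,2y+j,2x+\epsilon_x,2y+i)\big]_{i,j=0}^{1}$. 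That is true only for $\mathcal P_3=1-K$. For $\mathcal P_1$ and $\mathcal P_2$ it fails: whether an occupied point $(x,y+\tfrac12)$ carries a diagonal or a horizontal path step is decided by the path's position in column $x+1$, so these are genuinely two-column correlations and are not functions of the kernel restricted to a single value of the horizontal coordinate. Theorem \ref{thm: correlation kernel final scalar expression} only provides equal-$x$ entries, so your reduction cannot even be set up from it; using \eqref{kernel general} with $x_1\neq x_2$ instead would reintroduce the extra single-integral term and a different analysis. The paper resolves this with a height-function argument: Lemma \ref{lem:height_to_lozenge} writes each lozenge indicator as a difference of $h(x,y)=\#\{j:\mathfrak p_j(x)<y\}$ at adjacent columns, Lemma \ref{lem:doubleintegralheight} gives a double-contour formula for $\mathbb E[h]$ by summing the diagonal kernel along a column (a geometric series in $q(\omega,\zeta)$, which requires a deformed contour $\tilde\gamma_{\mathbb C}$ on which $|q|>1$), and only after taking the differences, checking the combined integrand has no pole at $\zeta=\omega$, and deforming back does one obtain \eqref{P1 double contour}--\eqref{P2 double contour} with the kernels $H_1,H_2$ whose diagonal values $H_j(\zeta,\zeta)$ have the partial-fraction form that your final bookkeeping needs. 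Without this (or an equivalent two-column argument) there is nothing to book-keep for $P_1$ and $P_2$.

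Two further corrections to your step 3. First, the leftover double integrals along the deformed contours are not exponentially small: the contours pass through $s$ and $\overline s$, where the real part of the phase difference vanishes, so the error is only $\bigO(N^{-1/2})$ away from $r_\pm$ (and one must separately control neighborhoods of $r_\pm$, where $T,T^{-1}=\bigO(N^{1/6})$). Second, the contour program you describe is carried out in the paper only in the quadrant $\eta\le\frac{\xi}{2}\le 0$, with the degenerate cases $\xi=0$, $\eta<0$ and $\xi=\eta=0$ needing extra deformations because $\Sigma_1$ lies in the level set of $\re\Phi$; the remaining quadrants of $\mathcal L_\alpha$ are then reached through the symmetries of Proposition \ref{prop:symmetries} (including the star map and the symmetry of $\mathcal R^U$), a reduction your sketch omits and which is needed to justify the statement for all $(\xi,\eta)\in\mathcal L_\alpha$.
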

\begin{figure}[t]
\begin{center}
\begin{tikzpicture}[master,scale = 0.9]
\node at (0,0) {};
\node at (2.1,0) {\color{black} \large $\bullet$};
\node at (-0.38,0) {\color{black} \large $\bullet$};
\node at (-1.05,0) {\color{black} \large $\bullet$};
\node at (-2.035,0) {\color{black} \large $\bullet$};
\node at (-3.13,0) {\color{black} \large $\bullet$};
\node at (-1.8,2) {\color{black} \large $\bullet$};

\coordinate (Start) at (-3.13,0);
\coordinate (End) at (2.1,0);
\coordinate (Middle) at (-2.035,0);
\coordinate (S) at (-1.8,2);

\draw (S)--(Start)--(End)--(S)--(Middle);

\tkzMarkAngle[fill= gray,size=0.65cm,opacity=.4](End,Start,S);

\tkzMarkAngle[fill= gray,size=0.65cm,opacity=.4](Middle,S,End);

\tkzMarkAngle[fill= gray,size=0.85cm,opacity=.4](Start,S,Middle);

\tkzMarkAngle[fill= gray,size=0.85cm,opacity=.4](S,End,Start);

\node at (-2.45,1.6) {\tiny $\phi_{1,11}$};
\node at (-1.4,1.2) {\tiny $\phi_{2,11}$};
\node at (-3.3,0.4) {\tiny $\phi_{3,11}^{(l)}$};
\node at (0.8,0.3) {\tiny $\phi_{3,11}^{(r)}$};
\end{tikzpicture}\hspace{1cm}
\begin{tikzpicture}[slave,scale = 0.9]
\node at (0,0) {};
\node at (2.1,0) {\color{black} \large $\bullet$};
\node at (-0.38,0) {\color{black} \large $\bullet$};
\node at (-1.05,0) {\color{black} \large $\bullet$};
\node at (-2.035,0) {\color{black} \large $\bullet$};
\node at (-3.13,0) {\color{black} \large $\bullet$};
\node at (-1.8,2) {\color{black} \large $\bullet$};

\coordinate (Start) at (-1.05,0);
\coordinate (End) at (2.1,0);
\coordinate (S) at (-1.8,2);

\draw (S)--(Start)--(End)--(S);

\tkzMarkAngle[fill= gray,size=0.45cm,opacity=.4](End,Start,S);

\tkzMarkAngle[fill= gray,size=0.65cm,opacity=.4](Start,S,End);

\tkzMarkAngle[fill= gray,size=0.85cm,opacity=.4](S,End,Start);

\node at (-0.7,0.5) {\tiny $\phi_{1,12}$};
\node at (-1.2,1.95) {\tiny $\phi_{2,12}$};
\node at (1,0.3) {\tiny $\phi_{3,12}$};
\end{tikzpicture}

\vspace{0.4cm}

\begin{tikzpicture}[slave,scale = 0.9]
\node at (0,0) {};
\node at (2.1,0) {\color{black} \large $\bullet$};
\node at (-0.38,0) {\color{black} \large $\bullet$};
\node at (-1.05,0) {\color{black} \large $\bullet$};
\node at (-2.035,0) {\color{black} \large $\bullet$};
\node at (-3.13,0) {\color{black} \large $\bullet$};
\node at (-1.8,2) {\color{black} \large $\bullet$};

\coordinate (Start) at (-2.035,0);
\coordinate (End) at (-0.38,0);
\coordinate (S) at (-1.8,2);

\draw (S)--(Start)--(End)--(S);

\tkzMarkAngle[fill= gray,size=0.5cm,opacity=.4](End,Start,S);

\tkzMarkAngle[fill= gray,size=0.65cm,opacity=.4](Start,S,End);

\tkzMarkAngle[fill= gray,size=0.5cm,opacity=.4](S,End,Start);

\node at (-1.6,0.55) {\tiny $\phi_{1,21}$};
\node at (-1.2,1.95) {\tiny $\phi_{2,21}$};
\node at (-0.1,0.3) {\tiny $\phi_{3,21}$};
\end{tikzpicture}\hspace{1cm}\begin{tikzpicture}[slave,scale = 0.9]
\node at (0,0) {};
\node at (2.1,0) {\color{black} \large $\bullet$};
\node at (-0.38,0) {\color{black} \large $\bullet$};
\node at (-1.05,0) {\color{black} \large $\bullet$};
\node at (-2.035,0) {\color{black} \large $\bullet$};
\node at (-3.13,0) {\color{black} \large $\bullet$};
\node at (-1.8,2) {\color{black} \large $\bullet$};

\coordinate (Start) at (-3.13,0);
\coordinate (End) at (-0.38,0);
\coordinate (Middle) at (-1.05,0);
\coordinate (S) at (-1.8,2);

\draw (S)--(Start)--(End)--(S)--(Middle);

\tkzMarkAngle[fill= gray,size=0.65cm,opacity=.4](End,Start,S);

\tkzMarkAngle[fill= gray,size=1cm,opacity=.4](Middle,S,End);


\tkzMarkAngle[fill= gray,size=0.65cm,opacity=.4](Start,S,Middle);

\tkzMarkAngle[fill= gray,size=0.5cm,opacity=.4](S,End,Start);

\draw[dashed] (-4,-0.2)--(-4,5.5);
\draw[dashed] (-10.5,2.5)--(2.5,2.5);

\node at (-1.9,1.2) {\tiny $\phi_{1,22}$};
\node at (-1.2,1.7) {\tiny $\phi_{2,22}$};
\node at (-2.2,0.4) {\tiny $\phi_{3,22}^{(l)}$};
\node at (-0.2,0.4) {\tiny $\phi_{3,22}^{(r)}$};

\end{tikzpicture}
\end{center}
\caption{\label{fig: the four triangles}In each of the four quadrants, the $5$ collinear dots represent, from left to right, the points $0$, $\alpha c$, $\alpha c^{-1}$, $c$ and $c^{-1}$. The other dot represents $s(\xi,\eta;\alpha)$. The figures are made for $\alpha = 0.4$, $\xi = -0.325$ and $\eta = 0.256$.}
\end{figure}
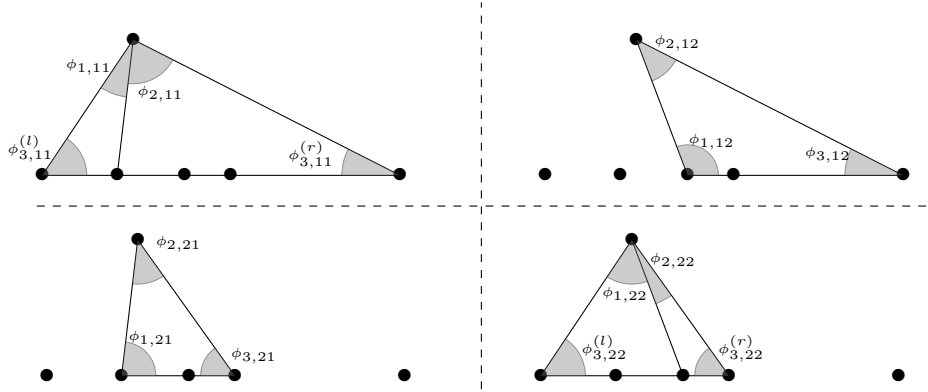
By combining \eqref{limits s sbar as xi eta} with Theorem \ref{thm:main}, we obtain the following immediate corollary.
\begin{corollary}\label{coro: frozen}
Let $\{(x_{N},y_{N}\}_{N \geq 1}$ be a sequence satisfying \eqref{good sequence} with $(\xi,\eta) \in \mathcal{L}_{\alpha}$. We have
\begin{align*}
& \lim_{N \to \infty} P_{j}(x_{N},y_{N}) \to \left\{ \begin{pmatrix}
1 & 1 \\ 1 & 1
\end{pmatrix}, \begin{pmatrix}
0 & 0 \\ 0 & 0
\end{pmatrix},\begin{pmatrix}
0 & 0 \\ 0 & 0
\end{pmatrix} \right\} & & \mbox{as } (\xi,\eta) \to (\xi^{\star},\eta^{\star}) \in \partial \mathcal{L}_{\alpha} \cap \partial \mathcal{F}_{1,\alpha}, \\
& \lim_{N \to \infty} P_{j}(x_{N},y_{N}) \to \left\{ \begin{pmatrix}
0 & 0 \\ 0 & 0
\end{pmatrix},\begin{pmatrix}
1 & 1 \\ 1 & 1
\end{pmatrix},\begin{pmatrix}
0 & 0 \\ 0 & 0
\end{pmatrix} \right\} & & \mbox{as } (\xi,\eta) \to (\xi^{\star},\eta^{\star}) \in \partial \mathcal{L}_{\alpha} \cap \partial \mathcal{F}_{2,\alpha}, \\
& \lim_{N \to \infty} P_{j}(x_{N},y_{N}) \to \left\{ \begin{pmatrix}
0 & 0 \\ 0 & 0
\end{pmatrix},\begin{pmatrix}
0 & 0 \\ 0 & 0
\end{pmatrix},\begin{pmatrix}
1 & 1 \\ 1 & 1
\end{pmatrix} \right\} & & \mbox{as } (\xi,\eta) \to (\xi^{\star},\eta^{\star}) \in \partial \mathcal{L}_{\alpha} \cap \partial \mathcal{F}_{3,\alpha}, \\
& \lim_{N \to \infty} P_{j}(x_{N},y_{N}) \to \left\{ \begin{pmatrix}
0 & 1 \\ 0 & 1
\end{pmatrix},\begin{pmatrix}
1 & 0 \\ 1 & 0
\end{pmatrix},\begin{pmatrix}
0 & 0 \\ 0 & 0
\end{pmatrix} \right\} & & \mbox{as } (\xi,\eta) \to (\xi^{\star},\eta^{\star}) \in \partial \mathcal{L}_{\alpha} \cap \partial \mathcal{F}_{4,\alpha}, \\
& \lim_{N \to \infty} P_{j}(x_{N},y_{N}) \to \left\{ \begin{pmatrix}
0 & 1 \\ 1 & 0
\end{pmatrix},\begin{pmatrix}
0 & 0 \\ 0 & 0
\end{pmatrix},\begin{pmatrix}
1 & 0 \\ 0 & 1
\end{pmatrix} \right\} & & \mbox{as } (\xi,\eta) \to (\xi^{\star},\eta^{\star}) \in \partial \mathcal{L}_{\alpha} \cap \partial \mathcal{F}_{5,\alpha}, \\
& \lim_{N \to \infty} P_{j}(x_{N},y_{N}) \to \left\{ \begin{pmatrix}
0 & 0 \\ 0 & 0
\end{pmatrix},\begin{pmatrix}
1 & 1 \\ 0 & 0
\end{pmatrix},\begin{pmatrix}
0 & 0 \\ 1 & 1
\end{pmatrix} \right\} & & \mbox{as } (\xi,\eta) \to (\xi^{\star},\eta^{\star}) \in \partial \mathcal{L}_{\alpha} \cap \partial \mathcal{F}_{6,\alpha},
\end{align*}
where the three matrices inside each brackets correspond, from left to right, to $j=1,2,3$.
\end{corollary}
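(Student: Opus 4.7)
The plan is a direct substitution argument combining the boundary limits for the saddle $s$ given in \eqref{limits s sbar as xi eta} with the explicit formulas of Theorem \ref{thm:main}. The key observation is that the limiting densities in \eqref{P1 limit main result}--\eqref{P3 limit main result} are built entirely from expressions of the form $\arg(s-a)$, with $a$ running over the five real values $\{0,\alpha c, \alpha c^{-1}, c, c^{-1}\}$ (plus additive constants equal to $0$ or $\pi$). Since $s \in \mathbb{C}^+$ throughout $\mathcal{L}_{\alpha}$, for each such $a$ the principal argument $\arg(s-a)$ takes values in $(0,\pi)$, and by continuity, as $(\xi,\eta) \to (\xi^\star,\eta^\star) \in \partial\mathcal{L}_\alpha$ with $s(\xi,\eta;\alpha) \to s_\star \in \mathbb{R}$, one has
\begin{align*}
\arg(s-a) \;\longrightarrow\; \begin{cases} 0, & \text{if } s_\star > a, \\ \pi, & \text{if } s_\star < a. \end{cases}
\end{align*}

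The first step is then to use the ordering \eqref{ordering of the zeros and the poles} of the five points $0 < \alpha c < \alpha c^{-1} < c < c^{-1}$ together with the interval specifications in \eqref{limits s sbar as xi eta} to determine, for each of the six pieces $\partial \mathcal{L}_\alpha \cap \partial \mathcal{F}_{j,\alpha}$, exactly which of the five relevant arguments tend to $0$ and which to $\pi$. This reduces the corollary to a finite table of boundary values. Note that the limit itself always exists along $\partial \mathcal{L}_\alpha$ precisely because $s$ is bounded and approaches a prescribed real interval, avoiding the five singularities of the argument functions.

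The second step is to substitute these limiting values into \eqref{P1 limit main result}--\eqref{P3 limit main result} and verify, case by case, that each of the six boundary pieces produces the claimed triple $(P_1,P_2,P_3)$. As a representative check, for $j=4$ we have $s_\star \in (\alpha c, \alpha c^{-1})$, so $\arg(s),\arg(s-\alpha c)\to 0$ while $\arg(s-\alpha c^{-1}),\arg(s-c),\arg(s-c^{-1})\to \pi$; plugging into \eqref{P1 limit main result} gives $\bigl(\begin{smallmatrix} 0 & 1 \\ 0 & 1 \end{smallmatrix}\bigr)$, into \eqref{P2 limit main result} gives $\bigl(\begin{smallmatrix} 1 & 0 \\ 1 & 0 \end{smallmatrix}\bigr)$, and into \eqref{P3 limit main result} gives the zero matrix, matching the statement. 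The remaining five cases $j \in \{1,2,3,5,6\}$ are entirely analogous. Since the verification is purely algebraic once \eqref{limits s sbar as xi eta} and Theorem \ref{thm:main} are in hand, there is no genuine obstacle: the corollary is an immediate finite check, and each row-sum $\sum_j P_j(x_N,y_N) = \mathbf{1}$ furnishes a convenient consistency check on the resulting matrix triples.
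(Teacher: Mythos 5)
Your proposal is correct and is essentially the paper's own argument: the paper obtains the corollary as an "immediate" consequence of combining the boundary limits \eqref{limits s sbar as xi eta} for $s,\overline{s}\to s_{\star}$ with the angle formulas \eqref{P1 limit main result}--\eqref{P3 limit main result} of Theorem \ref{thm:main}, exactly as you do by sending each $\arg(s-a)$ to $0$ or $\pi$ according to the position of $s_{\star}$ relative to $0,\alpha c,\alpha c^{-1},c,c^{-1}$. Your representative case checks (e.g.\ for $\mathcal{F}_{4,\alpha}$) are accurate, and the remaining cases follow by the same finite substitution.
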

\begin{figure}
\begin{center}
\hspace{-1cm}
\begin{tikzpicture}[master]
\node at (0,0) {\includegraphics[width=7cm]{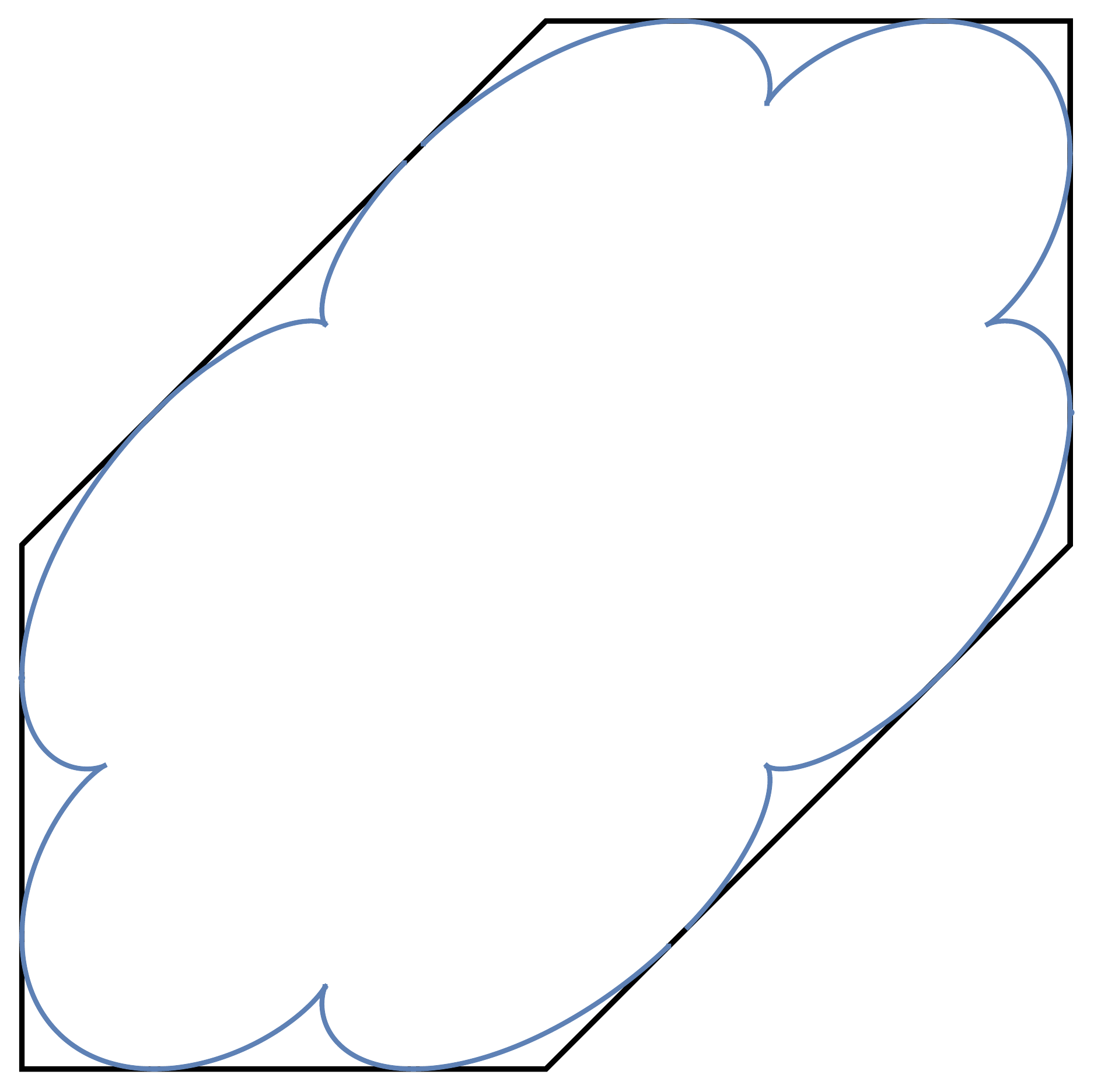}};
\node at (-3.3,0.5) {$\mathcal{F}_{1,\alpha}$};
\node at (3.3,-0.5) {$\mathcal{F}_{1,\alpha}$};
\node at (-2.95,-3.65) {$\mathcal{F}_{2,\alpha}$};
\node at (2.95,3.65) {$\mathcal{F}_{2,\alpha}$};
\node at (0,3.65) {$\mathcal{F}_{3,\alpha}$};
\node at (0,-3.65) {$\mathcal{F}_{3,\alpha}$};
\node at (-3.7,-1.6) {$\mathcal{F}_{4,\alpha}$};
\node at (3.7,1.6) {$\mathcal{F}_{4,\alpha}$};
\node at (-2,1.8) {$\mathcal{F}_{5,\alpha}$};
\node at (2,-1.8) {$\mathcal{F}_{5,\alpha}$};
\node at (-1.6,-3.65) {$\mathcal{F}_{6,\alpha}$};
\node at (1.6,3.65) {$\mathcal{F}_{6,\alpha}$};
\end{tikzpicture}
\hspace{-1cm}
\begin{tikzpicture}[slave]
\node at (0,0) {\includegraphics[width=7cm]{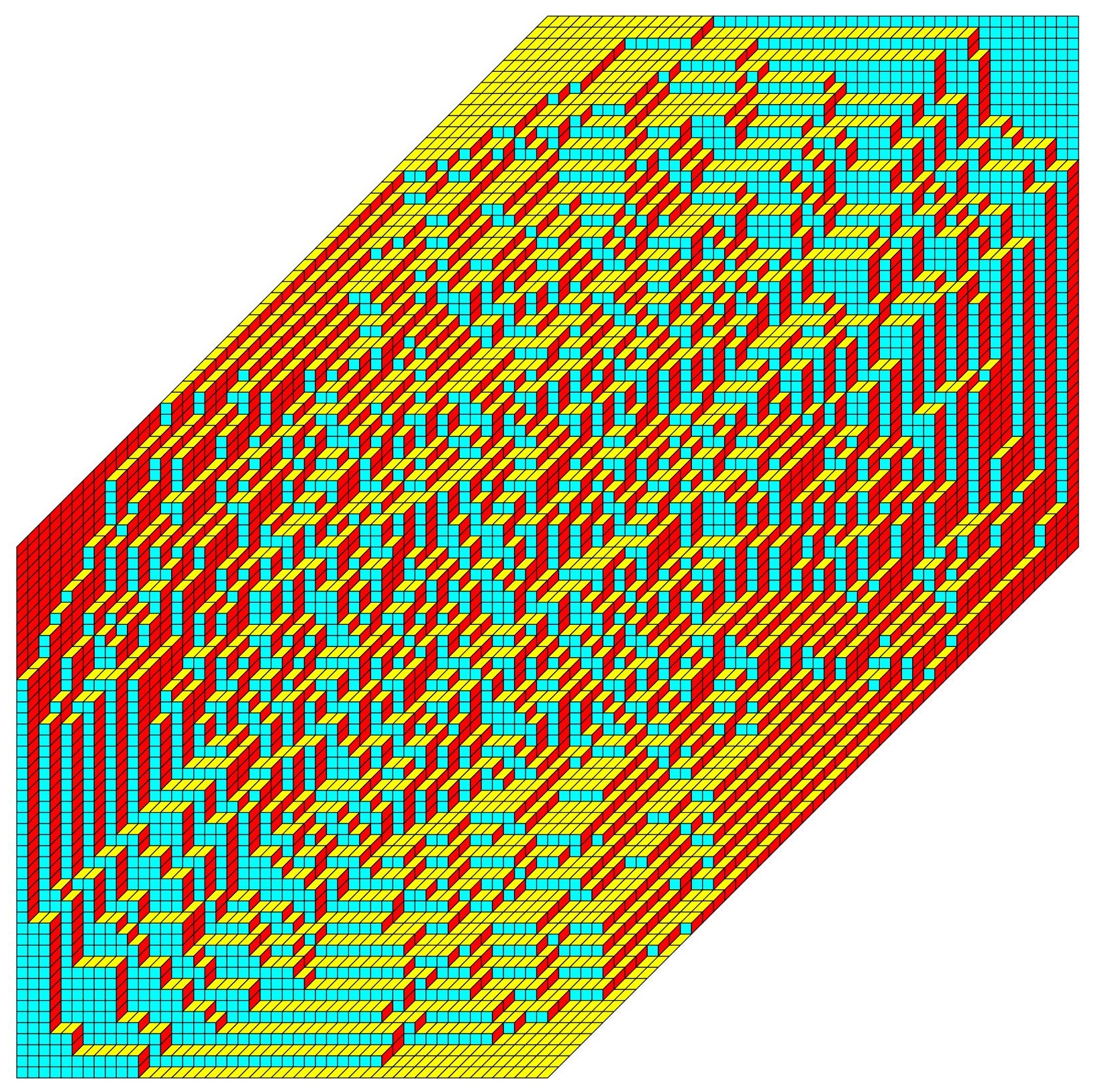}};
\node at (0,-1.2) {.};
\end{tikzpicture}
\end{center}
\caption{\label{fig: frozen regions}The six frozen regions for $\alpha = 0.3$, and a tiling of size $n=48$.}
\end{figure}

From Figure \ref{fig: frozen regions} (right), it transpires that the regions $\mathcal{F}_{j,\alpha}$, $j=1,2,3$ are frozen, and that $\mathcal{F}_{j,\alpha}$, $j=4,5,6$ are semi-frozen. More precisely, let $(x,y) \in \{0,\ldots,2N-1\}$ be such that $(\xi,\eta)=(\frac{x}{N}-1,\frac{y}{N}-1) \in \mathcal{F}_{j,\alpha}$, $j \in \{1,\ldots,6\}$. In Figure \ref{fig: frozen regions} (right), we observe that
\begin{align*}
\tikz[scale=.3,baseline=(current bounding box.center)] {\draw (0,0) \lozr; \draw (1,0) \lozr; \draw (0,1) \lozr; \draw (1,1) \lozr; \filldraw (0,0) circle(5pt); \draw (0,0) node[below] {$(2x,2y)$}; } \; , \quad \tikz[scale=.3,baseline=(current bounding box.center)] {\draw (0,0) \lozu; \draw (1,0) \lozu; \draw (0,1) \lozu; \draw (1,1) \lozu; \filldraw (0,0) circle(5pt); \draw (0,0) node[below] {$(2x,2y)$}; } \; , \quad \tikz[scale=.3,baseline=(current bounding box.center)] {\draw (0,0) \lozd; \draw (1,0) \lozd; \draw (0,1) \lozd; \draw (1,1) \lozd; \filldraw (1,0) circle(5pt); \draw (1,0) node[below] {$(2x,2y)$};} \; , \quad \tikz[scale=.3,baseline=(current bounding box.center)] {\draw (0,0) \lozu; \draw (1,0) \lozr; \draw (0,1) \lozu; \draw (1,1) \lozr; \filldraw (0,0) circle(5pt); \draw (0,0) node[below] {$(2x,2y)$};} \; , \quad \tikz[scale=.3,baseline=(current bounding box.center)] {\draw (0,0) \lozr; \draw (0,0) \lozd; \draw (-1,1) \lozd; \draw (1,1) \lozr; \filldraw (0,0) circle(5pt); \draw (0,0) node[below] {$(2x,2y)$};} \; , \quad \tikz[scale=.3,baseline=(current bounding box.center)] {\draw (-1,0) \lozd; \draw (0,0) \lozd; \draw (0,1) \lozu; \draw (1,1) \lozu; \filldraw (0,0) circle(5pt); \draw (0,0) node[below] {$(2x,2y)$};},
\end{align*}
depending on whether $(\xi,\eta) \in \mathcal{F}_{j,\alpha}$, $j=1,...,6$, respectively. Corollary \ref{coro: frozen} describes the situation at the boundary of the liquid region, and is consistent with these observations.


\subsection{Outline of the rest of the paper}
The proofs of Propositions \ref{prop:saddle}, \ref{prop:hightemp} and \ref{prop: s on the Riemann surface} are rather direct and are presented in Section \ref{section: easy proofs}. In Section \ref{section: first steps of the steepest descent of Y} we follow an idea of \cite{DK} and perform an eigendecomposition of the matrix valued weight. The eigenvalues and eigenvectors are naturally related to a $2$-sheeted Riemann surface $\mathcal{M}$. The proof of Theorem \ref{thm: correlation kernel final scalar expression} is given in Section \ref{section: reducing the size}, and relies on the fact that $\mathcal{M}$ is of genus $0$. The proof of Theorem \ref{thm:main} is done via a saddle point analysis in Section \ref{section: saddle point analysis}, after considerable preparations have been carried out in Sections \ref{section: lozenge probabilities}--\ref{section: phase functions}: 
\begin{itemize}
\vspace{-0.2cm}
\item[\textbullet] In Section \ref{section: lozenge probabilities}, we use Theorem \ref{thm: correlation kernel final scalar expression} to find double contour formulas for the lozenges in terms of scalar OPs. We also use the symmetry in our model to conclude that it is sufficient to prove Theorem \ref{thm:main} for the lower left quadrant of the liquid region. 
\vspace{-0.2cm}\item[\textbullet] In Section \ref{section: steepest descent for $U$}, we will perform a Deift/Zhou \cite{DZ} steepest descent analysis on the RH problem for $U$. This analysis goes via a series of transformations $U \mapsto T \mapsto S \mapsto R$. The first transformation $U \mapsto T$ uses a so-called $g$-function which is obtained in Section \ref{section: g-function}.
\vspace{-0.2cm}\item[\textbullet] The functions $\Phi$ and $\Psi$ denote the restrictions of the phase function $\Xi$ to the first and second sheet of $\mathcal{R}_{\alpha}$ and play a central role in the large $N$ analysis of the kernel. In Section \ref{section: phase functions}, we study the level set $\mathcal{N}_{\Phi} = \{ \zeta \in \mathbb{C}: \re \Phi(\zeta) = \re \Phi(s) \}$, which is of crucial importance to find the contour deformations that we need to consider for the saddle point analysis.
\end{itemize}
As mentioned in Remark \ref{rem:alpha is not 1}, we will always assume that $\alpha \in (0,1)$, even though it will not be written explicitly.

\section{Proofs of Propositions \ref{prop:saddle}, \ref{prop:hightemp} \ref{item a in prop mapping s} and \ref{prop: s on the Riemann surface}}\label{section: easy proofs}

\subsection{Proof of Proposition \ref{prop:saddle}}\label{subsection: proof of prop 3.2}
By \eqref{eq:saddleequationhigh}, the saddles are the zeros of the polynomial $M$ given by
\begin{multline*}
M(\zeta) = (\zeta -r_{1})^{2}(\zeta -r_{2})^{2}(\zeta -r_{3})^{2}(\zeta -r_{+})(\zeta -r_{-}) - \\
\Big[ (\zeta-1)(\zeta +1)(\zeta-\alpha c)(\zeta-\tfrac{\alpha}{c})\eta - (\zeta-\alpha)(\zeta+\alpha)(\zeta-c)(\zeta-\tfrac{1}{c})\xi \Big]^{2}.
\end{multline*}
Since the coefficients of $M$ are real, Proposition \ref{prop:saddle} follows if $M$ has at least $6$ zeros on the real line. This can be proved by a direct inspection of the values of $M(\zeta)$ at $\zeta = -\infty,r_{1},0,\alpha c, r_{2}, \frac{\alpha}{c}, c, r_{3}, c^{-1},+\infty$:
\begin{align*}
& M(r_{1}) = - \frac{\alpha}{c^{2}}(1-\alpha)^{2}(c+\sqrt{\alpha})^{2} (\alpha c + \sqrt{\alpha})^{2} (\eta + \xi)^{2}, \qquad M(0) = \alpha^{4} (1-(\eta-\xi)^{2}), \\
& M(\alpha c) = (1-\alpha)^{8} c^{8}(1-\xi^{2}), \hspace{1.1cm} M(r_{2}) = - \frac{\alpha (1-\alpha)^{10}c^{8}}{(c+\sqrt{\alpha})^{8}}\big( (1+\alpha^{2})c+\sqrt{\alpha}(1+\alpha) \big)^{2}(\xi-2\eta)^{2}, \\
& M(\alpha c^{-1}) = \alpha^{4}(1-\alpha)^{8}(1-\xi^{2}), \qquad M(c) = (1-\alpha)^{8} c^{8} (1-\eta^{2}), \\
& M(r_{3}) = -\frac{\alpha (1-\alpha)^{10}c^{8}}{(\alpha c + \sqrt{\alpha})^{8}}\big( (1+\alpha^{2})c + \sqrt{\alpha}(1+\alpha) \big)^{2}(\eta - 2 \xi)^{2}, \qquad M(c^{-1}) = \frac{(1-\alpha)^{8}}{\alpha^{4}}(1-\eta^{2}).
\end{align*}
Since $(\xi,\eta) \in \mathcal{H}^{\mathrm{o}}$, where
\begin{align*}
\mathcal{H}^{\mathrm{o}}= \left\{(\xi,\eta) \mid  -1 < \xi < 1, \ -1 <  \eta < 1,\ -1 < \eta-\xi < 1 \right\},
\end{align*}
the leading coefficients of $M$ is $1-(\xi-\eta)^{2}>0$. We conclude the following:
\begin{enumerate}
\item if $\eta \neq -\xi$, $M$ has at least one simple root on $(-\infty,r_{1})$ and at least one simple root on $(r_{1},0)$, 
\item if $\eta \neq \frac{\xi}{2}$, $M$ has at least one simple root on $(\alpha c,r_{2})$ and at least one simple root on $(r_{2},\frac{\alpha}{c})$,
\item if $\eta \neq 2\xi$, $M$ has at least one simple root on $(c,r_{3})$ and at least one simple root on $(r_{3},\frac{1}{c})$. 
\end{enumerate}
Finally, other computations show that $M'(r_{1}) = 0$ if $\eta = -\xi$, that $M'(r_{2}) = 0$ if $\eta = \frac{\xi}{2}$ and that $M'(r_{3}) = 0$ if $\eta = 2 \xi$. So $M$ has at least $6$ real zeros (counting multiplicities) for each $(\xi,\eta) \in \mathcal{H}^{\mathrm{o}}$.

\subsection{Proof of Propositions \ref{prop:hightemp} and \ref{prop: s on the Riemann surface}}
We start with the proof of Proposition \ref{prop: s on the Riemann surface}. By rearranging the terms in \eqref{eq:saddlepointeq sqrt}, we see that the saddles are the solutions to
\begin{align*}
\left[ \frac{1}{2\zeta} - \frac{1}{2}\left( \frac{1}{\zeta - \alpha c} + \frac{1}{\zeta - \alpha c^{-1}} \right) \right]\xi + \left[ -\frac{1}{2\zeta} + \frac{1}{2}\left( \frac{1}{\zeta-c} + \frac{1}{\zeta-c^{-1}} \right) \right]\eta = w,
\end{align*}
where $w$ satisfies $w^{2} = \mathcal{Q}(\zeta)$. This can be rewritten as
\begin{align}\label{lol23}
\frac{-(\zeta - \alpha)(\zeta +\alpha)(\zeta -c)(\zeta - \frac{1}{c})}{(\zeta - \alpha c)(\zeta - \frac{\alpha}{c})(\zeta - 1)(\zeta + 1)}\xi + \eta = \frac{2\zeta (\zeta -c)(\zeta - \frac{1}{c})}{(\zeta - 1)(\zeta + 1)}w.
\end{align}
Taking the real and imaginary parts of \eqref{lol23}, and recalling that $\xi,\eta \in \mathbb{R}$, we get
\begin{align}\label{lol24}
\begin{pmatrix}
\re \left( \frac{-(\zeta - \alpha)(\zeta +\alpha)(\zeta -c)(\zeta - \frac{1}{c})}{(\zeta - \alpha c)(\zeta - \frac{\alpha}{c})(\zeta - 1)(\zeta + 1)} \right) & 1 \\
\im \left( \frac{-(\zeta - \alpha)(\zeta +\alpha)(\zeta -c)(\zeta - \frac{1}{c})}{(\zeta - \alpha c)(\zeta - \frac{\alpha}{c})(\zeta - 1)(\zeta + 1)} \right) & 0
\end{pmatrix}\begin{pmatrix}
\xi \\ \eta
\end{pmatrix} = \begin{pmatrix}
\re \left( \frac{2\zeta (\zeta -c)(\zeta - \frac{1}{c})}{(\zeta - 1)(\zeta + 1)}w \right) \\
\im \left( \frac{2\zeta (\zeta -c)(\zeta - \frac{1}{c})}{(\zeta - 1)(\zeta + 1)}w \right)
\end{pmatrix}.
\end{align}
Since
\begin{align*}
\frac{-(\zeta - \alpha)(\zeta +\alpha)(\zeta -c)(\zeta - \frac{1}{c})}{(\zeta - \alpha c)(\zeta - \frac{\alpha}{c})(\zeta - 1)(\zeta + 1)} = -1 + \frac{a_{1}}{\zeta-1} + \frac{a_{2}}{\zeta + 1} + \frac{a_{3}}{\zeta - \alpha c} + \frac{a_{4}}{\zeta - \frac{\alpha}{c}},
\end{align*}
with $a_{1},a_{2},a_{3},a_{4}>0$, we have
\begin{align}\label{im is neg in diffeo}
\im \frac{-(\zeta - \alpha)(\zeta +\alpha)(\zeta -c)(\zeta - \frac{1}{c})}{(\zeta - \alpha c)(\zeta - \frac{\alpha}{c})(\zeta - 1)(\zeta + 1)} < 0, \quad \mbox{ for } \im \zeta > 0.
\end{align}
Thus, the $2\times 2$ matrix at the left-hand-side of \eqref{lol24} is invertible, and we get \eqref{inverse of the diffeomorphism}. 
This shows that $(\xi,\eta) \mapsto \big(s(\xi,\eta;\alpha),w(\xi,\eta;\alpha)\big)$ is a bijection from $\mathcal{L}_{\alpha}$ to $\mathcal R_{\alpha}^+$. This mapping is clearly differentiable, and therefore it is a diffeomorphism. Replacing $(s,w) \mapsto (s,-w)$ in the right-hand-side of \eqref{inverse of the diffeomorphism}, we see that the left-hand-side becomes $(\xi,\eta) \mapsto (-\xi,-\eta)$. This implies the symmetry $s(\xi,\eta;\alpha) = s(-\xi,-\eta;\alpha)$. It remains to prove that $(\xi,\eta) \in \mathcal{L}_{\alpha}^{l}$ is mapped to a point $\big(s(\xi,\eta;\alpha),w(\xi,\eta;\alpha)\big)$ lying in the upper half plane of the first sheet. The proof of this claim is splitted in the next two lemmas. 


\begin{lemma}\label{lemma: im is 0 for diffeo}
We have
$\im \left(\frac{2\zeta(\zeta-c)(\zeta-c^{-1})}{(\zeta-1)(\zeta+1)}\mathcal{Q}(\zeta)^{1/2}\right) = 0$ if and only if $\zeta \in \mathbb{R}\cup \overline{\Sigma_{1}}$. 
\end{lemma}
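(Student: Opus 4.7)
My plan is to use the factorization obtained by squaring $F(\zeta):=\frac{2\zeta(\zeta-c)(\zeta-c^{-1})}{(\zeta-1)(\zeta+1)}\mathcal{Q}(\zeta)^{1/2}$ and cancelling in \eqref{def of Q in statement of results}: namely $F=g\cdot h$, where
$$g(\zeta):=\frac{(\zeta-r_1)(\zeta-r_2)(\zeta-r_3)}{(\zeta-1)(\zeta+1)(\zeta-\alpha c)(\zeta-\alpha c^{-1})}$$
is a real rational function and $h(\zeta):=\sqrt{(\zeta-r_+)(\zeta-r_-)}$ is the branch with cut $\Sigma_1$ and $h(\zeta)\sim \zeta$ at infinity. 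From $F$ being built from real-coefficient data and a conjugation-symmetric square root, one gets Schwarz reflection $F(\bar\zeta)=\overline{F(\zeta)}$ off $\overline{\Sigma_1}$, and on $\mathbb{R}$ the identity $(\zeta-r_+)(\zeta-r_-)=|\zeta-r_+|^2\geq 0$ forces $h$ and hence $F$ to be real.

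The main algebraic step is the identity
$$\overline{g(\zeta)}=e^{i\phi}\,g(\zeta)\quad\text{for all }\zeta=c^{-1}+R_1 e^{i\phi}\in\gamma_1.$$
I would derive this from the fact that the inversion $z\mapsto c^{-1}+R_1^2/(z-c^{-1})$ through the circle $\gamma_1$ permutes the set $\{r_1,r_2,r_3,\pm 1,\alpha c,\alpha c^{-1}\}$ of zeros and poles of $g$: it fixes $r_2=c^{-1}-R_1$ and pairs $r_1\leftrightarrow r_3$, $1\leftrightarrow -1$, and $\alpha c\leftrightarrow \alpha c^{-1}$. The four pairing relations reduce, using $c^2(1-\alpha+\alpha^2)=\alpha$, to the elementary identities $c^{-2}-R_1^2=1$, $(c^{-1}-1)(c^{-1}+1)=R_1^2$, $(c^{-1}-\alpha c)(c^{-1}-\alpha c^{-1})=R_1^2$ and $(c^{-1}+\sqrt{\alpha})(c^{-1}-r_3)=R_1^2$, and together they give $\prod_i(c^{-1}-r_i)=R_1^3$ and $\prod_k(c^{-1}-p_k)=R_1^4$. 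Combining with the pointwise relation $\bar\zeta-z=-(z-c^{-1})(\zeta-z')/(\zeta-c^{-1})$, valid on $\gamma_1$ with $z'$ the inversion of $z$, the product defining $g(\bar\zeta)$ reassembles after the permutation to $(\zeta-c^{-1})\,g(\zeta)/R_1=e^{i\phi}g(\zeta)$.

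With the identity in hand, a direct computation on $\gamma_1$ gives $(\zeta-r_+)(\zeta-r_-)=2R_1^2(\cos\phi-\cos\theta_1)e^{i\phi}$. On $\overline{\Sigma_1}$ (where $|\phi|\leq\theta_1$) the prefactor is nonnegative, so $h_\pm(\zeta)=\pm\sqrt{2R_1^2(\cos\phi-\cos\theta_1)}\,e^{i\phi/2}$; simultaneously the identity forces $\arg g\equiv -\phi/2\pmod{\pi}$, hence both boundary values $F_\pm=gh_\pm$ are real. On $\gamma_1\setminus\overline{\Sigma_1}$ the prefactor is negative, introducing an extra $\pi/2$ in $\arg h$ and making $F$ purely imaginary (and nonzero away from $r_\pm$). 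This proves $\mathbb{R}\cup\overline{\Sigma_1}\subseteq\{\im F=0\}$ and, more, identifies the entire real locus of $F^2$ as $\mathbb{R}\cup\gamma_1$.

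For the reverse inclusion I would argue that $\mathbb{R}\cup\gamma_1$ is the full locus $\{F^2\in\mathbb{R}\}$: writing $F^2=P/Q$ with $P,Q$ real polynomials of degree $8$, the polynomial $P(\zeta)Q(\bar\zeta)-P(\bar\zeta)Q(\zeta)$ is divisible by $\zeta-\bar\zeta$ (from $\mathbb{R}$) and by the defining polynomial of $\gamma_1$, and a degree count using the four simple real poles of $F$ together with $F(\zeta)=1+O(\zeta^{-1})$ at infinity shows these exhaust its zero set. The sign analysis on $\gamma_1$ already separates $\{F\in\mathbb{R}\}$ from $\{F\in i\mathbb{R}\setminus\{0\}\}$ within that locus, yielding the claim. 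The principal obstacle is the pairing identity for the zeros and poles of $g$ under $\gamma_1$-inversion; once this is established, the rest is short algebra and sign bookkeeping.
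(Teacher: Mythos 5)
Your forward inclusion is fine: the factorization $F=g\,h$, the reality of $F$ on $\mathbb{R}$, and the circle-inversion identity $\overline{g(\zeta)}=e^{i\phi}g(\zeta)$ on $\gamma_1$ (which I checked; it is equivalent to the symmetry $R_1^4(\zeta-c^{-1})^{-4}\mathcal{Q}(\zeta^{\star})=\mathcal{Q}(\zeta)$ that the paper itself uses in \eqref{sym for Q and f}) do give $\mathbb{R}\cup\overline{\Sigma_1}\subseteq\{\im F=0\}$, together with $F$ purely imaginary on $\gamma_1\setminus\overline{\Sigma_1}$. The gap is in the reverse inclusion. Your pivotal claim that $\mathbb{R}\cup\gamma_1$ is \emph{the full} locus $\{F^2\in\mathbb{R}\}$ is false. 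Indeed $f:=F^2$ is a degree-$8$ rational map, $f\geq 0$ on $\mathbb{R}$, and $f$ is bounded on $\gamma_1$ (its poles $\pm1,\alpha c,\alpha c^{-1}$ do not lie on $\gamma_1$), while $f<0$ somewhere on $\gamma_1\setminus\overline{\Sigma_1}$; hence for any $x<\min_{\gamma_1}f<0$ the eight solutions of $f(\zeta)=x$ all lie off $\mathbb{R}\cup\gamma_1$, so the real locus of $F^2$ strictly contains $\mathbb{R}\cup\gamma_1$. Correspondingly, your divisibility/degree count cannot ``exhaust the zero set'': $P(z)Q(w)-P(w)Q(z)$ has bidegree $(8,8)$, and after removing $(z-w)$ and the conic $(z-c^{-1})(w-c^{-1})-R_1^2$ there remains a factor of bidegree $(6,6)$ whose trace on $w=\bar z$ is a nonempty real curve (it contains all preimages of sufficiently negative reals). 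So the last step of your argument, and with it the implication $\im F=0\Rightarrow\zeta\in\mathbb{R}\cup\overline{\Sigma_1}$, does not follow as written.

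What you actually need is weaker and is what the paper proves: since $\im F=0$ is equivalent to $f=F^2\in[0,\infty)$, it suffices to show that for every $x\geq 0$ all eight roots of $f(\zeta)=x$ lie on $\mathbb{R}\cup\overline{\Sigma_1}$. The paper does this by a root count restricted to $x\geq 0$: the sign pattern of $f$ on $\mathbb{R}$ (poles at $\pm1,\alpha c,\alpha c^{-1}$, double zeros at $r_1,r_2,r_3$, $f\to1$ at $\infty$) forces at least six real roots for every $x\geq0$, eight real roots when $x\geq f(c^{-1}+R_1)$, and the explicit formula for $f(c^{-1}+R_1e^{it})$ shows it decreases from $f(c^{-1}+R_1)$ to $0$ as $t$ runs from $0$ to $\theta_1$, producing the remaining two roots on $\overline{\Sigma_1}$ when $0\leq x\leq f(c^{-1}+R_1)$; eight roots being the total, none can lie elsewhere. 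Your ingredients (reality and the explicit restriction of $f$ to $\gamma_1$, the behaviour of $f$ on $\mathbb{R}$) are exactly what is needed to run this count, but you must restrict to nonnegative values rather than characterize the whole real locus of $F^2$, which is genuinely larger than $\mathbb{R}\cup\gamma_1$.
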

\begin{proof}
Consider the function $f$ defined by 
\begin{align*}
f(\zeta) := \frac{(\zeta-r_{1})^{2}(\zeta-r_{2})^{2}(\zeta-r_{3})^{2}(\zeta-r_{+})(\zeta-r_{-})}{(\zeta-1)^{2}(\zeta+1)^{2}(\zeta-\alpha c)^{2}(\zeta-\alpha c^{-1})^{2}}.
\end{align*}
By the fundamental theorem, for each $x \in [0,+\infty)$, there are $8$ solutions $\zeta \in \mathbb{C}$ to $f(\zeta) = x$. The claim follows if we show that all these solutions lie on $\mathbb{R}\cup \Sigma_{1}$. First, note that the function $f$ is positive on the real line, has poles at $-1,\alpha c,\alpha c^{-1},1$, and zeros at $r_{1},r_{2},r_{3}$. Since $-1<r_{1}<\alpha c < r_{2}<\alpha c^{-1} < r_{3}<1$, the equation $f(\zeta) = x$ has at least $6$ real solutions (counting multiplicities) for each $x \in [0,+\infty)$. Furthermore, $f(\zeta) \to 1$ as $\zeta \to \pm \infty$, $f$ has a local minimum at $c^{-1} + R_{1}$, and $f(c^{-1}+R_{1}e^{it})<1$. Therefore, $f(\zeta) = x$ has $8$ solutions on $\mathbb{R}$ for each $x \in [f(c^{-1}+R_{1}),+\infty)$. It remains to show that there are two solutions on $\overline{\Sigma_{1}}$ whenever $x \in [0,f(c^{-1}+R_{1})]$. Writing $\zeta = c^{-1} + R_{1} e^{it} \in \gamma_{1}$, $t \in [-\pi,\pi]$, some computations show that
\begin{align*}
f(c^{-1} + R_{1}e^{it}) = \frac{2(\cos t-\cos \theta_{1} )\left( \cos t + \frac{\alpha^{2}+(2-\alpha)\sqrt{1-\alpha + \alpha^{2}}}{2(1-\alpha)} \right)^{2}\cos^{2}(\frac{t}{2})}{\left( \cos t + \frac{\sqrt{1-\alpha + \alpha^{2}}}{1-\alpha} \right)^{2} \left( \cos t + \frac{2-\alpha + \alpha^{2}}{2\sqrt{1-\alpha + \alpha^{2}}} \right)^{2}}.
\end{align*}
So $t \mapsto f(c^{-1} + R_{1}e^{it})$ is even, positive and decreases from $f(c^{-1}+R_{1})$ to $0$ as $t$ increases from $0$ to $\theta_{1}$, which finishes the proof.
\end{proof}
\begin{lemma}\label{lemma: proof of part (a)}
Let $(s,w) \in \mathcal{R}_{\alpha}^{+}$ such that $w = \mathcal{Q}(s)^{1/2}$ (i.e. $(s,w)$ is in the first sheet). Then, $\xi = \xi(s,w;\alpha) < 0$.
\end{lemma}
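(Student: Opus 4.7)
First I would convert the inverse formula \eqref{inverse of the diffeomorphism} into a scalar identity for $\xi$ via Cramer's rule:
\begin{align*}
\xi \;=\; \frac{\im\bigl(g(s)\,w\bigr)}{\im h(s)}, \quad g(\zeta):=\frac{2\zeta(\zeta-c)(\zeta-c^{-1})}{(\zeta-1)(\zeta+1)}, \quad h(\zeta):=\frac{-(\zeta-\alpha)(\zeta+\alpha)(\zeta-c)(\zeta-c^{-1})}{(\zeta-\alpha c)(\zeta-\tfrac{\alpha}{c})(\zeta-1)(\zeta+1)}.
\end{align*}
By the partial-fraction bound \eqref{im is neg in diffeo}, $\im h(s)<0$ throughout $\mathbb{C}^+$, so the claim $\xi<0$ is equivalent to $\im\bigl(g(s)w\bigr)>0$.

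The plan for the latter inequality is to combine Lemma \ref{lemma: im is 0 for diffeo} with a connectedness observation. On the first sheet, $w=\mathcal{Q}(s)^{1/2}$ is analytic in $s$ on $\Omega:=\mathbb{C}^{+}\setminus\overline{\Sigma_1}$; the set $\Omega$ is connected because the slit $\overline{\Sigma_1}\cap\overline{\mathbb{C}^{+}}$ is a Jordan arc touching $\mathbb{R}$ only at the single point $c^{-1}+R_1$, so one can go around its interior endpoint $r_+$. By Lemma \ref{lemma: im is 0 for diffeo}, the continuous function $s\mapsto \im\bigl(g(s)\mathcal{Q}(s)^{1/2}\bigr)$ is non-vanishing on $\Omega$, hence of constant sign there.

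To determine this sign I would take $s=it$ with $t\to+\infty$. Using the expansions $\mathcal{Q}(\zeta)^{1/2}=\tfrac{1}{2\zeta}+\tfrac{c_1}{4\zeta^2}+O(\zeta^{-3})$ (with $c_1$ read off from \eqref{def of Q in statement of results}) together with $g(\zeta)=2\zeta-2(c+c^{-1})+O(\zeta^{-1})$, one gets
\begin{align*}
\im\bigl(g(it)\,w\bigr)\;\sim\;\frac{2(c+c^{-1})-c_{1}}{2t},\qquad t\to+\infty,
\end{align*}
which reduces everything to the inequality $c_1 < 2(c+c^{-1})$. Writing $c_1=(1+\alpha)(c+2c^{-1})-2(r_1+r_2+r_3)$ via $r_++r_-=c(1+\alpha)$, substituting \eqref{def of r1 r2 r3}, combining $r_2+r_3$ into a single rational expression over $(c+\sqrt{\alpha})(\alpha c+\sqrt{\alpha})$, and repeatedly applying $c^2(1-\alpha+\alpha^2)=\alpha$ yields the clean factorization
\begin{align*}
2(c+c^{-1})-c_1 \;=\; \frac{\alpha(1-\alpha)\bigl[c\,(1+2\alpha-\alpha^2+2\alpha^3)+\sqrt{\alpha}\,(3-\alpha+2\alpha^2)\bigr]}{(1-\alpha+\alpha^2)\bigl[\alpha(1+c^2)+\sqrt{\alpha}(1+\alpha)c\bigr]}.
\end{align*}
Every factor on the right is strictly positive on $(0,1)$: the quadratic $3-\alpha+2\alpha^2$ has discriminant $-23<0$ and equals $3$ at $\alpha=0$, the cubic $1+2\alpha-\alpha^2+2\alpha^3$ has derivative $6\alpha^2-2\alpha+2$ of discriminant $-44<0$ and is therefore monotone with $P(0)=1$, and the remaining factors are plainly positive. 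This proves $\im(g(it)w)>0$ for $t$ large, and by the constant-sign property of the previous paragraph, $\im(g(s)w)>0$ throughout $\Omega$, whence $\xi<0$.

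The only non-soft step I anticipate is the algebraic simplification producing the factorization above: reducing the unwieldy expression for $r_1+r_2+r_3$ from \eqref{def of r1 r2 r3} into the two polynomials $P_1(\alpha)=1+2\alpha-\alpha^2+2\alpha^3$ and $P_2(\alpha)=3-\alpha+2\alpha^2$ with a common factor $\alpha(1-\alpha)$. Everything else (Cramer's rule, \eqref{im is neg in diffeo}, Lemma \ref{lemma: im is 0 for diffeo}, and the connectedness of $\Omega$) is immediate.
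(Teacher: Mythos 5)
Your proposal is correct and follows essentially the same route as the paper's proof: reduce to the sign of $\im\bigl(g(s)w\bigr)$ via \eqref{inverse of the diffeomorphism} and \eqref{im is neg in diffeo}, invoke Lemma \ref{lemma: im is 0 for diffeo} together with the connectedness of $\mathbb{C}^{+}\setminus\overline{\Sigma_{1}}$ to get a constant sign, and determine that sign from the $\frac{1}{s}$ term of the expansion at infinity. The only difference is that the paper simply asserts the positivity of the relevant subleading coefficient (its ``$a>0$''), whereas you verify it explicitly; I checked your factorization of $2(c+c^{-1})-c_{1}$ and it is correct.
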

\begin{proof}
Using \eqref{inverse of the diffeomorphism} together with \eqref{im is neg in diffeo}, we infer that $\xi$ has the same sign as
\begin{equation} \label{eq:signofxi}
- \im \left(\frac{2s(s-c)(s-c^{-1})}{(s-1)(s+1)}w\right).
\end{equation}
By Lemma \ref{lemma: im is 0 for diffeo}, \eqref{eq:signofxi} is $0$ if and only if $s \in \overline{\Sigma_{1}}$, which implies that the sign of \eqref{eq:signofxi} is constant for $s \in \mathbb{C}^{+}\setminus \overline{\Sigma_{1}}$. From the expansion
\begin{align*}
-\frac{2s(s-c)(s-c^{-1})}{(s-1)(s+1)}w = 1 + \frac{a}{s} + \bigO(s^{-2}), \qquad \mbox{as } s \to \infty,
\end{align*}
where $a>0$, we conclude that \eqref{eq:signofxi} is negative for all $s$ sufficiently large and lying in $\mathbb{C}^{+}$, and the claim follows.
\end{proof}
This finishes the proof of Proposition \ref{prop: s on the Riemann surface} and Proposition \ref{prop:hightemp} \ref{item a in prop mapping s}. In principle, it is also possible to use \eqref{inverse of the diffeomorphism} to prove parts \ref{item b in prop mapping s}--\ref{item f in prop mapping s} of Proposition \ref{prop:hightemp}, but it leads to more involve analysis. However, by rearranging the terms in \eqref{lol23}, we can find other expressions than \eqref{inverse of the diffeomorphism} for the mapping $(s,w) \mapsto (\xi,\eta)$ that lead to simpler proofs of  \ref{item b in prop mapping s}--\ref{item f in prop mapping s}. We only sketch the proof of \ref{item e in prop mapping s}. First, we rewrite \eqref{lol23} as
\begin{align*}
\xi + \frac{-(\zeta-1)(\zeta+1)(\zeta-\alpha c)(\zeta-\alpha c^{-1})}{(\zeta-\alpha)(\zeta+\alpha)(\zeta-c)(\zeta-c^{-1})}\eta = \frac{-2\zeta(\zeta-\alpha c)(\zeta-\alpha c^{-1})}{(\zeta-\alpha)(\zeta+\alpha)}w,
\end{align*}
which implies
\begin{align*}
\begin{pmatrix}
1 & \re \left( \frac{-(\zeta-1)(\zeta+1)(\zeta-\alpha c)(\zeta-\alpha c^{-1})}{(\zeta-\alpha)(\zeta+\alpha)(\zeta-c)(\zeta-c^{-1})} \right) \\
0 & \im \left( \frac{-(\zeta-1)(\zeta+1)(\zeta-\alpha c)(\zeta-\alpha c^{-1})}{(\zeta-\alpha)(\zeta+\alpha)(\zeta-c)(\zeta-c^{-1})} \right)
\end{pmatrix} \begin{pmatrix}
\xi \\ \eta
\end{pmatrix} = \begin{pmatrix}
\re \left( \frac{-2\zeta(\zeta-\alpha c)(\zeta-\alpha c^{-1})}{(\zeta-\alpha)(\zeta+\alpha)}w \right) \\
\im \left( \frac{-2\zeta(\zeta-\alpha c)(\zeta-\alpha c^{-1})}{(\zeta-\alpha)(\zeta+\alpha)}w \right)
\end{pmatrix}.
\end{align*}
Next, we verify that
\begin{align*}
\im \left( \frac{-(\zeta-1)(\zeta+1)(\zeta-\alpha c)(\zeta-\alpha c^{-1})}{(\zeta-\alpha)(\zeta+\alpha)(\zeta-c)(\zeta-c^{-1})} \right)>0, \qquad \mbox{for } \im \zeta > 0,
\end{align*}
which implies that $\eta = \eta(\zeta,w;\alpha)$ has the same sign as 
\begin{align*}
\im \left( \frac{-2\zeta(\zeta-\alpha c)(\zeta-\alpha c^{-1})}{(\zeta-\alpha)(\zeta+\alpha)}w \right).
\end{align*}
Finally, in a similar way as in Lemma \ref{lemma: im is 0 for diffeo}, we show that this quantity is $0$ if and only if $\zeta \in \mathbb{R}\cup \Sigma_{\alpha}$, which proves part \ref{item e in prop mapping s}. We omit the proofs of parts \ref{item b in prop mapping s}, \ref{item c in prop mapping s}, \ref{item d in prop mapping s} and \ref{item f in prop mapping s}.

\section{Analysis of the RH problem for $Y$}\label{section: first steps of the steepest descent of Y}
In order to describe the behavior of $Y$ as $N \to + \infty$, one needs to control the $2 \times 2$ upper right block of the jumps, which is $A(z)^{2N}z^{-2N}$. To do this, we follow an idea of Duits and Kuijlaars \cite{DK} and proceed with the eigendecomposition of $A$. Then, we use this factorization to perform a first transformation $Y \mapsto X$ on the RH problem. 
\subsection{Eigendecomposition of $A$}
The matrix $A(z)$ defined in \eqref{def of A} has the following eigenvalues	
\begin{equation} \label{eq:lambda12} 
\lambda_{1,2}(z) = \frac{1+\alpha^{2}}{2}(1+z) \pm \frac{1-\alpha^{2}}{2}\sqrt{(z-z_{+})(z-z_{-})}, \qquad z \in \mathbb{C}\setminus [z_{-},z_{+}],
\end{equation}
where the $+$ and $-$ signs read for $\lambda_{1}$ and $\lambda_{2}$, respectively, and $z_{+}$ and $z_{-}$ are given by 
\begin{align*}
z_\pm = \frac{-(1+\alpha^{2}) \pm 2 \sqrt{\alpha(1-\alpha + \alpha^{2})}}{(1-\alpha)^{2}},
\end{align*}
and satisfy $z_{-}<-1<z_{+}<0$ and $z_{+}z_{-} = 1$. We define the square root $\sqrt{(z-z_{+})(z-z_{-})}$ such that it is analytic in $\mathbb{C}\setminus [z_{-},z_{+}]$, with an asymptotic behavior at $\infty$ given by
\begin{align*}
\sqrt{(z-z_{+})(z-z_{-})} = z + \bigO(1), \qquad \mbox{as } z \to \infty.
\end{align*}
The eigenvectors of $A$ are in the columns of the following matrix:
\begin{align} 
& E(z) = \frac{1}{1+\alpha} \begin{pmatrix} 1+\alpha & 1+\alpha \\ 
\lambda_1(z) - (\alpha^{2}+z) & \lambda_2(z) - (\alpha^{2}+z) 
\end{pmatrix} \label{def of E} \\
& =  \begin{pmatrix} 
1 & 1 \\ 
\frac{1-\alpha}{2}\big( 1-z+\sqrt{(z-z_{+})(z-z_{-})} \big)
 & \frac{1-\alpha}{2}\big( 1-z-\sqrt{(z-z_{+})(z-z_{-})} \big) \nonumber
\end{pmatrix},
\end{align}
and we have the factorization
\begin{equation}\label{eigenvalue eigenvector decomposition of A}
A(z) = E(z) \Lambda(z) E(z)^{-1},
\end{equation}
where $\Lambda(z) = \diag(\lambda_{1}(z),\lambda_{2}(z))$ is the matrix of eigenvalues. The matrix $E(z)$ is analytic for $z \in \mathbb C \setminus [z_{-},z_{+}]$,
and satisfies
\begin{align}
& E_{+}(z) = E_{-}(z)\sigma_{1}, & & z \in (z_{-},z_{+}), \label{jumps for E} \\
& E(z) = \begin{pmatrix} 
1 & 1 \\
\frac{1-\alpha+\alpha^{2}}{1-\alpha}+\bigO(z^{-1}) & -(1-\alpha) z + \bigO(1)
\end{pmatrix} & & \mbox{ as } z \to \infty, \label{asymp for E}
\end{align}
where $\sigma_1 = \begin{pmatrix} 0 & 1 \\ 1 & 0 \end{pmatrix}$.


\subsection{First transformation $Y \mapsto X$}
The first transformation of the RH problem diagonalizes the $2 \times 2$ upper right block of the jumps, and is defined by
\begin{equation}\label{def X}
X(z) = Y(z) \begin{pmatrix} E(z) & 0_2 \\ 0_2 & E(z) \end{pmatrix}.
\end{equation}
\begin{remark}
By standard arguments \cite{Deift}, we have $\det Y \equiv 1$. Note however that the $Y \mapsto X$ transformation does not preserve the unit determinant. Indeed, since $\det E(z) = -(1-\alpha)\sqrt{(z-z_{+})(z-z_{-})}$, we have $\det X(z) = (1-\alpha)^{2}(z-z_{+})(z-z_{-})$.
\end{remark}
Using the jumps for $E$ given by \eqref{jumps for E}, we verify that $X$ satisfies the following RH problem.
\subsubsection*{RH problem for $X$}
\begin{itemize}
\item[(a)] $X : \mathbb{C}\setminus (\gamma\cup [z_{-},z_{+}]) \to \mathbb{C}^{4\times 4}$ is analytic, where we recall that $\gamma$ is a closed contour surrounding $0$ once in the positive direction. 
\item[(b)] The jumps for $X$ are given by
\begin{align}
& X_{+}(z) = X_{-}(z) \begin{pmatrix}
I_{2} & \frac{\Lambda^{2N}(z)}{z^{2N}} \\ 0_{2} & I_{2}
\end{pmatrix}, & & \mbox{ for } z \in \gamma \setminus \mathcal{Z}, \\
& X_{+}(z) = X_{-}(z) \begin{pmatrix}
\sigma_{1} & 0_{2} \\ 0_{2} & \sigma_{1}
\end{pmatrix}, & & \mbox{ for } z \in (z_{-},z_{+})\setminus \mathcal{Z},
\end{align}
where $\mathcal{Z} := \gamma \cap [z_{-},z_{+}]$. Depending on $\gamma$, $\mathcal{Z}$ can be the empty set, a finite set, or an infinite set. If $\mathcal{Z}$ contains of one or several intervals, then on these intervals the jumps are 
\begin{align*}
& X_{+}(z) = X_{-}(z) \begin{pmatrix}
\sigma_{1} & 0_{2} \\ 0_{2} & \sigma_{1}
\end{pmatrix}\begin{pmatrix}
I_{2} & \frac{\Lambda_{+}^{2N}(z)}{z^{2N}} \\ 0_{2} & I_{2}
\end{pmatrix} = X_{-}(z) \begin{pmatrix}
I_{2} & \frac{\Lambda_{-}^{2N}(z)}{z^{2N}} \\ 0_{2} & I_{2}
\end{pmatrix}\begin{pmatrix}
\sigma_{1} & 0_{2} \\ 0_{2} & \sigma_{1}
\end{pmatrix}.
\end{align*}
\item[(c)] As $z \to \infty$, we have $X(z) = \left(I_{4} + \bigO(z^{-1})\right) \begin{pmatrix}
z^{N}E(z) & 0_{2} \\ 0_{2} & z^{-N}E(z)
\end{pmatrix}$. \\
As $z \to z_{-}$ or as $z \to z_{+}$, $X(z) = \bigO(1)\begin{pmatrix} E(z) & 0_2 \\ 0_2 & E(z) \end{pmatrix}$.
\end{itemize}

\section{Proof of Theorem \ref{thm: correlation kernel final scalar expression}}\label{section: reducing the size}
First, we use the factorization of $A$ obtained in \eqref{eigenvalue eigenvector decomposition of A} together with the transformation $Y \mapsto X$ given by \eqref{def X}, to rewrite the formulas \eqref{kernel diag even}--\eqref{kernel diag odd} as follows
\begin{align}
& \big[ K(2x+\epsilon_{x},2y+j,2x+\epsilon_{x},2y+i) \big]_{i,j=0}^{1} =  \frac{1}{(2\pi i)^{2}}\int_{\gamma}\int_{\gamma} \begin{pmatrix}
\alpha^{2} & \alpha \\ w & 1
\end{pmatrix}^{\epsilon_{x}} \label{kernel X} \\ & \hspace{2cm} \times E(w)\frac{\Lambda(w)^{2N-x-\epsilon_{x}}}{w^{2N-y}}\mathcal{R}^{X}(w,z)\frac{\Lambda(z)^{x}}{z^{y+1}}E(z)^{-1} \begin{pmatrix}
1 & 1 \\ \alpha z & 1 
\end{pmatrix}^{\epsilon_{x}} dzdw, \nonumber
\end{align}
where $\mathcal{R}^{X}$ is given by
\begin{equation}\label{def of Rcal X}
\mathcal{R}^{X}(w,z) = E^{-1}(w)\mathcal{R}^{Y}(w,z)E(z) = \frac{1}{z-w} \begin{pmatrix}
0_{2} & I_{2}
\end{pmatrix}X^{-1}(w)X(z) \begin{pmatrix}
I_{2} \\ 0_{2}
\end{pmatrix}.
\end{equation}
The property \eqref{reproducing property Y} of $\mathcal{R}^{Y}$ implies the following reproducing property for $\mathcal{R}^{X}$:
\begin{equation}\label{reproducing property X}
\frac{1}{2\pi i} \int_{\gamma} P(w)E(w) \frac{\Lambda(w)^{2N}}{w^{2N}}\mathcal{R}^{X}(w,z)dw = P(z)E(z),
\end{equation}
for every $2 \times 2$ matrix valued polynomial $P$ of degree $\leq N-1$.

\vspace{0.2cm}Now, we introduce the Riemann surface $\mathcal{M}$ associated to the eigenvalues and eigenvectors of $A$. This Riemann surface is of genus $0$ and therefore there is a one-to-one map between it and the Riemann sphere (called the $\zeta$-plane).
\subsection{The Riemann surface $\mathcal{M}$ and the $\zeta$-plane}\label{subsection: 5.1}
The Riemann surface $\mathcal{M}$ is defined by
\begin{equation}
\mathcal{M} = \{(z,y)\in \mathbb{C}\times \mathbb{C}: y^2 = (z-z_{+})(z-z_{-}) \},
\end{equation}
and has genus zero. We represent it as a two-sheeted covering of the $z$-plane glued along $[z_{-},z_{+}]$. On the first sheet we require $y = z + \bigO(1)$ as $z \to \infty$, and on the second sheet we require $y = -z+ \bigO(1)$ as $z \to \infty$. To shorten the notations, a point $(z,y)$ lying on the Riemann surface will simply be denoted by $z$ when there is no confusion, that is, we will omit the $y$-coordinate. If we want to emphasize that the point $(z,y)$ is on the $j$-th sheet, $j \in \{1,2\}$, then we will use the notation $z^{(j)}$. With this convention, the two points at infinity are denoted by $\infty^{(1)}$ and $\infty^{(2)}$. The function $y$ satisfies 
\begin{align}
& y(\tfrac{1}{\alpha^{(2)}}) = - \frac{1+\alpha^{2}}{\alpha(1-\alpha)}, & & y(\alpha^{(2)}) = - \frac{1+\alpha^{2}}{1-\alpha}, \label{y computation} \\
& y(0^{(2)}) = -1, & & y(0^{(1)}) = 1. \nonumber
\end{align}
The functions $\lambda_{1}(z)$ and $\lambda_{2}(z)$ are defined on the $z$-plane (see \eqref{eq:lambda12}), and together they define a function $\lambda$ on $\mathcal{M}$ as follows:
\begin{equation}
\lambda\big((z,y)\big) = \left\{ \begin{array}{l l}
\lambda_{1}(z), & \mbox{if } (z,y) \mbox{ is on the first sheet}, \\
\lambda_{2}(z), & \mbox{if } (z,y) \mbox{ is on the second sheet}.
\end{array} \right.
\end{equation}
This is a meromorphic function on $\mathcal{M}$ with two simple poles at $\infty^{(1)}$ and $\infty^{(2)}$ and no other poles. Using \eqref{y computation}, we verify that $\lambda$ has two simple zeros at $\alpha^{(2)}$ and $\frac{1}{\alpha^{(2)}}$, and since $\mathcal{M}$ has genus $0$, $\lambda$ has no other zeros. From \eqref{jumps for E}, the matrix $E$ can also be extended to the full Riemann surface as follows
\begin{align*}
E\big((z,y)\big) & = \begin{pmatrix}
1 & 1 \\
\frac{1-\alpha}{2}(1-z+y) & \frac{1-\alpha}{2}(1-z-y)
\end{pmatrix} \\
& = \begin{cases}
E(z), & \mbox{if }(z,y) \mbox{ is on the first sheet}, \\
E(z)\sigma_{1}, & \mbox{if } (z,y) \mbox{ is on the second sheet}.
\end{cases}
\end{align*}
The function $\zeta = \zeta(z)$ defined by
\begin{equation}\label{def of zeta map}
\zeta = \,\frac{2z -(z_{+}+z_{-}) + 2 y}{z_+-z_{-}},
\end{equation}
is a conformal and bijective map from $\mathcal{M}$ to the Riemann sphere. The first sheet of $\mathcal{M}$ is mapped by \eqref{def of zeta map} to the subset $\{\zeta \in \mathbb{C}\cup\{\infty\}:|\zeta| > 1\}$ of the $\zeta$-plane, and
the second sheet is mapped to $\{\zeta \in \mathbb{C}\cup\{\infty\}: |\zeta| < 1\}$. The inverse function $z=z(\zeta)$ is given by
\begin{equation}\label{def zeta}
z = \frac{z_{+}+z_{-}}{2}+\frac{z_{+}-z_{-}}{4}\left( \zeta + \zeta^{-1} \right),
\end{equation}
where $z$ is on the first sheet if $|\zeta|>1$ and on the second sheet if $|\zeta|<1$. By definition, the above function $z(\zeta)$ vanishes at $\zeta(0^{(1)})$ and $\zeta(0^{(2)})$. Since it has simple poles at $\zeta = 0$ and $\zeta = \infty$, and since $z(\zeta) = \frac{z_{+}-z_{-}}{4}\zeta + \bigO(1)$ as $\zeta \to \infty$, \eqref{def zeta} can be rewritten as
\begin{equation}\label{z factorized in terms of zeta}
z = \frac{z_{+}-z_{-}}{4 \zeta}(\zeta-\zeta(0^{(1)}))(\zeta-\zeta(0^{(2)})).
\end{equation}
The functions $z(\zeta)$ and $\zeta(z)$ satisfy
\begin{align*}
& z(1) = z_{+}, & & z(-1) = z_{-}, & & z(\infty) = \infty^{(1)}, & & z(0) = \infty^{(2)}, \\
& \zeta(z_{+}) = 1, & & \zeta(z_{-}) = -1, & & \zeta(\infty^{(1)}) = \infty, & & \zeta(\infty^{(2)}) = 0.
\end{align*}
Also, we note that as $z \in \mathcal{M}$, $\im z = 0$, $z \notin (z_{-},z_{+})$, follows the straight line segments $[\infty^{(1)},z_{-}]$, $[z_{-},\infty^{(2)}]$, $[\infty^{(2)},z_{+}]$, $[z_{+},\infty^{(1)}]$, the function $\zeta(z)$ increases from $-\infty$ to $+\infty$. In particular, we have
\begin{align*}
\zeta(z_{-}) < \zeta(\infty^{(2)}) < \zeta(\tfrac{1}{\alpha^{(2)}}) < \zeta(\alpha^{(2)}) < \zeta(0^{(2)}) < \zeta(z_{+}) < \zeta(0^{(1)}).
\end{align*}
The following identities will be useful later, and can be verified by direct computations:
\begin{align} 
& y = \frac{z_{+}-z_{-}}{4}\left( \zeta-\zeta^{-1} \right),  \qquad \frac{dz}{y} = \frac{d\zeta}{\zeta}, \label{dz y in terms of zeta} \\
& \lambda =  \frac{z_{+}-z_{-}}{4 \zeta} (\zeta - \zeta(\tfrac{1}{\alpha^{(2)}}))(\zeta-\zeta(\alpha^{(2)})), \label{lambda in terms of zeta} \\
& \frac{d\lambda}{dz}  =  \frac{\zeta^{2}- \alpha^{2}}{\zeta^2-1}, \label{lambda prime in terms of zeta} \\
& \frac{dz}{d\zeta} = \frac{z_{+}-z_{-}}{4\zeta}\left( \zeta-\zeta^{-1} \right). \label{dz_sur_dzeta in terms of zeta}
\end{align}
We define $c$ by
\begin{align*}
c = \frac{z_{+}-z_{-}}{-(z_{+}+z_{-})+2\sqrt{z_{+}z_{-}}} = \sqrt{\frac{\alpha}{1-\alpha +\alpha^{2}}}  < 1.
\end{align*}
From straightforward calculations using \eqref{def of zeta map}, we have
\begin{align*}
& \zeta(\tfrac{1}{\alpha^{(2)}}) =  \alpha c, & & \zeta(\alpha^{(2)}) = \alpha c^{-1}, \\
& \zeta(0^{(2)}) = c, & & \zeta(0^{(1)}) = c^{-1},
\end{align*}
and
\begin{align}\label{lambda - alpha2 - z in terms of zeta}
\lambda(z) - \alpha^{2} - z = \frac{1+\alpha^{3}}{1-\alpha}\frac{\zeta -c}{\zeta}.
\end{align}

\subsection{The reproducing kernel $\mathcal{R}^{\mathcal{M}}$}\label{subsection: 5.2}
For $w^{(j)}$ on the $j$-th sheet of $\mathcal{M}$ and $z^{(k)}$ on the $k$-th sheet, we define
$\mathcal{R}^{\mathcal{M}}(w^{(j)},z^{(k)})$ by
\begin{equation}\label{def of reproducing kernel M}
\mathcal{R}^{\mathcal{M}}(w^{(j)},z^{(k)}) = y(w^{(j)}) e_{j}^{T} \mathcal{R}^{X}(w,z) e_{k},
\end{equation}
where $e_{1} = \begin{pmatrix}
1 \\ 0
\end{pmatrix}$ and $ e_{2} = \begin{pmatrix}
0 \\ 1
\end{pmatrix} $. Note that $\mathcal{R}^{\mathcal{M}} : \mathcal{M}_{*} \times \mathcal{M}_{*} \to \mathbb{C}$ is scalar valued, with $\mathcal{M}_{*} =  \mathcal{M}\setminus\{\infty^{(1)},\infty^{(2)}\}$. It is convenient for us to consider formal sums of points on $\mathcal{M}$, which are called \textit{divisors} in the literature. More precisely, a divisor $D$ can be written in the form
\begin{align*}
D = \sum_{j=1}^{k}n_{j}z_{j}, \qquad k \geq 1, \quad  n_{j} \in \mathbb{Z}, \quad z_{j} \in \mathcal{M},
\end{align*}
and we say that $D \geq 0$ if $n_{1},\ldots,n_{k} \geq 0$. The divisor of a non-zero meromorphic function $f$ on $\mathcal{M}$ is defined by
\begin{align*}
\mbox{div}(f) := n_{1}z_{1}+...+n_{k_{1}}z_{k_{1}} - n_{k_{1}+1}z_{k_{1}+1} - ... - n_{k_{2}} z_{k_{2}},
\end{align*}
where $z_{1},\ldots,z_{k_{1}}$ are the zeros of $f$ of multiplicities $n_{1},\ldots,n_{k_{1}}$, respectively, and $z_{k_{1}+1},\ldots,z_{k_{2}}$ are the poles of $f$ of order $n_{k_{1}+1},\ldots,n_{k_{2}}$, respectively. Given a divisor $D$, we define $L(-D)$ as the vector space of meromorphic functions on $\mathcal{M}$ given by
\begin{align*}
L(-D) = \{ f: \mbox{div}(f) \geq -D \mbox{ or } f \equiv 0 \}.
\end{align*}
The following divisors will play an important role:
\begin{align*}
& D_{N} = (N-1) \cdot \infty^{(1)} + N \cdot \infty^{(2)}, \\
& D_{N}^{*} = N \cdot \infty^{(1)} + (N-1) \cdot \infty^{(2)}.
\end{align*}
Thus $L_{N} := L(-D_{N})$ is the vector space of meromorphic functions on $\mathcal{M}$, with poles at $\infty^{(1)}$ and $\infty^{(2)}$ only, such that the pole at $\infty^{(1)}$ is of order at most $N-1$, and the pole at $\infty^{(2)}$ is of order at most $N$. Similarly we define $L_{N}^{*} = L(-D_{N}^{*})$. From the Riemann-Roch theorem, we have
\begin{align*}
\mbox{dim } L_{N} = \mbox{dim } L_{N}^{*} = 2N,
\end{align*}
since there is no holomorphic differential (other than the zero differential) on a Riemann surface of genus $0$.

\begin{lemma}\label{lemma: reproducing kernel Riemann surface}
We have
\begin{itemize}
\item[(a)] $z \mapsto \mathcal{R}^{\mathcal{M}}(w,z) \in L_{N}$ for every $w \in \mathcal{M}_{*}$, 
\item[(b)] $w \mapsto \mathcal{R}^{\mathcal{M}}(w,z) \in L_{N}^{*}$ for every $z \in \mathcal{M}_{*}$, 
\item[(c)] $\mathcal{R}^{\mathcal{M}}$ is a reproducing kernel for $L_{N}$ in the sense that
\begin{equation}\label{reproducing property M}
\frac{1}{2\pi i} \int_{\gamma_{\mathcal{M}}} f(w) \frac{\lambda^{2N}(w)}{w^{2N}} \mathcal{R}^{\mathcal{M}}(w,z) \frac{dw}{y(w)} = f(z)
\end{equation}
for every $f \in L_{N}$, where $\gamma_{\mathcal{M}}$ is a closed contour surrounding once $0^{(1)}$ and $0^{(2)}$ on the Riemann surface $\mathcal{M}$ in the positive direction (in particular $\gamma_{\mathcal{M}}$ visits both sheets).
\end{itemize}
\end{lemma}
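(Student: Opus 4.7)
My plan is to recognize the scalar reproducing kernel $\mathcal{R}^{\mathcal{M}}$ as a sheet-selection of the $2\times 2$ matrix kernel $\mathcal{R}^{X}$, and then to translate the matrix identity \eqref{reproducing property X} into the scalar identity \eqref{reproducing property M} by collapsing the sum over the two eigenvalue branches into a contour integral on $\mathcal{M}$ via the relation $dw/y(w)=d\zeta/\zeta$ from \eqref{dz y in terms of zeta}.

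For parts (a) and (b), the key observation is that the jump $X_{+}=X_{-}\diag(\sigma_{1},\sigma_{1})$ on $[z_{-},z_{+}]$ swaps columns $1\leftrightarrow 2$ of $X$ (and rows $3\leftrightarrow 4$ of $X^{-1}$). Consequently, the sheet-indexed quantities
\[F_{i}(Z):=[X_{:,k}(z)]_{i},\qquad G_{i}(W):=[(X^{-1})_{j+2,:}(w)]_{i}\]
(with $Z=z^{(k)}$ and $W=w^{(j)}$) extend to scalar meromorphic functions on $\mathcal{M}$ with singularities only at $\infty^{(1)},\infty^{(2)}$, and
\[\mathcal{R}^{\mathcal{M}}(W,Z)=\frac{y(W)}{z(Z)-z(W)}\sum_{i=1}^{4}G_{i}(W)\,F_{i}(Z).\]
The apparent singularities of the prefactor $1/(z(Z)-z(W))$ at $Z=W$ and at $Z=W^{*}$ (the conjugate point above $z(W)$) are removable, since in both cases
\[\sum_{i=1}^{4}G_{i}(W)F_{i}(Z)\big|_{z(Z)=z(W)}=\bigl[X^{-1}(z(W))\,X(z(W))\bigr]_{j+2,k}=\delta_{j+2,k}=0\]
for $j,k\in\{1,2\}$. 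The pole orders at $\infty^{(1,2)}$ are then computed from the RH normalization at infinity together with \eqref{asymp for E}: on sheet~$1$, all four components of $X_{:,1}(z)$ are of order at most $z^{N}$; on sheet~$2$ the second column of $E$ grows linearly, so the second component of $X_{:,2}(z)$ has order $z^{N+1}$. Symmetrically, the third row of $X^{-1}(w)$ has a third entry of order $w^{N}$ (coming from the $z^{N}I_{2}$ block of $(Y^{-1})_{3:4,3:4}$), while the fourth row is bounded by $w^{N-1}$. The factor $y(W)/(z(Z)-z(W))$ is bounded and nonzero at both infinities in $W$, while $1/(z(Z)-z(W))$ contributes a simple zero at both infinities in $Z$. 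Combining these estimates yields precisely $\mathcal{R}^{\mathcal{M}}(W,\cdot)\in L_{N}$ and $\mathcal{R}^{\mathcal{M}}(\cdot,Z)\in L_{N}^{*}$.

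For part (c), I would first exhibit a basis of $L_{N}$ in terms of $z$ and $y$. By Riemann--Roch on the genus-zero surface $\mathcal{M}$, $\dim L_{N}=2N$, and the $2N$ functions
\[\{z(W)^{i}\}_{i=0}^{N-1}\ \cup\ \{z(W)^{i}F_{E}(W)\}_{i=0}^{N-1},\qquad F_{E}(W):=\tfrac{1-\alpha}{2}(1-z(W)+y(W)),\]
all lie in $L_{N}$ and are linearly independent (since $y$ is not a rational function of $z$), hence form a basis. Given $f\in L_{N}$, write $f(W)=a(z(W))+b(z(W))\,F_{E}(W)$ with $\deg a,\deg b\leq N-1$, and let $P$ be the $2\times 2$ matrix polynomial of degree $\leq N-1$ whose first row is $(a,b)$ (the second row being arbitrary). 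Since $E_{2m}(w)=F_{E}(w^{(m)})$ on sheet $m$, one has $(PE)_{1m}(w)=f(w^{(m)})$. Inserting this into the $(1,k)$ entry of \eqref{reproducing property X} and using $[\mathcal{R}^{X}(w,z)]_{mk}=\mathcal{R}^{\mathcal{M}}(w^{(m)},z^{(k)})/y(w^{(m)})$ yields
\[\frac{1}{2\pi i}\sum_{m=1}^{2}\int_{\gamma}f(w^{(m)})\,\frac{\lambda_{m}(w)^{2N}}{w^{2N}}\,\mathcal{R}^{\mathcal{M}}(w^{(m)},z^{(k)})\,\frac{dw}{y(w^{(m)})}=f(z^{(k)}),\]
which becomes \eqref{reproducing property M} once the two sheet-integrals are collected into a single integral over the lifted contour $\gamma_{\mathcal{M}}$. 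Since $f\in L_{N}$ was arbitrary and $z^{(k)}$ ranges over $\mathcal{M}_{*}$, this proves (c).

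The main obstacle will be the careful bookkeeping of pole orders and sheet conventions at the two infinities; in particular, verifying that the linear growth of the second column of $E$ at infinity is exactly what promotes the pole of $\mathcal{R}^{\mathcal{M}}(W,\cdot)$ at $\infty^{(2)}$ from order $N-1$ to order $N$, so that the kernel is a genuine element of $L_{N}$ rather than of a strictly smaller subspace.
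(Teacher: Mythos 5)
Your overall strategy is the same as the paper's: for (a)--(b) you write $\mathcal{R}^{\mathcal{M}}$ as $\frac{1}{z-w}\sum_i(\text{row entries of }X^{-1})\cdot(\text{column entries of }X)$ glued across the cut via the $\sigma_1$-jump, remove the singularity at coinciding base points by $X^{-1}X=I$, and count pole orders at $\infty^{(1)},\infty^{(2)}$; for (c) you feed $P(w)=\begin{pmatrix} a(w) & b(w)\end{pmatrix}$-type rows into \eqref{reproducing property X}, use $e_1^TE=(1,1)$ and the fact that the second row of $E$ defines a single meromorphic function on $\mathcal{M}$ with one simple pole at $\infty^{(2)}$, combine the two sheet integrals into $\gamma_{\mathcal{M}}$, and close with the Riemann--Roch dimension count $\dim L_N=2N$. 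All of this matches the paper's proof step for step (the paper's $\lambda-\alpha^2-z$ is $(1+\alpha)$ times your $F_E$, and it does the dimension count at the end rather than exhibiting the basis up front).

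There is one concrete imprecision in part (b): your claim that the quantities $G_i(W)=[(X^{-1})_{j+2,:}(w)]_i$ ``extend to scalar meromorphic functions on $\mathcal{M}$ with singularities only at $\infty^{(1)},\infty^{(2)}$'' is false as stated. Since $X^{-1}=\operatorname{diag}(E^{-1},E^{-1})\,Y^{-1}$ and $E^{-1}(w)$ carries a factor $\big((w-z_+)(w-z_-)\big)^{-1/2}$, the rows of $X^{-1}$ blow up like $(w-z_\pm)^{-1/2}$ at the branch points, i.e.\ they acquire genuine poles (in the local coordinate) at the two branch points of $\mathcal{M}$. It is only the combinations $y(W)G_i(W)$ — exactly the $g_j$ of the paper — that are analytic there, and checking this is not automatic: one needs $\det Y\equiv 1$ (so $Y^{-1}$, hence the $\bigO(1)$ factor, stays bounded at $z_\pm$) together with the explicit $E^{-1}$, which is precisely the verification the paper carries out. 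Your final kernel formula does carry the prefactor $y(W)$, so the conclusion survives, but the pole/analyticity bookkeeping in the $\omega$-variable should be done for $y\,G_i$ rather than for $G_i$; with that repair (and the corresponding growth estimate $y(w)\begin{pmatrix}0_2 & I_2\end{pmatrix}X^{-1}(w)$ at infinity, which your stated orders are consistent with), your argument coincides with the paper's.
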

\begin{proof}
Using the definitions of $\mathcal{R}^{\mathcal{M}}$ and $\mathcal{R}^{X}$ given by \eqref{def of reproducing kernel M} and \eqref{def of Rcal X}, we can rewrite $\mathcal{R}^{\mathcal{M}}$ as
\begin{equation}\label{kernel Riemann f and g}
\mathcal{R}^{\mathcal{M}}(w,z) = \frac{\sum_{j=1}^{4}g_{j}(w)f_{j}(z)}{z-w}, \qquad w,z \in \mathcal{M}_{*},
\end{equation}
where
\begin{equation*}
f_{j}(z) = \left\{ \begin{array}{l l}
X_{j1}(z), & \mbox{if } z = z^{(1)}, \\
X_{j2}(z), & \mbox{if } z = z^{(2)},
\end{array} \right. 
\end{equation*}
and
\begin{align}\label{def of gj in reducing}
g_{j}(w) = y(w) \left\{ \begin{array}{l l}
(X^{-1})_{3j}(w), & \mbox{if } w = w^{(1)}, \\
(X^{-1})_{4j}(w), & \mbox{if } w = w^{(2)}.
\end{array} \right.
\end{align}
From properties (a) and (b) of the RH problem for $X$, the functions $f_{j}$ are analytic in $\mathcal{M}_{*}$. By combining the large $z$ asymptotics of $E(z)$ (given by \eqref{asymp for E}) with property (c) of the RH problem for $X$, we obtain
\begin{align*}
X(z) \begin{pmatrix}
I_{2} \\ 0_{2}
\end{pmatrix} = \begin{pmatrix}
\bigO(z^{N}) & \bigO(z^{N}) \\
\bigO(z^{N}) & \bigO(z^{N+1}) \\
\bigO(z^{N-1}) & \bigO(z^{N}) \\
\bigO(z^{N-1}) & \bigO(z^{N}) 
\end{pmatrix}, \quad \mbox{as } z \to \infty,
\end{align*}
from which we conclude that the functions $f_{j}$'s have poles of order at most $N$ at $\infty^{(1)}$ and at most $N+1$ at $\infty^{(2)}$. Therefore, we have shown that
\begin{align*}
f_{j} \in L(-(D_{N}+\infty^{(1)}+\infty^{(2)})), \qquad j = 1,2,3,4.
\end{align*}
The numerator in \eqref{kernel Riemann f and g} is, for each fixed $w \in \mathcal{M}_{*}$ a linear combination of the functions $f_{j}$, so belong to $L(-(D_{N}+\infty^{(1)}+\infty^{(2)}))$ as a function of $z$. By definitions of $\mathcal{R}^{\mathcal{M}}$ and $\mathcal{R}^{X}$, the numerator vanishes for $z = w^{(1)}$ and for $z = w^{(2)}$. Thus the division by $z-w$ in \eqref{kernel Riemann f and g} does not introduce any poles, but it reduces the order of the poles at $\infty^{(1)}$ and $\infty^{(2)}$ by one, and therefore $z \mapsto \mathcal{R}^{\mathcal{M}}(w,z) \in L_{N}$ as claimed in part (a). Now we turn to the proof of part (b). First, we note that
\begin{align*}
E(w)^{-1} = \frac{-\frac{1}{1-\alpha}}{\sqrt{(w-z_{+})(w-z_{-})}} \begin{pmatrix}
\frac{1-\alpha}{2}\big( 1-w-\sqrt{(w-z_{+})(w-z_{-})} \big) & -1 \\
- \frac{1-\alpha}{2}\big( 1-w+\sqrt{(w-z_{+})(w-z_{-})} \big) & 1
\end{pmatrix}.
\end{align*}
Therefore, since $\det Y \equiv 1$, by using condition (c) of the RH problem for $X$, we have
\begin{align*}
X^{-1}(w) = \begin{pmatrix}
E^{-1}(w) & 0_{2} \\
0_{2} & E^{-1}(w)
\end{pmatrix} \begin{pmatrix}
\bigO(1) & \bigO(1) \\
\bigO(1) & \bigO(1)
\end{pmatrix} \qquad \mbox{as } w \to z_{\star} \in \{z_{+},z_{-}\},
\end{align*}
and we conclude from \eqref{def of gj in reducing} that the functions $g_{j}$ are also analytic in $\mathcal{M}_{*}$. On the other hand, by using the asymptotics $Y(w)=I_{4}+\bigO(w^{-1})$ as $w \to \infty$ together with the fact that $\det Y \equiv 1$, we can obtain asymptotics for $X^{-1}(w)$ as $w \to \infty$ using \eqref{def X}. After some simple computations, we get
\begin{equation*}
y(w) \begin{pmatrix}
0_{2} & I_{2}
\end{pmatrix}X^{-1}(w) =  \begin{pmatrix}
\bigO(w^{N}) & \bigO(w^{N}) & \bigO(w^{N+1}) & \bigO(w^{N})  \\
\bigO(w^{N-1}) & \bigO(w^{N-1}) & \bigO(w^{N}) & \bigO(w^{N}) 
\end{pmatrix},
\end{equation*}
from which it follows that
\begin{align*}
g_{j} \in L(-(D_{N}^{*}+\infty^{(1)}+\infty^{(2)})), \qquad j = 1,2,3,4.
\end{align*}
We conclude the proof of part (b) as in part (a), by noting that $\mathcal{R}^{\mathcal{M}}(w,z)$ in \eqref{kernel Riemann f and g} has no pole at $z=w$ (on any sheet). Finally, let us take $P(w) = p(w) e_{1}^{T} = p(w) \begin{pmatrix}
1 & 0
\end{pmatrix}$ in \eqref{reproducing property X}, with $p$ a scalar polynomial satisfying $\mbox{deg }p \leq N-1$. Since $e_{1}^{T} E(w) = \begin{pmatrix}
1 & 1
\end{pmatrix} = e_{1}^{T} + e_{2}^{T}$, it gives
\begin{align*}
p(z)\begin{pmatrix}
1 & 1
\end{pmatrix} = \frac{1}{2\pi i} \int_{\gamma}p(w)(e_{1}^{T}+e_{2}^{T}) \frac{\Lambda(w)^{2N}}{w^{2N}}\mathcal{R}^{X}(w,z)dw.
\end{align*}
By multiplying the above from the right by $e_{k}$, we obtain
\begin{align*}
p(z) = & \; \frac{1}{2\pi i} \int_{\gamma}p(w) \frac{\lambda_{1}(w)^{2N}}{w^{2N}}e_{1}^{T}\mathcal{R}^{X}(w,z)e_{k}dw  + \frac{1}{2\pi i} \int_{\gamma}p(w) \frac{\lambda_{2}(w)^{2N}}{w^{2N}}e_{2}^{T}\mathcal{R}^{X}(w,z)e_{k}dw.
\end{align*}
We denote $\gamma^{(1)}$ and $\gamma^{(2)}$ for the projections of $\gamma$ on the first and second sheet of $\mathcal{M}$, respectively. Using \eqref{def of reproducing kernel M}, the above can be rewritten as
\begin{align*}
p(z) = & \; \frac{1}{2\pi i} \int_{\gamma^{(1)}}p(w) \frac{\lambda(w)^{2N}}{w^{2N}}\mathcal{R}^{\mathcal{M}}(w,z^{(k)})\frac{dw}{y(w)} + \frac{1}{2\pi i} \int_{\gamma^{(2)}}p(w) \frac{\lambda(w)^{2N}}{w^{2N}}\mathcal{R}^{\mathcal{M}}(w,z^{(k)})\frac{dw}{y(w)},
\end{align*}
for every $z \in \mathbb{C}$ and for any $k \in \{1,2\}$. The two integrals combine to one integral over a contour $\gamma_{\mathcal{M}}$ surrounding both $0^{(1)}$ and $0^{(2)}$ on $\mathcal{M}$ in the positive direction, and thus
\begin{align}\label{lol8}
p(z) = \frac{1}{2 \pi i} \int_{\gamma_{\mathcal{M}}}p(w) \frac{\lambda(w)^{2N}}{w^{2N}} \mathcal{R}^{\mathcal{M}}(w,z) \frac{dw}{y(w)}, \qquad \mbox{deg } p \leq N-1,
\end{align}
for every $z \in \mathcal{M}_{*}$. Let us now take $P(w) = p(w)e_{2}^{T} = p(w) \begin{pmatrix}
0 & 1
\end{pmatrix}$ in \eqref{reproducing property X}, and note that
\begin{equation*}
e_{2}^{T}E(w) = \frac{1}{1+\alpha} \begin{pmatrix}
\lambda_{1}(w)-\alpha^{2}-w & \lambda_{2}(w)-\alpha^{2}-w
\end{pmatrix}.
\end{equation*}
The two above entries together define the meromorphic function $w \in \mathcal{M} \mapsto \frac{1}{1+\alpha}(\lambda(w)-\alpha^{2} - w)$ on $\mathcal{M}$. By proceeding in a similar way as for \eqref{lol8}, we obtain this time
\begin{align*}
p(z)(\lambda(z)-(\alpha^{2}+z)) = \frac{1}{2 \pi i} \int_{\gamma_{\mathcal{M}}}p(w)(\lambda(w)-(\alpha^{2}+w)) \frac{\lambda(w)^{2N}}{w^{2N}} \mathcal{R}^{\mathcal{M}}(w,z) \frac{dw}{y(w)},
\end{align*}
for all scalar polynomials $p$ with $\mbox{deg }p \leq N-1$ and for all $z \in \mathcal{M}_{*}$. Therefore, for any function $f$ in the form
\begin{align*}
f(z) = p_{1}(z) + p_{2}(z)(\lambda(z)-\alpha^{2}-z)
\end{align*}
with $p_{1}$, $p_{2}$ two polynomials of degree $\leq N-1$, we have proved that
\begin{align*}
f(z) = \frac{1}{2 \pi i} \int_{\gamma_{\mathcal{M}}}f(w) \frac{\lambda(w)^{2N}}{w^{2N}} \mathcal{R}^{\mathcal{M}}(w,z) \frac{dw}{y(w)}.
\end{align*}
Let $L := \{f:f(z) = p_{1}(z) + p_{2}(z)(\lambda(z)-\alpha^{2}-z)\mbox{ with }p_{1},p_{2} \mbox{ two polynomials of degree } \leq N-1\}$. Since $z \mapsto \lambda-\alpha^{2}-z$ has a simple pole at $\infty^{(2)}$ (and no other poles), we conclude that $L \subseteq L_{N}$. Note also that $\mbox{dim } L = \mbox{dim } L_{N} = 2N$, and thus we have $L=L_{N}$. This finishes the proof.
\end{proof}

\subsection{The reproducing kernel $\mathcal{R}^{U}$}\label{subsection: 5.3}
To ease the notations, we define $z = z(\zeta)$ and $w = w(\omega)$ by
\begin{align}
& z = \frac{z_{+}+z_{-}}{2}+\frac{z_{+}-z_{-}}{4}\left( \zeta + \zeta^{-1} \right), \qquad \; \; \zeta \in \mathbb{C}\cup  \{\infty\}, \; z \in \mathcal{M}, \label{z in terms of zeta lol} \\
& w = \frac{z_{+}+z_{-}}{2}+\frac{z_{+}-z_{-}}{4}\left( \omega + \omega^{-1} \right), \qquad \omega \in \mathbb{C}\cup  \{\infty\}, \; w \in \mathcal{M}, \label{w in terms of omega lol}
\end{align}
with the same convention as in \eqref{def zeta}, that is, $z$ (resp. $w$) is on the first sheet if $|\zeta|>1$ (resp. $|\omega|>1$), and on the second sheet if $|\zeta|<1$ (resp. $|\omega|<1$). We define $\mathcal{R}^{U}$ in terms of $\mathcal{R}^{\mathcal{M}}$ as follows
\begin{equation}\label{def of Reproducing kernel C}
\mathcal{R}^{U}(\omega,\zeta) = \omega^{N-1}\zeta^{N}\mathcal{R}^{\mathcal{M}}(w(\omega),z(\zeta)).
\end{equation}
\begin{proposition}\label{prop: reprod Rcal U}
Let $W$ and $c$ be defined as in \eqref{def of W in intro}. $\mathcal{R}^{U}$ is a bivariate polynomial of degree $\leq 2N-1$ in both $\omega$ and $\zeta$. It satisfies
\begin{equation}\label{reproducing property C}
\frac{1}{2\pi i}\int_{\gamma_{\mathbb{C}}} p(\omega) W(\omega) \mathcal{R}^{U}(\omega,\zeta)d\omega = p(\zeta)
\end{equation}
for every scalar polynomial $p$ of degree $\leq 2N-1$, where $\gamma_{\mathbb{C}}$ is a closed curve in the complex plane going around $c$ and $c^{-1}$ once in the positive direction, but not going around $0$.
\end{proposition}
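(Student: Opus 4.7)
The plan is to transport the reproducing property on $\mathcal{M}$ from Lemma \ref{lemma: reproducing kernel Riemann surface} to the $\zeta$-plane via the genus-zero biholomorphism $\zeta \leftrightarrow z(\zeta)$ introduced in Section \ref{subsection: 5.1}. Under this identification, $\infty^{(1)} \mapsto \infty$, $\infty^{(2)} \mapsto 0$, $0^{(1)} \mapsto c^{-1}$ and $0^{(2)} \mapsto c$, so divisors supported at $\infty^{(1)},\infty^{(2)}$ on $\mathcal{M}$ become divisors supported at $\infty,0$ in $\mathbb{C}\cup\{\infty\}$.

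For the polynomial structure of $\mathcal{R}^{U}$, Lemma \ref{lemma: reproducing kernel Riemann surface}(a) asserts that $z \mapsto \mathcal{R}^{\mathcal{M}}(w, z) \in L_N$. Under the biholomorphism, $L_N$ becomes the space of rational functions on $\mathbb{C}\cup\{\infty\}$ with poles only at $\zeta = \infty$ of order at most $N-1$ and at $\zeta = 0$ of order at most $N$, i.e., Laurent polynomials $\sum_{k=-N}^{N-1} a_k \zeta^k$. Multiplying by $\zeta^N$ clears the pole at $0$ and produces an ordinary polynomial of degree $\leq 2N-1$. The analogous argument in the $\omega$-variable uses Lemma \ref{lemma: reproducing kernel Riemann surface}(b) and the space $L_N^*$ (with the pole orders at $\infty^{(1)}, \infty^{(2)}$ swapped), combined with the factor $\omega^{N-1}$ in the definition \eqref{def of Reproducing kernel C}.

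For the reproducing property, I would use the bijection $p(\zeta) \leftrightarrow f(z) := p(\zeta(z))/\zeta(z)^N$ between scalar polynomials of degree $\leq 2N-1$ and elements of $L_N$. Plugging such an $f$ into \eqref{reproducing property M} and performing the change of variable $w = w(\omega)$, I would invoke the differential identity $dw/y(w) = d\omega/\omega$ from \eqref{dz y in terms of zeta} together with the factorizations \eqref{z factorized in terms of zeta}, \eqref{lambda in terms of zeta}, which give
\begin{align*}
\frac{\lambda(w(\omega))}{w(\omega)} = \frac{(\omega - \alpha c)(\omega - \alpha c^{-1})}{(\omega - c)(\omega - c^{-1})}.
\end{align*}
Gathering the $\omega^{-N}$ from $f$, the $\omega^{-1}$ from $dw/y$, the $(\omega^{N-1}\zeta^{N})^{-1}$ from $\mathcal{R}^{\mathcal{M}} = \mathcal{R}^U/(\omega^{N-1}\zeta^N)$, and the displayed ratio raised to the power $2N$, the integrand should collapse exactly to $p(\omega)W(\omega)\mathcal{R}^U(\omega,\zeta)\,d\omega/\zeta^N$, while the right-hand side $f(z(\zeta))$ equals $p(\zeta)/\zeta^N$. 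Multiplying both sides by $\zeta^N$ would yield \eqref{reproducing property C}. The contour $\gamma_{\mathbb{C}}$ is obtained as the $\zeta$-image of $\gamma_{\mathcal{M}}$: since $\gamma_{\mathcal{M}}$ encircles only $0^{(1)}$ and $0^{(2)}$, its image encloses $c$ and $c^{-1}$ with positive orientation while avoiding $0$ (i.e.\ $\infty^{(2)}$) and $\infty$ (i.e.\ $\infty^{(1)}$).

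The main obstacle is not conceptual but rather the careful bookkeeping: ensuring that all the powers of $\omega$ and $\zeta$ conspire to produce exactly the weight $W(\omega)$ of \eqref{def of W in intro} with the correct exponent $2N$, and verifying that the orientations are preserved (the derivative $\zeta'(0^{(2)})$ is negative real, which amounts to a rotation by $\pi$ rather than a reflection, so no orientation is reversed).
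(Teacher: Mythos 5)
Your proposal is correct and follows essentially the same route as the paper: the polynomial structure of $\mathcal{R}^{U}$ comes from Lemma \ref{lemma: reproducing kernel Riemann surface}(a)--(b) read through the identification $\infty^{(1)}\mapsto\infty$, $\infty^{(2)}\mapsto 0$, and the reproducing property is obtained by substituting $f\in L_{N}$ (equivalently $p(\zeta)=\zeta^{N}f(z(\zeta))$) into \eqref{reproducing property M}, changing variables with \eqref{dz y in terms of zeta}, \eqref{z factorized in terms of zeta}, \eqref{lambda in terms of zeta}, and checking that the image of $\gamma_{\mathcal{M}}$ is a positively oriented curve around $c$, $c^{-1}$ not enclosing $0$. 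The power bookkeeping you describe (total $\omega^{-2N}$ producing exactly $W(\omega)$) and the orientation check via the sign of the derivative at $0^{(2)}$ match the paper's computation.
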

\begin{proof}
From part (a) of Lemma \ref{lemma: reproducing kernel Riemann surface}, for each $w \in \mathcal{M}_{*}$, the function $z \mapsto \mathcal{R}^{\mathcal{M}}(w,z)$ is meromorphic on $\mathcal{M}$, with a pole of order at most $N-1$ at $\infty^{(1)}$ and a pole of order at most $N$ at $\infty^{(2)}$. Since $z(0) = \infty^{(2)}$ and $z(\infty) = \infty^{(1)}$, we conclude that for each $\omega \in \mathbb{C}$, the function $\zeta \mapsto  \mathcal{R}^{\mathcal{M}}(w(\omega),z(\zeta))$ is meromorphic on $\mathbb{C}\cup \{\infty\}$, with a pole of order at most $N-1$ at $\infty$ and another pole of order at most $N$ at $0$. Therefore, for each $\omega \in \mathbb{C}$, the function $\zeta \mapsto \mathcal{R}^{U}(\omega,\zeta)$ is a polynomial of degree at most $2N-1$. From part (b) of Lemma \ref{lemma: reproducing kernel Riemann surface}, we conclude similarly that for each $\zeta \in \mathbb{C}$, the function $\omega \mapsto \mathcal{R}^{U}(\omega,\zeta)$ is a polynomial of degree at most $2N-1$. So we have proved that $\mathcal{R}^{U}$ is a bivariate polynomial of degree $\leq 2N-1$ in both $\omega$ and $\zeta$. 

Now, we turn to the proof of \eqref{reproducing property C}. It can be directly verified from \eqref{w in terms of omega lol} (see also \eqref{z factorized in terms of zeta}) that $\omega(0^{(1)}) = c^{-1}$, $\omega(0^{(2)})=c$, $(\partial_{\omega}w)(c^{-1})>0$ and $(\partial_{\omega}w)(c)<0$. In particular, the map $w \mapsto \omega(w)$ is conformal in small neighborhoods of $0^{(1)}$ and $0^{(2)}$. Since conformal maps preserve orientation, the curve $\gamma_{\mathcal{M}}$ which surrounds both $0^{(1)}$ and $0^{(2)}$ once in the positive direction, is mapped by $w \mapsto \omega(w)$ onto a curve $\gamma_{\mathbb{C}}$ on the complex plane, which surrounds $c$ and $c^{-1}$ once in the positive direction. Furthermore, since $\omega(\infty^{(2)}) = 0$, the curve $\gamma_{\mathbb{C}}$ does not surround $0$. By changing variables $(w,z) \mapsto (\omega,\zeta)$ in \eqref{reproducing property M}, and by using \eqref{z factorized in terms of zeta}, \eqref{dz y in terms of zeta} and \eqref{lambda in terms of zeta}, we obtain 
\begin{equation*}
\begin{array}{r c l}
f(z(\zeta)) & = & \ds \frac{1}{2\pi i} \int_{\gamma_{\mathbb{C}}}f(w(\omega)) \frac{\lambda^{2N}(w(\omega))}{w(\omega)^{2N}}\mathcal{R}^{\mathcal{M}}(w(\omega),z(\zeta)) \frac{dw(\omega)}{y(w(\omega))} \\
& = & \ds \frac{1}{2\pi i} \int_{\gamma_{\mathbb{C}}}f(w(\omega)) \bigg( \frac{(\omega-\alpha c)(\omega-\alpha c^{-1})}{(\omega-c)(\omega-c^{-1})} \bigg)^{2N} \mathcal{R}^{\mathcal{M}}(w(\omega),z(\zeta))\frac{d\omega}{\omega},
\end{array}
\end{equation*}
for every $f \in L_{N}$. Since $f \in L_{N}$, the function $\zeta \mapsto f(z(\zeta))$ is meromorphic on the Riemann sphere, with a pole of degree at most $N$ at $\zeta = 0$ and a pole of degree at most $N-1$ at $\zeta = \infty$. In other words, $\zeta \mapsto \zeta^{N}f(z(\zeta)) =: p(\zeta)$ is a polynomial of degree at most $2N-1$. By multiplying the above equality by $\zeta^{N}$, we thus have
\begin{equation*}
p(\zeta) = \frac{1}{2\pi i}\int_{\gamma_{\mathbb{C}}} \frac{p(\omega)}{\omega^{N}} \bigg( \frac{(\omega-\alpha c)(\omega-\alpha c^{-1})}{(\omega-c)(\omega-c^{-1})} \bigg)^{2N} \mathcal{R}^{\mathcal{M}}(w(\omega),z(\zeta))\zeta^{N}\frac{d\omega}{\omega}.
\end{equation*}
We obtain the claim after substituting \eqref{def of Reproducing kernel C} in the above expression.
\end{proof}
Now, we prove formula \eqref{reproducing kernel in terms of U intro}, which expresses $\mathcal{R}^{U}$ in terms of the solution $U$ to the $2 \times 2$ RH problem presented in Section \ref{subsection: new formula for the kernel}.
\begin{proposition}
The reproducing kernel $\mathcal{R}^{U}$ defined by \eqref{def of Reproducing kernel C} can be rewritten in terms of $U$ as follows
\begin{equation}\label{reproducing kernel in terms of U}
\mathcal{R}^{U}(\omega,\zeta) = \frac{1}{\zeta-\omega} \begin{pmatrix}
0 & 1
\end{pmatrix} U^{-1}(\omega)U(\zeta) \begin{pmatrix}
1 \\ 0
\end{pmatrix}.
\end{equation}
\end{proposition}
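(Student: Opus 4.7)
The plan is to characterize both sides of \eqref{reproducing kernel in terms of U} as the unique reproducing kernel in the finite-dimensional space $\mathcal P_{2N-1}$ of scalar polynomials of degree $\leq 2N-1$, equipped with the bilinear form
\[
\langle f,g\rangle \;=\; \frac{1}{2\pi i}\int_{\gamma_{\mathbb{C}}} f(\omega)\,g(\omega)\,W(\omega)\,d\omega.
\]
By Proposition \ref{prop: reprod Rcal U} the left-hand side $\mathcal R^U(\omega,\zeta)$ already has the two required properties: it is a bivariate polynomial of bidegree $\leq(2N-1,2N-1)$, and it reproduces $\mathcal P_{2N-1}$. So the task is to verify the same two properties for the right-hand side and then appeal to a uniqueness argument.

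First I would establish the polynomial structure of the right-hand side
\[
K(\omega,\zeta) \;:=\; \frac{1}{\zeta-\omega}\begin{pmatrix} 0 & 1 \end{pmatrix} U^{-1}(\omega)U(\zeta)\begin{pmatrix} 1 \\ 0 \end{pmatrix}.
\]
The jump matrix in the RH problem for $U$ has unit determinant, so by standard arguments $\det U \equiv 1$, which gives $U^{-1}=\begin{pmatrix} U_{22} & -U_{12}\\ -U_{21} & U_{11}\end{pmatrix}$. In particular, the bottom row of $U^{-1}$ equals $(-q_{2N-1},\,p_{2N})$, and the numerator of $K$ becomes the Christoffel--Darboux-type expression
\[
p_{2N}(\omega)\,q_{2N-1}(\zeta) \;-\; q_{2N-1}(\omega)\,p_{2N}(\zeta),
\]
which is a polynomial in $(\omega,\zeta)$ vanishing on the diagonal $\omega=\zeta$. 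Hence division by $\zeta-\omega$ produces a polynomial; the bidegree bound $\leq(2N-1,2N-1)$ then follows from \eqref{cond for pn}--\eqref{cond for qn} after observing that the top-degree monomials cancel along the diagonal.

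Next I would verify that $K$ satisfies the reproducing property of Proposition \ref{prop: reprod Rcal U}: for every $p\in\mathcal P_{2N-1}$,
\[
\frac{1}{2\pi i}\int_{\gamma_{\mathbb{C}}} p(\omega)\,W(\omega)\,K(\omega,\zeta)\,d\omega \;=\; p(\zeta).
\]
This is the classical scalar Fokas--Its--Kitaev identity \cite{FIK}. It can be proved by writing $\tfrac{p(\omega)-p(\zeta)}{\omega-\zeta}\in\mathcal P_{2N-2}$, separating $K$ into the two terms coming from $p_{2N}(\omega)$ and $q_{2N-1}(\omega)$, and invoking the orthogonality conditions \eqref{cond for pn}--\eqref{cond for qn} together with the normalization in \eqref{cond for qn}; all integrals with an analytic factor times $p_{2N}W$ or $q_{2N-1}W$ collapse to $0$ except the one producing $p(\zeta)$. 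Alternatively, one may mimic the RH-theoretic derivation used for $\mathcal R^Y$ in \eqref{reproducing property Y}, using the RH problem for $U$ directly.

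Finally, uniqueness closes the argument. Setting $D:=\mathcal R^U - K$, for each fixed $\zeta$ the polynomial $D(\cdot,\zeta)\in\mathcal P_{2N-1}$ is orthogonal to all of $\mathcal P_{2N-1}$ under $\langle\cdot,\cdot\rangle$. The existence of $p_{2N}$ and $q_{2N-1}$ guaranteed by \cite[Lemma~4.8]{DK} is equivalent to non-singularity of the relevant Hankel moment matrix of $W$, so $\langle\cdot,\cdot\rangle$ is non-degenerate on $\mathcal P_{2N-1}$; hence $D\equiv 0$, proving \eqref{reproducing kernel in terms of U}. The main obstacle I anticipate is the second step: although the reproducing property of $K$ is classical for scalar OPs, our setting uses a complex contour encircling the poles of a meromorphic weight $W$, and the Cauchy-type manipulations must be carried out with some care to justify the collapse of the integral.
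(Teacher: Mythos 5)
Your argument is correct and is essentially the paper's own proof: the paper likewise argues by uniqueness of the bivariate polynomial reproducing kernel of degree $\leq 2N-1$ in each variable and then checks that the right-hand side of \eqref{reproducing kernel in terms of U} satisfies \eqref{reproducing property C}, deferring the Christoffel--Darboux/orthogonality computation you spelled out (which becomes exact via $\det U\equiv 1$) to \cite[Lemma 4.6(c) and Proposition 4.9]{DK}. One cosmetic point: \cite[Lemma 4.8]{DK} concerns the matrix-valued polynomials $P_{N},Q_{N-1}$ rather than the scalar $p_{2N},q_{2N-1}$, but the non-degeneracy you need is anyway immediate, since the existence of a reproducing kernel for $\mathcal{P}_{2N-1}$ (Proposition \ref{prop: reprod Rcal U}) forces any element of $\mathcal{P}_{2N-1}$ orthogonal to all of $\mathcal{P}_{2N-1}$ to vanish, so your uniqueness step stands.
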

\begin{proof}
By \cite[Lemma 4.6 (c)]{DK}, there is a unique bivariate polynomial $\mathcal{R}^{U}$ of degree $\leq 2N-1$ in both $\omega$ and $\zeta$ which satisfies \eqref{reproducing property C}. Therefore, it suffices to first replace $\mathcal{R}^{U}$ in the left-hand-side of \eqref{reproducing property C} by 
\begin{align*}
\frac{1}{\zeta-\omega} \begin{pmatrix}
0 & 1
\end{pmatrix} U^{-1}(\omega)U(\zeta) \begin{pmatrix}
1 \\ 0
\end{pmatrix},
\end{align*}
and then to verify that \eqref{reproducing property C} still holds with this replacement. The rest of the proof goes exactly as in the proof of \cite[Proposition 4.9]{DK}, so we omit it.
\end{proof}
\subsection{Proof of formula \eqref{new formula for the kernel in the thm}}
Now, using the results of Sections \ref{subsection: 5.1}, \ref{subsection: 5.2} and \ref{subsection: 5.3}, we give a proof for formula \eqref{new formula for the kernel in the thm}. From \eqref{kernel diag even}--\eqref{kernel diag odd}, for $x \in \{1,\ldots,2N-1\}$, $y \in \mathbb{Z}$ and $\epsilon_{x} \in \{0,1\}$, we have
\begin{align}
& \big[ K(2x+\epsilon_{x},2y+j,2x+\epsilon_{x},2y+i) \big]_{i,j=0}^{1} \nonumber \\ 
& = \frac{1}{(2\pi i)^{2}}\int_{\gamma}\int_{\gamma}\begin{pmatrix}
\alpha^{2} & \alpha \\
w & 1
\end{pmatrix}^{\epsilon_{x}}\frac{A(w)^{2N-x-\epsilon_{x}}}{w^{2N-y}}\mathcal{R}^{Y}(w,z)\frac{A(z)^{x}}{z^{y+1}}\begin{pmatrix}
1 & 1 \\
\alpha z & 1
\end{pmatrix}^{\epsilon_{x}} dzdw, \label{K in terms of Rcal Y unified}
\end{align}
where $\gamma$ is a closed contour surrounding $0$ once in the positive direction. The proof consists of using the successive transformations $\mathcal{R}^{Y} \mapsto \mathcal{R}^{X} \mapsto \mathcal{R}^{\mathcal{M}} \mapsto \mathcal{R}^{U}$. We first use the eigendecomposition \ref{eigenvalue eigenvector decomposition of A} of $A$ and the $\mathcal{R}^{Y} \mapsto \mathcal{R}^{X}$ transformation given in \eqref{def of Rcal X} to rewrite \eqref{K in terms of Rcal Y unified} as
\begin{align}
& \big[ K(2x+\epsilon_{x},2y+j,2x+\epsilon_{x},2y+i) \big]_{i,j=0}^{1} \nonumber \\ 
& = \frac{1}{(2\pi i)^{2}}\int_{\gamma}\int_{\gamma} \begin{pmatrix}
\alpha^{2} & \alpha \\
w & 1
\end{pmatrix}^{\epsilon_{x}} E(w)\frac{\Lambda(w)^{2N-x-\epsilon_{x}}}{w^{2N-y}} \mathcal{R}^{X}(w,z) \frac{\Lambda(z)^{x}}{z^{y+1}}E(z)^{-1}\begin{pmatrix}
1 & 1 \\
\alpha z & 1
\end{pmatrix}^{\epsilon_{x}} dzdw. \label{K in terms of Rcal X unified}
\end{align}
Using \eqref{def of E}, we can write $E(w)$ and $E(z)^{-1}$ as
\begin{align}
E(w) & \;  = \begin{pmatrix}
1 & 1 \\
\frac{\lambda(w^{(1)})-\alpha^{2}-w}{1+\alpha} & \frac{\lambda(w^{(2)})-\alpha^{2}-w}{1+\alpha}
\end{pmatrix}, \qquad w \in \mathbb{C}, \label{lol3} \\
E(z)^{-1} & \; = \frac{1}{1-\alpha}\begin{pmatrix}
\frac{(1+\alpha^{3})z}{y(z^{(1)})(\lambda(z^{(1)})-\alpha^{2}-z)} & \frac{1}{y(z^{(1)})} \\
\frac{(1+\alpha^{3})z}{y(z^{(2)})(\lambda(z^{(2)})-\alpha^{2}-z)} & \frac{1}{y(z^{(2)})}
\end{pmatrix}, \qquad z \in \mathbb{C}, \label{lol4}
\end{align}
where we have also used the relation
\begin{equation*}
(\lambda_{1}-\alpha^{2}-z)(\lambda_{2}-\alpha^{2}-z) = -(1+\alpha)(1+\alpha^{3})z
\end{equation*}
to obtain \eqref{lol4}. The identities \eqref{lol3} and \eqref{lol4} allow to rewrite the integrand of \eqref{K in terms of Rcal X unified} by noting that
\begin{align*}
&E(w)\frac{\Lambda(w)^{2N-x-\epsilon_{x}}}{w^{2N-y}} \mathcal{R}^{X}(w,z) \frac{\Lambda(z)^{x}}{z^{y+1}}E(z)^{-1} = \sum_{j,k=1}^{2}\begin{pmatrix}
1 \\
\frac{\lambda(w^{(j)})-\alpha^{2}-w}{1+\alpha}
\end{pmatrix}\lambda(w^{(j)})^{2N-x-\epsilon_{x}} \\
& \times e_{j}^{T} \frac{y(w^{(j)})\mathcal{R}^{X}(w,z)}{w^{2N-y}z^{y+1}}e_{k} \lambda(z^{(k)})^{x} \begin{pmatrix}
\frac{(1+\alpha^{3})z}{(1-\alpha)(\lambda(z^{(k)})-\alpha^{2}-z)} & \frac{1}{1-\alpha}
\end{pmatrix}\frac{1}{y(w^{(j)})y(z^{(k)})}.
\end{align*}
Therefore, using also the $\mathcal{R}^{X} \mapsto \mathcal{R}^{\mathcal{M}}$ transformation given by \eqref{def of reproducing kernel M}, we obtain
\begin{align*}
& \big[ K(2x+\epsilon_{x},2y+j,2x+\epsilon_{x},2y+i) \big]_{i,j=0}^{1} = \frac{1}{(2\pi i)^{2}}\int_{\gamma_{\mathcal{M}}}\int_{\gamma_{\mathcal{M}}} \begin{pmatrix}
\alpha^{2} & \alpha \\
w & 1
\end{pmatrix}^{\epsilon_{x}} 
\begin{pmatrix}
1 \\ \frac{\lambda(w)-\alpha^{2}-w}{1+\alpha}
\end{pmatrix} \\ 
&   \lambda(w)^{2N-x-\epsilon_{x}} \frac{\mathcal{R}^{\mathcal{M}}(w,z)}{w^{2N-y}z^{y+1}} \lambda(z)^{x} \begin{pmatrix}
\frac{(1+\alpha^{3})z}{(1-\alpha)(\lambda(z)-\alpha^{2}-z)} & \frac{1}{1-\alpha}
\end{pmatrix}\begin{pmatrix}
1 & 1 \\
\alpha z & 1
\end{pmatrix}^{\epsilon_{x}}\frac{dzdw}{y(w)y(z)},
\end{align*}
where $\gamma_{\mathcal{M}}$ is a closed contour surrounding once $0^{(1)}$ and $0^{(2)}$ on $\mathcal{M}$ in the positive direction. By performing the change of variables $w = w(\omega)$ and $z = z(\zeta)$ as in \eqref{z in terms of zeta lol}--\eqref{w in terms of omega lol}, using the factorization \eqref{z factorized in terms of zeta} and \eqref{lambda in terms of zeta}, the identity \eqref{dz y in terms of zeta}, and also the $\mathcal{R}^{\mathcal{M}} \mapsto \mathcal{R}^{U}$ transformation given by \eqref{def of Reproducing kernel C}, we get
\begin{align}
& \big[ K(2x+\epsilon_{x},2y+j,2x+\epsilon_{x},2y+i) \big]_{i,j=0}^{1} = \frac{1}{(2\pi i)^{2}}\int_{\gamma_{\mathbb{C}}}\int_{\gamma_{\mathbb{C}}} \begin{pmatrix}
\alpha^{2} & \alpha \\
w & 1
\end{pmatrix}^{\epsilon_{x}} 
\begin{pmatrix}
1 \\ \frac{\lambda(w)-\alpha^{2}-w}{1+\alpha}
\end{pmatrix} \label{lol10} \\ 
& \bigg( \frac{(\omega - \alpha c)(\omega - \alpha c^{-1})}{\omega(\omega-c)(\omega - c^{-1})} \bigg)^{2N} \mathcal{R}^{U}(\omega,\zeta) \frac{\omega^{N}w^{y}\lambda(z)^{x}}{\zeta^{N+1}z^{y+1}\lambda(w)^{x+\epsilon_{x}}} \begin{pmatrix}
\frac{(1+\alpha^{3})z}{(1-\alpha)(\lambda(z)-\alpha^{2}-z)} & \frac{1}{1-\alpha}
\end{pmatrix}\begin{pmatrix}
1 & 1 \\
\alpha z & 1
\end{pmatrix}^{\epsilon_{x}}d\zeta d\omega, \nonumber
\end{align}
where $\gamma_{\mathbb{C}}$ is a closed curve surrounding $c$ and $c^{-1}$ once in the positive direction, such that it does not surround $0$. On the other hand, using again \eqref{z factorized in terms of zeta} and \eqref{lambda in terms of zeta}, we have
\begin{align*}
\frac{w^{y}\lambda(z)^{x}}{z^{y}\lambda(w)^{x}} = \frac{(\omega-c)^{y}(\omega-c^{-1})^{y}}{(\zeta-c)^{y}(\zeta-c^{-1})^{y}} \frac{(\zeta - \alpha c)^{x} (\zeta - \alpha c^{-1})^{x}}{(\omega - \alpha c)^{x} (\omega - \alpha c^{-1})^{x}} \frac{\omega^{x-y}}{\zeta^{x-y}}.
\end{align*}
By using the definition \eqref{def of W in intro} of $W$, we can rewrite \eqref{lol10} as
\begin{align}
& \big[ K(2x+\epsilon_{x},2y+j,2x+\epsilon_{x},2y+i) \big]_{i,j=0}^{1} = \frac{1}{(2\pi i)^{2}}\int_{\gamma_{\mathbb{C}}}\int_{\gamma_{\mathbb{C}}} H_{K}(\omega,\zeta;\epsilon_{x}) \label{new formula for the kernel in the thm 2} \\ 
&  W(\omega) \mathcal{R}^{U}(\omega,\zeta) \frac{\omega^{N+x-y}}{\zeta^{N+x-y}} \frac{(\omega-c)^{y}(\omega-c^{-1})^{y}}{(\zeta-c)^{y}(\zeta-c^{-1})^{y}} \frac{(\zeta - \alpha c)^{x} (\zeta - \alpha c^{-1})^{x}}{(\omega - \alpha c)^{x} (\omega - \alpha c^{-1})^{x}}d\zeta d\omega, \nonumber
\end{align}
where $H_{K}(\omega,\zeta;\epsilon_{x})$ is defined for $\omega, \zeta \in \mathbb{C}$ and $\epsilon_{x} \in \{0,1\}$ by
\begin{align}\label{lol9}
H_{K}(\omega,\zeta;\epsilon_{x}) = & \; \frac{1}{\zeta z \lambda(w)^{\epsilon_{x}}} \begin{pmatrix}
\alpha^{2} & \alpha \\
w & 1
\end{pmatrix}^{\epsilon_{x}} \begin{pmatrix}
1 \\ \frac{1-\alpha + \alpha^{2}}{1-\alpha} \frac{\omega - c}{\omega}
\end{pmatrix}  \begin{pmatrix}
\frac{\alpha}{c(1-\alpha)^{2}}(\zeta - c^{-1}) & \frac{1}{1-\alpha}
\end{pmatrix}\begin{pmatrix}
1 & 1 \\
\alpha z & 1
\end{pmatrix}^{\epsilon_{x}}. 
\end{align}
Using the identities
\begin{align*}
& z = \frac{\alpha (\zeta - c)(\zeta - c^{-1})}{c(1-\alpha)^{2}\zeta}, & & w = \frac{\alpha (\omega - c)(\omega - c^{-1})}{c(1-\alpha)^{2}\omega}, & & \lambda(w) = \frac{\alpha (\omega - \alpha c)(\omega - \alpha c^{-1})}{c(1-\alpha)^{2}\omega},
\end{align*}
it is a simple computation to verify that \eqref{lol9} can be rewritten as \eqref{HK epsx 0}--\eqref{HK epsx 1}.  This finishes the proof.

\section{Lozenge probabilities}\label{section: lozenge probabilities}
This section is about the lozenge probabilities $P_{j}(x,y)$, $j=1,2,3$, defined in \eqref{def of P1 P2 P3}. In Subsection \ref{subsection: double contour formula for Pj}, we use Theorem \ref{thm: correlation kernel final scalar expression} to find double contour formulas for $P_{j}(x,y)$, $j=1,2,3$, in terms of $\mathcal{R}^{U}$. In the rest of this section, we follow \cite[Section 7]{CDKL} and use the symmetries in our model to restrict our attention to the lower left part $\eta \leq \frac{\xi}{2} \leq 0$ of the liquid region for the proof of Theorem \ref{thm:main}.
\subsection{Double contour formulas}\label{subsection: double contour formula for Pj}
Formula \eqref{new formula for the kernel in the thm} for the kernel can be rewritten as
\begin{multline}\label{main formula for the kernel written compactly}
\big[ K(2x+\epsilon_{x},2y+j,2x+\epsilon_{x},2y+i) \big]_{i,j=0}^{1} \\
 = \frac{1}{(2\pi i)^{2}}\int_{\gamma_{\mathbb{C}}}\int_{\gamma_{\mathbb{C}}} H_{K}(\omega,\zeta;\epsilon_{x}) W(\omega) \mathcal{R}^{U}(\omega,\zeta) \frac{\omega^{N}}{\zeta^{N}}q(\omega,\zeta)^{y} \tilde{q}(\omega,\zeta)^{x} d\zeta d\omega,
\end{multline}
where $q$ and $\tilde{q}$ are given by
\begin{align}\label{def of q and qtilde}
&  q(\omega,\zeta) := \frac{\zeta (\omega-c)(\omega-c^{-1})}{\omega(\zeta-c)(\zeta-c^{-1})}, \qquad \tilde{q}(\omega,\zeta) = \frac{\omega(\zeta - \alpha c)(\zeta - \alpha c^{-1})}{\zeta(\omega - \alpha c)(\omega - \alpha c^{-1})}.
\end{align}
The double contour formulas for $P_{j}(x,y)$, $j=1,2,3$, are obtained via a series of lemmas. Let us first recall that the paths $\mathfrak{p}_j:\{0,1,\ldots,4N\} \to \mathbb{Z}+\tfrac{1}{2}$, $j=0,\ldots,2N-1$ are defined in \eqref{def of the paths} via \eqref{non-intersecting paths}. We define the height function $h:\{0,1,\ldots,4N\}\times \mathbb Z \to \mathbb{N}_{\geq 0}$ by
\begin{align}\label{def of h}
h(x,y)= \#\{ j \mid  \mathfrak{p}_j(x)< y \}.
\end{align}
Lemma \ref{lem:height_to_lozenge} below is identical to \cite[Lemma 7.2]{CDKL} and allows to recover the lozenges from the height function.
\begin{lemma} \label{lem:height_to_lozenge} 
For $x \in \{0,1,\ldots, 4N\}$ and $y \in \mathbb Z$, the following identities hold:
\begin{align}
h(x,y+1)-h(x+1,y+1) &= 
\begin{cases}
1, & \text{ there is a lozenge 	\tikz[scale=.3,baseline=(current bounding box.center)] { \draw (0,0) \lozr; \filldraw circle(5pt);  \node[below] (i) at (0,-.2) {\tiny{$(x,y)$}};} }\\[-10pt]
0, & \text{ otherwise.}
\end{cases} \label{diff loz 1} \\
h(x+1,y+1)-h(x,y) &= 
\begin{cases}
1, & \text{ there is a lozenge 	\tikz[scale=.3,baseline=(current bounding box.center)] { \draw (0,0) \lozu; \filldraw circle(5pt);  \node[below] (i) at (0,-.2) {\tiny{$(x,y)$}};}  }\\[-5pt]
0, & \text{ otherwise.}
\end{cases} \label{diff loz 2} \\
h(x,y+1)-h(x,y) & = 
\begin{cases}
0, & \text{ there is a lozenge 	\tikz[scale=.3,baseline=(current bounding box.center)] { \draw (0,0) \lozd; \filldraw (1,0) circle(5pt);  \node[below] (i) at (1,-.2) {\tiny{$(x,y)$}};} }\\[-5pt]
1, & \text{ otherwise.}
\end{cases} \label{diff loz 3}
\end{align}
\end{lemma}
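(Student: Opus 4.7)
The plan is to prove all three identities directly from the definition of the height function and the path/lozenge bijection \eqref{non-intersecting paths}. The two key inputs are that each path satisfies $\mathfrak{p}_j(x+1) - \mathfrak{p}_j(x) \in \{0, 1\}$, since its steps are either horizontal or oblique-up, and that distinct paths never meet.

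First I would handle \eqref{diff loz 3}, which is the simplest. By the definition of $h$,
\begin{equation*}
h(x, y+1) - h(x, y) = \#\bigl\{ j \mid \mathfrak{p}_j(x) = y + \tfrac{1}{2} \bigr\},
\end{equation*}
which is $0$ or $1$ by non-intersection. It equals $1$ iff some path visits the vertex $(x, y+\tfrac{1}{2})$ of $\mathcal{G}_n$; by the bijection \eqref{non-intersecting paths} this happens precisely when this vertex lies on a right- or up-lozenge (which carries a path segment through it), and fails precisely when the vertex sits on the horizontal diagonal of a down-lozenge having its right corner at $(x, y)$.

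Next, for \eqref{diff loz 2}, let $A = \{j \mid \mathfrak{p}_j(x) \leq y - \tfrac{1}{2}\}$ and $B = \{j \mid \mathfrak{p}_j(x+1) \leq y + \tfrac{1}{2}\}$, so that $h(x+1, y+1) - h(x, y) = |B| - |A|$. The step constraint $\mathfrak{p}_j(x+1) - \mathfrak{p}_j(x) \in \{0,1\}$ forces $A \subseteq B$, and $B \setminus A$ consists of those indices $j$ with $\mathfrak{p}_j(x) = \mathfrak{p}_j(x+1) = y + \tfrac{1}{2}$, i.e.\ a horizontal path segment at height $y+\tfrac{1}{2}$. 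By \eqref{non-intersecting paths} this is exactly the up-lozenge at $(x,y)$, and non-intersection forces $|B\setminus A| \in \{0,1\}$. The argument for \eqref{diff loz 1} is symmetric: with $A' = \{j \mid \mathfrak{p}_j(x) \leq y + \tfrac{1}{2}\}$ and $B' = \{j \mid \mathfrak{p}_j(x+1) \leq y + \tfrac{1}{2}\}$ one has $B' \subseteq A'$, and the difference $A' \setminus B'$ counts paths taking an oblique step from $(x, y+\tfrac{1}{2})$ to $(x+1, y+\tfrac{3}{2})$, which by the bijection is precisely the right-lozenge at $(x,y)$.

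There is no substantive obstacle: the entire content is a careful bookkeeping of the half-integer vertical conventions from \eqref{def of the paths} and locating the distinguished dot in each of the three lozenge pictures. This is why the statement is identical to \cite[Lemma 7.2]{CDKL} and is proved in the same way.
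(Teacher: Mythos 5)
Your proposal is correct and matches the paper's (one-line) proof, which simply declares the lemma an immediate consequence of \eqref{non-intersecting paths} and \eqref{def of h}; you have just written out the bookkeeping that this entails, using the step constraint $\mathfrak{p}_j(x+1)-\mathfrak{p}_j(x)\in\{0,1\}$ and non-intersection. The only quibble is cosmetic: in your discussion of \eqref{diff loz 3} the relevant segment of the lozenge $\tikz[scale=.18]{\draw (0,0) \lozd;}$ through $(x,y+\tfrac12)$ is its vertical interior diagonal with lower endpoint at the marked vertex $(x,y)$, not a ``horizontal diagonal with right corner at $(x,y)$'', but this does not affect the argument.
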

\begin{proof}
This is an immediate consequence of \eqref{non-intersecting paths} and \eqref{def of h}. 
\end{proof}
The next lemma establishes a double integral formula for the expectation value of the height function. 

\begin{lemma}\label{lem:doubleintegralheight}
For $x \in \{1,2,\ldots, 2N-1\}$, $y \in \mathbb Z$ and $\epsilon_{x},\epsilon_{y} \in \{0,1\}$, we have
\begin{multline}\label{formula for expectation h in lemma}
\mathbb{E}[h(2x+\epsilon_{x},2y+\epsilon_{y})] = \frac{1}{(2\pi i)^2} \int_{\tilde{\gamma}_{\mathbb{C}}}d\zeta\int_{\gamma_{\mathbb{C}}} \frac{d\omega}{{q(\omega,\zeta)-1}} \mathcal{R}^{U}(\omega,\zeta)  W(\omega) \\   
\times \frac{\omega^{N}}{\zeta^{N}}q(\omega,\zeta)^{y}\tilde{q}(\omega,\zeta)^{x} \big(q(\omega,\zeta)^{\epsilon_{y}}H_{K}(\omega,\zeta;\epsilon_{x})_{11}+H_{K}(\omega,\zeta;\epsilon_{x})_{22} \big). 
\end{multline}
where $\gamma_{\mathbb{C}}$ is a closed curve surrounding both $c$ and $c^{-1}$, but not surrounding $0$, and $\tilde{\gamma}_{\mathbb{C}}$ is a deformation of $\gamma_{\mathbb{C}}$ lying in the bounded region delimited by $\gamma_{\mathbb{C}}$, such that $|q(\zeta,\omega)|>1$ whenever $\zeta \in \tilde{\gamma}_{\mathbb{C}}$ and $\omega \in \gamma_{\mathbb{C}}$.
\end{lemma}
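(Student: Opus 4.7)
The plan is to express the expected height as a sum of diagonal kernel values (using the determinantal structure), substitute the double integral formula \eqref{main formula for the kernel written compactly}, interchange the sum with the integral, and evaluate the resulting geometric series in $m$.

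Since the point process is determinantal with correlation kernel $K$, the expected number of paths below height $Y$ at abscissa $X$ equals $\sum_{y' : y' + 1/2 < Y} K(X, y', X, y')$. Specializing $X = 2x+\epsilon_x$ and $Y = 2y+\epsilon_y$ and splitting the sum by the parity of $y'$, I identify the even/odd diagonal terms with the $(1,1)$ and $(2,2)$ entries of the matrix in \eqref{main formula for the kernel written compactly}, denoted $K_{00}(m)$ and $K_{11}(m)$. A direct bookkeeping then yields
\begin{equation*}
\mathbb{E}[h(2x+\epsilon_x,2y+\epsilon_y)] = \sum_{m \le y-1+\epsilon_y} K_{00}(m) + \sum_{m \le y-1} K_{11}(m),
\end{equation*}
with only finitely many nonzero terms (the paths live in a bounded strip of the hexagon).

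Next, I substitute \eqref{main formula for the kernel written compactly} into each $K_{ii}(m)$ and formally exchange the summation with the double integral. The only $m$-dependent factor in the integrand is $q(\omega,\zeta)^m$, so the sum produces the geometric series $\sum_{m\le M} q^m = q^{M+1}/(q-1)$. This exchange is valid only once the series converges uniformly on the integration contours, i.e.\ on contours where the appropriate one-sided bound on $|q(\omega,\zeta)|$ holds. The deformation of the $\zeta$-contour from $\gamma_{\mathbb{C}}$ to $\tilde\gamma_{\mathbb{C}}$ inside $\gamma_{\mathbb{C}}$ is made precisely to enforce this bound, as stated in the lemma. The deformation itself leaves each individual $K_{ii}(m)$ unchanged: for each fixed $m$, the $\zeta$-integrand is meromorphic in the bounded region inside $\gamma_{\mathbb{C}}$ with possible $\zeta$-singularities only at the fixed points $\{0, c, c^{-1}, \alpha c, \alpha c^{-1}\}$, and these do not lie in the annular region swept out by the deformation (by the placement of $\tilde\gamma_{\mathbb{C}}$ relative to $\gamma_{\mathbb{C}}$).

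After the exchange, the two sums $\sum_{m \le y-1+\epsilon_y} q^m = q^{y+\epsilon_y}/(q-1)$ and $\sum_{m\le y-1} q^m = q^y/(q-1)$ combine to give the factor $(q^{\epsilon_y} H_K(\omega,\zeta;\epsilon_x)_{11} + H_K(\omega,\zeta;\epsilon_x)_{22})/(q-1)$ times $q^y$ in the numerator, reproducing \eqref{formula for expectation h in lemma}. The main obstacle is justifying the sum--integral interchange, which is exactly what the contour deformation achieves by providing the uniform $|q|$-control on the product of contours; once that is secured, the rest of the argument is a direct geometric summation followed by algebraic rearrangement.
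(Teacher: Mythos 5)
Your proposal is correct and follows essentially the same route as the paper: write $\mathbb{E}[h]$ as a sum of diagonal kernel entries via the determinantal structure, substitute the double contour formula, deform the $\zeta$-contour to gain the uniform bound $|q|>1$, and sum the geometric series. The only cosmetic differences are that you treat both parities $\epsilon_y\in\{0,1\}$ in one bookkeeping step (the paper does $\epsilon_y=0$ first and then adds the single kernel term $K(2x+\epsilon_x,2y,2x+\epsilon_x,2y)$), and the paper makes the deformation concrete by taking $\tilde{\gamma}_{\mathbb{C}}$ to be two small circles around $c$ and $c^{-1}$, using that $|q(\omega,\zeta)|\to\infty$ as $\zeta\to c,c^{-1}$.
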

\begin{proof}
Let $\mathcal{X}(\tilde{x},\tilde{y})$ be the random variable that counts the number of paths going through the point $(\tilde{x},\tilde{y})$, $\tilde{x},\tilde{y} \in \{0,1,...,4N\}$. Since $\mathcal{X}(\tilde{x},\tilde{y}) \in \{0,1\}$, we have $\mathbb{P}(\mathcal{X}(\tilde{x},\tilde{y}) = 1) = \mathbb{E}(\mathcal{X}(\tilde{x},\tilde{y}))$. Also, note that the identity \eqref{def of determinantal} with $k=1$ is equivalent to $\mathbb{P}(\mathcal{X}(\tilde{x},\tilde{y}) = 1) = K(\tilde{x},\tilde{y},\tilde{x},\tilde{y})$. Thus, by definition \eqref{def of h} of $h$, we get
\begin{align}
\mathbb{E}[h(2x+\epsilon_{x},2y)] & = \sum_{k<y} \big[ K(2x+\epsilon_{x},2k,2x+\epsilon_{x},2k) + K(2x + \epsilon_{x},2k + 1,2x + \epsilon_{x},2k + 1) \big] \nonumber \\
& = \sum_{k<y} \mbox{Tr} \big[ K(2x+\epsilon_{x},2k+j,2x+\epsilon_{x},2k+i) \big]_{i,j=0}^{1}. \label{lol13}
\end{align}
Let us define $\tilde{\gamma}_{\mathbb{C}} := C(c,r) \cup C(c^{-1},r)$, where $C(a,r)$ denotes a circle oriented positively centered at $a$ of radius $r$. We see from \eqref{def of q and qtilde} that $|q(\omega,\zeta)| \to + \infty$ as $\zeta$ tends to $c$ or $c^{-1}$. Thus, by choosing $r$ sufficiently small, we can make sure that $\tilde{\gamma}_{\mathbb{C}}$ lies in the interior region of $\gamma_{\mathbb{C}}$, and that 
\begin{align*}
\left| q(\omega,\zeta) \right|>1+\epsilon, \qquad \mbox{for all } \zeta \in \tilde{\gamma}_{\mathbb{C}} \mbox{ and } \omega \in \gamma_{\mathbb{C}},
\end{align*}
for a certain $\epsilon > 0$. Therefore, uniformly for $\zeta \in \tilde{\gamma}_{\mathbb{C}}$ and $\omega \in \gamma_{\mathbb{C}}$, we have
\begin{align}\label{lol14}
& \sum_{k<y} q(\omega,\zeta)^{k} = \frac{q(\omega,\zeta)^{y}}{q(\omega,\zeta)-1}.
\end{align}
The statement \eqref{formula for expectation h in lemma} with $\epsilon_{y} = 0$ follows after combining \eqref{main formula for the kernel written compactly}, \eqref{lol13} and \eqref{lol14}. Then, \eqref{formula for expectation h in lemma} with $\epsilon_{y} = 1$ follows from
\begin{align*}
\mathbb{E}[h(2x+\epsilon_{x},2y+1)] = \mathbb{E}[h(2x+\epsilon_{x},2y)] + K(2x+\epsilon_{x},2y,2x+\epsilon_{x},2y).
\end{align*}
\end{proof}
The double contour formulas for $P_{j}$, $j=1,2,3$ are stated in the following proposition.
\begin{proposition}
For $x \in \{1,2,\ldots, 2N-1\}$ and $y \in \mathbb Z$, we have
\begin{align}
& P_{1}(x,y) = \frac{1}{(2\pi i)^{2}}\int_{\gamma_{\mathbb{C}}}\int_{\gamma_{\mathbb{C}}} H_{1}(\omega,\zeta) W(\omega) \mathcal{R}^{U}(\omega,\zeta) \frac{\omega^{N}}{\zeta^{N}}q(\omega,\zeta)^{y} \tilde{q}(\omega,\zeta)^{x} d\zeta d\omega, \label{P1 double contour} \\
& P_{2}(x,y) = \frac{1}{(2\pi i)^{2}}\int_{\gamma_{\mathbb{C}}}\int_{\gamma_{\mathbb{C}}} H_{2}(\omega,\zeta) W(\omega) \mathcal{R}^{U}(\omega,\zeta) \frac{\omega^{N}}{\zeta^{N}}q(\omega,\zeta)^{y} \tilde{q}(\omega,\zeta)^{x} d\zeta d\omega, \label{P2 double contour} \\
& P_{3}(x,y) = \begin{pmatrix}
1 & 1 \\ 1 & 1
\end{pmatrix} - \frac{1}{(2\pi i)^{2}}\int_{\gamma_{\mathbb{C}}}\int_{\gamma_{\mathbb{C}}} H_{3}(\omega,\zeta) W(\omega) \mathcal{R}^{U}(\omega,\zeta) \frac{\omega^{N}}{\zeta^{N}}q(\omega,\zeta)^{y} \tilde{q}(\omega,\zeta)^{x} d\zeta d\omega, \label{P3 double contour}
\end{align}
where $H_{1}$, $H_{2}$ and $H_{3}$ are given by
\begin{align}
& H_{1}(\omega,\zeta) = \begin{pmatrix} \frac{\alpha c (\omega -c) (\omega-c^{-1})}{(\zeta-c)(\zeta-c^{-1})\omega (\omega - \alpha c)} & \frac{(\zeta-\alpha c)(\omega -c)(\omega-c^{-1})}{(\zeta-c)(\zeta-c^{-1})(\omega-\alpha c)(\omega - \alpha c^{-1})}  \\
\frac{\omega -c}{(\zeta-c)(\omega - \alpha c)} & \frac{\alpha (\zeta - \alpha c)(\omega-c)}{c \zeta (\zeta -c)(\omega-\alpha c)(\omega-\alpha c^{-1})}
\end{pmatrix} \label{def of H1} \\
& H_{2}(\omega,\zeta) = \begin{pmatrix}
\frac{c(1-\alpha)(\omega-c)}{\alpha(\zeta-c)(\zeta-c^{-1})(\omega-\alpha c)} & \frac{(1-\alpha)(\zeta-\alpha c)(\omega-c)}{c(\zeta-c)(\zeta-c^{-1})(\omega-\alpha c)(\omega-\alpha c^{-1})} \\
\frac{(1-\alpha)c}{(\zeta-c)(\omega - \alpha c)} & \frac{\alpha c (1-\alpha) (\zeta- \alpha c)\omega}{\zeta (\zeta-c)(\omega - \alpha c)(\omega - \alpha c^{-1})}
\end{pmatrix}, \label{def of H2} \\
& H_{3}(\omega,\zeta) = \begin{pmatrix}
\frac{\omega -c}{c \omega (\zeta -c)(\zeta -c^{-1})} & \frac{(\zeta-\alpha c)(\omega-c)}{(\zeta -c)(\zeta -c^{-1})(\omega - \alpha c)} \\
\frac{1}{\zeta -c} & \frac{c(\zeta - \alpha c)}{\zeta (\zeta -c) (\omega - \alpha c)}
\end{pmatrix}. \label{def of H3}
\end{align}
\end{proposition}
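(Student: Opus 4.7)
The strategy is to express each of the three lozenge probability matrices as a linear combination of expected values of the height function, apply the integral representation from Lemma~\ref{lem:doubleintegralheight}, and then simplify. Concretely, using Lemma~\ref{lem:height_to_lozenge} the four entries of $P_{j}(x,y)$ can be written as
\begin{align*}
\mathcal{P}_{1}(\tilde{x},\tilde{y}) &= \mathbb{E}[h(\tilde{x},\tilde{y}+1) - h(\tilde{x}+1,\tilde{y}+1)], \\
\mathcal{P}_{2}(\tilde{x},\tilde{y}) &= \mathbb{E}[h(\tilde{x}+1,\tilde{y}+1) - h(\tilde{x},\tilde{y})], \\
\mathcal{P}_{3}(\tilde{x},\tilde{y}) &= 1 - \mathbb{E}[h(\tilde{x},\tilde{y}+1) - h(\tilde{x},\tilde{y})],
\end{align*}
with $(\tilde{x},\tilde{y}) = (2x+\epsilon_{x}, 2y+\epsilon_{y})$ for the appropriate parities. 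Each of these eight expectations per matrix $P_{j}$ is then replaced by its double integral representation \eqref{formula for expectation h in lemma}.

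Next, within each difference I would factor out the common prefactor $W(\omega)\mathcal{R}^{U}(\omega,\zeta)\omega^{N}/\zeta^{N}$ and combine the shifts $y\mapsto y+1$ or $x\mapsto x+1$ into explicit factors of $q(\omega,\zeta)$ and $\tilde{q}(\omega,\zeta)$ using \eqref{def of q and qtilde}. This produces a single double integral with integrand of the form
\[
\frac{1}{q(\omega,\zeta)-1}\,\mathcal{D}(\omega,\zeta)\,W(\omega)\,\mathcal{R}^{U}(\omega,\zeta)\,\frac{\omega^{N}}{\zeta^{N}}\,q(\omega,\zeta)^{y}\tilde{q}(\omega,\zeta)^{x},
\]
where $\mathcal{D}(\omega,\zeta)$ is a specific rational combination of the entries of $H_{K}(\omega,\zeta;0)$, $H_{K}(\omega,\zeta;1)$ and powers of $q,\tilde{q}$. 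A short direct check at $\zeta=\omega$ (where $q=1$) shows that $\mathcal{D}(\omega,\omega)=0$: for instance for the $(1,1)$ entry of $P_{1}$, one finds $H_{K}(\omega,\omega;0)_{11}+H_{K}(\omega,\omega;0)_{22}-H_{K}(\omega,\omega;1)_{11}-H_{K}(\omega,\omega;1)_{22}=\tfrac{1}{\omega}-\tfrac{1}{\omega}=0$. Hence the potential pole produced by $(q-1)^{-1}$ is removable, and the inner contour $\tilde{\gamma}_{\mathbb{C}}$ can be freely deformed back to $\gamma_{\mathbb{C}}$ without picking up any residue.

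Finally, one simplifies the resulting regular quotient $q\,\mathcal{D}/(q-1)$ (or analogous expressions for the $P_{2}$ and $P_{3}$ entries, where an extra factor of $\tilde q$ appears from the shift $x\mapsto x+1$ and, for $P_{3}$, the constant $\mathbf{1}$ matrix arises from the $1-\mathbb{E}[\cdot]$) using the explicit formulas \eqref{HK epsx 0}--\eqref{HK epsx 1}, together with the identities
\[
q(\omega,\zeta)-1 = \frac{(\zeta-\omega)(1-\omega\zeta)}{\omega(\zeta-c)(\zeta-c^{-1})}, \qquad \tilde{q}(\omega,\zeta)-1 = \frac{\alpha(\zeta-\omega)(\omega\zeta-\alpha)}{\zeta(\omega-\alpha c)(\omega-\alpha c^{-1})}\cdot\frac{1}{\omega},
\]
and matches the result with the matrices $H_{1},H_{2},H_{3}$ in \eqref{def of H1}--\eqref{def of H3}.

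\emph{The main obstacle} is the algebraic simplification of step four: each entry requires combining four rational expressions and dividing by the symmetric factor $q-1$ (and, for $P_{2}$, $P_{3}$, also tracking the effect of $\tilde{q}$-shifts), which is mechanical but bookkeeping-heavy. For $P_{3}$, one additionally has to verify that the constant matrix $\begin{pmatrix}1&1\\1&1\end{pmatrix}$ is produced correctly from the four constant terms coming from $1-\mathbb{E}[\cdot]$. All other steps reduce to straightforward applications of Lemmas~\ref{lem:height_to_lozenge} and~\ref{lem:doubleintegralheight} and the removability of the $q=1$ pole.
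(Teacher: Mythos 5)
Your proposal follows essentially the same route as the paper: express the lozenge probabilities through differences of $\mathbb{E}[h]$ via Lemma \ref{lem:height_to_lozenge}, substitute the double-integral formula of Lemma \ref{lem:doubleintegralheight}, observe that the $(q-1)^{-1}$ singularity at $\zeta=\omega$ is removable so that $\tilde\gamma_{\mathbb C}$ can be deformed back to $\gamma_{\mathbb C}$, and match the resulting rational prefactors with $H_1,H_2,H_3$; the only (inessential) difference is that for $P_3$ the paper uses $\mathcal{P}_3=1-K$ directly instead of a height-function difference. Two of your illustrative identities are slightly off, though harmlessly: at $\zeta=\omega$ the two sums both equal $\frac{c(\omega^2-1)}{\omega(\omega-c)(c\omega-1)}$ rather than $\frac{1}{\omega}$ each (their difference still vanishes), and the correct factorization is $\tilde q(\omega,\zeta)-1=\frac{(\zeta-\omega)(\omega\zeta-\alpha^2)}{\zeta(\omega-\alpha c)(\omega-\alpha c^{-1})}$.
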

\begin{proof}
Recall that $\mathcal{P}_{j}$, $j=1,2,3$ are defined by \eqref{def of mathcal P1 P2 P3}. By \eqref{def of determinantal}, for $\epsilon_{x},\epsilon_{y} \in \{0,1\}$, we have
\begin{align*}
& \mathcal{P}_{3}(2x+\epsilon_{x},2y+\epsilon_{y}) = 1 - K(2x+\epsilon_{x},2y+\epsilon_{y},2x+\epsilon_{x},2y+\epsilon_{y}).
\end{align*}
Noting that 
\begin{align*}
H_{3}(\omega,\zeta) = \begin{pmatrix}
H_{K}(\omega,\zeta;0)_{22} & H_{K}(\omega,\zeta;1)_{22} \\
H_{K}(\omega,\zeta;0)_{11} & H_{K}(\omega,\zeta;1)_{11}
\end{pmatrix} = \begin{pmatrix}
\frac{\omega -c}{c \omega (\zeta -c)(\zeta -c^{-1})} & \frac{(\zeta-\alpha c)(\omega-c)}{(\zeta -c)(\zeta -c^{-1})(\omega - \alpha c)} \\
\frac{1}{\zeta -c} & \frac{c(\zeta - \alpha c)}{\zeta (\zeta -c) (\omega - \alpha c)}
\end{pmatrix},
\end{align*}
formula \eqref{P3 double contour} follows by combining \eqref{def of P1 P2 P3} with \eqref{main formula for the kernel written compactly}. The proof of \eqref{P1 double contour} and \eqref{P2 double contour} requires more work and relies on Lemmas \ref{lem:height_to_lozenge} and \ref{lem:doubleintegralheight}. First, we note the following direct consequences of \eqref{formula for expectation h in lemma}:
\begin{align}
& \mathbb{E}[h(2x+\epsilon_{x},2y+1+\epsilon_{y})] = \frac{1}{(2\pi i)^2} \int_{\tilde{\gamma}_{\mathbb{C}}}d\zeta\int_{\gamma_{\mathbb{C}}} \frac{d\omega}{{q(\omega,\zeta)-1}} \mathcal{R}^{U}(\omega,\zeta)  W(\omega) \nonumber \\   
& \hspace{1.5cm} \times \frac{\omega^{N}}{\zeta^{N}}q(\omega,\zeta)^{y}\tilde{q}(\omega,\zeta)^{x} \big(q(\omega,\zeta)H_{K}(\omega,\zeta;\epsilon_{x})_{11}+q(\omega,\zeta)^{\epsilon_{y}}H_{K}(\omega,\zeta;\epsilon_{x})_{22} \big), \label{formula for expectation h number 2} \\
& \mathbb{E}[h(2x+1+\epsilon_{x},2y+1+\epsilon_{y})] = \frac{1}{(2\pi i)^2} \int_{\tilde{\gamma}_{\mathbb{C}}}d\zeta\int_{\gamma_{\mathbb{C}}} \frac{d\omega}{{q(\omega,\zeta)-1}} \mathcal{R}^{U}(\omega,\zeta)  W(\omega) \nonumber \\   
& \hspace{1cm} \times \frac{\omega^{N}}{\zeta^{N}}q(\omega,\zeta)^{y}\tilde{q}(\omega,\zeta)^{x+\epsilon_{x}} \big(q(\omega,\zeta)H_{K}(\omega,\zeta;1-\epsilon_{x})_{11}+q(\omega,\zeta)^{\epsilon_{y}}H_{K}(\omega,\zeta;1-\epsilon_{x})_{22} \big). \label{formula for expectation h number 3}
\end{align}
Using \eqref{diff loz 1}, \eqref{formula for expectation h number 2} and \eqref{formula for expectation h number 3}, we get
\begin{align}
& \mathcal{P}_{1}(2x+\epsilon_{x},2y+\epsilon_{y})  = \mathbb{E}[h(2x+\epsilon_{x},2y+1+\epsilon_{y})]-\mathbb{E}[h(2x+1+\epsilon_{x},2y+1+\epsilon_{y})] \nonumber \\
& = \frac{1}{(2\pi i)^2} \int_{\tilde{\gamma}_{\mathbb{C}}}d\zeta\int_{\gamma_{\mathbb{C}}} \frac{d\omega}{{q(\omega,\zeta)-1}} \mathcal{R}^{U}(\omega,\zeta)  W(\omega) \frac{\omega^{N}}{\zeta^{N}}q(\omega,\zeta)^{y} \tilde{q}(\omega,\zeta)^{x} \nonumber \\
& \times \Big( q(\omega,\zeta) H_{K}(\omega,\zeta;\epsilon_{x})_{11}- q(\omega,\zeta)\tilde{q}(\omega,\zeta)^{\epsilon_{x}}H_{K}(\omega,\zeta;1-\epsilon_{x})_{11} \nonumber \\
& + q(\omega,\zeta)^{\epsilon_{y}}H_{K}(\omega,\zeta;\epsilon_{x})_{22}- q(\omega,\zeta)^{\epsilon_{y}}\tilde{q}(\omega,\zeta)^{\epsilon_{x}}H_{K}(\omega,\zeta;1-\epsilon_{x})_{22}\Big). \label{lol15}
\end{align}
It is a direct computation to verify that the integrand has no pole at $\zeta = \omega$ for any $\epsilon_{x},\epsilon_{y} \in \{0,1\}$, so that $\tilde{\gamma}_{\mathbb{C}}$ can be deformed back to $\gamma_{\mathbb{C}}$. We obtain \eqref{P1 double contour} after writing \eqref{lol15} in the matrix form \eqref{def of P1 P2 P3}. Finally, using \eqref{diff loz 2}, \eqref{formula for expectation h in lemma} and \eqref{formula for expectation h number 3}, we get
\begin{align*}
& \mathcal{P}_{2}(2x+\epsilon_{x},2y+\epsilon_{y})  = \mathbb{E}[h(2x+1+\epsilon_{x},2y+1+\epsilon_{y})]-\mathbb{E}[h(2x+\epsilon_{x},2y+\epsilon_{y})] \\
& = \frac{1}{(2\pi i)^2} \int_{\tilde{\gamma}_{\mathbb{C}}}d\zeta\int_{\gamma_{\mathbb{C}}} \frac{d\omega}{{q(\omega,\zeta)-1}} \mathcal{R}^{U}(\omega,\zeta)  W(\omega) \frac{\omega^{N}}{\zeta^{N}}q(\omega,\zeta)^{y} \tilde{q}(\omega,\zeta)^{x} \\
& \times \Big( q(\omega,\zeta)\tilde{q}(\omega,\zeta)^{\epsilon_{x}} H_{K}(\omega,\zeta;1-\epsilon_{x})_{11}- q(\omega,\zeta)^{\epsilon_{y}}H_{K}(\omega,\zeta;\epsilon_{x})_{11} \\
& + q(\omega,\zeta)^{\epsilon_{y}}\tilde{q}(\omega,\zeta)^{\epsilon_{x}}H_{K}(\omega,\zeta;1-\epsilon_{x})_{22}-  H_{K}(\omega,\zeta;\epsilon_{x})_{22}\Big).
\end{align*}
Another direct computation shows that the integrand has no pole at $\zeta = \omega$ for any $\epsilon_{x},\epsilon_{y} \in \{0,1\}$, so that $\tilde{\gamma}_{\mathbb{C}}$ can be deformed back to $\gamma_{\mathbb{C}}$. The formula \eqref{P2 double contour} is then obtained by rewriting the above in the matrix form \eqref{def of P1 P2 P3}.
\end{proof}

\subsection{Symmetries}\label{subsection: symmetries}
Let $H(\omega,\zeta)$ be a $2 \times 2$ meromorphic function in both $\zeta$ and $\omega$, whose only possible poles in each variable are at $0$, $\alpha c$, $\alpha c^{-1}$, $c$ and $c^{-1}$. Furthermore, we assume that all the poles of $H$ are of order $1$ and that $H(\omega,\zeta)$ is bounded as $\zeta$ and/or $\omega$ tend to $\infty$. For $x \in \{1,2,\ldots, 2N-1\}$ and $y \in \mathbb Z$, we define
\begin{align}\label{mathcalI}
\mathcal{I}(x,y;H) = \frac{1}{(2\pi i)^{2}}\int_{\gamma_{\mathbb{C}}} \int_{\gamma_{\mathbb{C}}} H(\omega, \zeta) W(\omega) \mathcal{R}^{U}(\omega,\zeta) \frac{\omega^{N}}{\zeta^{N}}q(\omega,\zeta)^{y} \tilde{q}(\omega,\zeta)^{x}  d\zeta d\omega.
\end{align}
Since the poles of $H$ are of order at most $1$, recalling \eqref{def of W in intro}, the only poles of the integrand are at $0$, $c$ and $c^{-1}$, in both the $\zeta$ and $\omega$ variables.
The following star-operation will play an important role for a symmetry property of $\mathcal{I}$:
\begin{align}\label{star operation}
\zeta^{\star} = c^{-1} + \frac{R_{1}^{2}}{\zeta -c^{-1}}\qquad \mbox{where} \qquad R_{1} = \frac{1-\alpha}{\sqrt{\alpha}}.
\end{align}
Let $\gamma_{1}$ be the circle centered at $c^{-1}$ of radius $R_{1}$. The star-operation maps $\gamma_{1}$ into itself, but reverses the orientation. Furthermore, it satisfies $(\zeta^{\star})^{\star} = \zeta$ for all $\zeta \in \mathbb{C}\cup \{\infty\}$. We start by proving some symmetries for $\mathcal{R}^{U}$.
\begin{lemma}The reproducing kernel $\mathcal{R}^{U}$ satisfies two symmetries.
\begin{enumerate} 
\item[\rm (a)] We have 
\begin{align}\label{first sym for RU}
\mathcal{R}^{U}(\omega,\zeta) = \mathcal{R}^{U}(\zeta,\omega), \qquad \omega,\zeta \in \mathbb{C}.
\end{align}
\item[\rm (b)] We have
\begin{equation}\label{eq:RNsymmetry} 
\mathcal{R}^{U} \left( \omega^{\star},\zeta^{\star} \right) = \frac{R_{1}^{4N-2}\mathcal{R}^{U}(\omega,\zeta)}{(\omega - c^{-1})^{2N-1} (\zeta - c^{-1})^{2N-1}} , \qquad \omega,\zeta \in \mathbb{C} \setminus \{c^{-1}\}. 
\end{equation}
\end{enumerate} 
\end{lemma}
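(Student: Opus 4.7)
The plan is to exploit the uniqueness of the reproducing kernel (Proposition \ref{prop: reprod Rcal U}) in both parts: any polynomial in $\omega,\zeta$ of degree $\leq 2N-1$ in each variable that satisfies the reproducing property \eqref{reproducing property C} must equal $\mathcal{R}^U$.

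For part (a), since $\omega \mapsto \mathcal{R}^U(\omega,\zeta)$ is a polynomial of degree $\leq 2N-1$, I can apply \eqref{reproducing property C} to the test polynomial $p(\omega) = \mathcal{R}^U(\omega,\zeta)$ evaluated at the point $\zeta'$, obtaining
\[
\mathcal{R}^U(\zeta',\zeta) \;=\; \frac{1}{2\pi i}\int_{\gamma_{\mathbb{C}}} \mathcal{R}^U(\omega,\zeta)\, W(\omega)\, \mathcal{R}^U(\omega,\zeta')\,d\omega .
\]
The two scalar factors $\mathcal{R}^U(\omega,\zeta)$ and $\mathcal{R}^U(\omega,\zeta')$ commute, so the integrand, and therefore the left-hand side, is invariant under $\zeta \leftrightarrow \zeta'$. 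This is exactly \eqref{first sym for RU}.

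For part (b), I introduce
\[
\hat{\mathcal{R}}(\omega,\zeta) \;:=\; \frac{(\omega-c^{-1})^{2N-1}(\zeta-c^{-1})^{2N-1}}{R_1^{4N-2}}\, \mathcal{R}^U(\omega^\star,\zeta^\star),
\]
and aim to show $\hat{\mathcal{R}} = \mathcal{R}^U$. Using the identity $R_1^2-c^{-2}=-1$, one rewrites $\zeta^\star = (c^{-1}\zeta-1)/(\zeta-c^{-1})$, from which $(\zeta-c^{-1})^{2N-1}p(\zeta^\star)$ is a polynomial of degree $\leq 2N-1$ whenever $p$ is; applied in each variable separately, this shows that $\hat{\mathcal{R}}$ is a bivariate polynomial of degree $\leq 2N-1$ in each variable. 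To verify the reproducing property, I choose $\gamma_{\mathbb{C}} = \gamma_1 = \{|\zeta-c^{-1}|=R_1\}$, which is an admissible contour: a direct check using $R_1^2=(1-\alpha)^2/\alpha$ and $c^{-2}=(1-\alpha+\alpha^2)/\alpha$ shows that $0$ and $\alpha c$ lie outside, while $c$ and $\alpha c^{-1}$ lie inside $\gamma_1$, and the star-operation preserves $\gamma_1$ while reversing its orientation.

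The core computation is the transformation law of the weight under the star-operation. From $\zeta^\star - c = (c^{-1}-c)\zeta/(\zeta-c^{-1})$, $\zeta^\star - \alpha c = (c^{-1}-\alpha c)(\zeta-\alpha c^{-1})/(\zeta-c^{-1})$ and the analogue for $\alpha c^{-1}$ (together with the algebraic simplification $(1-\alpha)/(c^{-1}-\alpha c)=\alpha c^{-1}$), one finds
\[
W(\omega^\star) \;=\; W(\omega)\,(\omega-c^{-1})^{4N}/R_1^{4N}, \qquad d\omega^\star \;=\; -R_1^2/(\omega-c^{-1})^2\,d\omega .
\]
Performing the change of variables $\omega=\tilde\omega^\star$ in $\frac{1}{2\pi i}\int_{\gamma_1} p(\omega)W(\omega)\hat{\mathcal{R}}(\omega,\zeta)\,d\omega$, the orientation reversal and the minus sign from $d\omega^\star$ cancel, and every power of $R_1$ and $(\tilde\omega-c^{-1})$ telescopes to give
\[
\frac{(\zeta-c^{-1})^{2N-1}}{R_1^{4N-2}}\cdot\frac{1}{2\pi i}\int_{\gamma_1}\tilde q(\tilde\omega)\,W(\tilde\omega)\,\mathcal{R}^U(\tilde\omega,\zeta^\star)\,d\tilde\omega,
\]
with $\tilde q(\tilde\omega):=(\tilde\omega-c^{-1})^{2N-1}p(\tilde\omega^\star)$ a polynomial of degree $\leq 2N-1$. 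Applying \eqref{reproducing property C} to $\tilde q$ at $\zeta^\star$ and using $(\zeta^\star-c^{-1})^{2N-1}=R_1^{4N-2}/(\zeta-c^{-1})^{2N-1}$ collapses the result exactly to $p(\zeta)$, establishing the reproducing property for $\hat{\mathcal{R}}$ and hence (b). The main obstacle is the algebraic bookkeeping of the powers of $(\omega-c^{-1})$ and $R_1$ in the transformation of $W$ and of $\hat{\mathcal{R}}$; verifying that these conspire to cancel precisely—rather than merely up to a nontrivial multiplicative constant—is the delicate point of the argument.
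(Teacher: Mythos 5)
Your argument is correct, but it takes a genuinely different route from the paper's. For part (a) the paper reads the symmetry off the Christoffel--Darboux-type formula \eqref{RU first column} (a consequence of $\det U\equiv 1$): numerator and denominator are both antisymmetric under $\omega\leftrightarrow\zeta$. You instead insert $p(\cdot)=\mathcal{R}^{U}(\cdot,\zeta')$ into the reproducing property \eqref{reproducing property C} and note that the resulting integral is symmetric in $(\zeta,\zeta')$; this is equally valid and uses only Proposition \ref{prop: reprod Rcal U}. For part (b) the paper argues at the level of the RH problem: choosing $\gamma_{\mathbb{C}}=\gamma_{1}$ and using \eqref{W symmetry}, it checks that $\widehat U$, built from $U(\zeta^{\star})$, solves the same RH problem, concludes $\widehat U=U$ by uniqueness of the RH solution, and then substitutes $(\omega,\zeta)\mapsto(\omega^{\star},\zeta^{\star})$ into \eqref{reproducing kernel in terms of U intro} together with $\frac{\zeta-\omega}{\zeta^{\star}-\omega^{\star}}=-\frac{(\zeta-c^{-1})(\omega-c^{-1})}{R_{1}^{2}}$. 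You argue instead at the level of the kernel: the same weight symmetry (your version is the correct one; the printed right-hand side of \eqref{W symmetry} has $\omega$ where $\zeta$ is meant), the M\"obius form $\zeta^{\star}=(c^{-1}\zeta-1)/(\zeta-c^{-1})$ coming from $R_{1}^{2}-c^{-2}=-1$, and the Jacobian/orientation bookkeeping on $\gamma_{1}$ show that $\hat{\mathcal{R}}$ is a bivariate polynomial of degree $\leq 2N-1$ in each variable satisfying \eqref{reproducing property C}; uniqueness of the reproducing kernel then yields \eqref{eq:RNsymmetry}, and your power count ($4N-(2N-1)-2=2N-1$ in $(\tilde\omega-c^{-1})$, with the $R_{1}$-powers cancelling exactly) confirms the constant is $1$ and not merely some nonzero multiple. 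Both routes rest on the same core facts (the star-symmetry of $W$ and the admissibility and star-invariance of $\gamma_{1}$); yours avoids constructing and verifying a new RH solution, at the price of invoking the uniqueness of the bivariate reproducing kernel, which is not literally part of Proposition \ref{prop: reprod Rcal U} but is available in the paper, being imported from \cite[Lemma 4.6 (c)]{DK} in the proof of \eqref{reproducing kernel in terms of U}; the paper's route additionally produces a functional equation for $U$ itself, which could be of independent use.
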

\begin{proof}
Since $\det U \equiv 1$, it follows from \eqref{reproducing kernel in terms of U intro} that
\begin{align}\label{RU first column}
\mathcal{R}^{U}(\omega,\zeta) = \frac{U_{11}(\omega)U_{21}(\zeta)-U_{11}(\zeta)U_{21}(\omega)}{\zeta-\omega},
\end{align}
from which we deduce \eqref{first sym for RU}. Now we prove (b).
Note that the first column of $U$ only contains polynomials, which are independent of the choice of the contour $\gamma_{\mathbb{C}}$ that appears in the formulation of the RH problem for $U$. Therefore, $\mathcal{R}^{U}$ is independent of the choice of $\gamma_{\mathbb{C}}$ as well by \eqref{RU first column}. Since $\gamma_{1}$ encloses both $c$ and $c^{-1}$, and does not enclose $0$, $\gamma_{1}$ is a valid choice of contour. We use the freedom we have in the choice of $\gamma_{\mathbb{C}}$ by letting $U$ be the solution to the RH problem for $U$ associated to the contour $\gamma_{1}$.  We can verify by direct computations that 
\begin{align}\label{W symmetry}
W(\zeta^{\star})  = \frac{(\omega - c^{-1})^{4N}}{R_{1}^{4N}}W(\zeta),
\end{align}
so that
\begin{equation*}
\widehat{U}(\zeta) := \begin{pmatrix} R_{1}^{2N} & 0 \\ 0 & -R_{1}^{-2N} \end{pmatrix} U(\tfrac{1}{c})^{-1} U\left( \zeta^{\star} \right)
\begin{pmatrix} 
\frac{(\zeta - c^{-1})^{2N}}{R_{1}^{2N}} & 0 \\ 0 & - \frac{ R_{1}^{2N}}{(\zeta - c^{-1})^{2N}} 
\end{pmatrix}
\end{equation*}
also satisfies the conditions of the RH problem for $U$. By uniqueness of the solution of this RH problem, we infer that $U(\zeta) = \widehat{U}(\zeta)$. After replacing $(\omega,\zeta)$ by $(\omega^{\star},\zeta^{\star})$ in \eqref{reproducing kernel in terms of U intro} and using the relations $\widehat{U}(\zeta) = U(\zeta)$ and $\frac{\zeta-\omega}{\zeta^{\star}-\omega^{\star}}=-\frac{(\zeta-c^{-1})(\omega-c^{-1})}{R_{1}^{2}}$, we obtain \eqref{eq:RNsymmetry}.
\end{proof}
\begin{proposition} \label{prop:symmetries} 
The double integral $\mathcal{I}(x,y;H)$ satisfies two symmetries.
\begin{enumerate} 
\item[\rm (a)] The following $(x,y)\mapsto (2N-x,2N-y)$ symmetry holds
\begin{align} \label{eq:Isymmetry1}
\mathcal I(2N-x,2N-y; H) & = \mathcal I(x,y; \widehat{H}), 
\end{align}
with
\begin{equation} \label{eq:hatH} 
\widehat{H}(\omega,\zeta) = H(\zeta,\omega).
\end{equation}
\item[\rm (b)] The following $(x,y)\mapsto (x,N+x-y)$ symmetry holds
\begin{align}	\label{eq:Isymmetry2}
\mathcal I(x,N+x-y; H) & = \mathcal I(x,y; \widetilde{H})
\end{align}
with
\begin{equation} \label{eq:tildeH}
\widetilde{H}(\omega,\zeta) = \frac{R_{1}^{2}H\left(\omega^{\star},\zeta^{\star}\right)}{(\omega-c^{-1})(\zeta-c^{-1})}.
\end{equation}
\end{enumerate}	
\end{proposition}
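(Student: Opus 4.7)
The two symmetries come from two different manipulations of the double integral \eqref{mathcalI}, both of which rely on identities that are direct consequences of the definitions of $W$, $q$, $\tilde{q}$, $\mathcal{R}^U$, and the star-operation.

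\textbf{Part (a):} The strategy is to rewrite the $2N$-shifted powers using a simple functional equation for $W$. A direct computation from the definitions \eqref{def of W in intro} and \eqref{def of q and qtilde} yields the key identity
\[
W(\omega)\, q(\omega,\zeta)^{2N}\, \tilde{q}(\omega,\zeta)^{2N} \, \frac{\omega^{N}}{\zeta^{N}} \;=\; W(\zeta)\, \frac{\zeta^{N}}{\omega^{N}},
\]
since the $(\omega-c)$, $(\omega-c^{-1})$, $(\omega-\alpha c)$, $(\omega-\alpha c^{-1})$ factors all cancel. Writing $q^{2N-y}\tilde{q}^{2N-x} = q^{2N}\tilde{q}^{2N}\cdot q^{-y}\tilde{q}^{-x}$ and noting that $q(\omega,\zeta)^{-1}=q(\zeta,\omega)$ and $\tilde{q}(\omega,\zeta)^{-1}=\tilde{q}(\zeta,\omega)$, the integral $\mathcal{I}(2N-x,2N-y;H)$ becomes an integral with integrand $H(\omega,\zeta) W(\zeta)\mathcal{R}^U(\omega,\zeta)\frac{\zeta^N}{\omega^N}q(\zeta,\omega)^y\tilde{q}(\zeta,\omega)^x$. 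Swapping the dummy names $\omega\leftrightarrow\zeta$ and using the symmetry $\mathcal{R}^U(\omega,\zeta)=\mathcal{R}^U(\zeta,\omega)$ from \eqref{first sym for RU} gives exactly $\mathcal{I}(x,y;\widehat{H})$.

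\textbf{Part (b):} The strategy is to exploit the star-symmetry \eqref{eq:RNsymmetry} of $\mathcal{R}^U$ by performing the change of variables $\omega\mapsto \omega^\star$, $\zeta\mapsto \zeta^\star$. Since $\mathcal{R}^U$ is independent of the choice of contour, we pick $\gamma_{\mathbb{C}} = \gamma_1$, which is invariant under the star map (with orientation reversed). Two Jacobian factors $d\omega^\star = -R_1^2/(\omega-c^{-1})^2\, d\omega$ appear, and the two orientation-reversing sign flips (one per variable) combine with the Jacobian signs to produce a clean factor $R_1^4/[(\omega-c^{-1})^2(\zeta-c^{-1})^2]$.

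The core of the proof is then a bookkeeping exercise. I would establish, by straightforward algebraic computations using $c^{-1}(c^{-1}-c)=R_1^2$ and $c^2(1-\alpha+\alpha^2)=\alpha$, the four transformation rules
\[
q(\omega^\star,\zeta^\star) \;=\; \frac{1}{q(\omega,\zeta)}, \qquad \tilde{q}(\omega^\star,\zeta^\star) \;=\; q(\omega,\zeta)\,\tilde{q}(\omega,\zeta),
\]
\[
\frac{(\omega^\star)^N}{(\zeta^\star)^N} \;=\; \frac{(\omega-c)^N(\zeta-c^{-1})^N}{(\omega-c^{-1})^N(\zeta-c)^N}, \qquad W(\omega^\star) \;=\; \frac{(\omega-c^{-1})^{4N}}{R_1^{4N}}\,W(\omega),
\]
the last being \eqref{W symmetry}. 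Combining these with \eqref{eq:RNsymmetry} collapses the integrand, with the powers of $(\omega-c^{-1})$, $(\zeta-c^{-1})$, $(\omega-c)$, $(\zeta-c)$ and $R_1$ all conspiring via the algebraic identity
\[
\frac{(\omega-c^{-1})^{N-1}(\omega-c)^N}{(\zeta-c^{-1})^{N+1}(\zeta-c)^N}\,q(\omega,\zeta)^{-N} \;=\; \frac{\omega^N}{\zeta^N\,(\omega-c^{-1})(\zeta-c^{-1})},
\]
to reproduce precisely the integrand of $\mathcal{I}(x,y;\widetilde{H})$ with $\widetilde{H}$ as given in \eqref{eq:tildeH}.

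\textbf{Main obstacle.} There is no conceptual difficulty here: both parts are direct manipulations. The only delicate points are (i) correctly tracking the orientation-reversal and the Jacobian in the star-change of variables, and (ii) verifying the algebraic identities above, which all hinge on the two relations $c^{-2}-1=R_1^2$ and $c(1-\alpha+\alpha^2)=\alpha/c$ satisfied by $c$ and $R_1$.
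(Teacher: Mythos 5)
Your proposal is correct and follows essentially the same route as the paper: part (a) rests on the same functional identity relating $W$, $q^{2N}$, $\tilde q^{2N}$ and $\omega^N/\zeta^N$ (the paper's version \eqref{lol16}), followed by swapping the variables and invoking \eqref{first sym for RU}; part (b) is the same star change of variables on $\gamma_1$, using exactly the four transformation rules for $q$, $\tilde q$, $\omega^{\star N}/\zeta^{\star N}$ and $W$ together with \eqref{eq:RNsymmetry} and the Jacobian/orientation bookkeeping. The explicit exponent-collecting identity you state is a correct (and verifiable) way to package the final simplification that the paper leaves implicit.
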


\begin{proof} (a) From \eqref{def of q and qtilde}, we verify that
\begin{align}\label{lol16}
\frac{\omega^{N}}{\zeta^{N}}q(\omega,\zeta)^{2N-y}\tilde{q}(\omega,\zeta)^{2N-x} = \frac{\zeta^{N}}{\omega^{N}} \frac{W(\zeta)}{W(\omega)}q(\zeta,\omega)^{y}\tilde{q}(\zeta,\omega)^{x}.
\end{align}
Replacing $(x,y)$ in \eqref{mathcalI} by $(2N-x,2N-y)$, and then using \eqref{lol16}, we get
\begin{align}\label{mathcalI sym1}
\mathcal{I}(2N-x,2N-y;H) = \frac{1}{(2\pi i)^{2}}\int_{\gamma_{\mathbb{C}}} \int_{\gamma_{\mathbb{C}}} W(\zeta) \mathcal{R}^{U}(\omega,\zeta) \frac{\zeta^{N}}{\omega^{N}}q(\zeta,\omega)^{y} \tilde{q}(\zeta,\omega)^{x} H(\omega, \zeta) d\zeta d\omega.
\end{align}
Recalling \eqref{first sym for RU}, the identity \eqref{eq:Isymmetry1} follows after interchanging variables in \eqref{mathcalI sym1}.	
	
(b) Note that $\gamma_{1}$ encloses both $c$ and $c^{-1}$, and does not enclose $0$, so we can (and do) deform $\gamma_{\mathbb{C}}$ to $\gamma_{1}$ in \eqref{mathcalI}. We first replace $(x,y)$ by $(x,N+x-y)$ in \eqref{mathcalI}, and then perform the change of variables $\zeta \mapsto \zeta^{\star}$ and $\omega \mapsto \omega^{\star}$. This gives
\begin{align*}
\mathcal{I}(x,N+x-y) = \frac{1}{(2\pi i)^{2}}\int_{\gamma_{1}} \int_{\gamma_{1}} W(\omega^{\star}) \mathcal{R}^{U}(\omega^{\star},\zeta^{\star}) \frac{\omega^{\star \, N}}{\zeta^{\star \, N}} q(\omega^{\star},\zeta^{\star})^{N+x-y}\tilde{q}(\omega^{\star},\zeta)^{x} H(\omega^{\star},\zeta^{\star}) d\zeta^{\star}d\omega^{\star}.
\end{align*}
It is a long but direct computation to verify that
\begin{align*}
& q(\omega^{\star},\zeta^{\star})^{N+x-y} = q(\omega,\zeta)^{y-x-N}, & & \tilde{q}(\omega^{\star},\zeta^{\star})^{x} = q(\omega,\zeta)^{x} \tilde{q}(\omega,\zeta)^{x}, \\
& \frac{\omega^{\star \, N}}{\zeta^{\star \, N}} = \frac{(\omega -c)^{N}(\zeta-c^{-1})^{N}}{(\omega -c^{-1})^{N}(\zeta-c)^{N}}, & & H(\omega^{\star},\zeta^{\star})d\zeta^{\star}d\omega^{\star} = \frac{H(\omega^{\star},\zeta^{\star})R_{1}^{4} d\zeta d \omega}{(\zeta-c^{-1})^{2}(\omega-c^{-1})^{2}}.
\end{align*}
Recalling also \eqref{eq:RNsymmetry} and \eqref{W symmetry}, \eqref{eq:Isymmetry2} follows by deforming back $\gamma_{1}$ to the original contour $\gamma_{\mathbb{C}}$ (in each variable). 
\end{proof}
We recall that $s(\xi,\eta;\alpha)$ is defined for $(\xi,\eta) \in \mathcal L_{\alpha}$ as the unique solution of  \eqref{eq:saddlepointeq} lying in the upper half-plane, and that $\mathcal{Q}$ is defined by \eqref{def of Q in statement of results}. These quantities will appear naturally in the analysis of the next sections. For now, we simply note the following symmetries for $s(\xi,\eta;\alpha)$. 
\begin{proposition} \label{prop:saddlesymmetry}
Let $(\xi,\eta) \in \mathcal L_{\alpha}$. Then also $(-\xi,-\eta) \in \mathcal L_{\alpha}$,
$(\xi,\xi-\eta) \in \mathcal L_{\alpha}$ and
\begin{align} \label{eq:saddlesymmetry1}
s(-\xi,-\eta;\alpha) & = s(\xi,\eta;\alpha) \\
s(\xi,\xi-\eta;\alpha) & =  \label{eq:saddlesymmetry2}
\left( \overline{s(\xi,\eta;\alpha)} \right)^{\star},
\end{align}	
where $\star$ denotes the star-operation defined in \eqref{star operation}.
\end{proposition}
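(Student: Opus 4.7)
The plan is as follows. For part (a), the saddle equation \eqref{eq:saddlepointeq} involves $(\xi,\eta)$ only through a squared linear combination, so it is invariant under $(\xi,\eta)\mapsto(-\xi,-\eta)$. Consequently $s(\xi,\eta;\alpha)\in\mathbb{C}^{+}$ is also a solution of \eqref{eq:saddlepointeq} for the parameters $(-\xi,-\eta)$, and the uniqueness statement of Proposition \ref{prop:saddle} immediately yields $(-\xi,-\eta)\in\mathcal{L}_{\alpha}$ and the identity \eqref{eq:saddlesymmetry1}. (This is also contained in Proposition \ref{prop: s on the Riemann surface}.)

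For part (b), let $F(\zeta;\xi,\eta)$ denote the expression inside the square brackets on the left-hand side of \eqref{eq:saddlepointeq}. The strategy is to establish two rational-function identities in $\zeta$ and then combine them with complex conjugation. Using the explicit formulas $1/\zeta^{\star} = c(\zeta-c^{-1})/(\zeta-c)$, $1/(\zeta^{\star}-c^{-1}) = (\zeta-c^{-1})/R_{1}^{2}$, and similar expressions for $1/(\zeta^{\star}-a)$ with $a\in\{c,\alpha c,\alpha c^{-1}\}$, I would verify by matching residues at the poles $0,\alpha c,\alpha c^{-1},c,c^{-1}$ and the behaviour at $\infty$ the identity
\[
F(\zeta^{\star};\xi,\xi-\eta) \;=\; -\,\frac{(\zeta-c^{-1})^{2}}{R_{1}^{2}}\,F(\zeta;\xi,\eta).
\]
For the companion identity
\[
\mathcal{Q}(\zeta^{\star}) \;=\; \frac{(\zeta-c^{-1})^{4}}{R_{1}^{4}}\,\mathcal{Q}(\zeta),
\]
I would observe that the $\star$-operation swaps $\{r_{1},r_{3}\}$, fixes $r_{2}$, swaps $\{r_{+},r_{-}\}$ (all lying on $\gamma_{1}$, see \eqref{r2 r+ r- on the circle}), permutes $\{\alpha c,\alpha c^{-1}\}$, and swaps $\{0,c\}$ and $\{c^{-1},\infty\}$, from which both sides share the same divisor. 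The multiplicative constant is pinned down by matching leading behaviour at $\zeta\to\infty$, using the elementary identities $(c^{-1}-r_{+})(c^{-1}-r_{-}) = (c^{-1}-r_{1})(c^{-1}-r_{3}) = (c^{-1}-r_{2})^{2} = R_{1}^{2}$ (which follow from the involution property $(c^{-1}-\zeta)(c^{-1}-\zeta^{\star}) = R_{1}^{2}$ together with the fact that $r_{2}\in\gamma_{1}$ is real).

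To combine these, set $s=s(\xi,\eta;\alpha)\in\mathbb{C}^{+}$. Since $F(\,\cdot\,;\xi,\eta)$ and $\mathcal{Q}$ both have real coefficients in $\zeta$ (as $\xi,\eta\in\mathbb{R}$ and $r_{-}=\overline{r_{+}}$), complex conjugation of $F(s;\xi,\eta)^{2}=\mathcal{Q}(s)$ gives $F(\overline{s};\xi,\eta)^{2}=\mathcal{Q}(\overline{s})$. Substituting $\zeta=\overline{s}$ into the two identities above and squaring the first then produces
\[
F(\overline{s}^{\star};\xi,\xi-\eta)^{2} \;=\; \frac{(\overline{s}-c^{-1})^{4}}{R_{1}^{4}}\,\mathcal{Q}(\overline{s}) \;=\; \mathcal{Q}(\overline{s}^{\star}),
\]
so $\overline{s}^{\star}$ solves the saddle equation for $(\xi,\xi-\eta)$. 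A short calculation shows $\im(\zeta^{\star}) = (c^{2}-1)\im(\zeta)/|1-c\zeta|^{2}$, so the $\star$-operation exchanges the upper and lower half-planes; in particular $\overline{s}^{\star}\in\mathbb{C}^{+}$, and Proposition \ref{prop:saddle} yields $(\xi,\xi-\eta)\in\mathcal{L}_{\alpha}$ together with $s(\xi,\xi-\eta;\alpha)=\overline{s}^{\star}$. The main nuisance will be the bookkeeping in the residue and divisor comparisons, particularly pinning down the constant in the $\mathcal{Q}$-identity, but each step reduces to an elementary manipulation of the explicit formulas.
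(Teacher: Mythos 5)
Your proof is correct and follows essentially the same route as the paper's: the same two star-symmetry identities for the saddle-equation bracket and for $\mathcal{Q}$, combined with complex conjugation, the observation that the $\star$-operation exchanges the half-planes, and the uniqueness statement of Proposition \ref{prop:saddle} (with part (a) coming from the invariance of \eqref{eq:saddlepointeq} under $(\xi,\eta)\mapsto(-\xi,-\eta)$). The only difference is that you spell out how the two identities would be verified (residue matching for the bracket, a divisor-plus-constant argument for $\mathcal{Q}$ using that $\star$ permutes its zeros and poles), which the paper simply dispatches as ``a long but direct computation.''
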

\begin{proof}
The symmetry \eqref{eq:saddlesymmetry1} is part of Proposition \ref{prop:hightemp} and has already been proved in Section \ref{section: easy proofs}. It remains to prove \eqref{eq:saddlesymmetry2}. We define the function $f$ as follows
\begin{align*}
f(\zeta;\xi,\eta) = -\frac{\xi-\eta}{2}\frac{1}{\zeta} + \frac{\xi}{2}\left( \frac{1}{\zeta-\alpha c} + \frac{1}{\zeta - \alpha c^{-1}} \right) - \frac{\eta}{2}\left( \frac{1}{\zeta -c} + \frac{1}{\zeta - c^{-1}} \right),
\end{align*}
so that \eqref{eq:saddlepointeq} can be rewritten as
\begin{align}\label{eq determining the saddles}
f(\zeta;\xi,\eta)^{2} = \mathcal{Q}(\zeta).
\end{align}
Note that both $f$ and $\mathcal{Q}$ depend on $\alpha$, even though this is not indicated in the notation. It is a long but direct computation to verify that
\begin{align}\label{sym for Q and f}
\frac{R_{1}^{4}}{(\zeta-c^{-1})^{4}}\mathcal{Q}( \zeta^{\star}) = \mathcal{Q}(\zeta), \qquad \mbox{ and } \qquad - \frac{R_{1}^{2}}{(\zeta - c^{-1})^2} f ( \zeta^{\star}; \xi,\eta ) = f(\zeta;\xi,\xi-\eta).
\end{align}
By definition of $s(\xi,\eta;\alpha)$, we have $f(s(\xi,\eta;\alpha);\xi,\eta)^{2} = \mathcal{Q}(s(\xi,\eta;\alpha))$, so the symmetry \eqref{sym for Q and f} implies that
\begin{align}\label{lol17}
f(s(\xi,\eta;\alpha)^{\star};\xi,\xi-\eta)^{2} = \mathcal{Q}(s(\xi,\eta;\alpha)^{\star}).
\end{align}
Since the star operation maps the upper half-plane to the lower half-plane, $s(\xi,\eta;\alpha)^{\star}$ lies in the lower half-plane. Therefore, applying the conjugate operation in \eqref{lol17}, and noting that $\overline{f(\zeta)} = f(\overline{\zeta})$ and $\overline{\mathcal{Q}(\zeta)} = \mathcal{Q}(\overline{\zeta})$, we infer that $(\xi,\xi-\eta) \in \mathcal L_{\alpha}$ if and only if $(\xi,\eta) \in \mathcal L_{\alpha}$, and that \eqref{eq:saddlesymmetry2} holds.
\end{proof}

\subsection{Preliminaries to the asymptotic analysis}\label{subsection: preliminaries to asymp}


\begin{proposition} \label{prop:doubleintegrallimit}
Let $\{(x_{N},y_{N}\}_{N \geq 1}$ be a sequence satisfying \eqref{good sequence} with $(\xi,\eta) \in \mathcal{L}_{\alpha}$, such that $\eta \leq \frac{\xi}{2}\leq 0$. If $(\xi,\eta)$ lies on the boundary of $\{\eta \leq \frac{\xi}{2}\leq 0\}$, then we assume furthermore that $\frac{y_{N}}{N}-1 \leq \frac{1}{2}(\frac{x_{N}}{N}-1) \leq 0$ for all sufficiently large $N$. Let $(\omega,\zeta) \mapsto H(\omega,\zeta)$ be a $2 \times 2$ meromorphic function in both $\zeta$ and $\omega$, whose only possible poles in each variable are at $0$, $\alpha c$, $\alpha c^{-1}$, $c$ and $c^{-1}$. Furthermore we assume that all the poles of $H$ are of order $1$ and that $H(\omega,\zeta)$ is bounded as $\zeta$ and/or $\omega$ tend to $\infty$. Then $\mathcal I(x_{N},y_{N}; H)$ defined in \eqref{mathcalI} has the limit
\begin{equation} \label{eq:Hintegral} 
\lim_{N \to \infty} \mathcal I(x_{N},y_{N};H) = \frac{1}{2\pi i} \int_{\overline{s}}^s H(\zeta,\zeta) d\zeta 
\end{equation}
where $s = s(\xi,\eta;\alpha)$, and the integration path is from $\overline{s}$ to $s$ and lies in $\mathbb C \setminus (-\infty,c^{-1}]$.
\end{proposition}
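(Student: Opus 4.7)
The plan is to perform a saddle point analysis of the double integral $\mathcal{I}(x_N, y_N; H)$, following the blueprint of \cite[Section 7]{CDKL}. The key preparatory step, carried out in Sections \ref{section: steepest descent for $U$}--\ref{section: phase functions}, is a Deift--Zhou steepest descent analysis of the RH problem for $U$: this produces a normalized RH problem for $R = I + O(1/N)$, and unwinding the chain of transformations $U \mapsto T \mapsto S \mapsto R$ yields explicit large-$N$ asymptotics for the entries $U_{11},U_{21}$ in terms of the $g$-function and a Szegő-type function. Combined with the Christoffel--Darboux identity
\[
\mathcal{R}^{U}(\omega,\zeta) \;=\; \frac{U_{11}(\omega) U_{21}(\zeta) - U_{21}(\omega) U_{11}(\zeta)}{\zeta - \omega},
\]
these asymptotics would reduce the integrand of $\mathcal{I}(x_N,y_N;H)$ to a sum of four pieces of the form $(\text{bounded prefactor})\cdot(\zeta-\omega)^{-1}\cdot e^{N\,\Xi(\omega,\zeta)}$, where $\Xi$ is assembled from the $g$-function, the logarithm of $W$, and the exponents $N,y_N,x_N$. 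On the two sheets of $\mathcal{R}_\alpha$ the phase $\Xi$ restricts to $\Phi$ or $\Psi$, and its critical points are precisely the solutions of \eqref{eq:saddlepointeq}; the distinguished saddles are $s=s(\xi,\eta;\alpha)\in\mathbb{C}^+$ and $\overline s$.

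Next, I would deform the $\omega$- and $\zeta$-contours $\gamma_{\mathbb{C}}$ to steepest descent paths passing through $s$ and $\overline s$, chosen so that $\re[\Xi(\omega,\zeta)]<0$ strictly away from the saddles. Under the hypothesis $\eta\le\xi/2\le 0$, Proposition \ref{prop:hightemp} (parts \ref{item a in prop mapping s}--\ref{item b in prop mapping s}) locates $s$ in $\overline{\Sigma_1}\cap\mathbb{C}^+$, and the topological analysis of the level set $\mathcal{N}_\Phi=\{\zeta\in\mathbb{C}:\re\Phi(\zeta)=\re\Phi(s)\}$ from Section \ref{section: phase functions} is expected to furnish a deformation that avoids the branch cut $(-\infty,c^{-1}]$. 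On these deformed contours three of the four pieces decay as $O(e^{-cN})$ and may be discarded. The fourth exhibits an apparent pole at $\omega=\zeta$, introduced by the leading-order asymptotic substitution (the true $\mathcal{R}^{U}$ is polynomial, but its leading asymptotic form is not), and the contribution of this piece will reduce to a residue along the diagonal.

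The residue extraction is the next step: evaluating the collected prefactor at $\omega=\zeta$ and invoking the equilibrium identities that the $g$-function is engineered to satisfy on the support of the equilibrium measure, the whole expression should collapse to the factor $H(\zeta,\zeta)$. The resulting one-dimensional integral will then run along the portion of the deformed contour joining $\overline s$ to $s$, yielding exactly $\frac{1}{2\pi i}\int_{\overline s}^s H(\zeta,\zeta)\,d\zeta$. Proposition \ref{prop:saddlesymmetry} guarantees that this path can be chosen in $\mathbb{C}\setminus(-\infty,c^{-1}]$, consistently with the branch structure fixed by the RH analysis.

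The hard part will be verifying that the contour deformation really works globally, i.e., that the signs of $\re[\Phi(\omega)-\Phi(\zeta)]$ and $\re[\Psi(\omega)-\Psi(\zeta)]$ are correct on each of the four pieces along the \emph{entire} deformed contour; this is precisely the purpose of the topological analysis of $\mathcal{N}_\Phi$ carried out in Section \ref{section: phase functions}. A secondary technical point concerns the boundary of $\{\eta\le\xi/2\le 0\}$, where $s$ can approach $\Sigma_1$ itself; the refined hypothesis $\frac{y_N}{N}-1\le\frac{1}{2}(\frac{x_N}{N}-1)\le 0$ for large $N$ ensures that the ratios stay on the correct side of the critical ray, so that the chosen deformation remains admissible in the limit. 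Finally, the simple poles of $H$ at $0,\alpha c,\alpha c^{-1},c,c^{-1}$ lie at a positive distance from $\{s,\overline s\}$ throughout $\mathcal{L}_\alpha$, so they only affect the bounded prefactor and play no role in the leading asymptotics.
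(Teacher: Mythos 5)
Your overall strategy (steepest descent through $s$ and $\overline s$, with the main term arising as a residue along the diagonal $\omega=\zeta$) is the right one, but the way you extract that main term differs from the paper and hides real work. The paper never substitutes leading asymptotics of $U_{11},U_{21}$ into the Christoffel--Darboux formula. Instead it uses the jump relation \eqref{jump relations of U} to rewrite $W(\omega)\begin{pmatrix}0&1\end{pmatrix}U^{-1}(\omega)$ exactly as a difference of boundary values, producing the exact splitting \eqref{mathcalI splitting}; the ``$-$'' piece vanishes identically after pushing the $\omega$-contour to infinity, and the residue at $\omega=\zeta$ picked up when deforming the remaining piece is an exact identity (Proposition \ref{prop:deformationhigh}), so the term $\frac{1}{2\pi i}\int_{\overline s}^{s}H(\zeta,\zeta)\,d\zeta$ appears with no error; the only estimate needed afterwards is boundedness of $T^{\pm 1}$ (Proposition \ref{prop:TandTinvsmall}), not explicit asymptotics of the polynomials. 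Your version, by contrast, creates a spurious pole at $\omega=\zeta$ from the asymptotic substitution and must then control the $O(1/N)$ errors divided by $\zeta-\omega$ near the points where the two deformed contours actually touch (namely $s$ and $\overline s$), and verify that the ``equilibrium identities'' collapse the prefactor to exactly $H(\zeta,\zeta)$ at leading order. This can presumably be made to work, but as written it is an outline of a harder argument, not a proof. You also misquote the geometry: for $\eta\le\frac{\xi}{2}\le 0$ generic, Lemma \ref{lem:Ldivision}(a) places $s$ strictly inside $\gamma_1$ with $\Phi'(s)=0$; Proposition \ref{prop:hightemp}\ref{item a in prop mapping s} puts $s$ on $\overline{\Sigma_1}$ only when $\xi=0$, and \ref{item b in prop mapping s} puts it on $\gamma_1\setminus\Sigma_1$ (not $\Sigma_1$) when $\eta=\frac{\xi}{2}$.

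The genuine gap is your treatment of the boundary cases, which are precisely where the standard deformation fails. When $\xi=0$ (and $\eta<0$) one has $s\in\Sigma_1$ and $\re\Phi$ is \emph{constant} along all of $\Sigma_1$, so there is no contour through $s$ on which the phase is strictly decaying along the arcs that must be used; the paper resolves this by invoking the second phase $\Psi$ (with $\Phi_\pm=\Psi_\mp$ on $\Sigma_1$), splitting the integrand via the jump relation for $T$ on $\Sigma_1$ as in \eqref{lol30}--\eqref{lol31}, and deforming the resulting pieces to opposite sides of $\Sigma_1$ where $\re\Phi$ and $\re\Psi$ have the needed signs. When $(\xi,\eta)=(0,0)$ the saddles coincide with the Airy endpoints $r_\pm$, so the local bounds $T^{\pm1}=\bigO(N^{1/6})$ must be fed into the local estimate, yielding an $\bigO(N^{-1/6})$ error. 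Your proposal attributes the boundary difficulty solely to the sequence hypothesis $\frac{y_N}{N}-1\le\frac12(\frac{x_N}{N}-1)\le 0$ ``keeping the ratios on the correct side of the critical ray''; that hypothesis only guarantees $(\xi_N,\eta_N)$ stays in the closed quadrant so the constructed contours exist for each $N$, and it does nothing to remove the degeneracy of $\re\Phi$ on $\Sigma_1$ or the collision of $s$ with $r_+$. Without an argument of the type just described, your deformation step fails exactly in these cases, so the proof is incomplete there.
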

The proof of Proposition \ref{prop:doubleintegrallimit} will be given in Section \ref{section: saddle point analysis}, after considerable preparations have been carried out in Sections \ref{section: g-function}-\ref{section: phase functions}.

\vspace{0.2cm}Proposition \ref{prop:doubleintegrallimit} only covers the lower left quadrant $\eta \leq \frac{\xi}{2}\leq 0$ of the liquid region. The next lemma shows that this is sufficient.
\begin{lemma}\label{lemma:from one quadrant to the four}
Assume Proposition \ref{prop:doubleintegrallimit} holds true. Then the statement of Proposition \ref{prop:doubleintegrallimit} still holds without the assumption that $\eta \leq \frac{\xi}{2}\leq 0$, and without the assumption that $\frac{y_{N}}{N}-1 \leq \frac{1}{2}(\frac{x_{N}}{N}-1) \leq 0$ for all sufficiently large $N$. That is, it holds for all $(\xi,\eta) \in \mathcal{L}_{\alpha}$.
\end{lemma}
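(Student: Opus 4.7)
The plan is to use the two symmetries of $\mathcal{I}$ established in Proposition \ref{prop:symmetries}, together with the saddle symmetries of Proposition \ref{prop:saddlesymmetry}, to reduce every $(\xi,\eta) \in \mathcal{L}_{\alpha}$ to a point lying in the closed lower left quadrant $\{\eta \leq \xi/2 \leq 0\}$. The transformations $(x,y) \mapsto (2N-x,2N-y)$ and $(x,y) \mapsto (x, N+x-y)$, together with their composition, generate a $(\mathbb{Z}/2)^{2}$-action that permutes the four closed quadrants cut out of $\mathcal{H}$ by the lines $\xi = 0$ and $\eta = \xi/2$, and acts transitively on them. My first step would be to partition $\mathbb{N}$ into four (possibly empty) subsets $S_{1},\ldots,S_{4}$ according to which closed quadrant contains the rescaled point $(x_{N}/N - 1, y_{N}/N - 1)$, and to treat each nonempty subsequence separately.

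On each such subsequence I would apply the element of the symmetry group that maps its quadrant to the lower left one, obtaining a transformed index sequence $(x'_{N}, y'_{N})$ and a transformed test function $H' \in \{H,\widehat{H},\widetilde{H},\widetilde{\widehat{H}}\}$. By construction the new sequence lies in the closed lower left quadrant, so Proposition \ref{prop:doubleintegrallimit} applies and yields a subsequential limit of the shape $\frac{1}{2\pi i}\int_{\overline{s'}}^{s'} H'(\zeta,\zeta)\,d\zeta$ with $s' = s(\xi',\eta';\alpha)$. It then remains to check that each of these four integrals coincides with $\frac{1}{2\pi i}\int_{\overline{s}}^{s} H(\zeta,\zeta)\,d\zeta$. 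For the $(2N-x,2N-y)$-symmetry this is immediate because $\widehat{H}(\zeta,\zeta) = H(\zeta,\zeta)$ and $s(-\xi,-\eta;\alpha) = s$. For the $(x,N+x-y)$-symmetry, using $s(\xi,\xi-\eta;\alpha) = (\overline{s})^{\star}$, so that $\overline{s'} = s^{\star}$ and $s' = (\overline{s})^{\star}$, the change of variables $u = \zeta^{\star}$ (with $du = -R_{1}^{2}/(\zeta-c^{-1})^{2}\,d\zeta$) transforms $\int_{s^{\star}}^{(\overline{s})^{\star}} \widetilde{H}(\zeta,\zeta)\,d\zeta$ into $\int_{\overline{s}}^{s} H(u,u)\,du$; the composition case is handled analogously. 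Since all subsequential limits agree, the full sequence converges to the same value.

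Two technical points must be addressed. First, the transformed functions $\widehat{H}$ and $\widetilde{H}$ must satisfy the hypotheses required of the input to Proposition \ref{prop:doubleintegrallimit}, namely being meromorphic in each variable with at most simple poles at $\{0, \alpha c, \alpha c^{-1}, c, c^{-1}\}$ and bounded at infinity. This is trivial for $\widehat{H}$. For $\widetilde{H}$, I would compute the action of the star map on the extended set $\{0, \alpha c, \alpha c^{-1}, c, c^{-1}, \infty\}$, obtaining the pairing $0 \leftrightarrow c$, $\alpha c \leftrightarrow \alpha c^{-1}$, $c^{-1} \leftrightarrow \infty$, and then check that the prefactor $R_{1}^{2}/((\omega - c^{-1})(\zeta - c^{-1}))$ exactly cancels the apparent pole of $H(\omega^{\star},\zeta^{\star})$ at $\omega = \infty$ (coming from the simple pole of $H$ at $c^{-1}$) while introducing the required simple pole at $\omega = c^{-1}$. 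Second, the path deformation used in the change-of-variables argument must be legitimate: since the star map is a M\"obius involution preserving $\mathbb{C} \setminus (-\infty, c^{-1}]$ (as a direct check on the real axis shows), any path from $\overline{s}$ to $s$ avoiding this ray is mapped to another such path, so the target integral is unambiguous. The main mildly nontrivial step is the bookkeeping around the star operation in the change-of-variables computation; the rest is routine application of symmetries already in hand.
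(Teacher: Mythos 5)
Your proposal is correct and follows essentially the same route as the paper: reduce to the closed lower-left quadrant via the symmetries of Proposition \ref{prop:symmetries}, identify the limiting integrals using Proposition \ref{prop:saddlesymmetry} together with $\widehat{H}(\zeta,\zeta)=H(\zeta,\zeta)$ and the star change of variables, and handle the boundary lines $\xi=0$, $\eta=\xi/2$ by splitting into subsequences whose transformed points satisfy the discrete inequality, all subsequential limits agreeing. Your explicit verifications that $\widetilde{H}$ meets the hypotheses of Proposition \ref{prop:doubleintegrallimit} and that the star map preserves $\mathbb{C}\setminus(-\infty,c^{-1}]$ are correct details the paper leaves implicit.
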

\begin{proof}
If $\{(x_{N},y_{N})\}_{N \geq 1}$ is a sequence satisfying \eqref{good sequence} with $(\xi,\eta) \in \mathcal{L}_{\alpha} \cap \{\eta > \frac{\xi}{2}> 0\}$, then $\{(2N-x_{N},2N-y_{N})\}_{N \geq 1}$ satisfies \eqref{good sequence} with $(-\xi,-\eta)$ lying in the lower left quandrant of $\mathcal{L}_{\alpha}$. Therefore, Proposition \ref{prop:doubleintegrallimit} applies to the sequence $\{(2N-x_{N},2N-y_{N})\}_{N \geq 1}$, and we rely on the symmetries \eqref{eq:Isymmetry1} and \eqref{eq:saddlesymmetry1} to conclude
\begin{multline}\label{lol20}
\lim_{N \to \infty} \mathcal I(x_{N},y_{N};H) = \lim_{N \to \infty} \mathcal I(2N-x_{N},2N-y_{N};\widehat{H}) \\ = \frac{1}{2\pi i} \int_{\overline{s(-\xi,-\eta;\alpha)}}^{s(-\xi,-\eta;\alpha)} \widehat{H}(\zeta,\zeta) d\zeta = \frac{1}{2\pi i} \int_{\overline{s(\xi,\eta;\alpha)}}^{s(\xi,\eta;\alpha)} H(\zeta,\zeta) d\zeta,
\end{multline}
where we have also used \eqref{eq:hatH} for the last equality. Now, if $\{(x_{N},y_{N})\}_{N \geq 1}$ is a sequence satisfying \eqref{good sequence} with $(\xi,\eta) \in \mathcal{L}_{\alpha} \cap \{\eta > \frac{\xi}{2} < 0\}$, then $\{(x_{N},N+x_{N}-y_{N})\}_{N \geq 1}$ satisfies \eqref{good sequence} with $(\xi,\xi-\eta)$ lying in the lower left quandrant of $\mathcal{L}_{\alpha}$, so that Proposition \ref{prop:doubleintegrallimit} applies. Using the symmetries \eqref{eq:Isymmetry2} and \eqref{eq:saddlesymmetry2}, we arrive at
\begin{multline}\label{lol21}
\lim_{N \to \infty} \mathcal I(x_{N},y_{N};H) = \lim_{N \to \infty} \mathcal I(x_{N},N+x_{N}-y_{N};\widetilde{H}) = \frac{1}{2\pi i} \int_{\overline{s(\xi,\xi-\eta;\alpha)}}^{s(\xi,\xi-\eta;\alpha)} \widetilde{H}(\zeta,\zeta) d\zeta \\ = \frac{1}{2\pi i} \int_{s(\xi,\eta;\alpha)^{*}}^{\overline{s(\xi,\eta;\alpha)}^{*}} \frac{R_{1}^{2}H\left(\zeta^{\star},\zeta^{\star}\right)}{(\zeta-c^{-1})^{2}}d\zeta = \frac{1}{2\pi i} \int_{\overline{s(\xi,\eta;\alpha)}}^{s(\xi,\eta;\alpha)} H(\zeta,\zeta) d\zeta,
\end{multline}
where, for the last equality, we have applied the change of variables $\zeta \to \zeta^{*}$ stated in \eqref{star operation}. The claim for the last quadrant $(\xi,\eta) \in \mathcal{L}_{\alpha} \cap \{\eta < \frac{\xi}{2}> 0\}$ follows by combining \eqref{lol20} with \eqref{lol21}. Finally, if $\{(x_{N},y_{N})\}_{N \geq 1}$ satisfies \eqref{good sequence} with $\eta = \frac{\xi}{2}$ and/or $\xi=0$, then we define a new sequence $\{(\tilde{x}_{N},\tilde{y}_{N})\}_{N \geq 1}$ as follows. For each $N$, $(\tilde{x}_{N},\tilde{y}_{N})$ is equal to
\begin{align}\label{different possibilities}
(x_{N},y_{N}), \quad (2N-x_{N},2N-y_{N}), \quad (x_{N},N+x_{N}-y_{N}) \quad \mbox{ or } \quad (2N-x_{N},N-x_{N}+y_{N}),
\end{align}
in such a way that $\frac{\tilde{y}_{N}}{N}-1 \leq \frac{1}{2}(\frac{\tilde{x}_{N}}{N}-1) \leq 0$. There are four natural subsequences of $\{(\tilde{x}_{N},\tilde{y}_{N})\}_{N\geq 1}$, corresponding to the four sets of indices
\begin{align*}
& A_{1} = \{N:(\tilde{x}_{N},\tilde{y}_{N}) = (x_{N},y_{N})\}, & & A_{2}= \{N:(\tilde{x}_{N},\tilde{y}_{N}) = (2N-x_{N},2N-y_{N})\}, \\
& A_{3} = \{N: (\tilde{x}_{N},\tilde{y}_{N}) = (x_{N},N+x_{N}-y_{N})\}, & & A_{4}= \{N:(\tilde{x}_{N},\tilde{y}_{N}) = (2N-x_{N},N-x_{N}+y_{N})\}. 
\end{align*}
If any of the four subsequences $\{(\tilde{x}_{N},\tilde{y}_{N})\}_{N\geq 1, N \in A_{j}}$, $j=1,2,3,4$ contains infinitely many elements, Proposition \ref{prop:doubleintegrallimit} applies and by \eqref{eq:Hintegral}, \eqref{lol20} and \eqref{lol21}, we have
\begin{align}\label{subsequence limit}
\lim_{\substack{N \to \infty \\ N \in A_{j}}} \mathcal I(x_{N},y_{N};H) = \lim_{\substack{N \to \infty \\ N \in A_{j}}} \mathcal I(\tilde{x}_{N},\tilde{y}_{N};\widetilde{H}_{j}) = \frac{1}{2\pi i} \int_{\overline{s(\xi,\eta;\alpha)}}^{s(\xi,\eta;\alpha)} H(\zeta,\zeta) d\zeta,
\end{align}
where $H_{j}(\omega,\zeta)$, $j=1,2,3,4$ are equal to $H(\omega,\zeta),\widehat{H}(\omega,\zeta),\widetilde{H}(\omega,\zeta)$ and $\widetilde{H}(\zeta,\omega)$, respectively. Since the right-hand-side of \eqref{subsequence limit} is independent of $j$, this shows that 
\begin{align*}
\lim_{N \to \infty} \mathcal I(x_{N},y_{N};H) =\frac{1}{2\pi i} \int_{\overline{s(\xi,\eta;\alpha)}}^{s(\xi,\eta;\alpha)} H(\zeta,\zeta) d\zeta,
\end{align*}
which finishes the proof.
\end{proof}
\begin{proposition}
Proposition \ref{prop:doubleintegrallimit} implies Theorem \ref{thm:main}.
\end{proposition}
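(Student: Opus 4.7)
The plan is to combine Proposition \ref{prop:doubleintegrallimit}, extended to the whole liquid region by Lemma \ref{lemma:from one quadrant to the four}, with the explicit double contour representations \eqref{P1 double contour}--\eqref{P3 double contour}. By that lemma, we may apply \eqref{eq:Hintegral} with $H=H_1,H_2,H_3$ defined in \eqref{def of H1}--\eqref{def of H3}, which immediately gives
\begin{align*}
& \lim_{N \to \infty} P_{j}(x_{N},y_{N}) = \frac{1}{2\pi i}\int_{\overline s}^{s} H_{j}(\zeta,\zeta)\,d\zeta, \qquad j = 1,2,\\
& \lim_{N \to \infty} P_{3}(x_{N},y_{N}) = \begin{pmatrix} 1 & 1 \\ 1 & 1 \end{pmatrix} - \frac{1}{2\pi i}\int_{\overline s}^{s} H_{3}(\zeta,\zeta)\,d\zeta,
\end{align*}
where the contour goes from $\overline s$ to $s$ in $\mathbb{C}\setminus (-\infty,c^{-1}]$.

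The next step is to evaluate the diagonal restrictions $H_j(\zeta,\zeta)$. The key observation is that on $\omega=\zeta$ the factors $(\omega-c)(\omega-c^{-1})$ in the numerator cancel with the corresponding factors in the denominator, and likewise for the $(\omega-\alpha c)$ and $(\omega-\alpha c^{-1})$ factors, so each entry of $H_j(\zeta,\zeta)$ reduces to a rational function with simple poles at a subset of $\{0,\alpha c, \alpha c^{-1}, c, c^{-1}\}$. A partial fraction decomposition (using the basic identity $c^2(1-\alpha+\alpha^2)=\alpha$, together with $c^{-1}-\alpha c = c^{-1}(1-\alpha)/(1-\alpha+\alpha^2)$ and $c-\alpha c^{-1} = c\alpha(1-\alpha)$) then expresses every entry as a $\mathbb Z$-linear combination of the elementary rational functions $\tfrac{1}{\zeta-a}$ with $a\in\{0,\alpha c,\alpha c^{-1},c,c^{-1}\}$.

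It then remains to evaluate $\frac{1}{2\pi i}\int_{\overline s}^{s}\frac{d\zeta}{\zeta-a}$ for each such $a$. Since every $a\in\{0,\alpha c,\alpha c^{-1},c,c^{-1}\}$ satisfies $a\leq c^{-1}$, the path from $\overline s$ to $s$ in $\mathbb C\setminus(-\infty,c^{-1}]$ lies in the domain of analyticity of the principal branch of $\log(\zeta-a)$; the value of the integral is therefore simply $\frac{1}{2\pi i}\bigl(\log(s-a)-\log(\overline s-a)\bigr)$, which using $\arg(\overline s-a)=-\arg(s-a)$ (since $a$ is real and $a<\re s$ when the path exits the upper half-plane to the right of $c^{-1}$) gives
\[
\frac{1}{2\pi i}\int_{\overline s}^{s}\frac{d\zeta}{\zeta-a} \;=\; \frac{\arg(s-a)}{\pi}.
\]

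Combining these computations entry by entry produces exactly the matrices in \eqref{P1 limit main result}--\eqref{P3 limit main result}. For instance, for $P_1$ one finds $H_1(\zeta,\zeta)_{11}=\frac{1}{\zeta-\alpha c}-\frac{1}{\zeta}$, giving $\frac{1}{\pi}(\arg(s-\alpha c)-\arg s)$, while $H_1(\zeta,\zeta)_{12}=\frac{1}{\zeta-\alpha c^{-1}}$ and so on; for $P_3$ the constant matrix $\begin{pmatrix}1&1\\1&1\end{pmatrix}$ is precisely what turns entries like $\frac{1}{\pi}(\arg(s-c^{-1})-\arg s)$ into $\frac{1}{\pi}(\pi-\arg(s-c^{-1})+\arg s)$. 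The equivalence of the stated angular form with the Figure \ref{fig: the four triangles} formulation is elementary geometry of the triangles whose vertices are $s$, $\overline s$, and the collinear points $0,\alpha c,\alpha c^{-1},c,c^{-1}$ on the real axis. There is no real obstacle here beyond careful bookkeeping; the computations are routine partial fractions and the only subtlety is verifying that the integration path indeed yields the principal branch value, which follows directly from the assumption that the path avoids $(-\infty,c^{-1}]$.
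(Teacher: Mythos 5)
Your proposal is correct and follows essentially the same route as the paper: invoke Proposition \ref{prop:doubleintegrallimit} (extended to all of $\mathcal{L}_{\alpha}$ via Lemma \ref{lemma:from one quadrant to the four}) with $H=H_{1},H_{2},H_{3}$ from \eqref{P1 double contour}--\eqref{P3 double contour}, reduce $H_{j}(\zeta,\zeta)$ to sums of $\frac{1}{\zeta-a}$ with $a\in\{0,\alpha c,\alpha c^{-1},c,c^{-1}\}$, and integrate from $\overline{s}$ to $s$ in $\mathbb{C}\setminus(-\infty,c^{-1}]$ to obtain $\frac{1}{\pi}\arg(s-a)$ for each term, exactly as in the paper's computation (your diagonal restrictions agree with \eqref{lol18} and the displayed $H_{1},H_{2}$ there). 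The only cosmetic point is that $\arg(\overline{s}-a)=-\arg(s-a)$ needs no condition like $a<\re s$; it holds simply because $s\in\mathbb{C}^{+}$ makes $s-a$ non-real and the path stays in the plane slit along $(-\infty,c^{-1}]\supset(-\infty,a]$.
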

\begin{proof}
By \eqref{P1 double contour}--\eqref{P3 double contour} and \eqref{mathcalI}, for $x \in \{1,2,\ldots, 2N-1\}$ and $y \in \mathbb Z$, we can write
\begin{align*}
& \mathcal{P}_{j}(x,y) = \mathcal{I}(x,y;H_{j}), \qquad j=1,2,  \\
& \mathcal{P}_{3}(x,y) = \begin{pmatrix}
1 & 1 \\ 1 & 1
\end{pmatrix} - \mathcal{I}(x,y;H_{3}),
\end{align*}
where the functions $H_{j}$, $j=1,2,3$, are defined in \eqref{def of H1}--\eqref{def of H3}. Let $\{(x_{N},y_{N}\}_{N \geq 1}$ be a sequence satisfying \eqref{good sequence} with $(\xi,\eta) \in \mathcal{L}_{\alpha}$. By Lemma \ref{lemma:from one quadrant to the four}, we do not need to assume $\eta \leq \frac{\xi}{2}\leq 0$ to invoke Proposition \ref{prop:doubleintegrallimit}. Applying Proposition \ref{prop:doubleintegrallimit} with $H = H_{3}$, we obtain
\begin{align}\label{lol19}
\lim_{N \to \infty} P_{3}(x_{N},y_{N}) = \begin{pmatrix}
1 & 1 \\ 1 & 1
\end{pmatrix} - \frac{1}{2\pi i} \int_{\overline{s}}^s H_{3}(\zeta,\zeta) d\zeta.
\end{align}
From \eqref{def of H3}, we see that 
\begin{align}\label{lol18}
H_{3}(\zeta,\zeta) = \begin{pmatrix}
\frac{1}{\zeta - c^{-1}} - \frac{1}{\zeta} & \frac{1}{\zeta - c^{-1}} \\
\frac{1}{\zeta -c} & \frac{1}{\zeta - c} - \frac{1}{\zeta}
\end{pmatrix},
\end{align}
and since the path going from $\overline{s}$ to $s$ does not cross $(-\infty,c^{-1}]$, we get \eqref{P3 limit main result} after substituting \eqref{lol18} in \eqref{lol19} and carrying out the integration. Similarly, using \eqref{def of H1}--\eqref{def of H2}, we have
\begin{align*}
& H_{1}(\zeta,\zeta) = \begin{pmatrix}
\frac{1}{\zeta - \alpha c} - \frac{1}{\zeta} & \frac{1}{\zeta - \alpha c^{-1}} \\
\frac{1}{\zeta - \alpha c} & \frac{1}{\zeta - \alpha c^{-1}} - \frac{1}{\zeta}
\end{pmatrix} \quad \mbox{ and } \quad H_{2}(\zeta,\zeta) = \begin{pmatrix}
\frac{1}{\zeta - c^{-1}}-\frac{1}{\zeta - \alpha c} & \frac{1}{\zeta - c^{-1}} - \frac{1}{\zeta - \alpha c^{-1}} \\
\frac{1}{\zeta -c} - \frac{1}{\zeta -\alpha c} & \frac{1}{\zeta -c} - \frac{1}{\zeta - \alpha c^{-1}}
\end{pmatrix},
\end{align*}
and we obtain \eqref{P1 limit main result}--\eqref{P2 limit main result} after applying Proposition \ref{prop:doubleintegrallimit} with $H = H_{1}$ and $H=H_{2}$, respectively.

\end{proof}

\section{$g$-function}\label{section: g-function}
In Section \ref{section: steepest descent for $U$}, we will perform a Deift/Zhou \cite{DZ} steepest descent analysis on the RH problem for $U$. The first transformation $U \mapsto T$ consists of normalizing the RH problem and requires considerable preparation. This transformation uses a so-called $g$-function \cite{Deift}, which is of the form
\begin{align}\label{def of g}
g(\zeta) = \int_{\supp(\mu)}\log(\zeta - \xi) d\mu(\xi),
\end{align}
where $\mu$ is a probability measure, $d\mu$ is its density, and $\supp \mu$ is its (bounded and oriented) support. For any choice of $\mu$, the $g$-function satisfies
\begin{align*}
& g(\zeta) = \log(\zeta) + \bigO(\zeta^{-1}), & & \mbox{as } \zeta \to \infty,
\end{align*} 
so that $U(\zeta)e^{-2Ng(\zeta)\sigma_{3}}$ is normalized at $\infty$ (with $\sigma_{3} = \diag(1,-1)$), in the sense that $U(\zeta)e^{-2Ng(\zeta)\sigma_{3}} = I_{2} + \bigO(\zeta^{-1})$ as $\zeta \to \infty$. Also, we note that in the definition of $U$, the contour $\gamma_{\mathbb{C}}$ can be chosen arbitrarily, as long as it is a closed curve surrounding $c^{-1}$ and $c$ once in the positive direction, which does not surround $0$. However, in order to successfully perform an asymptotic analysis on the RH problem for $U$, we need to choose $\mu$ and $\gamma_{\mathbb{C}}$ appropriately so that the jumps for $T$ have ``good properties".

\vspace{0.2cm}In this section, we find the key ingredients for the $Y \mapsto T$ transformation of Section \ref{section: steepest descent for $U$}, that is, we find a $g$-function (built in terms of $\mu$) and a relevant contour $\gamma_{\mathbb{C}}$. Let us rewrite $W$ as follows
\begin{equation}
W(\zeta) = \bigg( \frac{(\zeta-\alpha c)(\zeta-\alpha c^{-1})}{\zeta (\zeta-c)(\zeta-c^{-1})} \bigg)^{2N} = e^{-2NV(\zeta)},
\end{equation}
where the potential $V$ is given by
\begin{equation}\label{def potential}
V(\zeta) = \log \zeta + \log (\zeta - c) + \log (\zeta - c^{-1}) - \log (\zeta - \alpha c) - \log (\zeta - \alpha c^{-1})
\end{equation}
and we take the principal branch for the logarithms. We require $g$ and $\gamma_{\mathbb{C}}$ to satisfy the following criteria (we define $\supp (\mu)$ as an oriented open set for convenience):
\begin{enumerate}[label=(\alph*),leftmargin=0.6cm]
\item \label{item a} $\gamma_{\mathbb{C}}$ is a closed curve surrounding $c^{-1}$ and $c$ once in the positive direction, but not surrounding $0$.
\item \label{item b} $e^{g}$ is analytic in $\mathbb{C}\setminus \overline{\supp (\mu)}$, where $\supp (\mu)$ is an open oriented curve satisfying $\supp (\mu) \subset \gamma_{\mathbb{C}}$.
\item \label{item c} For $\zeta \in \supp(\mu)$, we let $g_{+}(\zeta)$ (resp. $g_{-}(\zeta)$) denote the limit of $g(\zeta')$ as $\zeta' \to \zeta$ from the left (resp. right) of $\supp(\mu)$. Here ``left" and ``right" are with respect to the orientation of $\supp(\mu)$. The $g$-function \eqref{def of g} satisfies
\begin{align}
& g_{+}(\zeta) + g_{-}(\zeta) - V(\zeta) + \ell = 0, & & \mbox{for } \zeta \in \supp (\mu), \label{jump relation for g on the support} \\
& \re \big( g_{+}(\zeta) + g_{-}(\zeta) - V(\zeta) + \ell \big) < 0, & & \mbox{for } \zeta \in \gamma_{\mathbb{C}}\setminus \overline{\supp (\mu)}, \label{real part for g negative on gamma setminus S} \\
& \im \big( g_{+}(\zeta) - \tfrac{V(\zeta)}{2} + \tfrac{\ell}{2}\big), & & \mbox{is decreasing along }\supp (\mu), \label{real part for g positive in neighborhood of the support}
\end{align}
for some constant $\ell \in \mathbb{C}$, and where $V$ is given by \eqref{def potential}.
\end{enumerate}
In approximation theory, the equality \eqref{jump relation for g on the support} together with the inequality \eqref{real part for g negative on gamma setminus S} are usually refered to as the Euler-Lagrange variational conditions \cite{ST}, and $\ell$ is the Euler-Lagrange constant. A measure $\mu$ satisfying \eqref{jump relation for g on the support}--\eqref{real part for g negative on gamma setminus S} is called the equilibrium measure \cite{ST} in the external field $V$, because it is the unique minimizer of
\begin{align*}
\tilde{\mu} \mapsto \iint \log \frac{1}{|s-t|}d\tilde{\mu}(s)d\tilde{\mu}(t) + \re \int V(s) d\tilde{\mu}(s)
\end{align*}
among all probability measures $\tilde{\mu}$ supported on $\supp(\mu)$. Here we require in addition that \eqref{real part for g positive in neighborhood of the support} is satisfied. This extra-condition characterizes $\supp(\mu)$ as a so-called $S$-curve \cite{Stahl,GR,Rak,MFR,KS,MFR2}.

\subsection{Definition of $\mathcal{Q}$ and related computations}
By taking the derivative in \eqref{jump relation for g on the support}, we have
\begin{align}\label{lol5}
g_{+}'(\zeta) + g_{-}'(\zeta) - V'(\zeta) = 0, \qquad \zeta \in \supp(\mu),
\end{align}
and by condition (b), $g'$ is analytic in $\mathbb{C}\setminus \overline{\supp(\mu)}$. Therefore, the function
\begin{align}\label{def of Q}
\mathcal{Q}(\zeta) := \left( g^{\prime}(\zeta) - \frac{V^{\prime}(\zeta)}{2} \right)^{2}
\end{align}
is meromorphic on $\mathbb{C}$. By \eqref{def potential}, we get
\begin{align}\label{def of V'}
V'(\zeta) = \frac{1}{\zeta} + \frac{1}{\zeta-c^{-1}} + \frac{1}{\zeta-c} - \frac{1}{\zeta- \alpha c^{-1}} - \frac{1}{\zeta - \alpha c},
\end{align}
from which we conclude that $\mathcal{Q}$ has a double zero at $\infty$, and double poles at $0$, $\alpha c$, $\alpha c^{-1}$, $c$ and $c^{-1}$. Since a meromorphic function on the Riemann sphere (genus $0$) has as many poles as zeros, $\mathcal{Q}$ has eight other zeros. As $\zeta \to \infty$, we have $g'(\zeta) = \zeta^{-1} + \bigO(\zeta^{-2})$, from which we get $\mathcal{Q}(\zeta) = 2^{-2} \zeta^{-2} + \bigO(\zeta^{-3})$. Therefore, $\mathcal{Q}$ can be written in the form
\begin{equation}\label{def xi prime without the cut yet}
\mathcal{Q}(\zeta) =  \frac{\Pi(\zeta)}{4\zeta^{2} (\zeta-\alpha c)^{2} (\zeta-\alpha c^{-1})^{2} (\zeta-c)^{2} (\zeta-c^{-1})^{2}},
\end{equation}
where $\Pi$ is a monic polynomial of degree $8$ which remains to be determined. If we assume that $g'(\zeta)$ remains bounded for $\zeta \in \mathbb{C}$, then we can deduce from \eqref{def of Q} and \eqref{def of V'} the leading order term for $\mathcal{Q}(\zeta)$ as $\zeta \to \zeta_{\star} \in \{0,\alpha c, \alpha c^{-1}, c, c^{-1}\}$:
\begin{align}
& \mathcal{Q}(\zeta) = 2^{-2}\zeta^{-2} + \bigO(\zeta^{-1}), & & \mbox{ as } \zeta \to 0, \label{Q 0} \\
& \mathcal{Q}(\zeta) = 2^{-2}(\zeta - \alpha c)^{-2} + \bigO((\zeta - \alpha c)^{-1}), & & \mbox{ as } \zeta \to \alpha c, \label{Q alpha} \\
& \mathcal{Q}(\zeta) = 2^{-2}(\zeta - \alpha c^{-1})^{-2} + \bigO((\zeta - \alpha c^{-1})^{-1}), & & \mbox{ as } \zeta \to \alpha c^{-1}, \label{Q beta alpha} \\
& \mathcal{Q}(\zeta) = 2^{-2}(\zeta - c)^{-2} + \bigO((\zeta - c)^{-1}), & & \mbox{ as } \zeta \to c, \label{Q 0 2} \\
& \mathcal{Q}(\zeta) = 2^{-2}(\zeta - c^{-1})^{-2} + \bigO((\zeta - c^{-1})^{-1}), & & \mbox{ as } \zeta \to c^{-1}. \label{Q 0 1}
\end{align}
By combining these asymptotics with \eqref{def xi prime without the cut yet}, we get
\begin{align}
& \Pi(0) = \alpha^{4}, & & \Pi(\alpha c) = (1-\alpha)^{8}c^{8}, & & \Pi(\alpha c^{-1}) = \ds (1-\alpha)^{8}\alpha^{4}, \nonumber \\
& \Pi(c) = (1-\alpha)^{8}c^{8}, & & \Pi(c^{-1}) = (1-\alpha)^{8}\alpha^{-4}. \label{sqrt pi8}
\end{align}
This gives $5$ linear equations for the $8$ unknown coefficients of $\Pi$, which is not enough to determine $\Pi$ (and hence, $\mathcal{Q}$). Therefore, one needs to make a further assumption: we assume that we can find $\Pi$ in the form
\begin{equation}\label{assumption on Pi}
\Pi(\zeta) = (\zeta-r_{1})^{2}(\zeta-r_{2})^{2}(\zeta-r_{3})^{2}(\zeta-r_{+})(\zeta-r_{-}).
\end{equation}
As we will see, Assumption \eqref{assumption on Pi} implies that $\supp(\mu)$ consists of a single curve (``one-cut regime"). This assumption is justified if we can: 1) find $r_{1}$, $r_{2}$, $r_{3}$, $r_{+}$, $r_{-}$ so that \eqref{sqrt pi8} holds and 2) construct a $g$-function via \eqref{def of Q} which satisfies the properties \ref{item a}--\ref{item b}--\ref{item c}. 

\vspace{0.2cm}Substituting \eqref{assumption on Pi} in \eqref{sqrt pi8}, we obtain $5$ \textit{non-linear} equations for the $5$ unknowns $r_{1}$, $r_{2}$, $r_{3}$, $r_{+}$, $r_{-}$. This system turns out to have quite a few solutions -- we need to select ``the correct one". Let us define $r_{1}$, $r_{2}$, $r_{3}$, $r_{+}$, $r_{-}$ by \eqref{def of r1 r2 r3}--\eqref{def of r+ r-}. It is a simple computation to verify that indeed \eqref{sqrt pi8} holds in this case. We will show in Subsection \ref{subsection: mu and g function} that this definition of $r_{1}$, $r_{2}$, $r_{3}$, $r_{+}$, $r_{-}$ is ``the correct solution" to \eqref{sqrt pi8}, in the sense that it allows to construct a $g$-function satisfying the properties \ref{item a}--\ref{item b}--\ref{item c}.

\begin{remark}
Let us briefly comment on how to find \eqref{def of r1 r2 r3}--\eqref{def of r+ r-}. Unfortunately, we were not able to solve analytically the non-linear system obtained after substituting \eqref{assumption on Pi} into \eqref{sqrt pi8}. Instead, we have solved numerically (using the Newton--Raphson method) this system for a large number of values of $\alpha \in (0,1)$. As already mentioned, the system \eqref{sqrt pi8} possesses several solutions. In order to ensure numerical convergence to ``the correct solution", we choose starting values of $r_{1}$, $r_{2}$ and $r_{3}$ so that \eqref{ordering of the zeros and the poles} holds. The expressions \eqref{def of r1 r2 r3}--\eqref{def of r+ r-} have then been guessed by an inspection of the plots of $r_{1}(\alpha)$, $r_{2}(\alpha)$, $r_{3}(\alpha)$, $r_{+}(\alpha)$, $r_{-}(\alpha)$. 
\end{remark}
 

\subsection{Critical trajectories of $\mathcal{Q}$}\label{subsection: critical trajectories of Q}
In this subsection, we study the critical trajectories of $\mathcal{Q}$, which are relevant to define the $g$-function and study its properties.

\vspace{0.2cm}Let $t \mapsto \zeta(t)$, $t \in [a,b]$ be a smooth parametrization of a curve $\sigma$, satisfying $\zeta'(t) \neq 0$ for all $t \in (a,b)$. $\sigma$ is a {\it trajectory} of the quadratic differential $\mathcal{Q}(\zeta) d\zeta^2$ if $\mathcal{Q}(\zeta(t)) \zeta'(t)^2 < 0$ for every $t \in (a,b)$, and an {\it orthogonal trajectory} if $\mathcal{Q}(\zeta(t)) \zeta'(t)^2 > 0$ for every $t \in (a,b)$. $\sigma$ is {\it critical} if it contains a zero or a pole of $\mathcal{Q}$. Note that these definitions are independent of the choice of the parametrization.

\medskip

Since $r_{+}$ and $r_{-}$ are simple zeros of $\mathcal{Q}$, there are three critical trajectories (and also three orthogonal critical trajectories) emanating from each of the points $r_{\pm}$. Recall the definitions of $\gamma_{0},\gamma_{\alpha},\gamma_{1},\Sigma_{0},\Sigma_{\alpha}$ and $\Sigma_{1}$ given in Subsection \ref{subsection: saddle points and liquid region}. 
\begin{figure}
\begin{center}
\begin{tikzpicture}
\node at (0,0) {};
\node at (2.1,0) {\color{black} \large $\bullet$};
\node at (-0.38,0) {\color{black} \large $\bullet$};
\node at (-1.05,0) {\color{black} \large $\bullet$};
\node at (-2.035,0) {\color{black} \large $\bullet$};
\node at (-3.13,0) {\color{black} \large $\bullet$};

\node at (0.4,0) {\color{red} \large $\bullet$};
\node at (-1.5,0) {\color{red} \large $\bullet$};
\node at (-5.53,0) {\color{red} \large $\bullet$};
\node at (-1.2,1.42) {\color{red} \large $\bullet$};
\node at (-1.2,-1.42) {\color{red} \large $\bullet$};

\draw[red,line width=0.65 mm] ([shift=(-156:3.6cm)]2.1,0) arc (-156:156:3.6cm);
\draw[red,dashed,line width=0.65 mm] ([shift=(155:3.6cm)]2.1,0) arc (155:205:3.6cm);

\draw[red,line width=0.65 mm] ([shift=(-35:2.4cm)]-3.13,0) arc (-35:35:2.4cm);
\draw[red,dashed,line width=0.65 mm] ([shift=(35:2.4cm)]-3.13,0) arc (35:325:2.4cm);

\draw[red,line width=0.65 mm] ([shift=(95:1.45cm)]-1.05,0) arc (95:265:1.45cm);
\draw[red,dashed,line width=0.65 mm] ([shift=(-95:1.45cm)]-1.05,0) arc (-95:95:1.45cm);

\node at (-0.9,2.535) {$\Sigma_{1}$};
\node at (-0.5,0.535) {$\Sigma_{0}$};
\node at (-2.5,1.035) {$\Sigma_{\alpha}$};

\node at (0.5,1.035) {$+$};
\node at (-3.5,1.035) {$+$};
\node at (-1.8,0.5) {$-$};
\end{tikzpicture}
\end{center}
\caption{\label{fig: crit traj alpha 04}The critical trajectories of $\mathcal{Q}$ (solid red), and the critical orthogonal trajectories (dashed red) for $\alpha = 0.4$. The red dots are the zeros of $\mathcal{Q}$, and the black dots are the poles. The critical trajectories divide $\mathbb{C}$ in three regions. The sign of $\re \phi$ in each of these regions in shown by $+$ or $-$.}
\end{figure}
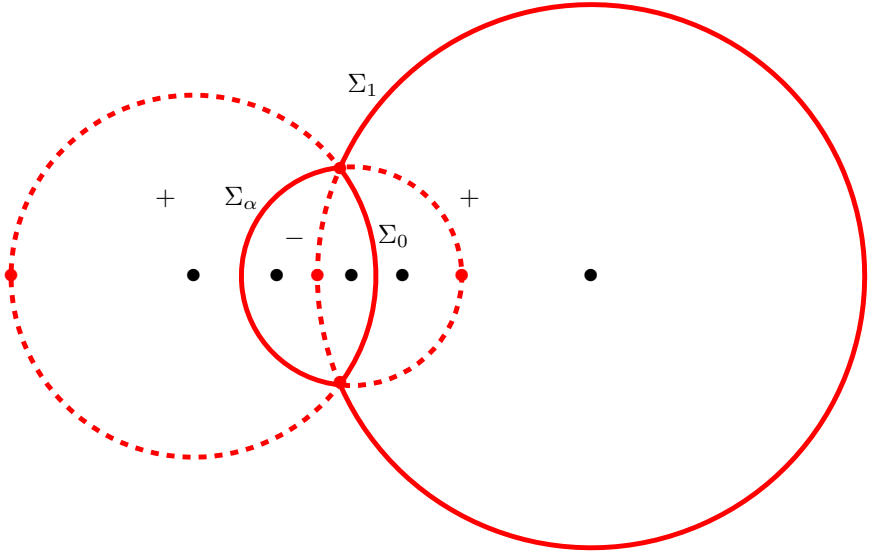
\begin{lemma}\label{lemma:Sigma0 and Sigma alpha}
The arcs $\overline{\Sigma_{0}}$, $\overline{\Sigma_{\alpha}}$ and $\overline{\Sigma_{1}}$ are three critical trajectories of $\mathcal{Q}(\zeta) d\zeta^2$ joining $r_{-}$ with $r_{+}$, and $\gamma_{0} \setminus \Sigma_{0}$, $\gamma_{\alpha} \setminus\Sigma_{\alpha}$ and $\gamma_{1} \setminus\Sigma_{1}$ are each the union of two critical orthogonal trajectories of $\mathcal{Q}(\zeta) d\zeta^2$. An illustration is shown in Figure \ref{fig: crit traj alpha 04}.
\end{lemma}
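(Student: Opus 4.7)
The plan is to treat each of the three circles $\gamma_j$ ($j\in\{0,\alpha,1\}$) by an analogous three-step argument: first establish that $\mathcal{Q}(\zeta)\,d\zeta^{2}$ restricts to a real-valued quadratic differential on $\gamma_j$; second, locate and classify the zeros of this real restriction; third, determine the sign on each of the two resulting arcs. The case $j=1$ is the cleanest, because the star operation \eqref{star operation} (which on $\gamma_1$ coincides with complex conjugation) already comes with the symmetry \eqref{sym for Q and f}. For $j\in\{0,\alpha\}$, the analogous inversions are $\iota_0(\zeta)=\alpha/\zeta$ (coming from $|\zeta|=R_0=\sqrt\alpha$) and $\iota_\alpha(\zeta)=\alpha c^{-1}+R_\alpha^{2}/(\zeta-\alpha c^{-1})$. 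Using \eqref{def of r1 r2 r3}--\eqref{def of r+ r-} and \eqref{def of W in intro}, one checks directly that $\iota_0$ swaps $\{\alpha c,c^{-1}\}$, $\{\alpha c^{-1},c\}$, $\{r_2,r_3\}$, $\{r_+,r_-\}$ and fixes $r_1$; a similar matching of the nine special points holds for $\iota_\alpha$ (fixing $r_3$ and exchanging the remaining pairs). Substituting these into \eqref{def of Q in statement of results} yields the analogs of \eqref{sym for Q and f}, namely
\begin{equation*}
\tfrac{R_0^{4}}{\zeta^{4}}\,\mathcal{Q}(\iota_0(\zeta))=\mathcal{Q}(\zeta),\qquad
\tfrac{R_\alpha^{4}}{(\zeta-\alpha c^{-1})^{4}}\,\mathcal{Q}(\iota_\alpha(\zeta))=\mathcal{Q}(\zeta),
\end{equation*}
which are exactly the identities saying that $\mathcal{Q}(\zeta)\,d\zeta^{2}$ is invariant under each $\iota_j$.

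Next, I would use this invariance to show that $\mathcal{Q}(\zeta)\,d\zeta^{2}$ is \emph{real} on $\gamma_j$. Parametrizing $\zeta(t)=a_j+R_j e^{it}$ one has $\overline{\zeta(t)}=\iota_j(\zeta(t))$; combining this with $\overline{\mathcal{Q}(\zeta)}=\mathcal{Q}(\overline\zeta)$ (since $\mathcal{Q}$ has real coefficients) and the just-proved invariance of $\mathcal{Q}(\zeta)d\zeta^{2}$, one gets
\begin{equation*}
\overline{\mathcal{Q}(\zeta(t))\,\zeta'(t)^{2}}=\mathcal{Q}(\zeta(t))\,\zeta'(t)^{2},
\qquad t\in\mathbb{R}.
\end{equation*}

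Then I would use the ordering \eqref{ordering of the zeros and the poles} to note that none of the poles $0,\alpha c,\alpha c^{-1},c,c^{-1}$ of $\mathcal{Q}$ lies on any of the circles $\gamma_j$, so the real function $t\mapsto\mathcal{Q}(\zeta(t))\zeta'(t)^{2}$ is real-analytic on all of $[0,2\pi)$ and vanishes exactly at the zeros of $\mathcal{Q}$ on $\gamma_j$; by \eqref{r2 r+ r- on the circle} these zeros are $r_\pm$ (simple) and the real zero $r_{j'}$ on $\gamma_j$ (with $j'=1,2,3$ for $j=0,1,\alpha$, respectively), which is of order two. Hence the sign of $\mathcal{Q}(\zeta(t))\zeta'(t)^{2}$ changes precisely at the two points $r_\pm$ and is constant on each of the arcs $\overline{\Sigma_j}$ and $\overline{\gamma_j\setminus\Sigma_j}$ into which $r_\pm$ divide $\gamma_j$.

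It remains to identify the sign on each arc. For $\gamma_1$ this can be read off from the explicit parametric formula for $f$ derived in the proof of Lemma \ref{lemma: im is 0 for diffeo}: on $\gamma_1$, $\mathcal{Q}(\zeta(t))\zeta'(t)^{2}$ differs from $f(c^{-1}+R_1 e^{it})$ only by an explicit \emph{positive} factor (using $\zeta'(t)^{2}=-(\zeta-c^{-1})^{2}$ and rewriting the remaining $(\zeta-1)^{2}(\zeta+1)^{2}/[4\zeta^{2}(\zeta-c)^{2}]$ at $\zeta\in\gamma_1$ via conjugation pairing as a positive real function of $t$), and the computation there shows $f(c^{-1}+R_1 e^{it})$ is positive for $|t|<\theta_1$ and negative for $|t|>\theta_1$; the overall sign flip from $\zeta'(t)^{2}$ then gives $\mathcal{Q}(\zeta)d\zeta^{2}<0$ on $\Sigma_1$ (a critical trajectory) and $>0$ on $\gamma_1\setminus\Sigma_1$ (an orthogonal trajectory, split into two arcs by the double zero $r_2$). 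For $\gamma_0$ and $\gamma_\alpha$ one performs the analogous parametric computation; alternatively, one can use a continuity argument in $\alpha$ starting from a configuration where the sign is easy to read off. The main obstacle is the bookkeeping in the last step: writing $\mathcal{Q}(\zeta(t))\zeta'(t)^{2}$ as an explicit rational function of $\cos t$ and showing it has a single linear factor $(\cos t-\cos\theta_j)$ multiplying a manifestly positive function, analogous to the formula displayed in the proof of Lemma \ref{lemma: im is 0 for diffeo}.
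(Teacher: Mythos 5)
Your route is genuinely different from the paper's and, in its first two steps, cleaner. The paper proves the lemma by brute force: it parametrizes each circle, expands $\mathcal{Q}(\zeta(t))\zeta'(t)^{2}$ into the explicit rational functions of $\cos t$ displayed in \eqref{Q on circle form 2}, \eqref{Q on circle gamma_0} and \eqref{Q on circle gamma_a}, and reads off reality, the zeros at $t=\pm\theta_{j}$, and the signs all at once. You instead get reality of $\mathcal{Q}(\zeta(t))\zeta'(t)^{2}$ abstractly, from the invariance of the quadratic differential under the inversion in each circle (the analogue of \eqref{sym for Q and f} for $\gamma_{0}$ and $\gamma_{\alpha}$); I checked that your two claimed identities do hold, since $\iota_{0}$ and $\iota_{\alpha}$ permute the zeros and double poles of \eqref{def of Q in statement of results} exactly as you say (e.g.\ $\iota_{0}$ swaps $\alpha c\leftrightarrow c^{-1}$, $\alpha c^{-1}\leftrightarrow c$, $r_{2}\leftrightarrow r_{3}$, $r_{\pm}\leftrightarrow r_{\mp}$, fixes $r_{1}$, and exchanges the double poles at $0$ and $\infty$). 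Combined with \eqref{r2 r+ r- on the circle} and \eqref{ordering of the zeros and the poles} this pins down the zeros of the real function $t\mapsto\mathcal{Q}(\zeta(t))\zeta'(t)^{2}$ and shows its sign can change only at the simple zeros $r_{\pm}$. Your sign determination on $\gamma_{1}$ via the formula for $f$ in the proof of Lemma \ref{lemma: im is 0 for diffeo} is also correct, though the positivity of the factor $(\zeta-1)^{2}(\zeta+1)^{2}/[4\zeta^{2}(\zeta-c)^{2}]$ deserves a sentence: it holds because the star map \eqref{star operation} (which is conjugation on $\gamma_{1}$) sends $1\mapsto-1$ and $0\mapsto c$, so $(\zeta-1)(\zeta+1)/[2\zeta(\zeta-c)]$ is real on $\gamma_{1}$ and its square is positive.

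The one place you stop short is the sign determination on $\gamma_{0}$ and $\gamma_{\alpha}$: "perform the analogous parametric computation" is just the paper's proof, and the "continuity in $\alpha$" alternative is left unsubstantiated. But note that within your own framework this step is essentially free, and the "bookkeeping" you worry about is unnecessary: $\mathcal{Q}\geq 0$ on $\mathbb{R}\setminus\{\text{poles}\}$ (the numerator of \eqref{def of Q in statement of results} is a product of squares times $(\zeta-r_{+})(\zeta-r_{-})=|\zeta-r_{+}|^{2}$), each arc $\Sigma_{j}$ contains a real point of $\gamma_{j}$ which is neither a zero nor a pole of $\mathcal{Q}$ (namely $\sqrt{\alpha}$, $\alpha c^{-1}-R_{\alpha}$, $c^{-1}+R_{1}$), and at any real point of a circle $\zeta'(t)^{2}=-R_{j}^{2}<0$. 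Hence $\mathcal{Q}(\zeta(t))\zeta'(t)^{2}<0$ at that point, so it is negative on all of $\Sigma_{j}$, and since the only sign changes occur at the simple zeros $r_{\pm}$ (the zero at $r_{1}$, $r_{2}$ or $r_{3}$ is double, so no sign change there), it is positive on $\gamma_{j}\setminus\overline{\Sigma_{j}}$, which is cut into two arcs by that double zero. With this one-point evaluation added, your proof is complete and avoids the explicit trigonometric computations of the paper entirely; as written, however, that final step is the gap.
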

\begin{proof}
Let $t \mapsto \zeta = \zeta(t) = c^{-1} + R_{1}e^{it}$, $t \in [-\pi,\pi]$, be a parametrization of $\gamma_{1}$. Writing $r_{\pm} = c^{-1} + R_{1}e^{\pm i \theta_{1}}$ with $\theta_{1} \in (\frac{2\pi}{3},\pi)$, and noting that $\zeta' = iRe^{it}$, we have
\begin{align*}
(\zeta-r_{+})(\zeta-r_{-}) = 2 R_{1}^{2} e^{i t}(\cos t - \cos \theta_{1}) \quad \mbox{ and } \quad \frac{(\zeta')^{2}}{(\zeta-c^{-1})^{2}} = -1.
\end{align*}
Therefore, we get
\begin{align}
 \mathcal{Q}(\zeta)(\zeta')^{2} & \; = (\zeta')^{2} \frac{(\zeta-r_{1})^{2}(\zeta-r_{2})^{2}(\zeta-r_{3})^{2}(\zeta-r_{+})(\zeta-r_{-})}{4\zeta^{2} (\zeta-\alpha c)^{2} (\zeta-\alpha c^{-1})^{2} (\zeta-c)^{2} (\zeta-c^{-1})^{2}} \nonumber \\
& \; = - R_{1}^{2}e^{it} \frac{(\zeta-r_{1})^{2}(\zeta-r_{2})^{2}(\zeta-r_{3})^{2}(\cos t - \cos \theta_{1})}{2\zeta^{2} (\zeta-\alpha c)^{2} (\zeta-\alpha c^{-1})^{2} (\zeta-c)^{2}}. \label{Q on circle form 1}
\end{align}
Using \eqref{r2 r+ r- on the circle}, we show that $(\zeta - r_{2}) = 2R_{1}e^{\frac{it}{2}}\cos \frac{t}{2}$, and
\begin{align*}
(\zeta - r_{1})(\zeta - r_{3}) & \; = 2R_{1}^{2}e^{it}\left( \cos t + \frac{\alpha^{2} + (2-\alpha)\sqrt{1-\alpha + \alpha^{2}}}{2(1-\alpha)} \right), \\
\zeta (\zeta-c) & \; = 2R_{1}^{2}e^{it} \left( \cos t + \frac{2-3\alpha + 2\alpha^{2}}{2(1-\alpha)\sqrt{1-\alpha + \alpha^{2}}} \right), \\
(\zeta - \alpha c)(\zeta - \alpha c^{-1}) & \; = 2R_{1}^{2} e^{it} \left( \cos t + \frac{2-\alpha + \alpha^{2}}{2\sqrt{1-\alpha + \alpha^{2}}} \right).
\end{align*}
Substituting the above expressions in \eqref{Q on circle form 1}, we find
\begin{align}\label{Q on circle form 2}
\mathcal{Q}(\zeta)(\zeta')^{2} = \frac{(\cos \theta_{1} - \cos t)\cos^{2} \frac{t}{2} \left( \cos t + \frac{\alpha^{2} + (2-\alpha)\sqrt{1-\alpha + \alpha^{2}}}{2(1-\alpha)} \right)^{2}}{2\left( \cos t + \frac{2-3\alpha + 2\alpha^{2}}{2(1-\alpha)\sqrt{1-\alpha + \alpha^{2}}} \right)^{2}\left( \cos t + \frac{2-\alpha + \alpha^{2}}{2\sqrt{1-\alpha + \alpha^{2}}} \right)^{2}}.
\end{align}
We verify by direct computations that
\begin{align*}
\frac{\alpha^{2} + (2-\alpha)\sqrt{1-\alpha + \alpha^{2}}}{2(1-\alpha)} > \frac{2-3\alpha + 2\alpha^{2}}{2(1-\alpha)\sqrt{1-\alpha + \alpha^{2}}} > \frac{2-\alpha + \alpha^{2}}{2\sqrt{1-\alpha + \alpha^{2}}} > 1,
\end{align*}
and thus the right-hand-side of \eqref{Q on circle form 2} is negative for $t \in (-\theta_{1},\theta_{1})$, positive for $t \in (-\pi,-\theta_{1})\cup(\theta_{1},\pi)$ and zero for $t = -\pi,-\theta_{1},\theta_{1},\pi$. We conclude that $\overline{\Sigma_{1}}$ is a critical trajectory and that $\gamma_{1}\setminus \Sigma_{1}$ is the union of two orthogonal critical trajectories. The statement about $\overline{\Sigma_{\alpha}}$, $\gamma_{\alpha}\setminus \Sigma_{\alpha}$, $\overline{\Sigma_{0}}$, $\gamma_{0}\setminus \Sigma_{0}$ can be proved a similar way, and we provide less details. For $\zeta = \zeta(t) = R_{0}e^{it}$, $t \in [-\pi,\pi]$, after long but straightforward computations, we obtain
\begin{align}\label{Q on circle gamma_0}
\mathcal{Q}(\zeta)(\zeta')^{2} = \frac{(\cos \theta_{0} - \cos t)\cos^{2} \frac{t}{2} \left( \cos t - \frac{(1+\alpha)\sqrt{1-\alpha + \alpha^{2}} - (1-\alpha)^{2}}{2\alpha} \right)^{2}}{2\left( \cos t - \frac{1-\alpha + 2\alpha^{2}}{2\alpha\sqrt{1-\alpha + \alpha^{2}}} \right)^{2}\left( \cos t - \frac{2-\alpha + \alpha^{2}}{2\sqrt{1-\alpha + \alpha^{2}}} \right)^{2}}.
\end{align}
Since
\begin{align*}
\frac{1-\alpha + 2\alpha^{2}}{2\alpha\sqrt{1-\alpha + \alpha^{2}}} > \frac{(1+\alpha)\sqrt{1-\alpha + \alpha^{2}} - (1-\alpha)^{2}}{2\alpha} > \frac{2-\alpha + \alpha^{2}}{2\sqrt{1-\alpha + \alpha^{2}}} > 1,
\end{align*}
we infer that $\overline{\Sigma_{0}}$ is a critical trajectory and that $\gamma_{0}\setminus \Sigma_{0}$ is the union of two orthogonal critical trajectories. For $\zeta = \zeta(t) = \alpha c^{-1}+ R_{\alpha}e^{it}$, $t \in [-\pi,\pi]$, we obtain
\begin{align}\label{Q on circle gamma_a}
\mathcal{Q}(\zeta)(\zeta')^{2} = \frac{(\cos t - \cos \theta_{\alpha})\sin^{2} \frac{t}{2} \left( \cos t + \frac{1-(1-2\alpha)\sqrt{1-\alpha + \alpha^{2}}}{2\alpha (1-\alpha)} \right)^{2}}{2\left( \cos t - \frac{1-\alpha + 2\alpha^{2}}{2\alpha\sqrt{1-\alpha + \alpha^{2}}} \right)^{2}\left( \cos t + \frac{2-3\alpha + 2\alpha^{2}}{2(1-\alpha)\sqrt{1-\alpha + \alpha^{2}}} \right)^{2}}
\end{align}
with
\begin{align*}
\frac{1-(1-2\alpha)\sqrt{1-\alpha + \alpha^{2}}}{2\alpha (1-\alpha)}> \frac{2-3\alpha + 2\alpha^{2}}{2(1-\alpha)\sqrt{1-\alpha + \alpha^{2}}} > 1, \quad \frac{1-\alpha + 2\alpha^{2}}{2\alpha\sqrt{1-\alpha + \alpha^{2}}}>1.
\end{align*}
Therefore, we deduce from an inspection of \eqref{Q on circle gamma_a} that $\overline{\Sigma_{\alpha}}$ is a critical trajectory and that $\gamma_{\alpha}\setminus \Sigma_{\alpha}$ is the union of two orthogonal critical trajectories. This finishes the proof.
\end{proof}
\subsection{Branch cut structure and the zero set of $\re \phi$}
As can be seen in \eqref{def of Q}, $g'$ can be expressed as
\begin{align}\label{lol12}
g'(\zeta) = \frac{V'(\zeta)}{2} + \mathcal{Q}(\zeta)^{1/2},
\end{align}
for a certain branch of $\mathcal{Q}(\zeta)^{1/2}$. To obtain a $g$-function with the desired properties (a)-(b)-(c), it turns out that the branch cut needs to be taken along the critical trajectory $\Sigma_{1}$ (as in Subsection \ref{subsection: saddle points and liquid region}). 
\begin{definition}\label{def:Q^{1/2}}
We define $\mathcal{Q}^{1/2}$ as 
\begin{align}\label{def of Q^{1/2}}
\mathcal{Q}(\zeta)^{1/2} = \frac{(\zeta-r_{1})(\zeta-r_{2})(\zeta-r_{3})\sqrt{(\zeta-r_{+})(\zeta-r_{-})}}{2\zeta (\zeta-\alpha c) (\zeta-\alpha c^{-1}) (\zeta-c) (\zeta-c^{-1})},
\end{align}
where the branch cut for $\sqrt{(\zeta-r_{+})(\zeta-r_{-})}$ is taken on $\Sigma_{1}$ such that
\begin{align*}
\sqrt{(\zeta-r_{+})(\zeta-r_{-})} = \zeta + \bigO(1), \qquad \mbox{as } \zeta \to \infty.
\end{align*}
\end{definition}
It will also be convenient to define a primitive of $\mathcal{Q}^{1/2}$.
\begin{definition}\label{def:phi}
We define $\phi:\mathbb{C}\setminus \big( (-\infty,c^{-1}] \cup \{c^{-1}+R_{1}e^{it}: -\pi \leq t \leq \theta_{1} \} \big) \to \mathbb{C}$ by
\begin{equation}\label{def of phi}
\phi(\zeta) = \int_{r_{+}}^{\zeta} \mathcal{Q}(\xi)^{1/2}d\xi,
\end{equation}
where the path of integration does not intersect $(-\infty,c^{-1}] \cup \{c^{-1}+R_{1}e^{it}: -\pi \leq t \leq \theta_{1} \}$.
\end{definition}
We first state some basic properties of $\phi$. By \eqref{Q 0}--\eqref{Q 0 1}, $\mathcal{Q}^{1/2}$ has simple poles at $0$, $\alpha c$, $\alpha c^{-1}$, $c$ and $c^{-1}$, and the residues are real. Also, since $\Sigma_{1}$ is a critical trajectory of $\mathcal{Q}$, we have $\phi_{\pm}(\zeta) \in i \mathbb{R}$ for $\zeta \in \Sigma_{1}$. Therefore, $\re \phi$ is single-valued and continuous in $\mathbb C \setminus \{0,\alpha c, \alpha c^{-1}, c, c^{-1}\}$, and $\re \phi$ is also harmonic in $\mathbb C \setminus (\Sigma_1 \cup \{0,\alpha c, \alpha c^{-1}, c, c^{-1}\})$. Finally, by combining Definition \ref{def:Q^{1/2}} with \eqref{Q 0}--\eqref{Q 0 1}, we have
\begin{equation}
\label{eq:Nphinearpoles} 
\begin{aligned} 
\phi(\zeta) & = - \frac{1}{2} \log \zeta + \bigO(1)  \text{ as } \zeta \to 0, & 
\lim_{\zeta \to 0} \re \phi(\zeta) = + \infty, \\
\phi(\zeta) & = \frac{1}{2} \log(\zeta-\alpha c) + \bigO(1) \text{ as } \zeta \to \alpha c, &
\lim_{\zeta \to \alpha c} \re \phi(\zeta) = - \infty, \\
\phi(\zeta) & = \frac{1}{2} \log(\zeta -\alpha c^{-1}) + \bigO(1) \text{ as } \zeta \to \alpha c^{-1}, &
\lim_{\zeta \to \alpha c^{-1}} \re \phi(\zeta) = - \infty, \\
\phi(\zeta) & = -\frac{1}{2} \log(\zeta -c) + \bigO(1) \text{ as } \zeta \to c, &
\lim_{\zeta \to c} \re \phi(\zeta) = + \infty, \\ 
\phi(\zeta) & = -\frac{1}{2} \log(\zeta -c^{-1}) + \bigO(1) \text{ as } \zeta \to c^{-1}, &
\lim_{\zeta \to c^{-1}} \re \phi(\zeta) = + \infty, \\ 
\phi(\zeta) & = \frac{1}{2} \log(\zeta) + \bigO(1) \text{ as } \zeta \to \infty, &
\lim_{\zeta \to \infty} \re \phi(\zeta) = + \infty. 
\end{aligned}
\end{equation}

In the rest of this subsection, we determine the zero set $\mathcal N_\phi$ of $\re \phi$. This will be useful in Subsection \ref{subsection: mu and g function} to establish the \ref{item a}-\ref{item b}-\ref{item c} properties of the $g$-function. Let us define
\begin{equation} \label{eq:Nphi}
\mathcal N_\phi = \{ z \in \mathbb{C}  :  \re \phi(z)=0\}.
\end{equation}
\begin{lemma}\label{lem:Nphihigh}
We have
\begin{align}\label{zero set of phi explicitly determined}
\mathcal{N}_{\phi} = \Sigma_{0} \cup \Sigma_{\alpha} \cup \Sigma_{1}.
\end{align}
In particular, $\mathcal{N}_{\phi}$ divides the complex plane in three regions. The sign of $\re \phi$ in these regions is as shown in Figure \ref{fig: crit traj alpha 04}.
\end{lemma}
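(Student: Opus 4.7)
The plan is to prove the two inclusions in $\mathcal N_\phi=\Sigma_0\cup\Sigma_\alpha\cup\Sigma_1$ separately. For the easy inclusion $\supseteq$, I would use Lemma \ref{lemma:Sigma0 and Sigma alpha}: each of the three arcs is a critical trajectory of $\mathcal Q(\zeta)\,d\zeta^2$ emanating from $r_+$. Parametrizing such a trajectory smoothly by $t\mapsto\zeta(t)$, the defining inequality $\mathcal Q(\zeta(t))\zeta'(t)^2<0$ forces $\mathcal Q^{1/2}(\zeta(t))\zeta'(t)$ to be purely imaginary, so $\frac{d}{dt}\phi(\zeta(t))\in i\mathbb R$ and $\re\phi$ is constant along the trajectory. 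Since $\phi(r_+)=0$ by Definition \ref{def:phi}, we conclude $\re\phi\equiv 0$ on each of $\Sigma_0,\Sigma_\alpha,\Sigma_1$; on the branch cut $\Sigma_1$ this is understood from either boundary value, the jump of $\phi$ across $\Sigma_1$ being purely imaginary.

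For the reverse inclusion I would first handle the topology. The three arcs $\Sigma_0,\Sigma_\alpha,\Sigma_1$ are simple, share endpoints $r_\pm$, and as arcs of three distinct circles $\gamma_0,\gamma_\alpha,\gamma_1$ they can only meet pairwise in at most two points, so in fact $\Sigma_i\cap\Sigma_j=\{r_+,r_-\}$ for $i\neq j$. Their union is therefore a topological $\theta$-graph on the sphere that partitions $\mathbb C\cup\{\infty\}$ into exactly three open faces $R_1,R_2,R_3$, as depicted in Figure \ref{fig: crit traj alpha 04}. On each face, $\re\phi$ is harmonic away from the double poles of $\mathcal Q^{1/2}$ lying inside, vanishes continuously on the boundary, and tends to $\pm\infty$ at each interior pole according to \eqref{eq:Nphinearpoles}: $+\infty$ at $0,c,c^{-1},\infty$ and $-\infty$ at $\alpha c,\alpha c^{-1}$.

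A geometric inspection of the three circles, using the radii $R_0,R_\alpha,R_1$ and the ordering \eqref{ordering of the zeros and the poles}, shows that the six singularities split cleanly across the three faces with no face containing both a positive and a negative one: two of the faces contain only points where $\re\phi\to+\infty$, while the remaining face contains exactly $\{\alpha c,\alpha c^{-1}\}$. The minimum principle applied to $\re\phi$ on each of the former two faces then yields $\re\phi>0$ strictly in their interiors, and the maximum principle on the third face yields $\re\phi<0$ strictly in its interior. Combining the three faces, $\re\phi$ never vanishes inside $R_1\cup R_2\cup R_3$, which gives $\mathcal N_\phi\subseteq\Sigma_0\cup\Sigma_\alpha\cup\Sigma_1$ together with the sign assignments shown in Figure \ref{fig: crit traj alpha 04}.

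The step I expect to require the most care is the geometric identification just described: rigorously checking that the positive and negative logarithmic singularities are never mixed within a single face of the $\theta$-graph. Figure \ref{fig: crit traj alpha 04} makes this visually obvious, but turning it into a proof amounts to locating each of the five finite poles (and the point at infinity) relative to the three circles $\gamma_j$, and then tracking which side of each arc $\Sigma_j$ it falls on, distinguishing carefully the arc $\Sigma_j$ from the complementary orthogonal piece $\gamma_j\setminus\Sigma_j$. Once this bookkeeping is done, the maximum/minimum principle argument is entirely routine.
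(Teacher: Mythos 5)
Your proposal is correct, but the second inclusion is argued along a genuinely different route than the paper's. The easy inclusion $\supseteq$ is identical (Lemma \ref{lemma:Sigma0 and Sigma alpha} plus $\phi(r_+)=0$). For $\subseteq$, the paper first pins down $\mathcal{N}_\phi\cap\mathbb{R}$ exactly, by a sign analysis of $\phi'=\mathcal{Q}^{1/2}$ on the real axis combined with the key evaluations $\re\phi(r_1)>0$, $\re\phi(r_2)<0$, $\re\phi(r_3)>0$ (these come from the monotonicity of $\re\phi$ along the orthogonal pieces $\gamma_j\setminus\Sigma_j$, read off from \eqref{Q on circle form 2}, \eqref{Q on circle gamma_0}, \eqref{Q on circle gamma_a}); any further component of $\mathcal{N}_\phi$ would then have to be a closed curve avoiding $\mathbb{R}\cup\Sigma_1$, which the max/min principle forbids. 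You instead use the $\theta$-graph structure of $\overline{\Sigma_0}\cup\overline{\Sigma_\alpha}\cup\overline{\Sigma_1}$ to cut the sphere into three faces and run the max/min principle face by face, with boundary value $0$ and the signs of the logarithmic singularities from \eqref{eq:Nphinearpoles}; this delivers the strict signs (hence the inclusion and the sign picture of Figure \ref{fig: crit traj alpha 04}) in one stroke, without ever evaluating $\re\phi$ at $r_1,r_2,r_3$, whereas the paper's route produces the real-line information as a by-product. Your argument is sound: the faces' interiors avoid $\Sigma_1$, and $\re\phi$ is single-valued, continuous and harmonic off $\Sigma_1\cup\{0,\alpha c,\alpha c^{-1},c,c^{-1}\}$ as recorded just before the lemma, so the principle applies across the remaining cut. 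Two small points to tighten: the bookkeeping you flag does work, but the ordering \eqref{ordering of the zeros and the poles} alone does not locate the real crossings of the arcs; you also need the elementary inequalities $0<\alpha c^{-1}-R_\alpha<\alpha c$ and $\alpha c^{-1}<R_0=\sqrt{\alpha}<c$ (all true for $\alpha\in(0,1)$), which place $\{\alpha c,\alpha c^{-1}\}$ in the middle face, $\{c,c^{-1}\}$ in the face bounded by $\Sigma_0\cup\Sigma_1$, and $\{0,\infty\}$ in the unbounded face; and $\mathcal{Q}^{1/2}$ has simple (not double) poles at those five points, which is what \eqref{eq:Nphinearpoles} reflects.
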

\begin{proof}
By Lemma \ref{lemma:Sigma0 and Sigma alpha}, it holds that
\begin{align}\label{eq subseteq}
\mathcal{N}_{\phi} \supseteq \Sigma_{0} \cup \Sigma_{\alpha} \cup \Sigma_{1}.
\end{align}
We now prove the inclusion $\subseteq$. We first show that
\begin{align}\label{mathcalN on the real line}
\mathcal{N}_{\phi} \cap \mathbb{R} = (\Sigma_{0} \cup \Sigma_{\alpha} \cup \Sigma_{1})\cap \mathbb{R} = \{\alpha c^{-1} - R_{\alpha}, R_{0}, c^{-1} + R_{1} \}.
\end{align}
By Definitions \ref{def:Q^{1/2}} and \ref{def:phi}, $\phi' = Q_{\alpha}^{1/2}$ changes sign when it crosses each of the nine points $r_{1}$, $0$, $\alpha c$, $r_{2}$, $\alpha c^{-1}$, $c$, $r_{3}$, $c^{-1}$, $c^{-1}+R_{1}$. Since $\phi'(\zeta) = 2^{-1}\zeta^{-1} + \bigO(\zeta^{-2})$ as $\zeta \to \infty$, we have $\phi' > 0$ on the intervals
\begin{align*}
(r_{1},0), \quad (\alpha c,r_{2}), \quad (\alpha c^{-1},c), \quad (r_{3},c^{-1}), \quad (c^{-1}+R_{1},+\infty),
\end{align*}
and $\phi' < 0$ on the intervals
\begin{align*}
(-\infty,r_{1}), \quad (0,\alpha c), \quad (r_{2},\alpha c^{-1}), \quad (c,r_{3}), \quad (c^{-1},c^{-1}+R_{1}).
\end{align*}
By \eqref{eq subseteq}, we have
\begin{align*}
\re \phi(\alpha c^{-1} - R_{\alpha}) = \re \phi (R_{0}) = \re \phi(c^{-1}+R_{1}) = 0,
\end{align*}
so $\re \phi$ admits no other zeros on $(0,\alpha c) \cup (\alpha c^{-1},c) \cup (c^{-1},+\infty)$. On the intervals $(-\infty,0)$ and $(c,c^{-1})$, $\re \phi$ admits a local minimum at $r_{1}$ and $r_{3}$, respectively, and on the interval $(\alpha c, \alpha c^{-1})$, it admits a local maximum at $r_{2}$. Thus \eqref{mathcalN on the real line} holds true if we show that
\begin{align}\label{pos and neg real part at r1 r2 r3}
\re \phi(r_{1}) > 0, \qquad \re \phi(r_{2}) < 0, \quad \mbox{ and } \quad \re \phi(r_{3}) > 0.
\end{align}
By Lemma \ref{lemma:Sigma0 and Sigma alpha}, $\re \phi$ is strictly monotone on each of the curves $(\gamma_{0}\setminus \Sigma_{0}) \cap \mathbb{C}^{+}$, $(\gamma_{\alpha}\setminus \Sigma_{\alpha}) \cap \mathbb{C}^{+}$ and $(\gamma_{1}\setminus \Sigma_{1}) \cap \mathbb{C}^{+}$. The expressions \eqref{Q on circle form 2}, \eqref{Q on circle gamma_0} and \eqref{Q on circle gamma_a}, together with Definition \ref{def:Q^{1/2}}, allow to conclude that $\re \phi$ is strictly increasing on $(\gamma_{0}\setminus \Sigma_{0}) \cap \mathbb{C}^{+}$ oriented from $r_{+}$ to $r_{1}$, strictly increasing on $(\gamma_{\alpha}\setminus \Sigma_{\alpha}) \cap \mathbb{C}^{+}$ oriented from $r_{+}$ to $r_{3}$, and strictly decreasing on $(\gamma_{1}\setminus \Sigma_{1}) \cap \mathbb{C}^{+}$ oriented from $r_{+}$ to $r_{2}$. In particular this proves \eqref{pos and neg real part at r1 r2 r3}, and thus \eqref{mathcalN on the real line}.

\medskip Assume $\mathcal N_{\phi}$ if of the form $\Sigma_{0} \cup \Sigma_{\alpha} \cup \Sigma_{1} \cup \sigma$ for a certain curve $\sigma$ distinct from $\Sigma_{0}$, $\Sigma_{\alpha}$ and $\Sigma_{1}$. Since $\phi_{\pm}'(\zeta) \neq 0$ for $\zeta \in \Sigma_{1}$, we must have $\sigma \cap \Sigma_{1} = \emptyset$. Also, in view of \eqref{mathcalN on the real line}, $\sigma$ cannot intersect the real axis. Then $\sigma$ must be	a closed contour in $\mathbb{C}\setminus \big( \mathbb{R}\cup \Sigma_{1} \big)$, and the max/min principle for harmonic functions would then imply that $\re \phi$ in constant on the whole bounded region delimited by $\sigma$. By \eqref{def of phi}, $\re \phi$ is clearly not constant on such domain, so we arrive at a contradiction, and we conclude that $\mathcal{N}_{\phi} = \Sigma_{0} \cup \Sigma_{\alpha} \cup \Sigma_{1}$.

\medskip Thus $\mathcal{N}_{\phi}$ divides the complex plane in three regions in which $\re \phi$ does not change sign. The signs in each of these regions is then determined immediately by \eqref{eq:Nphinearpoles} (or equivalently, by \eqref{pos and neg real part at r1 r2 r3}).
\end{proof}
\subsection{Definition and properties of $g$}\label{subsection: mu and g function}
\begin{definition}\label{def:mu0g} 
We define the measure $\mu$ by
\begin{align} 	
d\mu(\zeta) & = \frac{1}{\pi i} \mathcal{Q}_{-}(\zeta)^{1/2} d\zeta \nonumber \\
& = \frac{1}{\pi i} \frac{(\zeta-r_{1})(\zeta-r_{2})(\zeta-r_{3})\sqrt{(\zeta-r_{+})(\zeta-r_{-})}_{-}}{2\zeta (\zeta-\alpha c) (\zeta-\alpha c^{-1}) (\zeta-c) (\zeta-c^{-1})} d\zeta, 
\qquad \zeta \in  \Sigma_1, \label{def of mu}
\end{align}
where $\Sigma_1 = \supp(\mu)$ is given by \eqref{def of Sigma 1}, and is oriented from $r_{-}$ to $r_{+}$; so $\mathcal{Q}_{-}(\zeta)^{1/2}$ denotes the limit of $\mathcal{Q}(\xi)^{1/2}$ as $\xi \to \zeta \in \Sigma_1$ with $\xi$ in the exterior of the circle $\gamma_1$. 
\end{definition}
\begin{proposition}
The measure $\mu$ defined in \eqref{def of mu} is a probability measure.
\end{proposition}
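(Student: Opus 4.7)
My plan is to verify the two defining properties of a probability measure separately: non-negativity of the density along $\Sigma_{1}$, and total mass equal to one.

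For positivity, the key input is Lemma \ref{lemma:Sigma0 and Sigma alpha}, which says that $\overline{\Sigma_{1}}$ is a critical trajectory of $\mathcal{Q}(\zeta)(d\zeta)^{2}$. This forces $\mathcal{Q}(\zeta)(d\zeta)^{2}<0$ along $\Sigma_{1}$, hence $\mathcal{Q}_{-}^{1/2}(\zeta)\,d\zeta\in i\mathbb{R}$ and $\frac{1}{\pi i}\mathcal{Q}_{-}^{1/2}(\zeta)\,d\zeta\in\mathbb{R}$. I would pin down the sign by a single check at the rightmost point $\zeta_{0}=c^{-1}+R_{1}$ of $\gamma_{1}$. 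There, the unit tangent to $\Sigma_{1}$ in the orientation from $r_{-}$ to $r_{+}$ is $i$, and by the ordering \eqref{ordering of the zeros and the poles} every explicit factor in the factorisation \eqref{def of Q^{1/2}} evaluated at $\zeta_{0}$ is a positive real (with $\sqrt{(\zeta_{0}-r_{+})(\zeta_{0}-r_{-})}=|\zeta_{0}-r_{+}|>0$ by the branch convention and the fact that $\zeta_{0}$ is reached from $+\infty$ along the exterior of $\gamma_{1}$). This gives $\frac{1}{\pi i}\mathcal{Q}_{-}^{1/2}(\zeta_{0})\,d\zeta>0$. Positivity on the whole interior of $\Sigma_{1}$ then follows by continuity together with the fact that $\mathcal{Q}_{-}^{1/2}$ vanishes on $\overline{\Sigma_{1}}$ only at $r_{\pm}$.

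For the total mass, I would first use the jump identity $\mathcal{Q}_{+}^{1/2}=-\mathcal{Q}_{-}^{1/2}$ on $\Sigma_{1}$ to collapse the arc integral into a contour integral:
\begin{equation*}
\int_{\Sigma_{1}}\mathcal{Q}_{-}^{1/2}(\zeta)\,d\zeta \;=\; \tfrac{1}{2}\oint_{C}\mathcal{Q}^{1/2}(\zeta)\,d\zeta,
\end{equation*}
where $C$ is a small positively-oriented loop around $\Sigma_{1}$. Since $\mathcal{Q}^{1/2}$ is meromorphic on $\mathbb{C}\setminus\Sigma_{1}$ with simple poles only at $0,\alpha c,\alpha c^{-1},c,c^{-1}$ and satisfies $\mathcal{Q}^{1/2}(\zeta)=\tfrac{1}{2\zeta}+O(\zeta^{-2})$ as $\zeta\to\infty$ by Definition \ref{def:Q^{1/2}}, deforming $C$ outward to a large circle and letting its radius tend to infinity produces a contribution $\pi i$ from the circle at infinity, together with a residue contribution $-2\pi i\sum_{p}\operatorname{Res}_{p}\mathcal{Q}^{1/2}$. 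The individual residues can be read off from the local behaviour of $\phi$ recorded in \eqref{eq:Nphinearpoles} via $\phi'=\mathcal{Q}^{1/2}$: one obtains $-\tfrac{1}{2}$ at each of $0,c,c^{-1}$ and $+\tfrac{1}{2}$ at each of $\alpha c,\alpha c^{-1}$, summing to $-\tfrac{1}{2}$. Assembling the pieces gives $\oint_{C}\mathcal{Q}^{1/2}\,d\zeta=\pi i-2\pi i\cdot(-\tfrac{1}{2})=2\pi i$, and therefore
\begin{equation*}
\int_{\Sigma_{1}}d\mu \;=\; \frac{1}{\pi i}\int_{\Sigma_{1}}\mathcal{Q}_{-}^{1/2}(\zeta)\,d\zeta \;=\; \frac{1}{2\pi i}\oint_{C}\mathcal{Q}^{1/2}(\zeta)\,d\zeta \;=\; 1.
\end{equation*}

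I do not anticipate a serious obstacle; the computations are routine. The only points requiring care are the orientation bookkeeping (identifying the ``$-$ side'' of $\Sigma_{1}$ as the exterior of $\gamma_{1}$, so that a positively oriented loop around $\Sigma_{1}$ traverses the $-$ side in the direction from $r_{-}$ to $r_{+}$) and reading off the residue signs of $\mathcal{Q}^{1/2}$ consistently from the already-established local expansions of $\phi$.
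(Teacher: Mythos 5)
Your proposal is correct. The total-mass computation is essentially the paper's own argument: you use $\mathcal{Q}_{+}^{1/2}=-\mathcal{Q}_{-}^{1/2}$ to rewrite $\int_{\Sigma_1}d\mu$ as $\frac{1}{2\pi i}\oint_{C}\mathcal{Q}^{1/2}d\zeta$ over a loop around $\Sigma_1$, deform outward past the five poles, and combine the residues $-\tfrac12,+\tfrac12,+\tfrac12,-\tfrac12,-\tfrac12$ (the same values as \eqref{residues of Q^1/2}, which you legitimately read off from \eqref{eq:Nphinearpoles} since $\phi'=\mathcal{Q}^{1/2}$ has only simple poles there) with the $\frac{1}{2\zeta}$ behaviour at infinity to get total mass $1$; your orientation bookkeeping (the $-$ side being the exterior of $\gamma_1$, traversed from $r_-$ to $r_+$ by a positively oriented loop) checks out. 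Where you genuinely deviate is the positivity step: the paper proves it indirectly, by noting that the cumulative function $t\mapsto\int_{r_-}^{\zeta(t)}d\mu$ has a real, non-vanishing derivative (Lemma \ref{lemma:Sigma0 and Sigma alpha}) and increases from $0$ to $1$, so the derivative must be positive — an argument that requires the mass-$1$ computation first. You instead fix the sign directly at the single point $c^{-1}+R_1\in\Sigma_1$, using the ordering \eqref{ordering of the zeros and the poles} to see all factors of \eqref{def of Q^{1/2}} are positive there and the branch convention to get $\sqrt{(\zeta-r_+)(\zeta-r_-)}_{-}=|\zeta_0-r_+|>0$, then extend by continuity since the density is real and vanishes on $\overline{\Sigma_1}$ only at $r_\pm$ (note $r_2\in\gamma_1\setminus\Sigma_1$, so no interior zero). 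Both routes work; yours has the minor structural advantage of making positivity independent of the normalization computation, at the cost of the explicit pointwise sign check.
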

\begin{proof}
We compute $\int_{\Sigma_{1}} d\mu$ by residue calculation. Since $\mathcal{Q}_{+} = - \mathcal{Q}_{-}$, we have
\begin{align}
\int_{\Sigma_{1}} d\mu(\zeta) = \frac{1}{2\pi i}\int_{\mathcal{C}} \mathcal{Q}(\zeta)^{1/2}d\zeta,
\end{align}
where $\mathcal{C}$ is a closed curve surrounding $\Sigma_{1}$ once in the positive direction, but not surrounding any of the poles of $\mathcal{Q}$. By deforming $\mathcal{C}$ into another contour $\widetilde{\mathcal{C}}$ surrounding $0$, $\alpha c$, $\alpha c^{-1}$, $c$ and $c^{-1}$, we pick up some residues:
\begin{align}\label{int of mu with some residue}
\int_{\Sigma_{1}} d\mu(\zeta) = -\sum_{\zeta_{\star}\in \mathcal{P}}\mbox{Res} \left( \mathcal{Q}(\zeta)^{1/2}, \zeta = \zeta_{\star} \right) + \frac{1}{2\pi i}\int_{\widetilde{\mathcal{C}}} \mathcal{Q}(\zeta)^{1/2}d\zeta
\end{align}
where $\mathcal{P}=\{0,\alpha c,\alpha c^{-1},c,c^{-1}\}$. By combining Definition \ref{def:Q^{1/2}} with \eqref{Q 0}--\eqref{Q 0 1}, we have
\begin{align}
& \mbox{Res} \left( \mathcal{Q}(\zeta)^{1/2}, \zeta = 0 \right) = - \frac{1}{2}, & & \mbox{Res} \left( \mathcal{Q}(\zeta)^{1/2}, \zeta = \alpha c \right) =  \frac{1}{2}, \nonumber \\
& \mbox{Res} \left( \mathcal{Q}(\zeta)^{1/2}, \zeta = \alpha c^{-1} \right) =  \frac{1}{2}, & & \mbox{Res} \left( \mathcal{Q}(\zeta)^{1/2}, \zeta = c \right) = - \frac{1}{2}, \nonumber \\
& \mbox{Res} \left( \mathcal{Q}(\zeta)^{1/2}, \zeta = c^{-1} \right) = - \frac{1}{2}, \label{residues of Q^1/2}
\end{align}
and since $\mathcal{Q}(\zeta)^{1/2} = \frac{1}{2\zeta} + \bigO(\zeta^{-2})$ as $\zeta \to \infty$, we find 
\begin{align}\label{eq:mu0total}
\int_{\Sigma_{1}}d\mu(\zeta) = 1.
\end{align}
It remains to show that $\mu$ has a positive density on $\Sigma_{1}$. Let $\zeta(t) = c^{-1}+R_{1} e^{i t}$, 
$-\theta_{1} < t < \theta_{1}$, be a parametrization of
$\Sigma_1$. Consider the function
\begin{equation} \label{eq:mu0mass} 
t \mapsto \int_{r_-}^{\zeta(t)} d\mu = \frac{1}{\pi i} \int_{r_-}^{\zeta(t)} \mathcal{Q}_{-}(\xi)^{1/2} d\xi,
\end{equation}
whose derivative is given by
\begin{align}\label{lol11}
\frac{1}{\pi i} \mathcal{Q}_{-}(\zeta(t))^{1/2}  \zeta'(t).
\end{align}
Since $\mathcal{Q}(\zeta(t)) (\zeta'(t))^2 < 0$ for $t \in (-\theta_{1}, \theta_{1})$ by Lemma \ref{lemma:Sigma0 and Sigma alpha}, \eqref{lol11} is real and non-zero. Note also that the function \eqref{eq:mu0mass} vanishes for $t = - \theta_{1}$
and equals $1$ for $t= \theta_{1}$ by \eqref{eq:mu0total}.
Therefore \eqref{lol11} is strictly positive. 
\end{proof}
\begin{definition}\label{def: g}
The $g$-function is defined by
\begin{equation} \label{def of g function} 
g(\zeta) = \int_{\Sigma_1} \log(\zeta-\xi) d\mu(\xi), \qquad \zeta \in \C \setminus 
\left((-\infty, r_{2}] \cup \{c^{-1}+R_{1} e^{i t} : -\pi \leq t \leq \theta_{1} \} \right),
\end{equation}
where for each $\xi \in \Sigma_1$, the function $\zeta \mapsto \log(\zeta-\xi)$ has a branch cut along $(-\infty, r_{2}] \cup 
\{c^{-1}+R_{1} e^{i t} : -\pi \leq t \leq \arg \xi\}$ and behaves like $\log(\zeta-\xi) = \log |\zeta| + \bigO(\zeta^{-1}) $, as  $\zeta \to +\infty$.   

\medskip We define the variational constant $\ell \in \C$  by
\begin{align}\label{def of ell}
\ell = - 2g(r_{+}) + V(r_{+}).
\end{align}
\end{definition}
The next proposition shows, among other things, that Definition \ref{def: g} for $g$ is consistent with \eqref{lol12}, and that $g$ satisfies \eqref{jump relation for g on the support}.
\begin{proposition}\label{prop:phi and g relations}
The functions $g$ and $\phi$ are related by
\begin{align}\label{phi and g relation}
\phi(\zeta) = g(\zeta) - \frac{V(\zeta)}{2} + \frac{\ell}{2}, \qquad \zeta \in \mathbb{C}\setminus \big( (-\infty,r_{2}] \cup \{c^{-1}+R_{1} e^{i t} : -\pi \leq t \leq \theta_{1} \} \big)
\end{align}
and we have
\begin{align}
& g_{+}(\zeta) + g_{-}(\zeta) - V(\zeta) + \ell = 0, & & \mbox{for } \zeta \in \Sigma_{1}, \label{prop g add} \\
& g_{+}(\zeta)-g_{-}(\zeta) = 2\phi_{+}(\zeta) = - 2 \phi_{-}(\zeta), & & \mbox{for } \zeta \in \Sigma_{1}. \label{prop g sous}
\end{align}
Furthermore, the $g$-function satisfies the properties \ref{item a}--\ref{item b}--\ref{item c} listed at the beginning of Section \ref{section: g-function} with $\gamma_{\mathbb{C}} = \gamma_{1}$.
\end{proposition}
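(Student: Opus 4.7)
The plan is to first establish the integrated relation \eqref{phi and g relation}, from which the two jump identities \eqref{prop g add} and \eqref{prop g sous} follow immediately, and then to verify properties \ref{item a}--\ref{item b}--\ref{item c}.

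The heart of the proof is the differential identity $g'(\zeta) - V'(\zeta)/2 = \mathcal{Q}(\zeta)^{1/2}$, which I would obtain by comparing both sides as meromorphic functions on $\mathbb{C}$. Their difference $F := g' - V'/2 - \mathcal{Q}^{1/2}$ is analytic on $\mathbb{C} \setminus (\Sigma_{1} \cup \{0, \alpha c, \alpha c^{-1}, c, c^{-1}\})$. Across $\Sigma_{1}$, Plemelj--Sokhotski applied to the Cauchy transform $g'(\zeta) = \int_{\Sigma_{1}} d\mu(\xi)/(\zeta-\xi)$ with density $\tfrac{1}{\pi i}\mathcal{Q}_{-}^{1/2}$ yields $g'_{+} - g'_{-} = -2\mathcal{Q}_{-}^{1/2} = \mathcal{Q}_{+}^{1/2} - \mathcal{Q}_{-}^{1/2}$ (using $\mathcal{Q}_{+}^{1/2} = -\mathcal{Q}_{-}^{1/2}$ from Definition \ref{def:Q^{1/2}}), while $V'$ is analytic across $\Sigma_{1}$ since the latter avoids the real axis; so $F$ has no jump there. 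At each pole in $\{0, \alpha c, \alpha c^{-1}, c, c^{-1}\}$, the residues \eqref{residues of Q^1/2} of $\mathcal{Q}^{1/2}$ exactly cancel the corresponding residues of $V'/2$ read off from \eqref{def of V'}, while $g'$ stays bounded; hence $F$ extends to an entire function. The expansions $g'(\zeta) \sim \zeta^{-1}$, $V'(\zeta)/2 \sim (2\zeta)^{-1}$ and $\mathcal{Q}^{1/2}(\zeta) \sim (2\zeta)^{-1}$ at infinity give $F(\zeta) = \bigO(\zeta^{-2})$, so Liouville forces $F \equiv 0$. Integrating $\phi' = g' - V'/2$ from $r_{+}$ and using $\phi(r_{+}) = 0$ together with \eqref{def of ell} then produces \eqref{phi and g relation}.

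For the jump relations on $\Sigma_{1}$, which avoids the cuts of $V$, the identity \eqref{phi and g relation} passes to boundary values as $\phi_{\pm} = g_{\pm} - V/2 + \ell/2$. The relation $(\phi_{+} + \phi_{-})' = \mathcal{Q}_{+}^{1/2} + \mathcal{Q}_{-}^{1/2} = 0$ shows $\phi_{+} + \phi_{-}$ is constant along $\Sigma_{1}$, and evaluating at the endpoint $r_{+}$ (where continuity of $\phi$ gives $\phi_{\pm}(r_{+}) = 0$) shows this constant is zero; adding $\phi_{+}$ and $\phi_{-}$ then yields \eqref{prop g add}, while subtracting gives $g_{+} - g_{-} = \phi_{+} - \phi_{-} = 2\phi_{+} = -2\phi_{-}$, i.e., \eqref{prop g sous}.

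The final step is to verify properties \ref{item a}--\ref{item b}--\ref{item c} with $\gamma_{\mathbb{C}} = \gamma_{1}$. Property \ref{item a} reduces to the short algebraic inequalities $c^{-1} > R_{1}$ and $c^{-1} - c < R_{1}$, both equivalent to $\alpha>0$. For \ref{item b}, the cuts of $g$ lying outside $\overline{\Sigma_{1}}$ are $(-\infty, r_{2}]$ and the lower arc $\{c^{-1} + R_{1} e^{it} : -\pi < t < -\theta_{1}\}$; by the branch choice in Definition \ref{def: g} every $\log(\zeta - \xi)$ in the defining integrand jumps uniformly by $2\pi i$ across each of these curves, so the total jump of $g$ is $2\pi i \int d\mu = 2\pi i$, making $e^{g}$ single-valued and analytic on $\mathbb{C} \setminus \overline{\Sigma_{1}}$. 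Condition \eqref{jump relation for g on the support} in \ref{item c} is exactly \eqref{prop g add}. For \eqref{real part for g negative on gamma setminus S}, the continuity of $g$ across $\gamma_{1} \setminus \Sigma_{1}$ (from \ref{item b}) reduces the inequality to $\re\phi < 0$ on that arc; since Lemma \ref{lemma:Sigma0 and Sigma alpha} identifies $\gamma_{1} \setminus \Sigma_{1}$ as an orthogonal trajectory along which $\phi$ is real-valued, starting from $\phi(r_{\pm}) = 0$, with unique critical point $r_{2}$ where $\re\phi(r_{2}) < 0$ by \eqref{pos and neg real part at r1 r2 r3}, the claim follows. Finally, \eqref{real part for g positive in neighborhood of the support} follows from $\frac{d}{dt}\phi_{+}(\zeta(t)) = \mathcal{Q}_{+}^{1/2}\zeta'(t) = -\mathcal{Q}_{-}^{1/2}\zeta'(t) = -\pi i \rho(t)$ with $\rho > 0$ the density of $d\mu/dt$ (positivity already shown in establishing that $\mu$ is a probability measure), so $\im\phi_{+}$ is strictly decreasing along $\Sigma_{1}$. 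The most delicate step is the branch-cut bookkeeping in \ref{item b}: one must check that the total jumps of $g$ across the two auxiliary cuts are exactly $\pm 2\pi i$ (hence trivial for $e^{g}$) rather than some other value, which hinges on the careful synchronization of branch cuts of $\log(\zeta-\xi)$ prescribed in Definition \ref{def: g}.
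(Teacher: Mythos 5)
Your proof is correct, and it reaches the key differential identity $g'=V'/2+\mathcal{Q}^{1/2}$ by a genuinely different route than the paper. The paper computes $g'(\zeta)=\int_{\Sigma_1}\frac{d\mu(\xi)}{\zeta-\xi}$ directly by contour deformation: the integral over a loop around $\Sigma_1$ is pushed out to infinity, picking up the residues \eqref{residues of Q^1/2} at $0,\alpha c,\alpha c^{-1},c,c^{-1}$ and the residue $\mathcal{Q}(\zeta)^{1/2}$ at $\xi=\zeta$, which produces the identity constructively in one stroke. You instead form $F=g'-V'/2-\mathcal{Q}^{1/2}$ and kill it by Liouville, checking the jump across $\Sigma_1$ via Plemelj and the cancellation of the simple-pole residues of $V'/2$ and $\mathcal{Q}^{1/2}$. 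The two arguments use exactly the same residue data, so neither buys much over the other; your version needs one extra (easy) observation that you omitted, namely that $F$ has removable singularities at the endpoints $r_\pm$ (there $g'$ is bounded because the density of $\mu$ vanishes like a square root, and $\mathcal{Q}^{1/2}$ vanishes), without which Liouville cannot be invoked. From the identity onward — integration from $r_+$, the relations \eqref{prop g add}--\eqref{prop g sous} via $\phi_++\phi_-=0$, the $2\pi i$ jump of $g$ across the auxiliary cuts for \ref{item b}, $\re\phi<0$ on $\gamma_1\setminus\overline{\Sigma_1}$ for \eqref{real part for g negative on gamma setminus S}, and the monotonicity of $\im\phi_+$ for \eqref{real part for g positive in neighborhood of the support} — you follow essentially the paper's proof, with the small added value of spelling out criterion \ref{item a} as the elementary inequalities $c^{-1}>R_1$ and $c^{-1}-c<R_1$, and of rederiving the sign of $\re\phi$ on $\gamma_1\setminus\overline{\Sigma_1}$ from the orthogonal-trajectory structure rather than simply citing Lemma \ref{lem:Nphihigh}.

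One small factual slip to fix: you justify the analyticity of $V'$ across $\Sigma_1$ by saying that $\Sigma_1$ "avoids the real axis", but $\Sigma_1$ does cross the real axis, at $c^{-1}+R_1$. The correct (and sufficient) reason is that $V'$ is a rational function whose only poles are $0,\alpha c,\alpha c^{-1},c,c^{-1}$, all of which lie off $\Sigma_1$ since $c^{-1}+R_1>c^{-1}$. This does not affect the validity of the argument.
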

\begin{proof}
We first prove \eqref{phi and g relation}. For a fixed $\zeta \in \mathbb{C}\setminus \Sigma_{1}$, we have
\begin{align*}
& g'(\zeta) = \int_{\Sigma_{1}} \frac{d\mu(\xi)}{\zeta-\xi} = \frac{1}{2\pi i} \int_{\mathcal{C}} \frac{Q(\xi)^{1/2}}{\zeta - \xi}d\xi,
\end{align*}
where $\mathcal{C}$ is a closed curve surrounding $\Sigma_{1}$ once in the positive direction, but not surrounding any of the poles of $\mathcal{Q}$, and not surrounding $\zeta$. By deforming $\mathcal{C}$ into another contour $\widetilde{\mathcal{C}}$ surrounding $0$, $\alpha c$, $\alpha c^{-1}$, $c$, $c^{-1}$ and  $\zeta$, we obtain
\begin{align}\label{int of g' with some residue}
\int_{\Sigma_{1}} \frac{d\mu(\xi)}{\zeta - \xi} = -\sum_{\xi_{\star}\in \mathcal{P}}\mbox{Res} \left( \frac{\mathcal{Q}(\xi)^{1/2}}{\zeta- \xi}, \xi = \xi_{\star} \right) + \mathcal{Q}(\zeta)^{1/2} + \frac{1}{2\pi i}\int_{\widetilde{\mathcal{C}}} \frac{\mathcal{Q}(\xi)^{1/2}}{\zeta - \xi}d\xi,
\end{align}
where $\mathcal{P}=\{0,\alpha c,\alpha c^{-1},c,c^{-1}\}$. By deforming $\widetilde{\mathcal{C}}$ to $\infty$, noting that $\mathcal{Q}(\xi)^{1/2} = \bigO(\xi^{-1})$ as $\xi \to \infty$, the integral on the right-hand-side of \eqref{int of g' with some residue} is $0$. The sum can be evaluated using the residues \eqref{residues of Q^1/2}, and we get
\begin{align*}
\int_{\Sigma_{1}} \frac{d\mu(\xi)}{\zeta - \xi} = \frac{1}{2\zeta} - \frac{1}{2(\zeta - \alpha c)} - \frac{1}{2(\zeta - \alpha c^{-1})} + \frac{1}{2(\zeta -c)} + \frac{1}{2(\zeta - c^{-1})} + \mathcal{Q}(\zeta)^{1/2}.
\end{align*}
Using \eqref{def of V'} and $\phi' = \mathcal{Q}^{1/2}$, the above can be rewritten as
\begin{align*}
g'(\zeta) = \frac{V'(\zeta)}{2} + \phi'(\zeta), \qquad \zeta \in \mathbb{C}\setminus \Sigma_{1}.
\end{align*}
Integrating this identity from $r_{+}$ to $\zeta$ along a path that does not intersect $(-\infty,c^{-1}]\cup \{c^{-1}+R_{1}e^{i t}:-\pi \leq t < \theta_{1}\}$, we obtain
\begin{align*}
g(\zeta) - g(r_{+}) = \frac{V(\zeta)}{2} - \frac{V(r_{+})}{2} + \phi(\zeta),
\end{align*}
where we have used $\phi(r_{+}) = 0$. Then \eqref{phi and g relation} follows from the definition of $\ell$ given by \eqref{def of ell}. Since $\mathcal{Q}_{+}^{1/2} = -\mathcal{Q}_{-}^{1/2}$ on $\Sigma_{1}$, by \eqref{def of phi} we have
\begin{align*}
& \phi_{+}(\zeta) + \phi_{-}(\zeta) = 0, \qquad \mbox{for } \zeta \in \Sigma_{1},
\end{align*}
from which \eqref{prop g add} and \eqref{prop g sous} follow. The circle $\gamma_{1}$ encloses both $c$ and $c^{-1}$, and $0$ lies in the exterior of $\gamma_{1}$, so criterion \ref{item a} is fulfilled. For $\zeta \in (-\infty,r_{2}]\cup \{c^{-1}+R_{1}e^{it}:-\pi \leq -\theta_{1} \}$, we have
\begin{align*}
g_{+}(\zeta)-g_{-}(\zeta) = \int_{\Sigma_{1}} \big( \log_{+}(\zeta-\xi)-\log_{-}(\zeta-\xi) \big) d\mu(\xi) = 2 \pi i \int_{\Sigma_{1}} d \mu(\xi) = 2\pi i,
\end{align*}
so $e^{g}$ is analytic in $\mathbb{C}\setminus \overline{\Sigma_{1}}$ and criterion \ref{item b} is also fulfilled. For $\zeta \in \gamma_{1}\setminus \overline{\Sigma_{1}}$, by \eqref{phi and g relation} and Lemma \ref{lem:Nphihigh}, we have
\begin{align*}
\re \big( g_{+}(\zeta) + g_{-}(\zeta) - V(\zeta) + \ell \big) = \re \big( \phi_{+}(\zeta) + \phi_{-}(\zeta) \big) = 2 \, \re \phi(\zeta) < 0,
\end{align*}
as required in \eqref{real part for g negative on gamma setminus S}. Finally, by Definitions \ref{def:phi} and \ref{def:mu0g},  for $\zeta \in \Sigma_{1}$ we have
\begin{align*}
\im \Big( g_{+}(\zeta) - \frac{V(\zeta)}{2} + \frac{\ell}{2}\Big) = \im \phi_{+}(\zeta) = \im \int_{r_{+}}^{\zeta} \mathcal{Q}_{+}^{1/2}(\xi)d\xi = \pi \int_{\zeta}^{r_{+}}  d\mu(\xi)
\end{align*}
which is strictly decreasing as $\zeta$ goes from $r_{-}$ to $r_{+}$. So \eqref{real part for g positive in neighborhood of the support} holds as well, and hence \ref{item c}, which finishes the proof.
\end{proof}

\section{Steepest descent for $U$}\label{section: steepest descent for $U$}
In this section, we will perform an asymptotic analysis of the RH problem for $U$ as $N \to + \infty$, by means of the Deift/Zhou steepest descent method \cite{DZ}. As mentioned in Section \ref{section: g-function}, the relevant contour to consider for the RH problem for $U$ is $\gamma_{\mathbb{C}} = \gamma_{1}$. The analysis is split in a series of transformations $U \mapsto T \mapsto S \mapsto R$. The first transformation $U \mapsto T$ of Section \ref{subsection: g function transformation} uses the $g$-function obtained in Section \ref{section: g-function} to normalize the RH problem at $\infty$. The opening of the lenses $T \mapsto S$ is realised in Section \ref{subsection: opening of the lenses}. The last step $S\mapsto R$ requires some preparations that are done in Section \ref{subsection: parametrices}: it consists of constructing approximations (called ``parametrices") for $S$ in different regions of the complex plane. Finally, the $S \mapsto R$ transformation is carried out in Section \ref{subsection: small norm}.

 %

\subsection{First transformation: $U \mapsto T$}\label{subsection: g function transformation}
We normalize the RH problem with the following transformation
\begin{equation}\label{def of T}
T(\zeta) = e^{N\ell\sigma_{3}}U(\zeta)e^{-2Ng(\zeta)\sigma_{3}}e^{- N\ell \sigma_{3}},
\end{equation}
where $g$ and $\ell$ are defined in Definition \ref{def: g}. Using \eqref{phi and g relation}, we can write the jumps for $T$ in terms of the function $\phi$ of Definition \ref{def:phi}. From \eqref{prop g add} and \eqref{prop g sous}, we find that $T$ satisfies the following RH problem.
\subsubsection*{RH problem for $T$}
\begin{itemize}
\item[(a)] $T : \mathbb{C}\setminus \gamma_{1} \to \mathbb{C}^{2\times 2}$ is analytic.
\item[(b)] By using \eqref{def potential}, \eqref{def of g} and \eqref{jump relation for g on the support}, the jumps for $T$ are given by
\begin{align}
& T_{+}(\zeta) = T_{-}(\zeta) \begin{pmatrix}
e^{-4N\phi_{+}(\zeta)} & 1 \\
0 & e^{-4N\phi_{-}(\zeta)}
\end{pmatrix}, & & \mbox{ for } \zeta \in \Sigma_{1} \subset \gamma_{1}, \label{jumps for T inside support} \\
& T_{+}(\zeta) = T_{-}(\zeta) \begin{pmatrix}
1 & e^{2N(\phi_{+}(\zeta)+\phi_{-}(\zeta))} \\
0 & 1
\end{pmatrix}, & & \mbox{ for } \zeta \in \gamma_{1} \setminus \overline{\Sigma_{1}}. \label{jumps for T outside support}
\end{align}
\item[(c)] As $\zeta \to \infty$, we have $T(\zeta) = I + \bigO(\zeta^{-1})$.

As $\zeta$ tends to $r_{+}$ or $r_{-}$, $T(\zeta)$ remains bounded.
\end{itemize}

The following estimates for $T$ will be important for the saddle point analysis of Section \ref{section: saddle point analysis}.

\begin{proposition} \label{prop:TandTinvsmall}
We have $T(\zeta) = \bigO(N^{1/6})$ and $T^{-1}(\zeta) = \bigO(N^{1/6})$ as $N \to \infty$, uniformly for $\zeta \in \mathbb C \setminus \gamma_1$. In addition, for every $\delta > 0$ fixed, we have $T(\zeta) = \bigO(1)$ and $T^{-1}(\zeta) = \bigO(1)$ as $N \to \infty$ uniformly for 
\begin{equation} \label{eq:zawayfrombranchpoints} 
\zeta \in \{ \zeta \in \mathbb C \setminus \gamma_{1} : |\zeta-r_+| \geq \delta, |\zeta-r_-| \geq \delta \}.
\end{equation}
\end{proposition}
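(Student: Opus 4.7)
The strategy is to obtain Proposition \ref{prop:TandTinvsmall} as a byproduct of the full Deift--Zhou steepest descent analysis $T \mapsto S \mapsto R$ that will be carried out in the remaining subsections of Section \ref{section: steepest descent for $U$}. The idea is that the claimed estimates for $T$ follow at once from the solvability of the final small-norm RH problem for $R$, combined with the standard asymptotic behavior of the global and local parametrices.

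First, I would perform the opening of lenses $T \mapsto S$ on $\overline{\Sigma_1}$, using the factorization
\begin{equation*}
\begin{pmatrix} e^{-4N\phi_+} & 1 \\ 0 & e^{-4N\phi_-} \end{pmatrix}
= \begin{pmatrix} 1 & 0 \\ e^{-4N\phi_-} & 1 \end{pmatrix} \begin{pmatrix} 0 & 1 \\ -1 & 0 \end{pmatrix} \begin{pmatrix} 1 & 0 \\ e^{-4N\phi_+} & 1 \end{pmatrix}
\end{equation*}
of the jump \eqref{jumps for T inside support}, and deforming the triangular factors into a lens domain. By Lemma \ref{lem:Nphihigh}, $\re \phi < 0$ in a neighborhood of $\Sigma_1$ on both sides, so the resulting jumps for $S$ on the lens boundaries (and on $\gamma_1 \setminus \overline{\Sigma_1}$) are exponentially close to the identity away from $\{r_+,r_-\}$. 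Crucially, the matrix relating $T$ and $S$ is a triangular factor bounded by an absolute constant (and equal to $I_2$ outside the lenses); hence any uniform bound for $S$ and $S^{-1}$ transfers immediately to $T$ and $T^{-1}$.

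Next, I would construct a global parametrix $P^{(\infty)}$ solving the model problem with a constant jump $\begin{pmatrix} 0 & 1 \\ -1 & 0 \end{pmatrix}$ on $\overline{\Sigma_1}$ and normalized to $I_2 + \bigO(\zeta^{-1})$ at infinity. Using the map $\zeta \mapsto \sqrt{(\zeta-r_+)(\zeta-r_-)}$ with branch cut on $\Sigma_1$, this parametrix is explicit, it satisfies $P^{(\infty)}, (P^{(\infty)})^{-1} = \bigO(1)$ on any set of the form \eqref{eq:zawayfrombranchpoints}, and has a quartic root singularity $\bigO(|\zeta-r_\pm|^{-1/4})$ at the branch points. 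Around each endpoint $r_\pm$, I would then build a standard local Airy parametrix $P^{(r_\pm)}$ on a disk $D_\delta(r_\pm)$ via a conformal change of variables driven by $\phi^{2/3}$. By the Airy function expansion at $0$, such a parametrix satisfies $P^{(r_\pm)}, (P^{(r_\pm)})^{-1} = \bigO(N^{1/6})$ uniformly on $D_\delta(r_\pm)$, and matches $P^{(\infty)}$ on $\partial D_\delta(r_\pm)$ up to $I_2 + \bigO(N^{-1})$.

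Finally, the comparison matrix $R := S P^{-1}$ (where $P$ is the composite parametrix) solves a small-norm RH problem, from which $R = I_2 + \bigO(N^{-1})$ uniformly in $\mathbb{C}$. Unwinding through $T = (I_2 + o(1)) P (\text{triangular lens factor})$, the claimed estimates follow directly: $P$ is uniformly bounded together with its inverse on the set \eqref{eq:zawayfrombranchpoints}, yielding the $\bigO(1)$ bound for $T^{\pm 1}$ there, while on the full complement of $\gamma_1$ the Airy contribution yields the global $\bigO(N^{1/6})$ bound. The only nontrivial step in this plan is the explicit construction of the global and local parametrices (in particular checking the Airy matching condition), which is entirely standard once the sign structure of $\re \phi$ from Lemma \ref{lem:Nphihigh} is in hand; this will be carried out in Subsections \ref{subsection: opening of the lenses}--\ref{subsection: small norm}.
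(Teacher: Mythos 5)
Your plan is, in structure, exactly the paper's own proof: the remainder of Section \ref{section: steepest descent for $U$} (Subsections \ref{subsection: opening of the lenses}--\ref{subsection: small norm}) carries out precisely this program — opening of lenses $T\mapsto S$, global parametrix $P^{(\infty)}$ with fourth-root singularities at $r_{\pm}$, Airy local parametrices with $P^{(r_{\pm})},\,(P^{(r_{\pm})})^{-1}=\bigO(N^{1/6})$, a small-norm problem giving $R=I+\bigO(N^{-1})$ — and Proposition \ref{prop:TandTinvsmall} is then read off from $T=R\,P\,\mathcal{W}^{-1}$ with $\mathcal{W}^{\pm 1}$ uniformly bounded.

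There is, however, a sign error at the one place where the mechanism actually has to work. Lemma \ref{lem:Nphihigh} gives $\re\phi>0$, not $\re\phi<0$, on both sides of $\Sigma_1$: the two regions adjacent to $\Sigma_1$ are the one containing $c,c^{-1}$ and the unbounded one containing $0$ and $\infty$, in all of which $\re\phi\to+\infty$ at the singular points; the region where $\re\phi<0$ is the one containing $\alpha c,\alpha c^{-1}$, bounded by $\Sigma_0\cup\Sigma_\alpha$. The exponential smallness of the lens-boundary jumps $\begin{pmatrix}1&0\\ e^{-4N\phi}&1\end{pmatrix}$, and the uniform boundedness of the unfolding factor $\mathcal{W}$ in \eqref{T to S transformation} (entries $\mp e^{-4N\phi}$ in the lens regions) on which your transfer of bounds from $S^{\pm1}$ to $T^{\pm1}$ rests, both require $\re\phi>0$ there; with $\re\phi<0$, as you wrote, these entries are exponentially large and the argument collapses. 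The inequality $\re\phi<0$ is what is needed (and what the lemma provides) on $\gamma_1\setminus\overline{\Sigma_1}$, where the jump is upper triangular with entry $e^{2N(\phi_++\phi_-)}$. With the signs corrected — the paper takes the lens boundaries to be the orthogonal-trajectory arcs $\gamma_{+}=\gamma_\alpha\setminus\overline{\Sigma_\alpha}$ and $\gamma_{-}=\gamma_0\setminus\overline{\Sigma_0}$, on which $\re\phi>0$ — the rest of your sketch (Airy bounds $\bigO(N^{1/6})$ inside the disks, $\bigO(1)$ bounds for $P^{(\infty)\pm1}$ on \eqref{eq:zawayfrombranchpoints}, $R^{\pm1}=\bigO(1)$) coincides with the paper's argument.
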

The rest of this section is devoted to the proof of Proposition \ref{prop:TandTinvsmall}.

\subsection{Second transformation: $T \mapsto S$}\label{subsection: opening of the lenses}
Note that for $\zeta \in \Sigma_{1}$, the jumps for $T$ can be factorized as follows:
\begin{align}\label{factorization jumps}
\begin{pmatrix}
e^{-4N\phi_{+}(\zeta)} & 1 \\
0 & e^{-4N\phi_{-}(\zeta)}
\end{pmatrix} = \begin{pmatrix}
1 & 0 \\ e^{-4N\phi_{-}(\zeta)} & 1
\end{pmatrix} \begin{pmatrix}
0 & 1 \\ -1 & 0
\end{pmatrix} \begin{pmatrix}
1 & 0 \\ e^{-4N\phi_{+}(\zeta)} & 1
\end{pmatrix},
\end{align}
where we used $\phi_{+}(\zeta) + \phi_{-}(\zeta) = 0$ for $\zeta \in \Sigma_{1}$. We define the lenses $\gamma_{+}$ and $\gamma_{-}$ by
\begin{align*}
& \gamma_{+} := \gamma_{\alpha} \setminus \overline{\Sigma_{\alpha}} \qquad \mbox{ and } \qquad \gamma_{-} := \gamma_{0} \setminus \overline{\Sigma_{0}},
\end{align*}
see also Figure \ref{fig:opening of the lenses}. 
\begin{figure}
\begin{center}
\begin{tikzpicture}
\node at (0,0) {};
\node at (2.1,0) {\color{black} \large $\bullet$};
\node at (-0.38,0) {\color{black} \large $\bullet$};
\node at (-1.05,0) {\color{black} \large $\bullet$};
\node at (-2.035,0) {\color{black} \large $\bullet$};
\node at (-3.13,0) {\color{black} \large $\bullet$};

\node at (0.4,0) {\color{red} \large $\bullet$};
\node at (-1.5,0) {\color{red} \large $\bullet$};
\node at (-5.53,0) {\color{red} \large $\bullet$};
\node at (-1.2,1.42) {\color{red} \large $\bullet$};
\node at (-1.2,-1.42) {\color{red} \large $\bullet$};

\draw[black,line width=0.65 mm,->-=0.5,->-=0.98] ([shift=(-180:3.6cm)]2.1,0) arc (-180:180:3.6cm);

\draw[red,line width=0.65 mm] ([shift=(-35:2.4cm)]-3.13,0) arc (-35:35:2.4cm);

\draw[red,line width=0.65 mm] ([shift=(95:1.45cm)]-1.05,0) arc (95:265:1.45cm);

\draw[black,line width=0.65 mm,->-=0.6] ([shift=(35:2.4cm)]-3.13,0) arc (35:325:2.4cm);

\draw[black,line width=0.65 mm,->-=0.7] ([shift=(-95:1.45cm)]-1.05,0) arc (-95:95:1.45cm);

\node at (-0.9,2.535) {$\Sigma_{1}$};
\node at (0.6,0.535) {$\gamma_{+}$};
\node at (-2.1,2.435) {$\gamma_{-}$};
\end{tikzpicture}
\end{center}
\caption{\label{fig:opening of the lenses}The jump contour for $T$ (black), and $\Sigma_{\alpha}$ and $\Sigma_{0}$ (in red), for $\alpha = 0.4$. The red dots are the zeros of $\mathcal{Q}$, and the black dots are the poles.}
\end{figure}
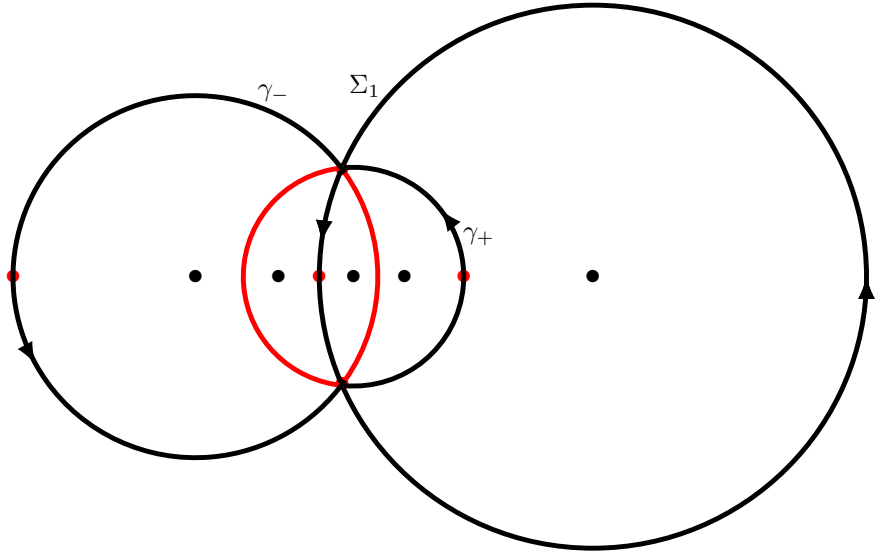
The $T \mapsto S$ transformation is given by $S(\zeta) = T(\zeta) \mathcal{W}(\zeta)$, where
\begin{equation}\label{T to S transformation}
\mathcal{W}(\zeta) = \left\{ \begin{array}{l l}
\begin{pmatrix}
1 & 0 \\ -e^{-4N\phi(\zeta)} & 1 
\end{pmatrix}, & \mbox{for } \zeta \mbox{ in the bounded region delimited by } \overline{\Sigma_{1} \cup \gamma_{+}}, \\
\begin{pmatrix}
1 & 0 \\ e^{-4N\phi(\zeta)} & 1 
\end{pmatrix}, & \mbox{for } \zeta \mbox{ in the unbounded region delimited by } \overline{\Sigma_{1} \cup \gamma_{-}}, \\
I, & \mbox{otherwise}.
\end{array} \right.
\end{equation}
$S$ satisfies the following RH problem.
\subsubsection*{RH problem for $S$}
\begin{itemize}
\item[(a)] $S : \mathbb{C}\setminus (\gamma_{1}\cup \gamma_{+} \cup \gamma_{-}) \to \mathbb{C}^{2\times 2}$ is analytic.
\item[(b)] The jumps for $S$ are given by
\begin{align}
& S_{+}(\zeta) = S_{-}(\zeta) \begin{pmatrix}
0 & 1 \\
-1 & 0
\end{pmatrix}, & & \mbox{ for } \zeta \in \Sigma_{1}, \\
& S_{+}(\zeta) = S_{-}(\zeta) \begin{pmatrix}
1 & 0 \\
e^{-4N\phi(\zeta)} & 1
\end{pmatrix}, & & \mbox{ for } \zeta \in \gamma_{+} \cup \gamma_{-}, \\
& S_{+}(\zeta) = S_{-}(\zeta) \begin{pmatrix}
1 & e^{2N(\phi_{+}(\zeta)+\phi_{-}(\zeta))} \\
0 & 1
\end{pmatrix}, & & \mbox{ for } \zeta \in \gamma_{1} \setminus \overline{\Sigma_{1}}.
\end{align}
\item[(c)] As $\zeta \to \infty$, we have $S(\zeta) = I + \bigO(\zeta^{-1})$.

As $\zeta$ tends to $r_{+}$ or $r_{-}$, $S(\zeta)$ remains bounded.
\end{itemize}
\subsection{Parametrices}\label{subsection: parametrices}
In this subsection, we find good approximations to $S$ in different regions of the complex plane. By Lemma \ref{lem:Nphihigh}, $\re \phi(\zeta) > 0$ for $\zeta \in \gamma_{+}\cup \gamma_{-}$, $\re \phi(\zeta) < 0$ for $\zeta \in \gamma_{1} \setminus \overline{\Sigma_{1}}$, and $\re \phi(\zeta) = 0$ for $\zeta \in \Sigma_{1}$. So the jumps for $S$ on $\gamma_{+}\cup\gamma_{-} \cup (\gamma_{1}\setminus \overline{\Sigma_{1}})$ are exponentially close to the identity matrix matrix as $N \to \infty$, uniformly outside fixed neighborhoods of $r_{-}$ and $r_{+}$. By ignoring these jumps, we are left with the following RH problem, whose solution is denoted $P^{(\infty)}$. We will show in Subsection \ref{subsection: small norm} that $P^{(\infty)}$ is a good approximation to $S$ away from $r_{+}$ and $r_{-}$.


\subsubsection*{RH problem for $P^{(\infty)}$}
\begin{itemize}
\item[(a)] $P^{(\infty)} : \mathbb{C}\setminus \overline{\Sigma_{1}} \to \mathbb{C}^{2\times 2}$ is analytic.
\item[(b)] The jumps for $P^{(\infty)}$ are given by
\begin{align}
& P^{(\infty)}_{+}(\zeta) = P^{(\infty)}_{-}(\zeta) \begin{pmatrix}
0 & 1 \\
-1 & 0
\end{pmatrix}, & & \mbox{ for } \zeta \in \Sigma_{1}.
\end{align}
\item[(c)] As $\zeta \to \infty$, we have $P^{(\infty)}(\zeta) = I + \bigO(\zeta^{-1})$.

As $\zeta \to \zeta_{\star} \in \{r_{+},r_{-}\}$, $P^{(\infty)}(\zeta) = \bigO((\zeta-\zeta_{\star})^{-1/4})$.
\end{itemize}
The condition on the behavior of $P^{(\infty)}(\zeta)$ as $\zeta \to \zeta_{\star}\in\{r_{+},r_{-}\}$ has been added to ensure existence of a solution. This RH problem is independent of $N$, and its unique solution is given by
\begin{equation}
P^{(\infty)}(\zeta) = \begin{pmatrix}
\frac{1}{2}(a(\zeta)+a(\zeta)^{-1}) & \frac{1}{2i}(a(\zeta)-a(\zeta)^{-1}), \\[0.2cm]
\frac{1}{-2i}(a(\zeta)-a(\zeta)^{-1}) & \frac{1}{2}(a(\zeta)+a(\zeta)^{-1})
\end{pmatrix},
\end{equation}
where $a(\zeta) := \left(\frac{\zeta-r_{+}}{\zeta-r_{-}}\right)^{1/4}$ is analytic in $\mathbb{C}\setminus \Sigma_{1}$ and such that $a(\zeta) \sim 1$ as $\zeta \to \infty$.  

\medskip Note that $P^{(\infty)}$ is not a good approximation to $S$ in small neighborhoods of $r_{+},r_{-}$; this can be seen from the behaviors
\begin{align*}
& S(\zeta)=\bigO(1) \quad \mbox{ and } \quad P^{(\infty)}(\zeta) = \bigO((\zeta-\zeta_{\star})^{-1/4}), & & \mbox{as } \zeta \to \zeta_{\star}\in\{r_{+},r_{-}\}.
\end{align*}
Let $\delta > 0$ in Proposition \ref{prop:TandTinvsmall} be fixed, and let $\mathcal{D}_{r_{+}}$ and $\mathcal{D}_{r_{-}}$ be small open disks of radius $\delta/2$ centered at $r_{+}$ and $r_{-}$, respectively. We now construct local approximations $P^{(r_{+})}$ and $P^{(r_{-})}$ (called ``local parametrices") to $S$ in $\mathcal{D}_{r_{+}}$ and $\mathcal{D}_{r_{-}}$, respectively.  We require $P^{(r_{\pm})}$ to satisfy the same jumps as $S$ inside $\mathcal{D}_{r_{\pm}}$, to remain bounded as $\zeta \to r_{\pm}$, and to satisfy the matching condition
\begin{align}\label{matching conditions}
P^{(r_{\pm})}(\zeta) = (I + \bigO(N^{-1}))P^{(\infty)}(\zeta), \qquad \mbox{as } N \to + \infty,
\end{align}
uniformly for $\zeta \in \partial \mathcal{D}_{r_{\pm}}$. The density of $\mu$ vanishes like a square root at the endpoints $r_{+}$ and $r_{-}$, and therefore $P^{(r_{\pm})}$ can be built in terms of Airy functions \cite{DKMVZ1}. These constructions are well-known and standard, so we do not give the details. What is important for us is that
\begin{equation} \label{eq:PandPinvgrow} 
P^{(r_{\pm})}(z) = \bigO(N^{\frac{1}{6}}), \quad P^{(r_{\pm})}(z)^{-1} = \bigO(N^{\frac{1}{6}}) \quad \text{ as } N \to \infty, 
\end{equation}
uniformly for $z \in \mathcal D_{r_{\pm}}$.

\subsection{Small norm RH problem $R$}\label{subsection: small norm}
The final transformation $S \mapsto R$ of the steepest descent is defined by
\begin{equation}\label{S to R transformation}
R(\zeta) = \left\{ \begin{array}{l l}
S(\zeta)P^{(\infty)}(\zeta)^{-1}, & \mbox{ for } \zeta \in \mathbb{C}\setminus (\overline{\mathcal{D}_{r_{+}}\cup \mathcal{D}_{r_{-}}}), \\
S(\zeta)P^{(r_{+})}(\zeta)^{-1}, & \mbox{ for } \zeta \in \mathcal{D}_{r_{+}}, \\
S(\zeta)P^{(r_{-})}(\zeta)^{-1}, & \mbox{ for } \zeta \in \mathcal{D}_{r_{-}}.
\end{array} \right.
\end{equation}
Since $S$ and $P^{(r_{\pm})}$ satisfy the same jumps inside $\mathcal{D}_{r_{\pm}}$, $R$ is analytic inside $(\mathcal{D}_{r_{+}}\setminus \{r_{+}\}) \cup (\mathcal{D}_{r_{-}}\setminus \{r_{-}\})$. Furthermore, $S$ and $P^{(r_{\pm})}$ remain bounded near $r_{\pm}$, so the singularities of $R$ at $r_{\pm}$ are removable. We conclude that $R$ is analytic in 
\begin{align}\label{analyticity of R}
\mathbb C \setminus \Big(\big( (\gamma_1 \cup \gamma_{+} \cup \gamma_{-})\setminus (\mathcal{D}_{r_{+}} \cup \mathcal{D}_{r_{-}})\big) \cup \partial \mathcal{D}_{r_+} \cup \partial \mathcal D_{r_-}\Big).
\end{align}
By \eqref{matching conditions}, the jumps $R_{-}^{-1}R_{+}$ are $\bigO(N^{-1})$ on $\partial \mathcal{D}_{r_{+}} \cup \partial \mathcal{D}_{r_{+}}$, and by Lemma \ref{lem:Nphihigh}, $R_{-}^{-1}R_{+} = \bigO(e^{-c N})$ on $(\gamma_1 \cup \gamma_{+} \cup \gamma_{-})\setminus (\mathcal{D}_{r_{+}} \cup \mathcal{D}_{r_{-}})$ for a certain $c > 0$. It follows by standard theory \cite{DKMVZ1,DKMVZ} that 
\begin{align}\label{lol35}
R(\zeta) = I + \bigO(N^{-1}), \qquad \mbox{as } N \to + \infty,
\end{align}
uniformly for $\zeta$ in the domain \eqref{analyticity of R}.  In particular, $R$ and $R^{-1}$ remain bounded as $N \to \infty$.

Inverting the transformations \eqref{T to S transformation} and \eqref{S to R transformation}, we get
\begin{align*}
T(\zeta) = R(\zeta) \times \left\{ \begin{array}{l l}
P^{(\infty)}(\zeta), & \mbox{ for } \zeta \in \mathbb{C}\setminus (\mathcal{D}_{r_{+}}\cup \mathcal{D}_{r_{-}}) \\
P^{(r_{+})}(\zeta), & \mbox{ for } \zeta \in \mathcal{D}_{r_{+}} \\
P^{(r_{-})}(\zeta), & \mbox{ for } \zeta \in \mathcal{D}_{r_{-}}
\end{array} \right\} \times \mathcal{W}(\zeta)^{-1}.
\end{align*}
By Lemma \ref{lem:Nphihigh}, $\mathcal{W}(\zeta)$ and $\mathcal{W}(\zeta)^{-1}$ are bounded as $N \to + \infty$, uniformly for $\zeta \in \mathbb{C}$. Proposition \ref{prop:TandTinvsmall} follows then straightforwardly by using the estimates \eqref{eq:PandPinvgrow} and \eqref{lol35}.

\section{Phase functions $\Phi$ and $\Psi$}\label{section: phase functions}
In Section \ref{section: saddle point analysis}, we will prove Proposition \ref{prop:doubleintegrallimit} via a saddle point analysis of the double contour integral \eqref{mathcalI}. As it will turn out, the dominant part of the integrand as $N \to + \infty$ will be in the form $e^{2N(\Phi(\zeta;\xi,\eta)-\Phi(\omega;\xi,\eta))}$, for a certain function $\Phi$ which is described below. The analytic continuation of $\Phi$ to the second sheet of $\mathcal{R}_{\alpha}$ is denoted $\Psi$ -- it will also play a role in the saddle point analysis and is presented below.

\medskip The content of this section is a preparation for the saddle point analysis of Section \ref{section: saddle point analysis}. We will study the level set
\begin{align}\label{mathcalNPhi}
\mathcal{N}_{\Phi} = \{ \zeta \in \mathbb{C}: \re \Phi(\zeta) = \re \Phi(s) \},
\end{align}
and also find the relevant contour deformations to consider.
\subsection{Preliminaries}
We start with a definition.
\begin{definition}\label{def: Phi and Psi}
For $(\xi,\eta) \in \mathcal H$ and $\zeta \in \mathbb{C}\setminus \big( (-\infty,c^{-1}] \cup \{c^{-1}+R_{1}e^{it}:-\pi \leq t \leq \theta_{1}\} \big)$, we define $\Phi$ and $\Psi$ by
\begin{align} \nonumber
\Phi(\zeta) & = \Phi(\zeta;\xi,\eta) \\ 
& = g(\zeta) - \frac{1+\xi-\eta}{2} \log \zeta  + \frac{1+\xi}{2}\log \big( (\zeta-\alpha c)(\zeta-\alpha c^{-1}) \big) - \frac{1+\eta}{2}\log \big( (\zeta-c)(\zeta-c^{-1}) \big) + \frac{\ell}{2} \nonumber \\
\label{Phidef} 
& =  \phi(\zeta)  - \frac{\xi-\eta}{2} \log \zeta + \frac{\xi}{2}\log \big( (\zeta-\alpha c)(\zeta-\alpha c^{-1}) \big) - \frac{\eta}{2}\log \big( (\zeta-c)(\zeta-c^{-1}) \big), \\ 
\Psi(\zeta) & =  \Psi(\zeta;\xi,\eta) = -\Phi(\zeta;-\xi,-\eta)
\\ \label{Psidef}
& =  -\phi(\zeta)  - \frac{\xi - \eta}{2} \log \zeta + \frac{\xi}{2}\log \big( (\zeta-\alpha c)(\zeta-\alpha c^{-1}) \big)  - \frac{\eta}{2}\log \big( (\zeta-c)(\zeta-c^{-1}) \big),
\end{align}
where we have used \eqref{def potential} and \eqref{phi and g relation} to write \eqref{Phidef}.
\end{definition}
In the formulas that will be used in Section \ref{section: saddle point analysis}, $\Phi$ and $\Psi$ will always appear in the form 
\begin{align*}
e^{\pm 2N\Phi(\zeta;\xi_{N},\eta_{N})}, \quad e^{\pm 2N\Psi(\zeta;\xi_{N},\eta_{N})}, \qquad \mbox{ with } \qquad \xi_{N} = \frac{x}{N}-1, \quad \eta_{N} = \frac{y}{N}-1,
\end{align*}
for certain integers $x,y \in \{1,\ldots,2N-1\}$. Since $x$ and $y$ are integers, the functions $\zeta \mapsto e^{\pm 2N\Phi(\zeta;\xi_{N},\eta_{N})}$ and $\zeta \mapsto e^{\pm 2N\Psi(\zeta;\xi_{N},\eta_{N})}$ have no jumps along $(-\infty,c^{-1}] \cup \{c^{-1} + R_{1}e^{i t}: -\pi\leq t \leq - \theta_{1}\}$. 
Also, for any $(\xi,\eta) \in \mathcal{H}$, $\re \Phi$ and $\re \Psi$ are harmonic on $\mathbb{C}\setminus (\Sigma_{1} \cup \{0,\alpha c,\alpha c^{-1},c,c^{-1}\})$, and well-defined and continuous on $\mathbb{C}\setminus \{0,\alpha c,\alpha c^{-1},c,c^{-1}\}$. 
For $(\xi,\eta) \in \mathcal{H}^{\mathrm{o}}$, we note the following basic properties of $\Phi$:
\begin{subequations}\label{eq:NPhinearpoles} 
\begin{align}
\Phi(\zeta) & = - \frac{1+\xi-\eta}{2} \log \zeta + \bigO(1)  \text{ as } \zeta \to 0, & 
\lim_{\zeta \to 0} \re \Phi(\zeta) = + \infty, \\
\Phi(\zeta) & = \frac{1+\xi}{2} \log(\zeta-\alpha c) + \bigO(1) \text{ as } \zeta \to \alpha c, &
\lim_{\zeta \to \alpha c} \re \Phi(\zeta) = - \infty, \\
\Phi(\zeta) & = \frac{1+\xi}{2} \log(\zeta -\alpha c^{-1}) + \bigO(1) \text{ as } \zeta \to \alpha c^{-1}, &
\lim_{\zeta \to \alpha c^{-1}} \re \Phi(\zeta) = - \infty, \\
\Phi(\zeta) & = -\frac{1+\eta}{2} \log(\zeta -c) + \bigO(1) \text{ as } \zeta \to c, &
\lim_{\zeta \to c} \re \Phi(\zeta) = + \infty, \\ 
\Phi(\zeta) & = -\frac{1+\eta}{2} \log(\zeta -c^{-1}) + \bigO(1) \text{ as } \zeta \to c^{-1}, &
\lim_{\zeta \to c^{-1}} \re \Phi(\zeta) = + \infty, \\ 
\Phi(\zeta) & = \frac{1-\xi + \eta}{2} \log(\zeta) + \bigO(1) \text{ as } \zeta \to \infty, &
\lim_{\zeta \to \infty} \re \Phi(\zeta) = + \infty,
\end{align}
\end{subequations}
and similarly
\begin{subequations}\label{eq:NPsinearpoles} 
\begin{align}
& \lim_{\zeta \to 0} \re \Psi(\zeta) = \lim_{\zeta \to c} \re \Psi(\zeta) = \lim_{\zeta \to c^{-1}} \re \Psi(\zeta) = \lim_{\zeta \to \infty} \re \Psi(\zeta) = - \infty, \\
& \lim_{\zeta \to \alpha c} \re \Psi(\zeta) = \lim_{\zeta \to \alpha c^{-1}} \re \Psi(\zeta) = + \infty
\end{align}
\end{subequations}
Since the saddle points are the solutions to \eqref{eq:saddlepointeq}, it follows from \eqref{def of phi}, \eqref{Phidef} and \eqref{Psidef} that they are also the zeros of $\Phi'$ and $\Psi'$. For the saddle point analysis, it will be important to know: 1) the sign of $|s-c^{-1}|-R_{1}$ and 2) whether $\Phi'(s)=0$ or $\Psi'(s) = 0$. We summarize the different cases in the next lemma.

\begin{lemma} \label{lem:Ldivision} 
Let $(\xi,\eta) \in \mathcal L_{\alpha}$ and $s = s(\xi,\eta;\alpha)$. 
Then we have 
\begin{enumerate}
\item[\rm (a)] $\Phi'(s) = 0$ and $|s-c^{-1}| < R_{1}$ if and only if $\xi < 0$ and $\eta < \frac{\xi}{2}$, 
\item[\rm (b)] $\Phi'(s)=0$ and $|s-c^{-1}| > R_{1}$ if and only if $\xi < 0$ and $\eta > \frac{\xi}{2}$,
\item[\rm (c)] $\Psi'(s)=0$ and $|s-c^{-1}| < R_{1}$ if and only if $\xi > 0$ and $\eta > \frac{\xi}{2}$,
\item[\rm (d)] $\Psi'(s)=0$ and $|s-c^{-1}| > R_{1}$ if and only if $\xi > 0$ and $\eta < \frac{\xi}{2}$,
\item[\rm (e)] $|s-c^{-1}| = R_{1}$ if and only if $\xi = 0$ or $\eta = \frac{\xi}{2}$.
\end{enumerate}
\end{lemma}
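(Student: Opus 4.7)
The argument will proceed in three parts. First, differentiating Definition \ref{def: Phi and Psi} and using $\phi'=\mathcal Q^{1/2}$, one obtains
\begin{equation*}
\Phi'(\zeta)=\mathcal Q(\zeta)^{1/2}+f(\zeta;\xi,\eta), \qquad \Psi'(\zeta)=-\mathcal Q(\zeta)^{1/2}+f(\zeta;\xi,\eta),
\end{equation*}
where $f$ is the function introduced in the proof of Proposition \ref{prop:saddlesymmetry}. Comparing the definition of $f$ with \eqref{eq:saddlepointeq sqrt} yields $w(\xi,\eta;\alpha)=-f(s;\xi,\eta)$, so the constraint $w^2=\mathcal Q(s)$ forces exactly one of $\Phi'(s)$, $\Psi'(s)$ to vanish, according to the sheet of $\mathcal R_\alpha$ carrying $(s,w)$. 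By Proposition \ref{prop: s on the Riemann surface} the first sheet (where $w=+\mathcal Q^{1/2}(s)$) corresponds to $\mathcal L_\alpha^l=\{\xi<0\}$, so this step gives $\Phi'(s)=0 \Leftrightarrow \xi<0$ and $\Psi'(s)=0 \Leftrightarrow \xi>0$, accounting for the left/right split in (a)--(d).

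Case (e) will then be immediate: since $\gamma_1=\{\zeta:|\zeta-c^{-1}|=R_1\}$, the condition $|s-c^{-1}|=R_1$ amounts to $s\in\gamma_1$, and parts (a)--(b) of Proposition \ref{prop:hightemp} identify the $s$-preimage of $\gamma_1\cap\mathbb C^+$ in $\mathcal L_\alpha$ as exactly $(\{\xi=0\}\cup\{\eta=\xi/2\})\cap\mathcal L_\alpha$.

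For the strict inequalities, I will introduce $H(\xi,\eta):=|s(\xi,\eta;\alpha)-c^{-1}|^2-R_1^2$. By case (e) this function is continuous on $\mathcal L_\alpha$ and vanishes exactly on the curves $\{\xi=0\}$ and $\{\eta=\xi/2\}$ in $\mathcal L_\alpha$, so it has constant sign on each of the four connected components of the complement. The symmetry $s(-\xi,-\eta;\alpha)=s(\xi,\eta;\alpha)$ from Proposition \ref{prop:hightemp} gives $H(-\xi,-\eta)=H(\xi,\eta)$, pairing case (a) with (c) and (b) with (d); it then remains to fix the sign on two slices in the left half $\{\xi<0\}$. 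The key structural input is \eqref{r2 r+ r- on the circle}, which asserts $\gamma_0\cap\gamma_1=\gamma_\alpha\cap\gamma_1=\{r_+,r_-\}$: hence each of $\gamma_0,\gamma_\alpha$ is partitioned by $r_\pm$ into two open arcs, one entirely inside and one entirely outside the open disk $\{|\zeta-c^{-1}|<R_1\}$. Four elementary endpoint comparisons --- $|R_0-c^{-1}|<R_1$ (which, after squaring twice, reduces to $3(1-\alpha)^2>0$), $|-R_0-c^{-1}|>R_1$ (immediate from $c^{-1}>R_1$), and the analogous pair at $\alpha c^{-1}\pm R_\alpha$ (each reducing to an elementary manipulation of $\sqrt{1-\alpha+\alpha^2}$) --- will identify $\Sigma_0$ as the inside arc of $\gamma_0$ and $\Sigma_\alpha$ as the outside arc of $\gamma_\alpha$. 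Proposition \ref{prop:hightemp}(c) then sends the slice $\{\eta=\xi\}\cap\{\xi<0\}$ (which belongs to the region of case (a), since $\xi<0$ forces $\xi<\xi/2$) into $\Sigma_0\cap\mathbb C^+$, inside the disk, giving $H<0$ on (a); similarly, Proposition \ref{prop:hightemp}(e) sends $\{\eta=0\}\cap\{\xi<0\}$ (which belongs to case (b), since $0>\xi/2$) into $\Sigma_\alpha\cap\mathbb C^+$, outside the disk, giving $H>0$ on (b). Cases (c) and (d) follow by the $(\xi,\eta)\mapsto(-\xi,-\eta)$ symmetry.

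The main obstacle is the geometric classification of $\Sigma_0$ and $\Sigma_\alpha$ relative to $\gamma_1$ in the last step. The two-circle intersection equality $\gamma_0\cap\gamma_1=\gamma_\alpha\cap\gamma_1=\{r_+,r_-\}$ is what makes it tractable: it collapses the verification to a single endpoint comparison per arc, whereas without this structural fact one would need to verify monotonicity of $\zeta\mapsto|\zeta-c^{-1}|$ along the full arcs $\Sigma_0\cap\mathbb C^+$ and $\Sigma_\alpha\cap\mathbb C^+$.
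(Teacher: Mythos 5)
Your proposal is correct in substance and follows the same route as the paper, whose (omitted) proof is exactly "combine Propositions \ref{prop:hightemp} and \ref{prop: s on the Riemann surface}": the identification $\Phi'(s)=\mathcal{Q}(s)^{1/2}-w$, $\Psi'(s)=-\mathcal{Q}(s)^{1/2}-w$ together with the sheet dichotomy of Proposition \ref{prop: s on the Riemann surface} gives the $\xi\lessgtr 0$ split, and Proposition \ref{prop:hightemp} handles the circle $\gamma_1$; your endpoint comparisons correctly place $\Sigma_0$ inside and $\Sigma_\alpha$ outside the disk bounded by $\gamma_1$ (and the fact that two distinct circles meet in at most two points does make this a two-point check per arc).

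The one step that needs more care is the propagation of the sign of $H(\xi,\eta)=|s-c^{-1}|^2-R_1^2$ from the test slices $\{\eta=\xi,\ \xi<0\}$ and $\{\eta=0,\ \xi<0\}$ to the full open quadrants: you assert that the four sets $\mathcal{L}_\alpha\cap\{\pm\xi>0,\ \pm(\eta-\tfrac{\xi}{2})>0\}$ are "the four connected components" of the complement of the zero set, but their connectedness is neither proved in the paper nor in your plan (the flower-shaped $\mathcal{L}_\alpha$ is not shown to be star-shaped, say), and without it a component missing the test slice would be left undetermined. A clean patch uses machinery you already invoke: by Proposition \ref{prop: s on the Riemann surface} the map restricted to $\mathcal{L}_\alpha^{l}=\{\xi<0\}\cap\mathcal{L}_\alpha$ is a homeomorphism onto the upper half-plane of the first sheet, so the preimages of the connected sets $\{\zeta\in\mathbb{C}^+:|\zeta-c^{-1}|<R_1\}$ and $\{\zeta\in\mathbb{C}^+:|\zeta-c^{-1}|>R_1\}$ are connected open subsets of $\mathcal{L}_\alpha^{l}$ partitioning $\mathcal{L}_\alpha^{l}$ minus the preimage of $\gamma_1\cap\mathbb{C}^+$; this preimage is exactly $\{\eta=\tfrac{\xi}{2}\}\cap\mathcal{L}_\alpha^{l}$ (here you need ontoness in Proposition \ref{prop:hightemp}\ref{item b in prop mapping s} combined with injectivity of $(\xi,\eta)\mapsto(s,w)$ and the symmetry $s(-\xi,-\eta)=s(\xi,\eta)$, since $s$ alone is two-to-one), and since this line separates the two quadrants inside $\{\xi<0\}$, each connected preimage must lie in a single quadrant, which your slice evaluations then identify. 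The same remark applies to your statement of case (e): Proposition \ref{prop:hightemp}\ref{item a in prop mapping s}--\ref{item b in prop mapping s} only gives that the two lines map onto $\gamma_1\cap\mathbb{C}^+$; the converse ("no other $(\xi,\eta)$ has $s\in\gamma_1$") again requires the injectivity-plus-symmetry argument just described. With these two points made explicit the proof is complete.
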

\begin{proof}
This is an immediate consequence of Propositions \ref{prop:hightemp} and \ref{prop: s on the Riemann surface}.
\end{proof}

\subsection{The level set $\mathcal{N}_{\Phi}$}\label{subsection: the set NPhi}
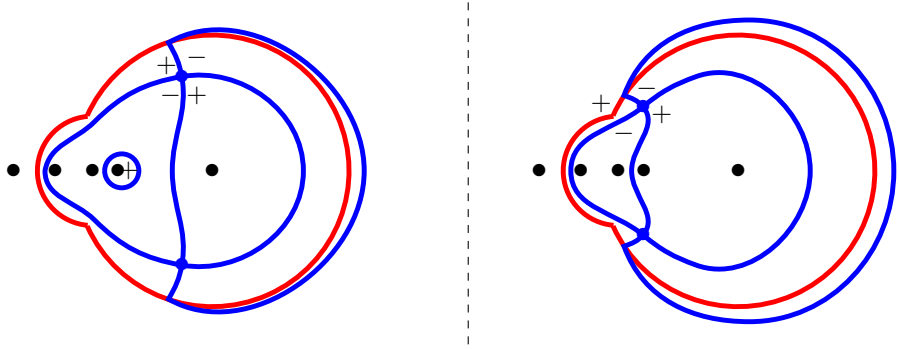
\begin{figure}
\begin{center}
\begin{tikzpicture}[master,scale = 0.5]
\node at (0,0) {};
\node at (2.1,0) {\color{black} \large $\bullet$};
\node at (-0.38,0) {\color{black} \large $\bullet$};
\node at (-1.05,0) {\color{black} \large $\bullet$};
\node at (-2.035,0) {\color{black} \large $\bullet$};
\node at (-3.13,0) {\color{black} \large $\bullet$};

\draw[red,line width=0.65 mm] ([shift=(-156:3.6cm)]2.1,0) arc (-156:156:3.6cm);

\draw[red,line width=0.65 mm] ([shift=(95:1.45cm)]-1.05,0) arc (95:265:1.45cm);

\node at (1.3,2.5) {\color{blue} \large $\bullet$};
\node at (1.3,-2.5) {\color{blue} \large $\bullet$};
\draw[blue, line width=0.65 mm] (1.3,2.5) to [out=10, in=90] (4.5,0) to [out=-90, in=-10] (1.3,-2.5);
\draw[blue, line width=0.65 mm] (1.3,2.5) to [out=-80, in=90] (1.05,0) to [out=-90, in=80] (1.3,-2.5);
\draw[blue, line width=0.65 mm] (1.3,2.5) to [out=-170, in=45]
(-1,1.3) to [out=-135, in=80] (-2.3,0) to [out=-90, in=135] (-1,-1.3) to [out=-45, in=170] (1.3,-2.5);
\draw[blue, line width=0.65 mm] (1.3,2.5) to [out=100, in=-60]
(0.95,3.4) to [out=30, in=90] (6.1,0) to [out=-90, in=-30] (0.95,-3.4) to [out=60, in=-100] (1.3,-2.5);
\draw[blue,line width=0.65 mm] ([shift=(-180:0.45cm)]-0.28,0) arc (-180:180:0.45cm);
\node at (-0.1,0) {$+$};
\node at (1.7,2) {$+$};
\node at (1.7,3) {$-$};
\node at (0.9,2.8) {$+$};
\node at (1,2) {$-$};
\end{tikzpicture}\hspace{2cm}\begin{tikzpicture}[slave,scale = 0.5]
\node at (0,0) {};
\node at (2.1,0) {\color{black} \large $\bullet$};
\node at (-0.38,0) {\color{black} \large $\bullet$};
\node at (-1.05,0) {\color{black} \large $\bullet$};
\node at (-2.035,0) {\color{black} \large $\bullet$};
\node at (-3.13,0) {\color{black} \large $\bullet$};

\draw[red,line width=0.65 mm] ([shift=(-156:3.6cm)]2.1,0) arc (-156:156:3.6cm);

\draw[red,line width=0.65 mm] ([shift=(95:1.45cm)]-1.05,0) arc (95:265:1.45cm);

\node at (-0.4,1.7) {\color{blue} \large $\bullet$};
\node at (-0.4,-1.7) {\color{blue} \large $\bullet$};
\draw[blue, line width=0.65 mm] (-0.4,1.7) to [out=40, in=-160]
(1,2.5) to [out=20, in=90]
(4,0) to [out=-90, in=-20]
(1,-2.5) to [out=160, in=-40]
(-0.4,-1.7);
\draw[blue, line width=0.65 mm] (-0.4,1.7) to [out=-50, in=90] (-0.7,0) to [out=-90, in=50] (-0.4,-1.7);
\draw[blue, line width=0.65 mm] (-0.4,1.7) to [out=-140, in=90]
(-2.3,0) to [out=-90, in=140] (-0.4,-1.7);
\draw[blue, line width=0.65 mm] (-0.4,1.7) to [out=130, in=-20]
(-0.9,2) to [out=70, in=180]
(2.4,4) to [out=0, in=90]
(6.2,0) to [out=-90, in=0]
(2.4,-4) to [out=-180, in=-70]
(-0.9,-2) to [out=20, in=-130]
(-0.4,-1.7);
\node at (-1.5,1.8) {$+$};
\node at (-0.3,2.2) {$-$};
\node at (0.1,1.5) {$+$};
\node at (-0.9,1) {$-$};

\draw[dashed] (-5,-4.6)--(-5,4.6);
\end{tikzpicture}
\end{center}
\caption{\label{fig: cases 1 and 2 for Phi}
The set $\mathcal{N}_{\Phi}$ is represented in blue, and $\Sigma_{\alpha} \cup \Sigma_{1}$ in red. The parameters are $(\alpha,\xi,\eta) = (0.4,-0.12,-0.86)$ (left) and $(\alpha,\xi,\eta) = (0.4,-0.22,-0.66)$ (right), and they satisfy $\eta < \frac{\xi}{2} < 0$. The sign of $\re (\Phi(\zeta)-\Phi(s))$ in the different regions delimited by $\mathcal{N}_{\Phi}$ is indicated with $\pm$. In each figure, the black dots represent $0$, $\alpha c$, $\alpha c^{-1}$, $c$ and $c^{-1}$ and the blue dots are $s$ and $\overline{s}$.}
\end{figure}
We study the set
\begin{align*}
\mathcal{N}_{\Phi} = \{ z \in \mathbb{C}: \re \Phi(z) = \re \Phi(s) \},
\end{align*}
in case $\eta \leq \frac{\xi}{2}<0$. We have represented $\mathcal{N}_{\Phi}$ for different values of $(\alpha,\xi,\eta)$ in Figures \ref{fig: cases 1 and 2 for Phi}, \ref{fig: cases 3 and 4 for Phi} and \ref{fig: case 5 for Phi}. There are in total eight saddles which are the zeros of $\Phi'$ and $\Psi'$. From \eqref{eq:NPhinearpoles}--\eqref{eq:NPsinearpoles}, both $\Phi'$ and $\Psi'$ vanish at least once on each of the intervals $(-\infty,0)$, $(\alpha c,\alpha c^{-1})$, and $(c,c^{-1})$. This determines the location of $6$ saddles. The remaining two are $s$ and $\overline{s}$, and we already know from Lemma \ref{lem:Ldivision} (a) and (e) that $\Phi'(s) = 0 = \Phi'(\overline{s})$. Therefore, $\Phi' \neq 0$ on $(0,\alpha c) \cup (\alpha c^{-1},c)$. Since $\Phi'(\zeta) \in \mathbb{R}$ for $\zeta \in \mathbb{R}\setminus \{0,\alpha c, \alpha c^{-1},c,c^{-1}\}$, this implies by \eqref{eq:NPhinearpoles} that $\mathcal{N}_{\Phi}$ intersects exactly once each of these two intervals.

\vspace{0.2cm}We show with the next two lemmas that the set $\mathcal{N}_{\Phi} \cap (\overline{\Sigma_{\alpha}\cup \Sigma_{1}})\cap \mathbb{C}^{+}$ is either the empty set or a singleton.

\begin{figure}
\begin{center}
\begin{tikzpicture}[master,scale = 0.5]
\node at (0,0) {};
\node at (2.1,0) {\color{black} \large $\bullet$};
\node at (-0.38,0) {\color{black} \large $\bullet$};
\node at (-1.05,0) {\color{black} \large $\bullet$};
\node at (-2.035,0) {\color{black} \large $\bullet$};
\node at (-3.13,0) {\color{black} \large $\bullet$};

\draw[red,line width=0.65 mm] ([shift=(-156:3.6cm)]2.1,0) arc (-156:156:3.6cm);

\draw[red,line width=0.65 mm] ([shift=(95:1.45cm)]-1.05,0) arc (95:265:1.45cm);

\node at (-1.25,0.8) {\color{blue} \large $\bullet$};
\node at (-1.25,-0.8) {\color{blue} \large $\bullet$};
\draw[blue, line width=0.65 mm] (-1.25,0.8) to [out=20, in=90]
(3.3,0) to [out=-90, in=-20]
(-1.25,-0.8);
\draw[blue, line width=0.65 mm] (-1.25,0.8) to [out=-70, in=90] (-0.8,0) to [out=-90, in=70] (-1.25,-0.8);
\draw[blue, line width=0.65 mm] (-1.25,0.8) to [out=-160, in=90]
(-2.3,0) to [out=-90, in=160] (-1.25,-0.8);
\draw[blue, line width=0.65 mm] (-1.25,0.8) to [out=110, in=180]
(2.4,4.5) to [out=0, in=90]
(7,0) to [out=-90, in=0]
(2.4,-4.5) to [out=-180, in=-110]
(-1.25,-0.8);
\node at (-1.5,1) {$+$};
\node at (-0.9,1.2) {$-$};
\node at (-0.7,0.7) {$+$};
\node at (-1.4,0.5) {$-$};
\end{tikzpicture}
\hspace{1cm}
\begin{tikzpicture}[scale = 0.5]
\node at (0,0) {};
\node at (2.1,0) {\color{black} \large $\bullet$};
\node at (-0.38,0) {\color{black} \large $\bullet$};
\node at (-1.05,0) {\color{black} \large $\bullet$};
\node at (-2.035,0) {\color{black} \large $\bullet$};
\node at (-3.13,0) {\color{black} \large $\bullet$};

\draw[red,line width=0.65 mm] ([shift=(-156:3.6cm)]2.1,0) arc (-156:156:3.6cm);

\draw[red,line width=0.65 mm] ([shift=(95:1.45cm)]-1.05,0) arc (95:265:1.45cm);

\node at (-1.25,0.6) {\color{blue} \large $\bullet$};
\node at (-1.25,-0.6) {\color{blue} \large $\bullet$};
\draw[blue, line width=0.65 mm] (-1.25,0.6) to [out=40, in=90]
(0.5,0) to [out=-90, in=-40]
(-1.25,-0.6);
\draw[blue, line width=0.65 mm] (-1.25,0.6) to [out=-50, in=90] (-0.9,0) to [out=-90, in=50] (-1.25,-0.6);
\draw[blue, line width=0.65 mm] (-1.25,0.6) to [out=-160, in=90]
(-1.6,0) to [out=-90, in=160] (-1.25,-0.6);
\draw[blue, line width=0.65 mm] (-1.25,0.6) to [out=130, in=180]
(2.4,4.5) to [out=0, in=90]
(7,0) to [out=-90, in=0]
(2.4,-4.5) to [out=-180, in=-130]
(-1.25,-0.6);
\node at (-1.6,0.8) {$+$};
\node at (-0.9,1) {$-$};
\node at (-0.7,0.5) {$+$};
\node at (-1.3,0.3) {$-$};

\node at (1.9,0.5) {$+$};

\draw[blue,line width=0.65 mm] ([shift=(-180:0.25cm)]-2.035,0) arc (-180:180:0.25cm);
\draw[blue,line width=0.65 mm] ([shift=(-180:0.7cm)]1.9,0) arc (-180:180:0.7cm);

\draw[dashed] (-4.8,-4.6)--(-4.8,4.6);
\end{tikzpicture}
\end{center}
\caption{\label{fig: cases 3 and 4 for Phi}
The set $\mathcal{N}_{\Phi}$ is represented in blue, and $\Sigma_{\alpha} \cup \Sigma_{1}$ in red. The parameters are $(\alpha,\xi,\eta) = (0.4,-0.55,-0.414)$ (left) and $(\alpha,\xi,\eta) = (0.4,-0.88,-0.502)$ (right), and they satisfy $\eta < \frac{\xi}{2} < 0$. The sign of $\re (\Phi(\zeta)-\Phi(s))$ in the different regions delimited by $\mathcal{N}_{\Phi}$ is indicated with $\pm$. In each figure, the black dots represent $0$, $\alpha c$, $\alpha c^{-1}$, $c$ and $c^{-1}$ and the blue dots are $s$ and $\overline{s}$.}
\end{figure}

\vspace{0.2cm}For $\zeta \in \mathbb{C}\setminus \{0,\alpha c, \alpha c^{-1},c,c^{-1}\}$, we define the following functions
\begin{align*}
f_{1}(\zeta) = \log \frac{(\zeta-c)(\zeta-c^{-1})}{\zeta}, \quad f_{2}(\zeta) = \log \frac{\zeta}{(\zeta - \alpha c)(\zeta - \alpha c^{-1})},  \quad f_{3}(\zeta) = \log \frac{(\zeta-c)(\zeta-c^{-1})}{(\zeta - \alpha c)(\zeta - \alpha c^{-1})}.
\end{align*}
\begin{lemma}
\label{lem:LogsIncSigmaMinus1} 
If $\zeta$ moves along $(\overline{\Sigma_{\alpha}\cup \Sigma_{1}}) \cap \mathbb C^+$ from left to right, then 
\begin{itemize}
\item[(1)] $\re f_{1}$ is strictly decreasing on $\Sigma_{\alpha}\cap \mathbb{C}^{+}$ and constant on $\Sigma_{1}\cap \mathbb{C}^{+}$,
\item[(2)] $\re f_{2}$ is constant on $\Sigma_{\alpha}\cap \mathbb{C}^{+}$ and strictly decreasing on $\Sigma_{1}\cap \mathbb{C}^{+}$,
\item[(3)] $\re f_{3}$ is strictly decreasing.
\end{itemize}
\end{lemma}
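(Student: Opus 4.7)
My plan is to first observe that $f_3 = f_1 + f_2$ follows directly from the definitions, so part (3) will reduce to parts (1) and (2): on $\Sigma_\alpha$ the constancy of $\re f_2$ from (2) converts monotonicity of $\re f_3$ into that of $\re f_1$, and symmetrically on $\Sigma_1$. The remaining work splits into one constancy statement and one strict-monotonicity statement on each of $\gamma_1$, $\gamma_\alpha$.

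For the constancy statements I will exploit the Möbius involutions
\[
\iota_1(\zeta) = c^{-1} + \frac{R_1^2}{\zeta - c^{-1}}, \qquad \iota_\alpha(\zeta) = \alpha c^{-1} + \frac{R_\alpha^2}{\zeta - \alpha c^{-1}},
\]
which restrict to complex conjugation on $\gamma_1$, $\gamma_\alpha$ respectively. Using the key relations $R_1^2 = c^{-2} - 1$ and $R_\alpha^2 = \alpha^2(c^{-2}-1)$ (both immediate from the definitions of $c$, $R_1$, $R_\alpha$), a short rational-function computation gives
\[
h_1(\zeta)\,h_1(\iota_1(\zeta)) = c^2 R_1^4, \qquad h_2(\zeta)\,h_2(\iota_\alpha(\zeta)) = \frac{1}{(1-c^2)R_\alpha^2},
\]
where $h_1(\zeta) = (\zeta-c)(\zeta-c^{-1})/\zeta$ and $h_2(\zeta) = \zeta/[(\zeta - \alpha c)(\zeta - \alpha c^{-1})]$. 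Because $h_1,h_2$ have real coefficients, $h_j(\iota(\zeta)) = \overline{h_j(\zeta)}$ on the relevant circle, so these identities become $|h_1|^2 = c^2 R_1^4$ on $\gamma_1$ and $|h_2|^2 = 1/[(1-c^2)R_\alpha^2]$ on $\gamma_\alpha$; taking $\log$ proves that $\re f_1$ is constant on $\gamma_1$ and $\re f_2$ is constant on $\gamma_\alpha$.

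For the strict-monotonicity statements, parametrizing $\gamma_\alpha$ by $\zeta = \alpha c^{-1} + R_\alpha e^{it}$ and $\gamma_1$ by $\zeta = c^{-1} + R_1 e^{it}$, traversing $\Sigma_\alpha \cap \mathbb{C}^+$ (resp.\ $\Sigma_1 \cap \mathbb{C}^+$) from left to right corresponds to $t$ decreasing from $\pi$ to $\theta_\alpha$ (resp.\ from $\theta_1$ to $0$); using $\zeta'(t) = i(\zeta - \alpha c^{-1})$ (resp.\ $i(\zeta - c^{-1})$), strict decrease is equivalent to $\im[(\zeta - \alpha c^{-1}) f_1'(\zeta)] < 0$ (resp.\ $\im[(\zeta - c^{-1}) f_2'(\zeta)] < 0$) throughout the corresponding arc. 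Applying the elementary identity $\im\bigl[\frac{\zeta - b}{\zeta - a}\bigr] = \frac{(b - a)\im\zeta}{|\zeta - a|^2}$ for real $a,b$ term by term yields
\[
\im\bigl[(\zeta - \alpha c^{-1}) f_1'(\zeta)\bigr] = -\im\zeta\left(\frac{\alpha c^{-1}}{|\zeta|^2} + \frac{\alpha(1-\alpha)c}{|\zeta - c|^2} + \frac{(1-\alpha)c^{-1}}{|\zeta - c^{-1}|^2}\right),
\]
whose bracket is manifestly positive. Since $\im\zeta > 0$ on $\Sigma_\alpha \cap \mathbb{C}^+$, the strict-decrease claim in (1) follows at once.

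The analogous computation for (2) produces
\[
\im\bigl[(\zeta - c^{-1}) f_2'(\zeta)\bigr] = \im\zeta\left(\frac{c^{-1}}{|\zeta|^2} - \frac{(1-\alpha)c/\alpha}{|\zeta - \alpha c|^2} - \frac{(1-\alpha)c^{-1}}{|\zeta - \alpha c^{-1}|^2}\right),
\]
and here the bracket $B(\zeta)$ has mixed signs; this will be the main obstacle. To resolve it I will substitute the reflection identities $|\zeta - a|^2 = (\zeta - a)(\iota_1(\zeta) - a)$ on $\gamma_1$, which together with the explicit formulas $\iota_1(\zeta) = c^{-1}(\zeta - c)/(\zeta - c^{-1})$, $\iota_1(\zeta) - \alpha c = (1-\alpha)c(\zeta - \alpha c^{-1})/[\alpha(\zeta - c^{-1})]$, and $\iota_1(\zeta) - \alpha c^{-1} = c^{-1}(1-\alpha)(\zeta - \alpha c)/(\zeta - c^{-1})$ (all derived from the second paragraph's algebra) collapse $B$ to the rational expression
\[
B(\zeta) = -\frac{(\zeta - c^{-1}) P(\zeta)}{\zeta(\zeta - c)(\zeta - \alpha c)(\zeta - \alpha c^{-1})}, \qquad P(\zeta) = \zeta^2 - \bigl[(2-\alpha)c - \alpha c^{-1}\bigr]\zeta - \alpha^2.
\]
Since $P$ has real coefficients and $P(0) = -\alpha^2 < 0$, its two roots are real and of opposite sign, so together with $c^{-1}$ all zeros of $B$ lie on $\mathbb{R}$; in particular $B$ has no zeros on $\Sigma_1 \cap \mathbb{C}^+$ and hence has constant sign along the closure. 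Evaluating at the real endpoint $\zeta = c^{-1} + R_1$, where every linear factor is a positive real and
\[
P(c^{-1} + R_1) = 2R_1^2 + R_1\bigl[2cR_1^2 + \alpha(c + c^{-1})\bigr] > 0 \qquad \text{for all } \alpha \in (0,1),
\]
fixes the sign as $B(c^{-1}+R_1) < 0$, hence $B < 0$ throughout $\Sigma_1 \cap \mathbb{C}^+$, completing (2) and therefore (3).
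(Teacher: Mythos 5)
Your proposal is correct — I checked the key identities ($R_1^2=c^{-2}-1$, $R_\alpha^2=\alpha^2(c^{-2}-1)$, the reflection formulas for $\iota_1$, the products $h_1(\zeta)h_1(\iota_1(\zeta))=(1-c^2)R_1^2=c^2R_1^4$ and $h_2(\zeta)h_2(\iota_\alpha(\zeta))=1/[(1-c^2)R_\alpha^2]$, the two imaginary-part formulas, the collapse of $B$ to $-(\zeta-c^{-1})P(\zeta)/[\zeta(\zeta-c)(\zeta-\alpha c)(\zeta-\alpha c^{-1})]$ with $P(\zeta)=\zeta^2-[(2-\alpha)c-\alpha c^{-1}]\zeta-\alpha^2$, and $P(c^{-1}+R_1)=2R_1^2+R_1[2cR_1^2+\alpha(c+c^{-1})]>0$) and they all hold, and your orientation conventions (left-to-right means $t$ decreasing on both circles, so strict decrease amounts to $\im[(\zeta-a)f_j'(\zeta)]<0$) are consistent. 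Your route is genuinely different from the paper's: the paper parametrizes $\gamma_\alpha$ and $\gamma_1$ and computes $\frac{d}{dt}\re f_1$ explicitly in trigonometric form (a ``long and tedious computation'' showing the sign of $\sin t$ on $\gamma_\alpha$, and a purely imaginary derivative on $\gamma_1$), then declares the $f_2$ and $f_3$ cases ``similar'' and omits them. You instead obtain the two constancy statements as algebraic consequences of the circle inversions $\iota_1,\iota_\alpha$ (the same reflection symmetry the paper exploits as the star operation \eqref{star operation} in Section \ref{subsection: symmetries}), get the easy monotonicity (1) from a termwise partial-fraction bound that is sign-definite without any trigonometry, reduce the delicate monotonicity (2) to the location of the zeros of one real quadratic plus a single endpoint evaluation and a no-zero-off-$\mathbb{R}$ continuity argument, and dispose of (3) in one line via $\re f_3=\re f_1+\re f_2$ (with continuity at the junction $r_+$ gluing the two arcs). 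What each approach buys: yours is shorter, makes the omitted $f_2,f_3$ cases explicit, and isolates exactly where positivity could fail (the quadratic $P$); the paper's uniform trigonometric scheme produces explicit formulas of the type \eqref{Q on circle form 2}--\eqref{Q on circle gamma_a} that it reuses elsewhere in the argument (e.g.\ in Lemma \ref{lemma:Sigma0 and Sigma alpha} and in \eqref{lol25}), so it integrates more directly with the rest of Section \ref{section: phase functions}.
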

\begin{proof}
A long and tedious computation shows that $\frac{d}{dt} \re f_{1}(\alpha c^{-1} + R_{\alpha} e^{-i t})$ has the same sign as $\sin t$. In particular, $\re f_{1}(\zeta)$ is strictly decreasing along $\Sigma_{\alpha}\cap \mathbb{C}^{+}$ as $\zeta$ moves from left to right. Another (and simpler) computation gives
\begin{align*}
\frac{d}{dt}f_{1}(c^{-1} + R_{1}e^{-it}) = -i \frac{\cos t + \frac{\sqrt{1-\alpha + \alpha^{2}}}{1-\alpha}}{\cos t + \frac{2-3\alpha + 2 \alpha^{2}}{2 (1-\alpha)\sqrt{1-\alpha + \alpha^{2}}}}.
\end{align*}
This expression is purely imaginary, so $\re f_{1}$ is constant on $\Sigma_{1}$. The proofs for $f_{2}$ and $f_{3}$ are similar, so we omit them.
\end{proof}

\begin{corollary} \label{cor:RePhi} For $\eta \leq \frac{\xi}{2} < 0$, the function $\zeta \mapsto \re \Phi(\zeta)$ is strictly decreasing	as $\zeta$ moves along $(\overline{\Sigma_{\alpha} \cup \Sigma_1}) \cap \mathbb C^+$	from left to right.
\end{corollary}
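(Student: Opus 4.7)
The plan is to decompose $\Phi$ as an explicit linear combination of $\phi$ and the functions $f_1, f_2$ introduced just before Lemma \ref{lem:LogsIncSigmaMinus1}, and then read off the monotonicity from Lemma \ref{lem:Nphihigh} (which describes $\re\phi$) together with Lemma \ref{lem:LogsIncSigmaMinus1} (which describes $\re f_1, \re f_2$ on the two arcs).

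First, using the definitions of $f_1$ and $f_2$ and collecting the $\log\zeta$ terms, one verifies the identity
\begin{equation*}
\Phi(\zeta) \;=\; \phi(\zeta) \;-\; \frac{\eta}{2}\, f_1(\zeta) \;-\; \frac{\xi}{2}\, f_2(\zeta),
\end{equation*}
by matching it with the second line of \eqref{Phidef}. The hypothesis $\eta\leq \tfrac{\xi}{2} < 0$ ensures $\xi<0$ and $\eta<0$, so both coefficients $-\tfrac{\eta}{2}$ and $-\tfrac{\xi}{2}$ are strictly positive.

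Next, I would analyze each of the two arcs separately. By Lemma \ref{lem:Nphihigh}, $\re\phi \equiv 0$ on $\Sigma_\alpha \cup \Sigma_1$. On $\Sigma_\alpha \cap \mathbb{C}^+$, item (2) of Lemma \ref{lem:LogsIncSigmaMinus1} says $\re f_2$ is constant, while item (1) says $\re f_1$ is strictly decreasing as $\zeta$ moves from left to right; hence $\re \Phi = -\tfrac{\eta}{2}\re f_1 + \mathrm{const}$ is strictly decreasing because $-\tfrac{\eta}{2}>0$. Symmetrically, on $\Sigma_1 \cap \mathbb{C}^+$, item (1) of Lemma \ref{lem:LogsIncSigmaMinus1} makes $\re f_1$ constant and item (2) makes $\re f_2$ strictly decreasing, so $\re\Phi = -\tfrac{\xi}{2}\re f_2 + \mathrm{const}$ is strictly decreasing because $-\tfrac{\xi}{2}>0$.

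Finally, the two arcs join at $r_+$, which is not one of the logarithmic singularities $\{0,\alpha c,\alpha c^{-1},c,c^{-1}\}$ of $\Phi$; hence $\re\Phi$ is continuous across $r_+$, as already noted for $\Phi$ just after Definition \ref{def: Phi and Psi}. Combining the strict monotonicity on each of the two open arcs with continuity at the junction yields strict monotonicity along the whole concatenated path $(\overline{\Sigma_\alpha\cup\Sigma_1})\cap \mathbb{C}^+$, completing the proof. There is no real obstacle here: the only point of care is the bookkeeping of orientation (checking that ``left to right'' along the union corresponds to the same direction used in Lemma \ref{lem:LogsIncSigmaMinus1}), which follows from the geometric descriptions of $\Sigma_\alpha$ and $\Sigma_1$ given in \eqref{def of Sigma 1}--\eqref{def of Sigma 0}.
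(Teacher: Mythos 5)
Your proof is correct and follows essentially the same route as the paper: since $\re\phi=0$ on $\Sigma_\alpha\cup\Sigma_1$ (Lemma \ref{lem:Nphihigh}), one writes $\re\Phi$ there as a combination of the $\re f_j$ and invokes Lemma \ref{lem:LogsIncSigmaMinus1}; indeed, because $f_3=f_1+f_2$, your identity $\re\Phi=-\tfrac{\eta}{2}\re f_1-\tfrac{\xi}{2}\re f_2$ is just a regrouping of the paper's expression \eqref{eq:RePhionSigma}. The only cosmetic difference is that you argue arc by arc (using only $\xi<0$, $\eta<0$, plus continuity at $r_+$), whereas the paper groups the coefficients so that the hypothesis $\eta\leq\tfrac{\xi}{2}$ makes all of them nonnegative with the $f_3$-term strictly positive.
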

\begin{proof}
We know from Lemma \ref{lem:Nphihigh} that $\re \phi = 0$ on $\Sigma_{\alpha} \cup \Sigma_1$. Therefore, from the expression \eqref{Phidef} for $\Phi$, for $\zeta \in \Sigma_{\alpha} \cup \Sigma_1$ we have 
\begin{align}
\re \Phi (\zeta) & =  - \frac{\xi-\eta}{2} \log |\zeta| + \frac{\xi}{2}\log \big| (\zeta-\alpha c)(\zeta-\alpha c^{-1}) \big| - \frac{\eta}{2}\log \big| (\zeta-c)(\zeta-c^{-1}) \big| \nonumber \\
& = \left( \frac{\xi}{4}-\frac{\eta}{2} \right) \re f_{1}(\zeta) - \frac{\xi}{4} (\re f_{2}(\zeta) + \re f_{3}(\zeta)). \label{eq:RePhionSigma}
\end{align}
The claim follows from Lemma \ref{lem:LogsIncSigmaMinus1}, because $\xi < 0$ and $\frac{\xi}{2} - \eta \geq  0$.
\end{proof}

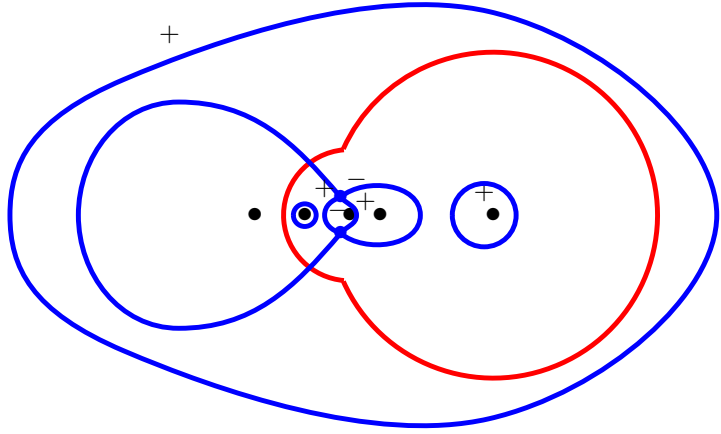
\begin{figure}
\begin{center}
\begin{tikzpicture}[scale = 0.6]
\node at (0,0) {};
\node at (2.1,0) {\color{black} \large $\bullet$};
\node at (-0.38,0) {\color{black} \large $\bullet$};
\node at (-1.05,0) {\color{black} \large $\bullet$};
\node at (-2.035,0) {\color{black} \large $\bullet$};
\node at (-3.13,0) {\color{black} \large $\bullet$};

\draw[red,line width=0.65 mm] ([shift=(-156:3.6cm)]2.1,0) arc (-156:156:3.6cm);

\draw[red,line width=0.65 mm] ([shift=(95:1.45cm)]-1.05,0) arc (95:265:1.45cm);

\node at (-1.25,0.4) {\color{blue} \large $\bullet$};
\node at (-1.25,-0.4) {\color{blue} \large $\bullet$};
\draw[blue, line width=0.65 mm] (-1.25,0.4) to [out=40, in=90]
(0.5,0) to [out=-90, in=-40]
(-1.25,-0.4);
\draw[blue, line width=0.65 mm] (-1.25,0.4) to [out=-50, in=90] (-0.9,0) to [out=-90, in=50] (-1.25,-0.4);
\draw[blue, line width=0.65 mm] (-1.25,0.4) to [out=-160, in=90]
(-1.6,0) to [out=-90, in=160] (-1.25,-0.4);
\draw[blue, line width=0.65 mm] (-1.25,0.4) to [out=130, in=0]
(-4.8,2.5) to [out=180, in=90]
(-7,0) to [out=-90, in=-180]
(-4.8,-2.5) to [out=0, in=-130]
(-1.25,-0.4);
\node at (-1.6,0.6) {$+$};
\node at (-0.9,0.8) {$-$};
\node at (-0.7,0.3) {$+$};
\node at (-1.3,0.1) {$-$};

\node at (1.9,0.5) {$+$};

\draw[blue,line width=0.65 mm] ([shift=(-180:0.25cm)]-2.035,0) arc (-180:180:0.25cm);
\draw[blue,line width=0.65 mm] ([shift=(-180:0.7cm)]1.9,0) arc (-180:180:0.7cm);

\draw [blue, line width=0.65 mm] plot [smooth cycle, tension=0.7] coordinates {(-8.5,0) (-6,3) (2,4.5) (7,0) (2,-4.5) (-6,-3)};
\node at (-5,4) {$+$};
\end{tikzpicture}
\end{center}
\caption{\label{fig: case 5 for Phi}
The set $\mathcal{N}_{\Phi}$ is represented in blue, and $\Sigma_{\alpha} \cup \Sigma_{1}$ in red. The parameters are $(\alpha,\xi,\eta) = (0.4,-0.943,-0.538)$, and they satisfy $\eta < \frac{\xi}{2} < 0$. The sign of $\re (\Phi(\zeta)-\Phi(s))$ in the different regions delimited by $\mathcal{N}_{\Phi}$ is indicated with $\pm$. The black dots represent $0$, $\alpha c$, $\alpha c^{-1}$, $c$ and $c^{-1}$ and the blue dots are $s$ and $\overline{s}$.}
\end{figure}
\paragraph{Notation.} For a given closed curve $\sigma$, we denote $\mbox{int}(\sigma)$ for the open and bounded region delimited by $\sigma$.

\vspace{0.2cm}Since $\Phi'(s) = 0$, there are four curves $\{\Gamma_{j}\}_{j=1}^{4}$ emanating from $s$ that belongs to $\mathcal{N}_{\Phi}$. By Corollary \ref{cor:RePhi}, $\mathcal{N}_{\Phi} \cap (\overline{\Sigma_{\alpha}\cup \Sigma_{1}})\cap \mathbb{C}^{+}$ is either the empty set or a singleton, so at least three of the $\Gamma_j$'s, say $\Gamma_1, \Gamma_2, \Gamma_3$, do not intersect $(\overline{\Sigma_{\alpha}\cup \Sigma_{1}}) \cap \mathbb C^+$. The curves $\Gamma_{j}$, $j =1,2,3$ cannot lie entirely in $\mathbb{C}^{+}$; otherwise the max/min principle for harmonic functions would imply that $\re \Phi$ is constant within the region $\mbox{int}(\Gamma_{j})$. Therefore, $\Gamma_{j}$, $j =1,2,3$ have to intersect $\mathbb{R}$. Note that $\overline{\Phi(\zeta)} = \Phi(\overline{\zeta})$ implies that $\mathcal{N}_{\Phi}$ is symmetric with respect to $\mathbb{R}$. In particular, the curves $\Gamma_{j}$, $j=1,2,3$ join $s$ with $\overline{s}$. The next lemma states that $\Gamma_{4}$ is not contained in the region $\mbox{int}(\overline{\Sigma_{\alpha}\cup \Sigma_{1}})$.


\begin{lemma}\label{lemma: Gamma4 intersects at one point}
$\mathcal{N}_{\Phi} \cap (\overline{\Sigma_{\alpha}\cup \Sigma_{1}}) \cap \mathbb{C}^{+}$ is a singleton.
\end{lemma}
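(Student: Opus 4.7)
The approach is by contradiction, exploiting the fact that Corollary~\ref{cor:RePhi} already establishes that $\mathcal{N}_\Phi \cap (\overline{\Sigma_\alpha \cup \Sigma_1}) \cap \mathbb{C}^+$ contains \emph{at most} one point, so it suffices to rule out the empty case. Assume therefore that this set is empty. Combined with the discussion preceding the statement of the lemma on the branches $\Gamma_1,\Gamma_2,\Gamma_3$, this forces all four branches $\Gamma_j$ of $\mathcal{N}_\Phi$ emanating from $s$ to avoid the upper arc. First I would verify that $s$ lies in the bounded region $\Omega := \mathrm{int}(\overline{\Sigma_\alpha \cup \Sigma_1})$: Lemma~\ref{lem:Ldivision}(a) gives $|s - c^{-1}| < R_1$ throughout the interior of the regime, and an inspection of $\Omega$ (whose real trace is the interval $(\alpha c^{-1} - R_\alpha,\, c^{-1} + R_1)$, while its upper boundary consists of $\Sigma_\alpha \cap \mathbb{C}^+$ and $\Sigma_1 \cap \mathbb{C}^+$ meeting at $r_+$) shows that the open upper half-disk $\{|\zeta - c^{-1}| < R_1\} \cap \mathbb{C}^+$ lies inside $\Omega$. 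Combined with the conjugation symmetry $\Phi(\bar\zeta) = \overline{\Phi(\zeta)}$, the contradiction hypothesis forces every $\Gamma_j$ and its $\mathbb{C}^-$ reflection through $\bar s \in \overline\Omega$ to be trapped in $\overline\Omega$.

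By Corollary~\ref{cor:RePhi} and the conjugation symmetry, $\re\Phi - \re\Phi(s)$ has a single constant sign $\epsilon \in \{+,-\}$ on the full boundary $\partial\Omega = \overline{\Sigma_\alpha \cup \Sigma_1}$. I would treat the case $\epsilon = +$ (the symmetric case $\epsilon = -$ being handled in parallel, with $c,c^{-1}$ playing the role of valleys). By~\eqref{eq:NPhinearpoles}, $\re\Phi \to -\infty$ at the interior poles $\alpha c$ and $\alpha c^{-1}$, so the open sublevel set $A := \{z \in \Omega : \re\Phi(z) < \re\Phi(s)\}$ is nonempty, bounded away from $\partial\Omega$, and each of its connected components contains one of the two valleys $\alpha c$, $\alpha c^{-1}$; hence $A$ has at most two components. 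The four branches $\Gamma_j$ meet at $s$ at equal angles and bound local sectors that alternate between $A$ and $A^c$. By the maximum principle each $\Gamma_j$ must terminate either at $\bar s$ (after crossing $\mathbb{R}\cap\Omega$ at a non-polar point) or at another interior $\Phi$-saddle, which (since $\Phi$ is real on $\mathbb{R}\setminus\{\mathrm{poles}\}$) must be a local extremum of $\re\Phi|_\mathbb{R}$ sitting in one of the intervals $(\alpha c,\alpha c^{-1})$ or $(c,c^{-1})$. I would complete the argument by a topological bookkeeping of the eight branches emanating from $s$ and $\bar s$, showing that no consistent routing inside $\overline\Omega$ is compatible with $A$ having at most two components.

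The main obstacle is the topological case analysis sketched above. A conceptually cleaner fallback, which I would in fact attempt first, is to establish the strict sandwich $\re\Phi(c^{-1} + R_1) < \re\Phi(s) < \re\Phi(\alpha c^{-1} - R_\alpha)$: the strict monotonicity of Corollary~\ref{cor:RePhi} and the intermediate value theorem then immediately produce the unique intersection point. By continuity of all quantities in $(\xi,\eta)$ over the connected parameter region $\{\eta \leq \xi/2 < 0\}\cap\mathcal{L}_\alpha$, the sandwich reduces to a verification at a single convenient base point or limit (for instance $\xi \to 0^-$, where $s$ approaches $\Sigma_1 \cap \mathbb{C}^+$ and the three quantities $\re\Phi(s)$, $\re\Phi(c^{-1}+R_1)$, $\re\Phi(\alpha c^{-1}-R_\alpha)$ can all be analyzed via the explicit formula~\eqref{eq:RePhionSigma}).
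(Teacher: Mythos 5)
Your reduction is right as far as it goes: Corollary \ref{cor:RePhi} gives ``at most one'' point, so the content of the lemma is non-emptiness, and your contradiction setup (all four branches $\Gamma_j$ trapped in $\Omega=\mathrm{int}(\overline{\Sigma_\alpha\cup\Sigma_1})$, $\re\Phi$ harmonic there away from the real poles, $\pm\infty$ at the poles, max/min principle) is exactly the paper's framework. But the entire substance of the paper's proof is the combinatorial step you leave undone: one first shows that at most one crossing point $p_j=\Gamma_j\cap\mathbb{R}$ can lie in each of the five pole-free real intervals, and then rules out each of the five admissible configurations by a max/min-principle argument on a suitable pole-free region. Your Route A stops precisely where this work starts ("topological bookkeeping \dots showing that no consistent routing is compatible"), and, worse, the criterion you propose to run it with is insufficient as stated: counting components of the sublevel set $A=\{\re\Phi<\re\Phi(s)\}\cap\Omega$ (each must contain $\alpha c$ or $\alpha c^{-1}$) does \emph{not} kill all configurations. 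For instance, if two consecutive $p_j$'s both fall in $(\alpha c,\alpha c^{-1})$, one can place $\alpha c$ and $\alpha c^{-1}$ in the two ``$-$'' lenses so that your count is satisfied; the contradiction there comes instead from the middle lens containing no pole at all while $\re\Phi\equiv\re\Phi(s)$ on its boundary (the dual, superlevel-set/maximum-principle argument), which is what the paper's case analysis supplies. So the missing step is not a routine verification of your stated criterion but a genuine case analysis needing a stronger criterion. (A smaller point: Lemma \ref{lem:Ldivision}(a) places $s$ strictly inside $\gamma_1$ only for $\eta<\frac{\xi}{2}$; the boundary case $\eta=\frac{\xi}{2}$, where $s\in\gamma_1\setminus\overline{\Sigma_1}$, also has to be covered, though $\gamma_1\setminus\overline{\Sigma_1}\subset\Omega$ so it is harmless once noticed.)

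Route B, which you say you would try first, is logically circular. Given the strict monotonicity of Corollary \ref{cor:RePhi}, the sandwich $\re\Phi(c^{-1}+R_1)<\re\Phi(s)<\re\Phi(\alpha c^{-1}-R_\alpha)$ is \emph{equivalent} to the lemma, and your plan to obtain it --- verify it at one base point or limit and propagate by continuity over the connected parameter region --- is not valid for a strict inequality: continuity plus connectedness only transports the inequality if you separately rule out equality everywhere in the region, and equality at some $(\xi,\eta)$ is exactly the degenerate situation (the level $\re\Phi(s)$ hitting the arc only at its real endpoints) that you are trying to exclude. No argument is offered for that, so the fallback does not close the gap either.
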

\begin{proof}
Assume on the contrary that $\Gamma_{4}$ lies entirely in $\mbox{int}(\overline{\Sigma_{\alpha}\cup \Sigma_{1}})$, and denote $p_{j}$ for the intersection point of $\Gamma_{j}$ with $\mathbb{R}$. We assume without loss of generality that $p_{1}<p_{2}<p_{3}<p_{4}$. There is at most one $p_{j}$ inside each of the intervals
\begin{align*}
(\alpha c^{-1} - R_{\alpha},\alpha c), \quad (\alpha c, \alpha c^{-1}), \quad (\alpha c^{-1},c), \quad (c,c^{-1}), \quad (c^{-1},c^{-1}+R_{1}),
\end{align*}
otherwise we again find a contradiction using the max/min principle for harmonic functions. Thus, there are five $5$ possibilities for the location of the $p_{j}$'s, and each of them leads to a contradiction. Let us treat the case
\begin{align}\label{the pj location fake case}
p_{1} \in (\alpha c, \alpha c^{-1}), \quad p_{2} \in (\alpha c^{-1},c), \quad p_{3} \in (c,c^{-1}), \quad p_{4} \in (c^{-1},c^{-1}+R_{1}).
\end{align}
Since $\re (\Phi(\zeta)-\Phi(s))$ changes sign as $\zeta$ crosses $\mathcal{N}_{\Phi}\setminus \{s,\overline{s}\}$, by \eqref{eq:NPhinearpoles} we must have 
\begin{align}\label{two expressions for mathcalN}
\mathcal{N}_{\Phi} = \sigma_{1} \cup \sigma_{2} \cup \bigcup_{j=1}^{4} \Gamma_{j},
\end{align}
where $\sigma_{1}$ is a closed curve surrounding either $\alpha c$ or $\alpha c^{-1}$, such that $\mbox{int}(\sigma_{1}) \cap \mathcal{N}_{\Phi} = \emptyset$, and $\sigma_{2}$ is a closed curve surrounding either $c$ or $c^{-1}$, such that $\mbox{int}(\sigma_{2}) \cap \mathcal{N}_{\Phi} = \emptyset$. Since $\mathcal{N}_{\Phi}$ intersects both $(0,\alpha c)$ and $(\alpha c^{-1},c)$ exactly once, $\sigma_{1}$ surrounds $\alpha c$ and $\sigma_{2}$ surrounds $c^{-1}$. Then, the max/min principle implies that $\re \Phi$ is constant on $\mbox{int}(\overline{\Gamma_{3}\cup \Gamma_{4}})\setminus \mbox{int}(\sigma_{2})$, which is a contradiction. The four other cases than \eqref{the pj location fake case} can be treated similarly, so we omit the proofs.
\end{proof}

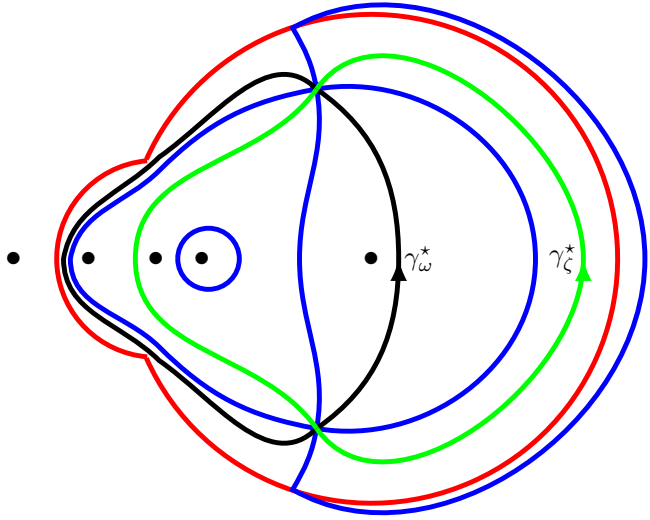
\begin{figure}
\begin{center}
\begin{tikzpicture}[scale = 0.9]
\node at (0,0) {};
\node at (2.1,0) {\color{black} \large $\bullet$};
\node at (-0.38,0) {\color{black} \large $\bullet$};
\node at (-1.05,0) {\color{black} \large $\bullet$};
\node at (-2.035,0) {\color{black} \large $\bullet$};
\node at (-3.13,0) {\color{black} \large $\bullet$};

\draw[red,line width=0.65 mm] ([shift=(-156:3.6cm)]2.1,0) arc (-156:156:3.6cm);

\draw[red,line width=0.65 mm] ([shift=(95:1.45cm)]-1.05,0) arc (95:265:1.45cm);

\node at (1.3,2.5) {\color{blue} \large $\bullet$};
\node at (1.3,-2.5) {\color{blue} \large $\bullet$};
\draw[blue, line width=0.65 mm] (1.3,2.5) to [out=10, in=90] (4.5,0) to [out=-90, in=-10] (1.3,-2.5);
\draw[blue, line width=0.65 mm] (1.3,2.5) to [out=-80, in=90] (1.05,0) to [out=-90, in=80] (1.3,-2.5);
\draw[blue, line width=0.65 mm] (1.3,2.5) to [out=-170, in=45]
(-1,1.3) to [out=-135, in=80] (-2.3,0) to [out=-90, in=135] (-1,-1.3) to [out=-45, in=170] (1.3,-2.5);
\draw[blue, line width=0.65 mm] (1.3,2.5) to [out=100, in=-60]
(0.95,3.4) to [out=30, in=90] (6.1,0) to [out=-90, in=-30] (0.95,-3.4) to [out=60, in=-100] (1.3,-2.5);
\draw[blue,line width=0.65 mm] ([shift=(-180:0.45cm)]-0.28,0) arc (-180:180:0.45cm);

\draw[black, line width=0.65 mm] (1.3,2.5) to [out=135, in=35]
(-1,1.5) to [out=-135, in=80] (-2.4,0) to [out=-90, in=135] (-1,-1.5) to [out=-35, in=-135] (1.3,-2.5);
\draw[black, line width=0.65 mm,-<- = 0.5] (1.3,2.5) to [out=-35, in=90] (2.5,0) to [out=-90, in=35] (1.3,-2.5);
\node at (2.8,0) {$\gamma_{\omega}^{\star}$};

\draw[green, line width=0.65 mm] (1.3,2.5) to [out=-125, in=90] (-1.35,0) to [out=-90, in=125] (1.3,-2.5);
\draw[green, line width=0.65 mm, -<-=0.5] (1.3,2.5) to [out=55, in=90] (5.2,0) to [out=-90, in=-55] (1.3,-2.5);
\node at (4.9,0) {$\gamma_{\zeta}^{\star}$};

\end{tikzpicture}
\end{center}
\caption{\label{fig: case 1 contour}The set $\mathcal{N}_{\Phi}$ is represented in blue, and $\Sigma_{\alpha} \cup \Sigma_{1}$ in red. The parameters are $(\alpha,\xi,\eta) = (0.4,-0.12,-0.86)$ as in Figure \ref{fig: cases 1 and 2 for Phi} (left). The contour $\gamma_{\zeta}^{\star}$ is represented in green, and $\gamma_{\omega}^{\star}$ in black. The black dots represent $0$, $\alpha c$, $\alpha c^{-1}$, $c$ and $c^{-1}$ and the blue dots are $s$ and $\overline{s}$.}
\end{figure}

Lemma \ref{lemma: Gamma4 intersects at one point} states that $\Gamma_{4}$ crosses $\Sigma_{\alpha}\cup \Sigma_{1}$ exactly once. We know from \eqref{eq:NPhinearpoles} that $\re \Phi(\zeta) \to + \infty$ as $\zeta \to \infty$, so $\Gamma_{4}$ intersects the real line, and then by symmetry ends at $\overline{s}$. So each of the $\Gamma_{j}$'s intersects $\mathbb{R}$. We denote $p_{j}$ for the intersection point of $\Gamma_{j}$ with $\mathbb{R}$, and choose the ordering such that  $p_1 < p_2 < p_3$. We recall that $\re (\Phi(\zeta)-\Phi(s))$ is harmonic for $\zeta \in \mathbb{C}\setminus (\Sigma_{1} \cup \{0,\alpha c,\alpha c^{-1},c,c^{-1}\})$ and changes sign as $\zeta$ crosses $\mathcal{N}_{\Phi}\setminus \{s,\overline{s}\}$. Therefore, by \eqref{eq:NPhinearpoles}, the region $\mbox{int}(\overline{\Gamma_{1} \cup \Gamma_{2}})$ must contain at least one of the singularities $\alpha c$ and $\alpha c^{-1}$, and $\mbox{int}(\overline{\Gamma_{2} \cup \Gamma_{3}})$ must contain at least one of the singularities $c$ and $c^{-1}$. There are still quite a few cases that can occur. The figures provide a fairly good overview (though not complete) of what can happen:
\begin{enumerate}
\item In Figure \ref{fig: cases 1 and 2 for Phi} (left), $\alpha c, \alpha c^{-1}, c \in \mbox{int}(\overline{\Gamma_{1}\cup \Gamma_{2}})$, $c^{-1} \in \mbox{int}(\overline{\Gamma_{2}\cup \Gamma_{3}})$.
\item In Figures \ref{fig: cases 1 and 2 for Phi} (right) and \ref{fig: cases 3 and 4 for Phi} (left), $\alpha c, \alpha c^{-1} \in \mbox{int}(\overline{\Gamma_{1}\cup \Gamma_{2}})$ and $c,c^{-1} \in \mbox{int}(\overline{\Gamma_{2}\cup \Gamma_{3}})$.
\item In Figures \ref{fig: cases 3 and 4 for Phi} (right) and \ref{fig: case 5 for Phi}, $\alpha c^{-1} \in \mbox{int}(\overline{\Gamma_{1}\cup \Gamma_{2}})$ and $c \in \mbox{int}(\overline{\Gamma_{2}\cup \Gamma_{3}})$.
\end{enumerate}
Furthermore, $\Gamma_{4}$ intersects both $\Sigma_{1}$ and $(c^{-1}+R_{1},+\infty)$ in Figure \ref{fig: cases 1 and 2 for Phi}, intersects both $\Sigma_{\alpha}$ and $(c^{-1}+R_{1},+\infty)$ in Figure \ref{fig: cases 3 and 4 for Phi}, and intersects both $\Sigma_{\alpha}$ and $(-\infty,\alpha c^{-1} - R_{\alpha})$ in Figure \ref{fig: case 5 for Phi}.
There are also some obvious intermediate cases which are not illustrated by a figure. In all cases, we can find contours $\gamma_{\zeta}^{\star}$ and $\gamma_{\omega}^{\star}$ as described in the following proposition. These contours are illustrated for two different situations in Figures \ref{fig: case 1 contour} and \ref{fig: case 2 contour} (left).

\begin{proposition} \label{prop:contoursexist}
Let $(\xi,\eta) \in \mathcal{L}_{\alpha}$ with $\eta < \frac{\xi}{2} < 0$. There exist contours $\gamma_{\zeta}^{\star}$ and $\gamma_{\omega}^{\star}$ such that 
\begin{itemize}
\item $\gamma_{\omega}^{\star} \subset \mbox{int}(\overline{\Sigma_\alpha \cup \Sigma_{1}})$, it surrounds $\alpha c$ and $\alpha c^{-1}$, and it goes through $s$ and $\overline{s}$ in such a way that
\begin{align*}
\re \Phi(\omega) > \re \Phi(s), \qquad \omega \in \gamma_{\omega}^{\star} \setminus \{s, \overline{s}\},
\end{align*}
\item $\gamma_\zeta^{\star} \subset \mbox{int}(\gamma_1)$, surrounds $c$ and $c^{-1}$, and it goes through $s$ and $\overline{s}$ in such a way that 
\begin{align*}
\re \Phi(\zeta) < \re \Phi(s), \qquad \zeta \in \gamma_{\zeta}^{\star} \setminus \{s, \overline{s}\}.
\end{align*}
\end{itemize}
\end{proposition}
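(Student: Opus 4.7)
The plan is to read off $\gamma_\omega^\star$ and $\gamma_\zeta^\star$ from the level set $\mathcal{N}_\Phi$ analyzed in this section, using its four arcs $\Gamma_1,\Gamma_2,\Gamma_3,\Gamma_4$ emanating from the saddle $s$. Recall that by Lemma~\ref{lem:Ldivision}, under the hypothesis $\eta<\frac{\xi}{2}<0$ we are in case (a), so $\Phi'(s)=0$ and $|s-c^{-1}|<R_1$, in particular $s,\overline{s}\in\mbox{int}(\gamma_1)\cap\mbox{int}(\overline{\Sigma_\alpha\cup\Sigma_1})$.

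First, I will use Lemma~\ref{lemma: Gamma4 intersects at one point}: exactly one arc, say $\Gamma_4$, crosses $\overline{\Sigma_\alpha\cup\Sigma_1}$; the other three stay in $\mbox{int}(\overline{\Sigma_\alpha\cup\Sigma_1})$. The max/min principle applied to the harmonic function $\re(\Phi-\Phi(s))$ forbids any $\Gamma_j$ from closing up in $\mathbb{C}^+$, so by the symmetry $\overline{\Phi(\zeta)}=\Phi(\overline{\zeta})$ each $\Gamma_j$ meets $\mathbb{R}$ and thus joins $s$ to $\overline{s}$. Label the arcs so that their real crossings satisfy $p_1<p_2<p_3$ and $\Gamma_4$ is the arc exiting through $\overline{\Sigma_\alpha\cup\Sigma_1}$. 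The complement of $\mathcal{N}_\Phi$ in $\mathbb{C}$ splits into finitely many regions, each of which carries a definite sign of $\re(\Phi-\Phi(s))$; by \eqref{eq:NPhinearpoles}, the singularities $\alpha c,\alpha c^{-1}$ lie in $\{\re\Phi<\re\Phi(s)\}$, while $0,c,c^{-1}$ and $\infty$ lie in $\{\re\Phi>\re\Phi(s)\}$.

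Next I construct $\gamma_\omega^\star$. Consider the union $\mathcal{U}^-$ of all connected components of $\{\re\Phi<\re\Phi(s)\}$ that lie in $\mbox{int}(\overline{\Sigma_\alpha\cup\Sigma_1})$ and contain at least one of $\alpha c,\alpha c^{-1}$. Since $\Gamma_4$ is the only arc of $\mathcal{N}_\Phi$ leaving $\mbox{int}(\overline{\Sigma_\alpha\cup\Sigma_1})$, the boundary $\partial\mathcal{U}^-$ is made up of pieces of $\Gamma_1\cup\Gamma_2\cup\Gamma_3$ (together, possibly, with small closed loops of $\mathcal{N}_\Phi$ lying strictly inside, like the loop around $\alpha c$ visible in Figure~\ref{fig: cases 1 and 2 for Phi} (left)). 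I take $\gamma_\omega^\star$ to be the outer boundary of $\mathcal{U}^-$, traversed once, slightly pushed into the region $\{\re\Phi>\re\Phi(s)\}$ everywhere except at the saddles $s,\overline{s}$, where $\gamma_\omega^\star$ is required to pass through. Such a perturbation exists because in a neighborhood of each point of $\partial\mathcal{U}^-\setminus\{s,\overline{s}\}$ the curve $\mathcal{N}_\Phi$ is a smooth arc bordering a $-$ and a $+$ component. By construction $\gamma_\omega^\star$ surrounds $\mathcal{U}^-$, hence surrounds both $\alpha c$ and $\alpha c^{-1}$; it stays in $\mbox{int}(\overline{\Sigma_\alpha\cup\Sigma_1})$; and $\re\Phi>\re\Phi(s)$ along it except at $s,\overline{s}$. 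The contour $\gamma_\zeta^\star$ is built symmetrically: take the boundary of the union of all components of $\{\re\Phi>\re\Phi(s)\}$ lying in $\mbox{int}(\gamma_1)$ that contain $c$ or $c^{-1}$, and push it slightly into $\{\re\Phi<\re\Phi(s)\}$ off of $\{s,\overline{s}\}$. Because $\Gamma_4$ is again the only $\mathcal{N}_\Phi$-arc that can exit $\mbox{int}(\gamma_1)$ (through $\Sigma_1\subset\gamma_1$), the resulting contour stays inside $\mbox{int}(\gamma_1)$; it surrounds $c$ and $c^{-1}$ by construction, passes through $s,\overline{s}$, and satisfies $\re\Phi<\re\Phi(s)$ elsewhere.

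The main obstacle is the case analysis: the topology of $\mathcal{N}_\Phi$ varies with $(\alpha,\xi,\eta)$, as illustrated in Figures~\ref{fig: cases 1 and 2 for Phi}--\ref{fig: case 5 for Phi}, and both which of the singularities $\alpha c,\alpha c^{-1},c,c^{-1}$ lie in $\mbox{int}(\overline{\Gamma_1\cup\Gamma_2})$ and $\mbox{int}(\overline{\Gamma_2\cup\Gamma_3})$ and whether extra small closed loops of $\mathcal{N}_\Phi$ appear depend on the parameters. The key uniform fact that makes the construction work in every case is that the singularities of the two signs are separated and that $\Gamma_4$ is the unique arc of $\mathcal{N}_\Phi$ crossing both $\overline{\Sigma_\alpha\cup\Sigma_1}$ (Lemma~\ref{lemma: Gamma4 intersects at one point}) and $\gamma_1$ (so that neither $\mbox{int}(\overline{\Sigma_\alpha\cup\Sigma_1})$ nor $\mbox{int}(\gamma_1)$ is cut into disconnected pieces by $\mathcal{N}_\Phi\setminus\Gamma_4$). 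This reduces the verification in all subcases to inspecting the signs of $\re\Phi$ near the poles of $\Phi'$, which is immediate from \eqref{eq:NPhinearpoles}.
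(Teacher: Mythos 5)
Your strategy is the same as the paper's: Proposition \ref{prop:contoursexist} is obtained there from precisely the ingredients you assemble — Lemma \ref{lem:Ldivision}, Corollary \ref{cor:RePhi}, Lemma \ref{lemma: Gamma4 intersects at one point}, the max/min principle, the symmetry $\overline{\Phi(\zeta)}=\Phi(\overline{\zeta})$ and the sign pattern \eqref{eq:NPhinearpoles} — followed by an inspection of the possible topologies of $\mathcal{N}_{\Phi}$, which the paper carries out pictorially (Figures \ref{fig: cases 1 and 2 for Phi}--\ref{fig: case 5 for Phi} and \ref{fig: case 1 contour}--\ref{fig: case 2 contour}) rather than via a uniform recipe.

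The gap is in the uniform recipe itself. In the configurations of Figures \ref{fig: cases 3 and 4 for Phi} (right) and \ref{fig: case 5 for Phi}, $\mathcal{N}_{\Phi}$ contains a closed loop around $\alpha c$ (a component of $\{\re\Phi<\re\Phi(s)\}$) disjoint from the saddle-adjacent component containing $\alpha c^{-1}$, and a closed loop around $c^{-1}$ (a component of $\{\re\Phi>\re\Phi(s)\}$) disjoint from the component containing $c$. Your ``outer boundary of the union of the components containing the marked points, pushed to the other side'' then yields a disconnected cycle whose extra piece does not pass through $s,\overline{s}$ — not the single closed curve through $s$ and $\overline{s}$ drawn in Figure \ref{fig: case 2 contour} and needed for the residue bookkeeping of Proposition \ref{prop:deformationhigh} (an extra closed $\zeta$-loop around $c^{-1}$ would contribute a nonzero $\oint H(\zeta,\zeta)\,d\zeta$, since $H(\zeta,\zeta)$ has a simple pole there, cf. \eqref{lol18}). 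The correct construction routes one curve around such islets through the surrounding region of the right sign, which requires checking that the islets stay strictly inside $\mbox{int}(\overline{\Sigma_\alpha\cup\Sigma_1})$, resp. $\mbox{int}(\gamma_1)$ — a point your recipe glosses over. Relatedly, your ``key uniform fact'' that $\Gamma_4$ is the unique arc of $\mathcal{N}_{\Phi}$ crossing $\gamma_1$ is not established: Lemma \ref{lemma: Gamma4 intersects at one point} controls only crossings of $\overline{\Sigma_\alpha\cup\Sigma_1}$, while $\gamma_1\setminus\overline{\Sigma_1}$ lies in the interior of that region, so crossings of it by $\Gamma_1,\Gamma_2,\Gamma_3$ are not excluded by anything you cite. (Also, the small loop in Figure \ref{fig: cases 1 and 2 for Phi} (left) is a $+$ island around $c$, not a loop around $\alpha c$; the two kinds of islands enter your construction differently, and conflating them hides exactly the problematic cases.)
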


If $\eta = \frac{\xi}{2}$, we know from Proposition \ref{prop:hightemp} \ref{item b in prop mapping s} that $s$ lies on $\gamma_{1}\setminus \overline{\Sigma_{1}}$. For the saddle point analysis, we will need $\gamma_{\zeta}^{\star}$ lying inside $\gamma_{1}$ (not necessarily strictly inside). To prove existence of such a contour $\gamma_{\zeta}^{\star}$, we need to know that $\re \Phi(\zeta)-\re \Phi(s)$ is strictly negative for $\zeta \in \gamma_{1}\setminus \overline{\Sigma_{1}}$ (at least in small neighborhoods of $s$ and $\overline{s}$). 
\begin{figure}
\begin{center}
\hspace{-0.2cm}
\begin{tikzpicture}[master,scale = 0.7]
\node at (0,0) {};
\node at (2.1,0) {\color{black} \large $\bullet$};
\node at (-0.38,0) {\color{black} \large $\bullet$};
\node at (-1.05,0) {\color{black} \large $\bullet$};
\node at (-2.035,0) {\color{black} \large $\bullet$};
\node at (-3.13,0) {\color{black} \large $\bullet$};

\draw[red,line width=0.65 mm] ([shift=(-156:3.6cm)]2.1,0) arc (-156:156:3.6cm);

\draw[red,line width=0.65 mm] ([shift=(95:1.45cm)]-1.05,0) arc (95:265:1.45cm);

\node at (-1.25,0.6) {\color{blue} \large $\bullet$};
\node at (-1.25,-0.6) {\color{blue} \large $\bullet$};
\draw[blue, line width=0.65 mm] (-1.25,0.6) to [out=40, in=90]
(0.5,0) to [out=-90, in=-40]
(-1.25,-0.6);
\draw[blue, line width=0.65 mm] (-1.25,0.6) to [out=-50, in=90] (-0.9,0) to [out=-90, in=50] (-1.25,-0.6);
\draw[blue, line width=0.65 mm] (-1.25,0.6) to [out=-160, in=90]
(-1.6,0) to [out=-90, in=160] (-1.25,-0.6);
\draw[blue, line width=0.65 mm] (-1.25,0.6) to [out=130, in=180]
(2.4,4.5) to [out=0, in=90]
(7,0) to [out=-90, in=0]
(2.4,-4.5) to [out=-180, in=-130]
(-1.25,-0.6);

\draw[blue,line width=0.65 mm] ([shift=(-180:0.25cm)]-2.035,0) arc (-180:180:0.25cm);
\draw[blue,line width=0.65 mm] ([shift=(-180:0.7cm)]1.9,0) arc (-180:180:0.7cm);

\draw[green, line width=0.65 mm, -<-=0.2] (-1.25,0.6) to [out=85, in=90]
(3,0) to [out=-90, in=-85]
(-1.25,-0.6);
\draw[green, line width=0.65 mm] (-1.25,0.6) to [out=-95, in=90] (-1.25,0) to [out=-90, in=95] (-1.25,-0.6);
\node at (3.3,0) {$\gamma_{\zeta}^{\star}$};

\draw[black, line width=0.65 mm] (-1.25,0.6) to [out=-205, in=90]
(-2.4,0) to [out=-90, in=205] (-1.25,-0.6);
\draw[black,arrows={-Triangle[length=0.24cm,width=0.16cm]}]
($(-2,0.6)$) --  ++(-170:0.001);
\draw[black, line width=0.65 mm] (-1.25,0.6) to [out=-5, in=90]
(-0.6,0) to [out=-90, in=5]
(-1.25,-0.6);
\node at (-0.5,0.35) {$\gamma_{\omega}^{\star}$};
\end{tikzpicture}
\hspace{0.5cm}
\begin{tikzpicture}[slave,scale = 0.7]
\node at (0,0) {};
\node at (2.1,0) {\color{black} \large $\bullet$};
\node at (-0.38,0) {\color{black} \large $\bullet$};
\node at (-1.05,0) {\color{black} \large $\bullet$};
\node at (-2.035,0) {\color{black} \large $\bullet$};
\node at (-3.13,0) {\color{black} \large $\bullet$};

\draw[red,line width=0.65 mm] ([shift=(-156:3.6cm)]2.1,0) arc (-156:156:3.6cm);

\draw[red,line width=0.65 mm] ([shift=(95:1.45cm)]-1.05,0) arc (95:265:1.45cm);

\node at ($(2.1,0)+(168:3.6)$) {\color{blue} \large $\bullet$};
\node at ($(2.1,0)+(-168:3.6)$) {\color{blue} \large $\bullet$};
\draw[blue, line width=0.65 mm] ($(2.1,0)+(168:3.6)$) to [out=40, in=90]
(0.5,0) to [out=-90, in=-40]
($(2.1,0)+(-168:3.6)$);
\draw[blue, line width=0.65 mm] ($(2.1,0)+(168:3.6)$) to [out=-50, in=90] (-0.9,0) to [out=-90, in=50] ($(2.1,0)+(-168:3.6)$);
\draw[blue, line width=0.65 mm] ($(2.1,0)+(168:3.6)$) to [out=-160, in=90]
(-2.25,0) to [out=-90, in=160] ($(2.1,0)+(-168:3.6)$);
\draw[blue, line width=0.65 mm] ($(2.1,0)+(168:3.6)$) to [out=130, in=180]
(2.4,4.5) to [out=0, in=90]
(7,0) to [out=-90, in=0]
(2.4,-4.5) to [out=-180, in=-130]
($(2.1,0)+(-168:3.6)$);

%
\draw[blue,line width=0.65 mm] ([shift=(-180:0.7cm)]1.9,0) arc (-180:180:0.7cm);

\draw[green,line width=0.65 mm] ([shift=(161:3.6cm)]2.1,0) arc (161:199:3.6cm);
\draw[green, line width=0.65 mm, -<-=0.2] ($(2.1,0)+(161:3.6)$) to [out=0, in=90]
(3,0) to [out=-90, in=-0]
($(2.1,0)+(-161:3.6)$);
\node at (3.3,0) {$\gamma_{\zeta}^{\star}$};

\draw[black, line width=0.65 mm] ($(2.1,0)+(168:3.6)$) to [out=-205, in=90]
(-2.4,0) to [out=-90, in=205] ($(2.1,0)+(-168:3.6)$);
\draw[black,arrows={-Triangle[length=0.24cm,width=0.16cm]}]
($(-2.2,0.55)$) --  ++(-145:0.001);
\draw[black, line width=0.65 mm] ($(2.1,0)+(168:3.6)$) to [out=-5, in=90]
(-0.6,0) to [out=-90, in=5]
($(2.1,0)+(-168:3.6)$);
\node at (-0.5,0.35) {$\gamma_{\omega}^{\star}$};
\draw[dashed] (-4,-5)--(-4,5);
\end{tikzpicture}
\end{center}
\caption{\label{fig: case 2 contour}The set $\mathcal{N}_{\Phi}$ is represented in blue, and $\Sigma_{\alpha} \cup \Sigma_{1}$ in red. The parameters are $(\alpha,\xi,\eta) = (0.4,-0.88,-0.502)$ (left) and $(\alpha,\xi,\eta) = (0.4,-0.7,-0.35)$ (right). The contour $\gamma_{\zeta}^{\star}$ is represented in green, and $\gamma_{\omega}^{\star}$ in black. The black dots represent $0$, $\alpha c$, $\alpha c^{-1}$, $c$ and $c^{-1}$ and the blue dots are $s$ and $\overline{s}$.}
\end{figure}
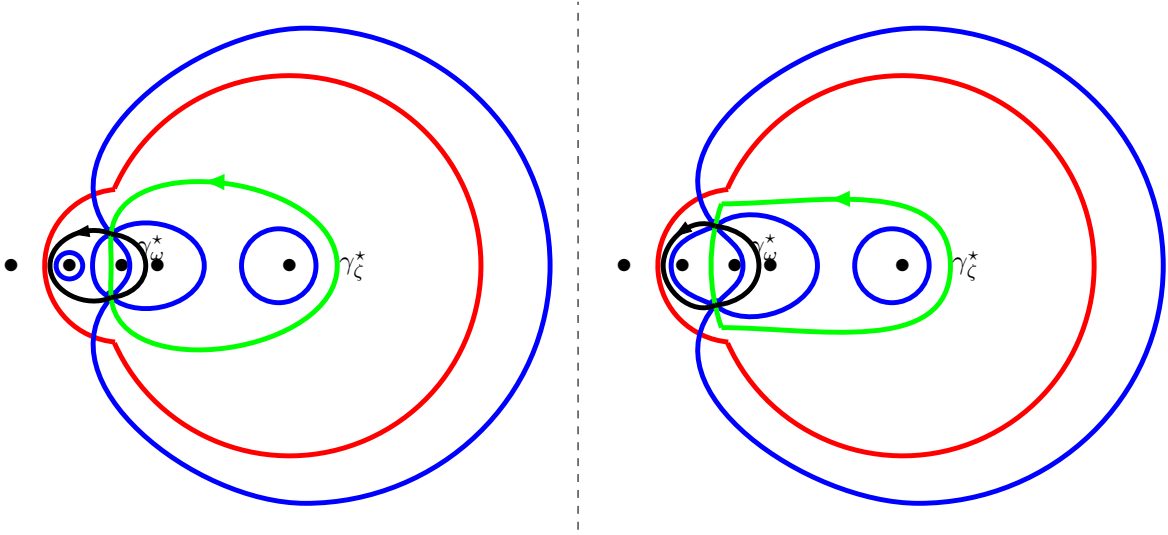
\begin{lemma}
Let $\eta = \frac{\xi}{2}<0$. For $\zeta \in \gamma_{1}\setminus (\overline{\Sigma_{1}} \cup \{s\}) \cap \mathbb{C}^{+}$, we have $\re \Phi(\zeta) < \re \Phi(s)$. 
\end{lemma}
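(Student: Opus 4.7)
I will parametrize $(\gamma_1 \setminus \overline{\Sigma_1}) \cap \mathbb{C}^+$ by $t \in (\theta_1, \pi)$ via $\zeta(t) = c^{-1} + R_1 e^{it}$, so that by Proposition \ref{prop:hightemp} \ref{item b in prop mapping s} there is a unique $t_s \in (\theta_1, \pi)$ with $\zeta(t_s) = s$. Setting $h(t) = \re \Phi(\zeta(t))$, the claim becomes $h(t) < h(t_s)$ for $t \in (\theta_1, \pi) \setminus \{t_s\}$. By continuity across the boundary between cases (a) and (b) of Lemma \ref{lem:Ldivision}, $\Phi'(s) = 0$ at $\eta = \xi/2$, so that $h'(t_s) = 0$ and $t_s$ is automatically a critical point of $h$.

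The main step is a continuity argument from case (a) of Lemma \ref{lem:Ldivision}. For $\eta' \uparrow \xi/2$ with $\eta' < \xi/2$, Proposition \ref{prop:contoursexist} provides a contour $\gamma_\zeta^\star(\eta') \subset \mathrm{int}(\gamma_1)$ through $s(\xi, \eta'; \alpha)$ along which $\re \Phi(\cdot; \xi, \eta') < \re \Phi(s(\xi, \eta'; \alpha); \xi, \eta')$ strictly off $\{s, \overline{s}\}$. These contours can be constructed to coincide with $\gamma_1$ outside shrinking neighborhoods of $s(\xi, \eta'; \alpha)$ and $\overline{s(\xi,\eta';\alpha)}$, detouring slightly inward through the saddles, so that they converge in Hausdorff distance to $\gamma_1 \cap \overline{\mathbb{C}^+}$ as $\eta' \to (\xi/2)^-$. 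Together with the continuous convergence $s(\xi, \eta'; \alpha) \to s(\xi, \xi/2; \alpha)$ and the joint continuity of $\re \Phi(\zeta; \xi, \eta)$ on compact subsets of $(\mathbb{C} \setminus \{0, \alpha c, \alpha c^{-1}, c, c^{-1}\}) \times \mathbb{R}$, this yields the non-strict inequality
\[
\re \Phi(\zeta; \xi, \xi/2) \leq \re \Phi(s; \xi, \xi/2), \qquad \zeta \in (\gamma_1 \setminus \overline{\Sigma_1}) \cap \mathbb{C}^+,
\]
in the limit.

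Strict inequality (except at $\zeta = s$) follows by contradiction. Suppose $\zeta_0 \neq s$ satisfied $\re \Phi(\zeta_0) = \re \Phi(s)$. By Proposition \ref{prop:saddle}, $s$ is the unique saddle in $\mathbb{C}^+$, so $\Phi'(\zeta_0) \neq 0$ and $\mathcal{N}_\Phi$ is a smooth real-analytic curve near $\zeta_0$. Transverse intersection of $\mathcal{N}_\Phi$ with $\gamma_1$ at $\zeta_0$ would create a sign change of $\re \Phi - \re \Phi(s)$ along $\gamma_1$, contradicting the non-strict inequality above. Tangential intersection would, by real-analyticity and perturbation back into case (a), yield a second local maximum of $\re \Phi(\cdot; \xi, \eta')$ along $\gamma_\zeta^\star(\eta')$ near $\zeta_0$ for small $\xi/2 - \eta' > 0$, violating the strict inequality of Proposition \ref{prop:contoursexist}. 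The main obstacle is this tangency argument and the explicit construction of the limiting family $\gamma_\zeta^\star(\eta')$; both ultimately rely on the nondegeneracy of the simple saddle at $s$ on the boundary (i.e., $\Phi''(s) \neq 0$ when $\eta = \xi/2$), which can be verified directly from the rational expression obtained by differentiating $\Phi' = \phi' - f$.
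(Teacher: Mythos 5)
There is a genuine gap, and it sits at the heart of your limiting argument. Proposition \ref{prop:contoursexist} only asserts the \emph{existence} of some contour $\gamma_{\zeta}^{\star}\subset \mbox{int}(\gamma_{1})$ through $s(\xi,\eta')$ and $\overline{s(\xi,\eta')}$ on which $\re \Phi(\cdot\,;\xi,\eta')<\re\Phi(s(\xi,\eta');\xi,\eta')$; it says nothing about where that contour lies away from the saddles. Your construction requires the contours $\gamma_{\zeta}^{\star}(\eta')$ to \emph{coincide with $\gamma_{1}$} outside shrinking neighborhoods of the saddles while keeping the strict inequality, and to justify that you would already need to know that $\re\Phi(\zeta;\xi,\eta')<\re\Phi(s(\xi,\eta');\xi,\eta')$ for $\zeta\in\gamma_{1}\setminus\overline{\Sigma_{1}}$ when $\eta'<\tfrac{\xi}{2}$ — which is (up to a small perturbation in $\eta$) precisely the statement you are trying to prove, and is established nowhere in the paper. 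So even the non-strict inequality you extract in the limit rests on an unproved assertion, and the argument is circular. The strictness upgrade has a second, related problem: in the tangential case you claim a contradiction with Proposition \ref{prop:contoursexist} via ``a second local maximum of $\re\Phi(\cdot;\xi,\eta')$ along $\gamma_{\zeta}^{\star}(\eta')$ near $\zeta_{0}$'', but the proposition does not force its contour to pass anywhere near $\zeta_{0}$, so a second local maximum of $\re\Phi$ in that region violates nothing. Finally, the closing appeal to $\Phi''(s)\neq 0$ is asserted rather than verified, and in any case does not repair the two gaps above.

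For comparison, the paper avoids all soft/limiting arguments and proves the statement by a direct computation on the circle: writing $\zeta=c^{-1}+R_{1}e^{it}$, $t\in(\theta_{1},\pi)$, one computes $\re\big(\Phi'(\zeta)\,d\zeta\big)$ explicitly as a ratio involving $\sqrt{\cos\theta_{1}-\cos t}$ and $\sqrt{1-\cos t}$; the vanishing condition reduces to an equation whose left-hand side is strictly decreasing and right-hand side strictly increasing in $t$, so the tangential derivative of $\re\Phi$ along the arc vanishes at exactly one point, which must be $s$ (since $\Phi'(s)=0$ there), and its sign at $t=\theta_{1}$ (using $\xi<0$) shows $s$ is the unique maximizer of $\re\Phi$ on the arc. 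If you want to keep your strategy, the missing ingredient is exactly this kind of control of $\re\Phi$ (or of its tangential derivative) along $\gamma_{1}$ itself; once you have it, the limiting and tangency considerations become unnecessary.
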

\begin{proof}
Let $\zeta = c^{-1} + R_{1}e^{it}$. For $t \in (\theta_{1},\pi)$, we have
\begin{align}\label{lol25}
\re(\Phi'(\zeta)d\zeta) = \frac{-\cos(\frac{t}{2})\left( \sqrt{\cos \theta_{1}-\cos t}(\cos t + a_{1}) + \frac{\xi}{2}(\cos t + a_{2})\sqrt{1-\cos t} \right)}{\sqrt{2}(\cos t + \frac{2-\alpha + \alpha^{2}}{2\sqrt{1-\alpha + \alpha^{2}}})(\cos t + \frac{2-3\alpha + 2\alpha^{2}}{2(1-\alpha)\sqrt{1-\alpha + \alpha^{2}}})},
\end{align}
where $a_{1},a_{2}$ are given by $a_{1} = \frac{\alpha^{2} + (1-\alpha)\sqrt{1-\alpha + \alpha^{2}}}{2(1-\alpha)}$ and $a_{2} = \frac{2-3\alpha + 2\alpha^{2} + \alpha^{3}}{2(1-\alpha)\sqrt{1-\alpha + \alpha^{2}}}$ and satisfy $a_{1}>a_{2}>1$. The expression \eqref{lol25} vanishes if and only if
\begin{align}\label{lol26}
\frac{\sqrt{\cos \theta_{1} - \cos t}}{\sqrt{1-\cos t}} = - \frac{\xi}{2} \frac{\cos t + a_{2}}{\cos t + a_{1}}.
\end{align}
Since the left-hand-side is strictly decreasing, and the right-hand-side is strictly increasing as $t$ decreases from $\pi$ to $\theta_{1}$, there is a unique $\zeta = c^{-1}+R_{1}e^{it}$, $t \in (\theta_{1},\pi)$, such that $\re(\Phi'(\zeta)d\zeta) = 0$, and this must be $s$. This implies that $\re \Phi(\zeta) - \re \Phi(s)$ is of constant sign on $\gamma_{1}\setminus (\overline{\Sigma_{1}} \cup \{s\}) \cap \mathbb{C}^{+}$. By \eqref{lol25}, $\re(\Phi'(\zeta)d\zeta) > 0$ at $t = \theta_{1}$ (recall that $\xi < 0$), so the claim is proved.
\end{proof}

Therefore, we can find contours $\gamma_{\zeta}^{\star}$ and $\gamma_{\omega}^{\star}$ as described in Proposition \ref{prop:contoursexist eta = xi/2}, see also Figure \ref{fig: case 2 contour} (right).
\begin{proposition} \label{prop:contoursexist eta = xi/2}
Let $(\xi,\eta) \in \mathcal{L}_{\alpha}$ with $\eta = \frac{\xi}{2} < 0$. There exist contours $\gamma_{\zeta}^{\star}$ and $\gamma_{\omega}^{\star}$ such that 
\begin{itemize}
\item $\gamma_{\omega}^{\star} \subset \mbox{int}(\overline{\Sigma_\alpha \cup \Sigma_{1}})$, it surrounds $\alpha c$ and $\alpha c^{-1}$, and it goes through $s$ and $\overline{s}$ in such a way that
\begin{align*}
\re \Phi(\omega) > \re \Phi(s), \qquad \omega \in \gamma_{\omega}^{\star} \setminus \{s, \overline{s}\},
\end{align*}
\item $\gamma_\zeta^{\star} \subset \overline{\mbox{int}(\gamma_1)}$, surrounds $c$ and $c^{-1}$, and it goes through $s$ and $\overline{s}$ in such a way that 
\begin{align*}
\re \Phi(\zeta) < \re \Phi(s), \qquad \zeta \in \gamma_{\zeta}^{\star} 	\setminus \{s, \overline{s}\}.
\end{align*}
\end{itemize}
\end{proposition}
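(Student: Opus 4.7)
The case $\eta = \xi/2 < 0$ sits on the boundary between cases (a) and (b) of Lemma \ref{lem:Ldivision}; by Proposition \ref{prop:hightemp}\ref{item b in prop mapping s}, $s$ lies on $(\gamma_1 \setminus \overline{\Sigma_1}) \cap \mathbb{C}^+$, and since both (a) and (b) give $\Phi'(s)=0$, continuity forces $\Phi'(s)=0$ here as well. The lemma immediately preceding the proposition already produces most of the outer piece of $\gamma_\zeta^\star$: it establishes $\re\Phi(\zeta) < \re\Phi(s)$ on $(\gamma_1 \setminus \overline{\Sigma_1}) \cap \mathbb{C}^+ \setminus \{s\}$, and Schwarz reflection (using $\Phi(\overline{\zeta}) = \overline{\Phi(\zeta)}$) extends this to $\mathbb{C}^-$. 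My plan is to take the "outer" portion of $\gamma_\zeta^\star$ to be the full left arc $\gamma_1 \setminus \Sigma_1$ itself, oriented from $\overline{s}$ through $c^{-1}-R_1$ up to $s$.

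To close $\gamma_\zeta^\star$ with an "inner" arc from $s$ back to $\overline{s}$ lying in $\overline{\mathrm{int}(\gamma_1)}$ and surrounding $c, c^{-1}$, I would exploit the saddle structure of $\re\Phi$ at $s$. The proof of the preceding lemma in fact gives $\frac{d^2}{dt^2}\re\Phi(c^{-1}+R_1 e^{-it}) < 0$ at the value of $t$ corresponding to $s$, which forces $\re(\Phi''(s)\tau^2) < 0$ for $\tau$ tangent to $\gamma_1$ at $s$. Consequently the two branches of $\mathcal{N}_\Phi$ meeting at $s$ cross $\gamma_1$ transversally, and the two "$-$" sectors of $\re\Phi-\re\Phi(s)$ sit on either side of $\gamma_1$ at $s$: one opens into $\mathrm{int}(\gamma_1)$, the other away from it. Combined with the analogous picture at $\overline{s}$, the asymptotics \eqref{eq:NPhinearpoles} at $0, c, c^{-1}, \alpha c, \alpha c^{-1}, \infty$, and the max/min principle used throughout Subsection \ref{subsection: the set NPhi} and Lemma \ref{lemma: Gamma4 intersects at one point}, the inner "$-$" sectors at $s$ and $\overline{s}$ are contained in a common connected component of $\{\re\Phi < \re\Phi(s)\}$ that surrounds $c$ and $c^{-1}$; any arc from $s$ to $\overline{s}$ inside this component then closes the contour. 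The construction of $\gamma_\omega^\star$ is entirely parallel, replacing the "$-$" sectors by the adjacent "$+$" sectors at $s$ and $\overline{s}$, and using the asymptotics $\re\Phi(\alpha c) = \re\Phi(\alpha c^{-1}) = -\infty$ to place $\alpha c, \alpha c^{-1}$ inside the enclosed "$+$" region.

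An arguably cleaner route, which I prefer, is a continuity argument. For $\eta$ slightly below $\xi/2$, Proposition \ref{prop:contoursexist} supplies contours $\gamma_\zeta^\star(\eta)$ and $\gamma_\omega^\star(\eta)$ depending continuously on $\eta$, and as $\eta \uparrow \xi/2$ the saddle $s(\xi,\eta;\alpha)$ migrates continuously onto $\gamma_1$. Taking limits yields contours satisfying the present proposition's conditions; in this limit the inner piece of $\gamma_\zeta^\star$ touches $\gamma_1$ at exactly two points, $s$ and $\overline{s}$, which is precisely why the inclusion relaxes from $\gamma_\zeta^\star \subset \mathrm{int}(\gamma_1)$ (in Proposition \ref{prop:contoursexist}) to $\gamma_\zeta^\star \subset \overline{\mathrm{int}(\gamma_1)}$ here.

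The main obstacle is tracking the global topology of $\mathcal{N}_\Phi$ along the segment $\{\eta=\xi/2,\ \xi<0\}\cap\mathcal{L}_\alpha$. Figures \ref{fig: cases 1 and 2 for Phi}--\ref{fig: case 5 for Phi} already display several inequivalent configurations in the open case $\eta < \xi/2$, and the direct approach would require checking that the required connectivity of the "$-$" and "$+$" components persists for each configuration as one approaches the boundary (including transitions in which a branch of $\mathcal{N}_\Phi$ passes through a pole of $\re\Phi$). The continuity argument bypasses this case work entirely, reducing the proof to Proposition \ref{prop:contoursexist} together with the preceding lemma.
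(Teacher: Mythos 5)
Your first, direct construction is essentially the paper's own argument: the paper proves precisely the lemma you invoke (that $\re \Phi(\zeta)<\re \Phi(s)$ on $(\gamma_{1}\setminus(\overline{\Sigma_{1}}\cup\{s\}))\cap\mathbb{C}^{+}$, extended to the lower half plane via $\overline{\Phi(\zeta)}=\Phi(\overline{\zeta})$) and then concludes the existence of $\gamma_{\zeta}^{\star},\gamma_{\omega}^{\star}$ from this together with the structure of $\mathcal{N}_{\Phi}$ obtained in Subsection \ref{subsection: the set NPhi}, at essentially the same level of detail as your sketch, referring to Figure \ref{fig: case 2 contour} (right). One local slip in your sketch: since the lemma shows that $\re\Phi$ restricted to $\gamma_{1}\setminus\overline{\Sigma_{1}}$ has a strict maximum at $s$, the tangent direction to $\gamma_{1}$ at $s$ lies in a descent sector, so the two ``$-$'' sectors of $\re\Phi-\re\Phi(s)$ open roughly \emph{along} $\gamma_{1}$, while it is the two ``$+$'' sectors that are transverse to $\gamma_{1}$ (one into $\mathrm{int}(\gamma_{1})$, one outward); this matches the two branches of $\gamma_{\omega}^{\star}$ in the figure leaving $s$ transversally. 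The construction survives (the inner closing arc of $\gamma_{\zeta}^{\star}$ can follow $\gamma_{1}$ for a short stretch before cutting inward, exactly as in the figure), but your sector description should be corrected. Also, $\Phi'(s)=0$ here follows directly from Proposition \ref{prop: s on the Riemann surface} ($\xi<0$ puts $(s,w)$ on the first sheet), no continuity argument needed.

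The continuity argument you prefer, however, has a genuine gap. Proposition \ref{prop:contoursexist} is purely existential: it supplies no continuous selection $\eta\mapsto(\gamma_{\zeta}^{\star}(\eta),\gamma_{\omega}^{\star}(\eta))$, so ``taking limits'' is not defined without additional construction. More seriously, the conditions you need to pass to the limit are strict inequalities and a strict (open) inclusion. Even if a limit contour is extracted, say in Hausdorff distance along a subsequence, one only obtains $\re\Phi(\zeta)\le\re\Phi(s)$ on $\gamma_{\zeta}^{\star}$, $\re\Phi(\omega)\ge\re\Phi(s)$ on $\gamma_{\omega}^{\star}$, and $\gamma_{\omega}^{\star}$ contained in the \emph{closure} of $\mathrm{int}(\overline{\Sigma_{\alpha}\cup\Sigma_{1}})$. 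Ruling out points of equality other than $s,\overline{s}$ (and keeping $\gamma_{\omega}^{\star}$ off $\Sigma_{\alpha}\cup\Sigma_{1}$) requires knowing the level set $\mathcal{N}_{\Phi}$ and the sign configuration at the limiting parameter $\eta=\tfrac{\xi}{2}$ itself --- which is exactly the content of the lemma preceding the proposition and of the case analysis you hoped to bypass. So the limit argument does not replace the direct construction; it collapses back onto it, and as written it does not prove the proposition.
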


\section{Saddle point analysis}\label{section: saddle point analysis}
In this section, we prove Proposition \ref{prop:doubleintegrallimit} by means of a saddle point analysis that mainly follows the lines of \cite{CDKL}. This analysis relies mostly on Sections \ref{section: steepest descent for $U$}--\ref{section: phase functions} and is only valid for $(\xi,\eta)$ in the lower left part of the liquid region, that is for $(\xi,\eta) \in \mathcal{L}_{\alpha} \cap \{\eta \leq \tfrac{\xi}{2} \leq 0\}$. We divide the proof in three subcases: $\eta \leq \tfrac{\xi}{2} < 0$, $\eta < \tfrac{\xi}{2} = 0$ and $\eta = \xi = 0$. 

\begin{remark}
By adapting the analysis of this section and of Section \ref{section: phase functions}, it is possible to carry out similar saddle point analysis when $(\xi,\eta)$ lies in the other quadrants of the liquid region. Note however that this is not needed, thanks to the symmetries of Subsection \ref{subsection: symmetries} (see also Proposition \ref{prop:doubleintegrallimit}).
\end{remark}

\subsection{The case $\eta \leq \tfrac{\xi}{2} < 0$}
The double integral $\mathcal{I}$ is defined in \eqref{mathcalI}. The associated two contours of integration can be chosen freely, as long as they are closed curves surrounding $c$ and $c^{-1}$ once in the positive direction, and not surrounding $0$. From now, it will be convenient to take different contours in the $\zeta$ and $\omega$ variables, so we indicate this in the notation by rewriting \eqref{mathcalI} as
\begin{align}\label{mathcalI zeta omega}
\mathcal{I}(x,y;H) = \frac{1}{(2\pi i)^{2}}\int_{\gamma_{\zeta}}d\zeta \int_{\gamma_{\omega}}d\omega H(\omega, \zeta) W(\omega) \mathcal{R}^{U}(\omega,\zeta) \frac{\omega^{N}}{\zeta^{N}}q(\omega,\zeta)^{y} \tilde{q}(\omega,\zeta)^{x}.
\end{align}
Only the first column of $U$ appears in \eqref{mathcalI zeta omega}, which is independent of the choice of the contour $\gamma_{\mathbb{C}}$ associated to the RH problem for $U$. However, by using the jumps for $U$, we will find (just below) another formula for $\mathcal{I}$ in terms of the second column of $U$. Therefore, the choice of $\gamma_{\mathbb{C}}$ will matter. To be able to use the steepest descent of Section \ref{section: steepest descent for $U$}, we assume from now that $\gamma_{\mathbb{C}}=\gamma_{1}$. Recall that $T$ is expressed in terms of $U$ via \eqref{def of T}, and define
\begin{align}\label{lol22}
\widetilde{\mathcal{R}}^{T}(\omega,\zeta) = \begin{pmatrix}
1 & 0
\end{pmatrix} T^{-1}(\omega)T(\zeta) \begin{pmatrix}
1 \\ 0
\end{pmatrix}.
\end{align}
By Proposition \ref{prop:TandTinvsmall}, $\widetilde{\mathcal{R}}^{T}(\omega,\zeta)$ is uniformly bounded as $\zeta$ and $\omega$ stay bounded away from $r_{+}$ and $r_{-}$. We will need the analytic continuation in $\omega$ of $\widetilde{\mathcal{R}}^{T}(\omega,\zeta)$ from the interior of $\gamma_{1}$ to the bounded region delimited by $\overline{\Sigma_{1} \cup \Sigma_{\alpha}}$ (see Figure \ref{fig: crit traj alpha 04}). We denote it $\widetilde{\mathcal{R}}^{T,a}(\omega,\zeta)$, and by \eqref{jumps for T outside support} it is given by
\begin{align}\label{def of RTa}
\widetilde{\mathcal{R}}^{T,a}(\omega,\zeta) = \begin{cases}
\begin{pmatrix}
1 & 0
\end{pmatrix} T^{-1}(\omega)T(\zeta) \begin{pmatrix}
1 \\ 0
\end{pmatrix}, & |\omega-c^{-1}| < R_{1}, \, \zeta \in \mathbb{C}\setminus \gamma_{1}, \\[0.4cm]
\begin{pmatrix}
1 & -e^{4N\phi(\omega)}
\end{pmatrix} T^{-1}(\omega)T(\zeta) \begin{pmatrix}
1 \\ 0
\end{pmatrix}, & \omega \in \mbox{int}\big( (\gamma_{1}\setminus \Sigma_{1})\cup \Sigma_{\alpha}\big), \, \zeta \in \mathbb{C}\setminus \gamma_{1}.
\end{cases}
\end{align}
By Lemma \ref{lem:Nphihigh}, $\re \phi(\omega) < 0$ for $\omega \in \mbox{int}\big( (\gamma_{1}\setminus \Sigma_{1})\cup \Sigma_{\alpha}\big)$, so $\widetilde{\mathcal{R}}^{T,a}(\omega,\zeta)$ remains bounded as $N \to +\infty$, uniformly for $\zeta$ and $\omega$ bounded away from $r_{+}$ and $r_{-}$, as long as $\omega \in \mbox{int}(\overline{\Sigma_{1}}\cup \Sigma_{\alpha})$. Our next goal is to prove the following.
\begin{proposition} \label{prop:deformationhigh}
Let $(x,y)$ be coordinates inside the hexagon, such that $\xi:=\frac{x}{N}-1$ and $\eta:=\frac{y}{N}-1$ satisfy $(\xi,\eta) \in \mathcal L_{\alpha}$ with $\eta \leq \frac{\xi}{2} <0$. Take $\gamma_{\zeta}^{\star}$ and $\gamma_{\omega}^{\star}$ as in Proposition \ref{prop:contoursexist} if $\eta < \frac{\xi}{2}$, and as in Proposition \ref{prop:contoursexist eta = xi/2} if $\eta = \frac{\xi}{2}$ (see also Figures \ref{fig: case 1 contour} and \ref{fig: case 2 contour}).  Then the double contour integral \eqref{mathcalI} is equal to
\begin{equation} \label{eq:deformationhigh}
\mathcal I(x,y;H) =
\frac{1}{2\pi i} \int_{\overline{s}}^s H(\zeta,\zeta) d\zeta + \frac{1}{(2\pi i)^2} \int_{\gamma_{\zeta}^{\star}} d\zeta
\int_{\gamma_{\omega}^{\star}}  \frac{d\omega}{\omega-\zeta}H(\omega,\zeta)
\widetilde{\mathcal{R}}^{T,a}(\omega,\zeta) e^{2N(\Phi(\zeta;\xi,\eta)-\Phi(\omega;\xi,\eta))}.
\end{equation}
\end{proposition}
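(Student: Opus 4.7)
The strategy is to invert the definition \eqref{def of T} to substitute $U(\zeta) = e^{-N\ell\sigma_{3}}T(\zeta)e^{2Ng(\zeta)\sigma_{3}}e^{N\ell\sigma_{3}}$ into the integrand of \eqref{mathcalI zeta omega}, rewrite it in terms of $\widetilde{\mathcal{R}}^{T,a}$ and the phase $e^{2N(\Phi(\zeta)-\Phi(\omega))}$, and then deform the two contours to $\gamma_{\zeta}^{\star}$ and $\gamma_{\omega}^{\star}$; the single-integral contribution on the right of \eqref{eq:deformationhigh} will arise as a residue at $\omega=\zeta$ picked up along the arc where the two deformed contours cross.

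First, I plug the inverted form of \eqref{def of T} into $\mathcal{R}^{U}(\omega,\zeta) = \frac{1}{\zeta-\omega}\begin{pmatrix}0 & 1\end{pmatrix}U^{-1}(\omega)U(\zeta)\begin{pmatrix}1\\0\end{pmatrix}$ from \eqref{reproducing kernel in terms of U intro}, and combine the resulting $e^{\pm 2Ng}$ factors with the algebraic factors $W(\omega)\,\omega^{N}\zeta^{-N}q^{y}\tilde{q}^{x}$. Using the definition \eqref{Phidef} of $\Phi$ together with the identity $\Psi(\omega)=\Phi(\omega)-2\phi(\omega)$ that is immediate from \eqref{Phidef}--\eqref{Psidef}, every $g$-function, $V$-function and algebraic-power factor collapses into $e^{2N(\Phi(\zeta)-\Phi(\omega))}$ multiplied by a residual factor $e^{4N\phi(\omega)}$, which gets absorbed into the second entry of the row $\begin{pmatrix}0 & 1\end{pmatrix}T^{-1}(\omega)$. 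By the jump \eqref{jumps for T outside support}, $T_{+}=T_{-}\bigl(\begin{smallmatrix}1 & e^{4N\phi}\\ 0 & 1\end{smallmatrix}\bigr)$ on $\gamma_{1}\setminus\Sigma_{1}$, this row with the $e^{4N\phi}$ tail is precisely the analytic continuation across $\gamma_{1}\setminus\Sigma_{1}$ of the row $\begin{pmatrix}1 & 0\end{pmatrix}T^{-1}(\omega)$ on the inside, which is exactly the content of the two-case definition \eqref{def of RTa} of $\widetilde{\mathcal{R}}^{T,a}$. Consequently, on the original contour $\gamma_{1}\times\gamma_{1}$, the integrand of \eqref{mathcalI zeta omega} rewrites as
\[
\frac{H(\omega,\zeta)}{\zeta-\omega}\,\widetilde{\mathcal{R}}^{T,a}(\omega,\zeta)\,e^{2N(\Phi(\zeta)-\Phi(\omega))}.
\]

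Next, I verify that this rewritten integrand is meromorphic in each variable separately on $\mathbb{C}$, with poles only at $\{0,\alpha c,\alpha c^{-1},c,c^{-1}\}$ and at $\omega=\zeta$, and with no surviving branch cuts. The log-branch-cuts in $\Phi$ are all killed by the integrality of $x,y$, while the discontinuity $\Phi_{+}-\Phi_{-}=2\phi_{+}$ of $\Phi$ across $\Sigma_{1}$ (a direct consequence of \eqref{prop g sous}) cancels exactly with the jump \eqref{jumps for T inside support} of $T$ on $\Sigma_{1}$. Granted this meromorphicity, I deform the $\zeta$-contour inward from $\gamma_{1}$ to $\gamma_{\zeta}^{\star}$ and the $\omega$-contour from $\gamma_{1}$ to $\gamma_{\omega}^{\star}$, with these contours supplied by Proposition \ref{prop:contoursexist} when $\eta<\frac{\xi}{2}$ and Proposition \ref{prop:contoursexist eta = xi/2} in the boundary case $\eta=\frac{\xi}{2}$. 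The two deformed contours intersect only at $s$ and $\overline{s}$, so during the simultaneous deformation the $\omega$-contour crosses the $\zeta$-contour along the arc of $\zeta$-values running from $\overline{s}$ to $s$ in $\mathbb{C}\setminus(-\infty,c^{-1}]$.

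The explicit factor $\frac{1}{\omega-\zeta}$ produces a simple pole at $\omega=\zeta$ with residue $H(\zeta,\zeta)\,\widetilde{\mathcal{R}}^{T,a}(\zeta,\zeta) = H(\zeta,\zeta)$, because $\widetilde{\mathcal{R}}^{T,a}(\zeta,\zeta) = [T^{-1}(\zeta)T(\zeta)]_{11} = 1$ and the exponential evaluates to $1$. Integrating this residue along the arc from $\overline{s}$ to $s$ yields the first term of \eqref{eq:deformationhigh}, while the remaining double integral over $\gamma_{\zeta}^{\star}\times\gamma_{\omega}^{\star}$ gives the second term. The main technical obstacle is the second step: tracking how the jumps of $T$ on both $\Sigma_{1}$ and $\gamma_{1}\setminus\Sigma_{1}$ combine with the branch-cut of $\Phi$ across $\Sigma_{1}$ and with the algebraic power factors to make the rewritten integrand genuinely single-valued throughout the deformation region, and verifying that no spurious residues at $\{0,\alpha c,\alpha c^{-1}\}$ arise as the two contours move — these apparent singularities being in fact regular after the $W$-absorbing $T$-substitution, thanks to the analyticity of $T$ in $\mathbb{C}\setminus\gamma_{1}$. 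A secondary subtlety is tracking the orientation so that the residue is collected along the arc from $\overline{s}$ to $s$ (and not its complement), which follows from the construction of $\gamma_{\omega}^{\star}$ and $\gamma_{\zeta}^{\star}$ in Propositions \ref{prop:contoursexist}--\ref{prop:contoursexist eta = xi/2}.
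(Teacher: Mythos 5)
Your central rewriting step contains a genuine gap. After substituting $U$ in terms of $T$, the $\omega$-dependent part of the integrand of \eqref{mathcalI zeta omega} becomes (up to scalar factors that combine into $e^{2N(\Phi(\zeta)-\Phi(\omega))}$) the row $e^{4N\phi(\omega)}\begin{pmatrix}0 & 1\end{pmatrix}T^{-1}(\omega)$, and this row is \emph{not} equal to $\begin{pmatrix}1 & -e^{4N\phi(\omega)}\end{pmatrix}T^{-1}(\omega)$, i.e.\ it is not the analytic continuation of $\begin{pmatrix}1 & 0\end{pmatrix}T^{-1}$ appearing in \eqref{def of RTa}. Using the jump \eqref{jumps for T outside support} (equivalently \eqref{jump relations of U}) one finds instead
\begin{equation*}
e^{4N\phi(\omega)}\begin{pmatrix}0 & 1\end{pmatrix}T^{-1}(\omega)
= \begin{pmatrix}1 & 0\end{pmatrix}T_{-}^{-1}(\omega)-\begin{pmatrix}1 & 0\end{pmatrix}T_{+}^{-1}(\omega),
\qquad \omega \in \gamma_{1},
\end{equation*}
so the weight-times-polynomial row equals a \emph{difference} of the two boundary values of the Cauchy-transform row, not a single one of them. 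Your claimed pointwise identity ``integrand $=\frac{H}{\zeta-\omega}\widetilde{\mathcal{R}}^{T,a}e^{2N(\Phi(\zeta)-\Phi(\omega))}$ on $\gamma_1\times\gamma_1$'' silently drops one of these two terms. In the paper's argument this is exactly the splitting \eqref{mathcalI splitting}, and the dropped term does vanish, but only after a separate argument: the piece built from the exterior boundary value is analytic in $\omega$ outside $\gamma_1$ including at $\infty$ (where it is $\bigO(\omega^{-2})$ because $1\le x,y\le 2N-1$), so its contour can be pushed to infinity. Without that step your single-term formula is simply not an identity.

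A second, related error is the claim that the poles at $\alpha c$ and $\alpha c^{-1}$ are ``in fact regular after the $W$-absorbing $T$-substitution''. Once $W(\omega)$ has been traded for $e^{-2N\Phi(\omega)}$, the factor $e^{-2N\Phi(\omega)}$ blows up like $(\omega-\alpha c)^{-x}$ as $\omega\to\alpha c$ (see \eqref{eq:NPhinearpoles}), so each of the two pieces of the difference \emph{does} have a high-order pole at $\omega=\alpha c$; only in the original integrand, where $W(\omega)$ multiplies $\tilde q(\omega,\zeta)^{x}$, is the singularity absent. In the paper, deforming the $\omega$-contour past $\alpha c$ produces residues in both pieces which cancel against each other — a cancellation your one-term version cannot see, and which would otherwise leave an unaccounted contribution when you move $\gamma_\omega$ from $\gamma_1$ to $\gamma_\omega^{\star}$. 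Your final steps (the residue at $\omega=\zeta$ with $\widetilde{\mathcal{R}}^{T,a}(\zeta,\zeta)=1$, the choice of $\gamma_\zeta^{\star},\gamma_\omega^{\star}$ and the resulting arc from $\overline{s}$ to $s$) agree with the paper, but they rest on the flawed intermediate identity, so the proof as written does not go through.
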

\begin{remark}\label{remark: it is analytic in zeta}
By Proposition \ref{prop:contoursexist eta = xi/2}, $\gamma_{\zeta}^{\star}$ intersects $\gamma_{1}\setminus \overline{\Sigma_{1}}$ whenever $\eta = \frac{\xi}{2}$. We do not indicate whether we take the $+$ or $-$ boundary values in the integrand of \eqref{eq:deformationhigh}. This is without ambiguity, because
\begin{align*}
\zeta \mapsto T(\zeta)\begin{pmatrix}
1 \\ 0
\end{pmatrix}e^{2N\Phi(\zeta;\xi,\eta)}
\end{align*}
has no jumps on $\gamma_{1}$ (this can be verified using \eqref{jumps for T inside support}--\eqref{jumps for T outside support}).
\end{remark}

\begin{proof}
Take $\gamma_{\omega} = \gamma_{1}$ and $\gamma_{\zeta}$ lying strictly inside $\gamma_{1}$ in \eqref{mathcalI zeta omega}. From the jumps for $U$ \eqref{jump relations of U}, we have
\begin{align*}
W(\omega)\begin{pmatrix}
0 & 1
\end{pmatrix}U(\omega)^{-1} = \begin{pmatrix}
1 & 0 
\end{pmatrix}U_{-}(\omega)^{-1} - \begin{pmatrix}
1 & 0 
\end{pmatrix}U_{+}(\omega)^{-1}, \qquad \omega \in \gamma_{1}.
\end{align*}
Inserting this in \eqref{mathcalI zeta omega}, and using the $U \mapsto T$ transformation \eqref{def of T}, we get
\begin{align}
\mathcal{I}(x,y;H) & = \frac{1}{(2\pi i)^{2}}\int_{\gamma_{\zeta}}d\zeta \int_{\gamma_{\omega}=\gamma_{1}}\frac{d\omega}{\omega-\zeta} H(\omega, \zeta) \widetilde{\mathcal{R}}_{+}^{T}(\omega,\zeta) e^{2N(g(\zeta)-g_{+}(\omega))} \frac{\omega^{N}}{\zeta^{N}}q(\omega,\zeta)^{y} \tilde{q}(\omega,\zeta)^{x} \nonumber \\
& - \frac{1}{(2\pi i)^{2}}\int_{\gamma_{\zeta}}d\zeta \int_{\gamma_{\omega}=\gamma_{1}}\frac{d\omega}{\omega-\zeta} H(\omega, \zeta) \widetilde{\mathcal{R}}_{-}^{T}(\omega,\zeta) e^{2N(g(\zeta)-g_{-}(\omega))} \frac{\omega^{N}}{\zeta^{N}}q(\omega,\zeta)^{y} \tilde{q}(\omega,\zeta)^{x}, \label{mathcalI splitting}
\end{align}
where $\widetilde{\mathcal{R}}_{+}^{T}(\omega,\zeta)$ and $\widetilde{\mathcal{R}}_{-}^{T}(\omega,\zeta)$ denote the limits of $\widetilde{\mathcal{R}}^{T}(\omega',\zeta)$ as $\omega' \to \omega$ from the interior and exterior of $\gamma_{1}$, respectively. 
\begin{remark}
For $x,y \in \{1,2,\ldots,2N-1\}$, we define
\begin{align}\label{integrand in omega and zeta}
m(\omega,\zeta) = \frac{1}{\omega-\zeta} H(\omega, \zeta) \widetilde{\mathcal{R}}^{T}(\omega,\zeta) e^{2N(g(\zeta)-g(\omega))} \frac{\omega^{N}}{\zeta^{N}}q(\omega,\zeta)^{y} \tilde{q}(\omega,\zeta)^{x}.
\end{align}
The boundary values of $m$ appear in the integrand of \eqref{mathcalI splitting}. We recall that $q$ and $\tilde{q}$ are defined in \eqref{def of q and qtilde}, that $H$ satisfies the conditions stated in Proposition \ref{prop:doubleintegrallimit}, and that $g(\omega)$ is bounded for $\omega$ in compact subsets and satifies $g(\omega) \sim \log(\omega)$ as $\omega \to \infty$. Therefore, the following properties hold:
\begin{enumerate}[label=(\roman*)]
\item\label{item i} The function $\zeta \mapsto m(\omega,\zeta)$ is analytic in $\mathbb{C} \setminus \{\omega, 0, c, c^{-1}\}$,
\item \label{item ii} The function $\omega \mapsto m(\omega,\zeta)$ is analytic in $(\mathbb{C}\cup\{\infty\}) \setminus (\{\zeta, \alpha c, \alpha c^{-1}\} \cup \gamma_{1})$.
\end{enumerate}
The statement that $\omega \mapsto m(\omega,\zeta)$ is analytic at $\infty$ deserves a little computation: since $x,y \in \{1,2,\ldots,2N-1\}$, we have $m(\omega,\zeta) = \bigO(\omega^{-1-2N+N-y+x}) = \bigO(\omega^{-2})$ as $\omega \to \infty$.
\end{remark}
If $\eta < \frac{\xi}{2}$, Proposition \ref{prop:contoursexist} states that $\gamma_{\zeta}$ lies strictly inside $\gamma_{1}$, so in this case we can (and do) take $\gamma_{\zeta} = \gamma_{\zeta}^{\star}$ in \eqref{mathcalI splitting}. If $\eta = \frac{\xi}{2}$, we know from Proposition \ref{prop:contoursexist eta = xi/2} that $\gamma_{\zeta}^{\star}$ intersects $\gamma_{1}\setminus \overline{\Sigma_{1}}$. In this case, we let $\gamma_{\zeta}$ in \eqref{mathcalI splitting} tend to $\gamma_{\zeta}^{\star}$ from the interior of $\gamma_{1}$. In what follows, we will abuse notation and simply write $\gamma_{\zeta}^{\star}$. We will also omit the boundary values in the $\zeta$-variable, see Remark \ref{remark: it is analytic in zeta} (or \ref{item i}).


\vspace{0.2cm}Let us deform $\gamma_{\omega}$ from $\gamma_{1}$ to $\Sigma_{1} \cup \overline{\Sigma_{\alpha}}$ in each of the two integrals of \eqref{mathcalI splitting}. For each deformation, we pick up a residue at $\omega = \alpha c$. These residues cancel each other and we get
\begin{align*}
\mathcal{I}(x,y;H) & = \frac{1}{(2\pi i)^{2}}\int_{\gamma_{\zeta}^{\star}}d\zeta \int_{\gamma_{\omega}=\Sigma_{1} \cup \overline{\Sigma_{\alpha}}}\frac{d\omega}{\omega-\zeta} H(\omega, \zeta) \widetilde{\mathcal{R}}_{+}^{T,a}(\omega,\zeta) e^{2N(g(\zeta)-g_{+}(\omega))} \frac{\omega^{N}}{\zeta^{N}}q(\omega,\zeta)^{y} \tilde{q}(\omega,\zeta)^{x} \nonumber \\
& - \frac{1}{(2\pi i)^{2}}\int_{\gamma_{\zeta}^{\star}}d\zeta \int_{\gamma_{\omega}=\Sigma_{1} \cup \overline{\Sigma_{\alpha}}}\frac{d\omega}{\omega-\zeta} H(\omega, \zeta) \widetilde{\mathcal{R}}_{-}^{T}(\omega,\zeta) e^{2N(g(\zeta)-g_{-}(\omega))} \frac{\omega^{N}}{\zeta^{N}}q(\omega,\zeta)^{y} \tilde{q}(\omega,\zeta)^{x}. 
\end{align*}
By \ref{item ii}, the integrand of the second integral has no poles in the exterior region of $\Sigma_{1} \cup \overline{\Sigma_{\alpha}}$, so by deforming $\gamma_{\omega}$ at $\infty$, we find that this integral is $0$. Therefore, we simply get
\begin{align*}
\mathcal{I}(x,y;H) & = \frac{1}{(2\pi i)^{2}}\int_{\gamma_{\zeta}^{\star}}d\zeta \int_{\gamma_{\omega}=\Sigma_{1} \cup \overline{\Sigma_{\alpha}}}\frac{d\omega}{\omega-\zeta} H(\omega, \zeta) \widetilde{\mathcal{R}}_{+}^{T,a}(\omega,\zeta) e^{2N(g(\zeta)-g_{+}(\omega))} \frac{\omega^{N}}{\zeta^{N}}q(\omega,\zeta)^{y} \tilde{q}(\omega,\zeta)^{x}.
\end{align*}
This formula can be written in terms of $\Phi$ (see Definition \ref{def: Phi and Psi}) as follows:
\begin{align}\label{mathcalI in terms of Phi}
\mathcal{I}(x,y;H) & = \frac{1}{(2\pi i)^{2}}\int_{\gamma_{\zeta}^{\star}}d\zeta \int_{\gamma_{\omega}=\Sigma_{1} \cup \overline{\Sigma_{\alpha}}}\frac{d\omega}{\omega-\zeta} H(\omega, \zeta) \widetilde{\mathcal{R}}_{+}^{T,a}(\omega,\zeta) e^{2N(\Phi(\zeta;\xi,\eta)-\Phi_{+}(\omega;\xi,\eta))},
\end{align}
where $\xi := x/N-1$ and $\eta := y/N-1$. Finally, we deform $\gamma_{\omega}$ into $\gamma_{\omega}^{\star}$. This gives the right-most term of \eqref{eq:deformationhigh} plus a residue at $\omega = \zeta$ (by \ref{item ii}). After a small computation, we find that this residue is the first term on the right-hand-side of \eqref{eq:deformationhigh}. This finishes the proof. 
\end{proof}
\begin{proof}[Proof of Proposition \ref{prop:doubleintegrallimit} for $\eta \leq \frac{\xi}{2}< 0$]
Let $\{(x_{N},y_{N}\}_{N \geq 1}$ be a sequence satisfying \eqref{good sequence} with $(\xi,\eta) \in \mathcal{L}_{\alpha}\cap \{\eta \leq \frac{\xi}{2}< 0\}$, and define $\xi_{N} := x_{N}/N-1$ and $\eta_{N}:= y_{N}/N-1$. By \eqref{good sequence}, we have $\xi_{N} \to \xi$ and $\eta_{N} \to \eta$ as $N \to + \infty$. If $\eta = \frac{\xi}{2}$, we assume that $(\xi_{N},\eta_{N}) \in \mathcal{L}_{\alpha}\cap \{\eta \leq \frac{\xi}{2}< 0\}$ for all large enough $N$ (this is without loss of generality, see Lemma \ref{lemma:from one quadrant to the four}). Replacing $(x,y)$ in \eqref{eq:deformationhigh} by $(x_{N},y_{N})$, we get
\begin{equation}\label{lol27}
\mathcal I(x_{N},y_{N};H) -\frac{1}{2\pi i} \int_{\overline{s_{N}}}^{s_{N}} H(\zeta,\zeta) d\zeta =
 \frac{1}{(2\pi i)^2} \int_{\gamma_{\zeta}^{\star}} d\zeta
\int_{\gamma_{\omega}^{\star}}  \frac{d\omega}{\omega-\zeta}H(\omega,\zeta)
\widetilde{\mathcal{R}}^{T,a}(\omega,\zeta) e^{2N(\Phi_{N}(\zeta)-\Phi_{N}(\omega))},
\end{equation}
where $s_{N} = s(\xi_{N},\eta_{N};\alpha)$, $\Phi_{N}(\zeta) := \Phi(\zeta;\xi_{N},\eta_{N})$ and the contours $\gamma_{\zeta}^{\star}$ and $\gamma_{\omega}^{\star}$ also depend on $N$, even though this is not indicated in the notation. Since $\gamma_{\zeta}^{\star}$ and $\gamma_{\omega}^{\star}$ do not pass through $r_{+}$ and $r_{-}$, Proposition \ref{prop:TandTinvsmall} implies that
\begin{align*}
\widetilde{\mathcal{R}}^{T,a}(\omega,\zeta) = \bigO(1), \qquad \mbox{as } N \to + \infty \; \mbox{ uniformly for all } \zeta \in \gamma_{\zeta}^{\star} \mbox{ and } \omega \in \gamma_{\omega}^{\star}.
\end{align*}
We also know from Propositions \ref{prop:contoursexist} and \ref{prop:contoursexist eta = xi/2} that
\begin{align*}
\re \Phi_{N}(\zeta) < \re \Phi_{N}(s_{N}) < \re \Phi_{N}(\omega), \qquad \mbox{for all } \zeta \in \gamma_{\zeta}^{\star}\setminus \{s_{N},\overline{s_{N}}\},\omega \in \gamma_{\omega}^{\star}\setminus \{s_{N},\overline{s_{N}}\},
\end{align*}
which implies that the right-hand-side of \eqref{lol27} is
\begin{align}\label{lol28}
\frac{1}{(2\pi i)^2} \hspace{-0.05cm} \int_{\gamma_{\zeta}^{\star} \cap D_{\epsilon}} \hspace{-0.2cm} d\zeta
\hspace{-0.05cm} \int_{\gamma_{\omega}^{\star}\cap D_{\epsilon}}  \frac{d\omega}{\omega-\zeta}H(\omega,\zeta)
\widetilde{\mathcal{R}}^{T,a}(\omega,\zeta) e^{2N(\Phi_{N}(\zeta)-\Phi_{N}(\omega))} + \bigO(e^{-C_{1}N}), \quad \mbox{as } N \to \infty,
\end{align}
for a certain $C_{1}>0$, and where $D_{\epsilon}$ is the union of two small disks of radii $\epsilon > 0$ surrounding $s$ and $\overline{s}$. Since $s_{N}$ and $\overline{s_{N}}$ are simple zeros of $\Phi_{N}'$, we have the estimates
\begin{align*}
& \re(\Phi_{N}(\zeta)-\Phi_{N}(s_{N})) < -C_{2}|\zeta - s_{N}|^{2}, & & \mbox{for } \zeta \in \gamma_{\zeta}^{\star}\setminus \{s_{N},\overline{s_{N}}\}, \\
& \re(\Phi_{N}(\omega)-\Phi_{N}(s_{N})) \geq C_{2}|\omega - s_{N}|^{2}, & & \mbox{for } \omega \in \gamma_{\omega}^{\star}\setminus \{s_{N},\overline{s_{N}}\},
\end{align*}
for a certain $C_{2}>0$. Therefore, the left-most term in \eqref{lol28} is, in absolute value,
\begin{align}
\leq C_{3}\iint_{|x|^{2}+|y|^{2} \leq \epsilon^{2}}\frac{e^{-4C_{2}N(x^{2}+y^{2})}}{\sqrt{x^{2}+y^{2}}}dxdy = 2\pi C_{3} \int_{0}^{\epsilon}e^{-4C_{2}Nr^{2}}dr \leq C_{4} N^{-\frac{1}{2}} \label{lol32}
\end{align}
for certain $C_{3},C_{4}>0$ and for all large enough $N$. Therefore,
\begin{align*}
\mathcal I(x_{N},y_{N};H) -\frac{1}{2\pi i} \int_{\overline{s_{N}}}^{s_{N}} H(\zeta,\zeta) d\zeta = \bigO(N^{-1/2}), \qquad \mbox{as } N \to + \infty,
\end{align*}
which give \eqref{eq:Hintegral}.
\end{proof}

\subsection{The case $\xi = 0$ and $\eta < 0$}
Let us briefly recall first the situation for $(\xi',\eta') \in \mathcal{L}$, such that $\eta' < \frac{\xi'}{2}<0$. In this case, the set $\mathcal{N}_{\Phi}$ contains four curves emanating from $s$: three of these curves, namely $\Gamma_{1}$, $\Gamma_{2}$ and $\Gamma_{3}$, lie in $\mbox{int}(\overline{\Sigma_{1}\cup \Sigma_{\alpha}})$, the other curve $\Gamma_{4}$ intersects once $\overline{\Sigma_{1}\cup \Sigma_{\alpha}}$. Denote $p_{j}$ for the intersection of $\Gamma_{j}$ with $\mathbb{R}$, and recall that the ordering for $\Gamma_{1}$, $\Gamma_{2}$ and $\Gamma_{3}$ is such that $p_{1}<p_{2}<p_{3}$. 

\vspace{0.2cm}As $(\xi',\eta') \to (0,\eta)$ with $\eta < 0$ (see Figures \ref{fig: cases 1 and 2 for Phi} and \ref{fig: cases xi=0 for Phi and Psi} (left)), we know from Proposition \ref{prop:hightemp} that $s(\xi',\eta';\alpha)$ tends to a point $s=s(0,\eta;\alpha)$ lying on $\Sigma_{1}$. In this limit, both $\Gamma_{3}$ and $\Gamma_{4}$ tend to the arc
\begin{align*}
\Sigma_{s}:= \{c^{-1} + R_{1}e^{it} : -\arg s \leq t \leq \arg s\} \subset \Sigma_{1},
\end{align*}
and a part of $\Gamma_{1}$ tends to $\Sigma_{1}\setminus \Sigma_{s}$. Thus, the case $\xi = 0$ gives less freedom for the contour deformations and the saddle point analysis is more involved. To handle this case, we need information about both $\mathcal{N}_{\Phi}$ and $\mathcal{N}_{\Psi}$, where
\begin{align*}
\mathcal{N}_{\Psi} := \{\zeta \in \mathbb{C} : \re \Psi(\zeta) = \Psi(s)\}.
\end{align*}
The sets $\mathcal{N}_{\Phi}$ and $\mathcal{N}_{\Psi}$ are represented in Figure \ref{fig: cases xi=0 for Phi and Psi} for a particular choice of the parameters. We have the following.
\begin{lemma}
For $\xi = 0$, we have $\Sigma_{1} \subset \mathcal{N}_{\Phi}$ and $\Sigma_{1} \subset \mathcal{N}_{\Psi}$
\end{lemma}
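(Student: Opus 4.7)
The plan is to combine three ingredients already established in the paper: the vanishing of $\re\phi$ on $\Sigma_1$, the invariance of $\re f_1$ along $\Sigma_1$, and the fact that $s(0,\eta;\alpha)$ itself lies on $\Sigma_1$.

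First I will substitute $\xi=0$ into Definition \ref{def: Phi and Psi}. This reduces the two phase functions to
\begin{align*}
\Phi(\zeta) &= \phi(\zeta)+\tfrac{\eta}{2}\log\zeta-\tfrac{\eta}{2}\log\bigl((\zeta-c)(\zeta-c^{-1})\bigr)=\phi(\zeta)-\tfrac{\eta}{2}f_1(\zeta),\\
\Psi(\zeta) &= -\phi(\zeta)-\tfrac{\eta}{2}f_1(\zeta),
\end{align*}
with $f_1(\zeta)=\log\frac{(\zeta-c)(\zeta-c^{-1})}{\zeta}$ as in Section \ref{section: phase functions}. Since $\Psi-\Phi=-2\phi$ and $\re\phi=0$ on $\Sigma_1$ by Lemma \ref{lem:Nphihigh}, one has $\re\Psi=\re\Phi$ identically on $\Sigma_1$. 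Therefore the two inclusions $\Sigma_1\subset\mathcal N_\Phi$ and $\Sigma_1\subset\mathcal N_\Psi$ are equivalent, and it suffices to treat $\Phi$.

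Next I will show that $\re\Phi$ is constant on $\Sigma_1$. Combining $\re\phi\equiv 0$ on $\Sigma_1$ with the reduction above gives $\re\Phi(\zeta)=-\tfrac{\eta}{2}\re f_1(\zeta)$ for $\zeta\in\Sigma_1$. Lemma \ref{lem:LogsIncSigmaMinus1}(1) states that $\re f_1$ is constant along $\Sigma_1\cap\mathbb C^+$, and by the Schwarz reflection symmetry $\overline{f_1(\zeta)}=f_1(\overline\zeta)$ the same constant value is attained on $\Sigma_1\cap\mathbb C^-$; continuity of $\re f_1$ across the real-axis point $c^{-1}+R_1$ then forces $\re f_1$, and hence $\re\Phi$, to be constant on the whole of $\Sigma_1$.

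Finally I need to identify this constant with $\re\Phi(s)$. By Proposition \ref{prop:hightemp}\ref{item a in prop mapping s}, when $\xi=0$ the saddle $s=s(0,\eta;\alpha)$ lies in $\overline{\Sigma_1}\cap\mathbb C^+\subset\Sigma_1$ (for $\eta<0$ the inclusion is in the interior of $\Sigma_1$). Thus the constant value of $\re\Phi$ on $\Sigma_1$ is precisely $\re\Phi(s)$, giving $\Sigma_1\subset\mathcal N_\Phi$, and the equivalence noted above yields $\Sigma_1\subset\mathcal N_\Psi$ as well.

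No step of this argument is genuinely delicate; the only point requiring minor care is checking that $\re f_1$ extends continuously across the real-axis point of $\Sigma_1$ (so that the two half-arc constants agree), but this is immediate from the fact that $f_1$ has no singularity at $c^{-1}+R_1$. Everything else is a direct invocation of previously established lemmas.
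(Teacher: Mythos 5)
Your proof is correct and follows essentially the same route as the paper's: set $\xi=0$, use $\re\phi\equiv 0$ on $\Sigma_1$ (Lemma \ref{lem:Nphihigh}) to reduce $\re\Phi=\re\Psi=-\tfrac{\eta}{2}\re f_1$ there, and invoke Lemma \ref{lem:LogsIncSigmaMinus1} for the constancy of $\re f_1$ along $\Sigma_1$. You merely make explicit two points the paper leaves implicit — the symmetry/continuity argument extending constancy to the lower half-arc and the identification of the constant with $\re\Phi(s)$ via Proposition \ref{prop:hightemp}\ref{item a in prop mapping s} (note only that $\overline{\Sigma_1}\cap\mathbb{C}^+$ is not literally contained in the open arc $\Sigma_1$, since it contains $r_+$; this is harmless because $\re\Phi$ is continuous at $r_+$).
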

\begin{proof}
Since $\re \phi(\zeta) = 0$ for $\zeta \in \Sigma_{1}$, by Definition \ref{def: Phi and Psi} and \eqref{eq:RePhionSigma} we have
\begin{align*}
\re \Phi(\zeta) = \re \Psi(\zeta) = -\frac{\eta}{2} \re f_{1}(\zeta),
\end{align*}
and by Lemma \ref{lem:LogsIncSigmaMinus1} this expression is constant for $\zeta \in \Sigma_{1}$.
\end{proof}
We choose $\gamma_{\zeta}^{\star}$ and $\gamma_{\omega}^{\star}$ as follows (see also Figure \ref{fig: cases xi=0 for Phi and Psi contour}):
\begin{itemize}
\item $\gamma_{\omega}^{\star} \subset \overline{\mbox{int}(\overline{\Sigma_\alpha \cup \Sigma_{1}})}$, is such that $(\Sigma_{1}\setminus \Sigma_{s}) \subset \gamma_{\omega}^{\star}$,  surrounds $\alpha c$ and $\alpha c^{-1}$, and it satisfies
\begin{align*}
\re \Phi(\omega) > \re \Phi(s), \qquad \omega \in \gamma_{\omega}^{\star} \setminus (\overline{\Sigma_{1}\setminus\Sigma_{s}}),
\end{align*}
\item $\gamma_\zeta^{\star} \subset \overline{\mbox{int}(\gamma_1)}$, is such that $\Sigma_{s} \subset \gamma_\zeta^{\star}$, surrounds $c$ and $c^{-1}$, and it satisfies
\begin{align*}
\re \Phi(\zeta) < \re \Phi(s), \qquad \zeta \in \gamma_{\zeta}^{\star} 	\setminus \Sigma_{s}.
\end{align*}
\end{itemize}

\begin{figure}
\begin{center}
\begin{tikzpicture}[master,scale = 0.5]
\node at (0,0) {};
\node at (2.1,0) {\color{black} \large $\bullet$};
\node at (-0.38,0) {\color{black} \large $\bullet$};
\node at (-1.05,0) {\color{black} \large $\bullet$};
\node at (-2.035,0) {\color{black} \large $\bullet$};
\node at (-3.13,0) {\color{black} \large $\bullet$};

\draw[blue,line width=0.65 mm] ([shift=(-156:3.6cm)]2.1,0) arc (-156:156:3.6cm);

\draw[red,line width=0.65 mm] ([shift=(95:1.45cm)]-1.05,0) arc (95:265:1.45cm);

\node at ($(2.1,0)+(118:3.6)$) {\color{blue} \large $\bullet$};
\node at ($(2.1,0)+(-118:3.6)$) {\color{blue} \large $\bullet$};

\draw[blue, line width=0.65 mm] ($(2.1,0)+(118:3.6)$) to [out=-62, in=90] (1.05,0) to [out=-90, in=62] ($(2.1,0)+(-118:3.6)$);
\draw[blue, line width=0.65 mm] ($(2.1,0)+(156:3.6)$) to [out=-135, in=90] (-2.15,0) to [out=-90, in=135] ($(2.1,0)+(-156:3.6)$);
\draw[blue,line width=0.65 mm] ([shift=(-180:0.45cm)]-0.28,0) arc (-180:180:0.45cm);
\node at (-0.1,0) {$+$};
\node at (1.7,2.5) {$+$};
\node at (0,2.2) {$-$};
\node at (1,4) {$+$};
\end{tikzpicture}\hspace{3cm}\begin{tikzpicture}[slave,scale = 0.5]
\node at (0,0) {};
\node at (2.1,0) {\color{black} \large $\bullet$};
\node at (-0.38,0) {\color{black} \large $\bullet$};
\node at (-1.05,0) {\color{black} \large $\bullet$};
\node at (-2.035,0) {\color{black} \large $\bullet$};
\node at (-3.13,0) {\color{black} \large $\bullet$};

\draw[blue,line width=0.65 mm] ([shift=(-156:3.6cm)]2.1,0) arc (-156:156:3.6cm);

\draw[red,line width=0.65 mm] ([shift=(95:1.45cm)]-1.05,0) arc (95:265:1.45cm);

\node at ($(2.1,0)+(118:3.6)$) {\color{blue} \large $\bullet$};
\node at ($(2.1,0)+(-118:3.6)$) {\color{blue} \large $\bullet$};

\draw[blue, line width=0.65 mm] ($(2.1,0)+(118:3.6)$) to [out=118, in=0] (-3,5) to [out=180,in=90] (-7,0) to [out=-90, in=180] (-3,-5) to [out=0, in=-118] ($(2.1,0)+(-118:3.6)$);
\draw[blue, line width=0.65 mm] ($(2.1,0)+(156:3.6)$) to [out=-80, in=90] (-0.8,0) to [out=-90, in=80] ($(2.1,0)+(-156:3.6)$);
\draw[blue,line width=0.65 mm] ([shift=(-180:0.45cm)]-3.3,0) arc (-180:180:0.45cm);
\node at (-3.4,0) {$-$};
\node at (1.7,2.5) {$-$};
\node at (-0.8,3) {$+$};
\node at (1,4) {$-$};

\draw[dashed] (-8.5,-5)--(-8.5,5);
\end{tikzpicture}
\end{center}
\caption{\label{fig: cases xi=0 for Phi and Psi}
The sets $\mathcal{N}_{\Phi}$ (left) and $\mathcal{N}_{\Psi}$ (right) are represented in blue, and $\Sigma_{\alpha}$ in red. The parameters are $(\alpha,\xi,\eta) = (0.4,0,-0.75)$, and $\Sigma_{1}$ is a subset of both $\mathcal{N}_{\Phi}$ and $\mathcal{N}_{\Psi}$. The signs of $\re (\Phi-\Phi(s))$ and $\re (\Psi-\Psi(s))$ are indicated with $\pm$. In each figure, the black dots represent $0$, $\alpha c$, $\alpha c^{-1}$, $c$ and $c^{-1}$ and the blue dots are $s$ and $\overline{s}$.}
\end{figure}
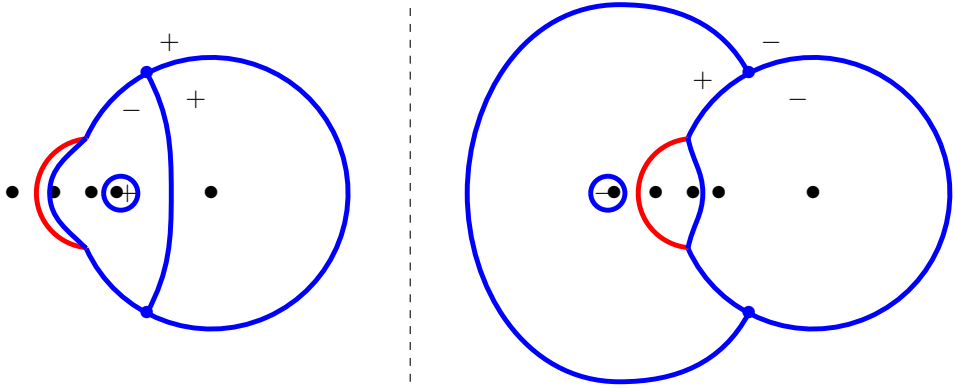
Let $\{(x_{N},y_{N}\}_{N \geq 1}$ be a sequence satisfying \eqref{good sequence} with $(\xi,\eta) \in \mathcal{L}_{\alpha}\cap \{\eta < \frac{\xi}{2}= 0\}$, and define $\xi_{N} := x_{N}/N-1$ and $\eta_{N}:= y_{N}/N-1$. By Lemma \ref{lemma:from one quadrant to the four}, we can (and do) assume without loss of generality that $(\xi_{N},\eta_{N}) \in \mathcal{L}_{\alpha}$ satisfies $\eta_{N}<0$ and $\xi_{N} = 0$ for all $N$. The proof of Proposition \ref{prop:deformationhigh} still goes through with the above choice of $\gamma_{\zeta}^{\star}$ and $\gamma_{\omega}^{\star}$, and as in \eqref{lol27} we obtain
\begin{equation}\label{lol29}
\mathcal I(x_{N},y_{N};H) -\frac{1}{2\pi i} \int_{\overline{s}}^{s} H(\zeta,\zeta) d\zeta =
 \frac{1}{(2\pi i)^2} \int_{\gamma_{\zeta}^{\star}} d\zeta
\int_{\gamma_{\omega}^{\star}}  \frac{d\omega}{\omega-\zeta}H(\omega,\zeta)
\widetilde{\mathcal{R}}^{T,a}(\omega,\zeta) e^{2N(\Phi(\zeta)-\Phi(\omega))},
\end{equation}
where $s = s(\xi_{N},\eta_{N};\alpha)$, $\Phi(\zeta) = \Phi(\zeta;\xi_{N},\eta_{N})$, and the contours $\gamma_{\zeta}^{\star}$ and $\gamma_{\omega}^{\star}$ depend on $N$. We also take the $+$ boundary value in \eqref{lol29} whenever $\omega \in \gamma_{1}$. Since $\re \Phi(\zeta) = \re \Phi(s)$ for all $\zeta \in \Sigma_{s}$ and $\re \Phi(\omega) = \re \Phi(s)$ for all $\omega \in \overline{\Sigma_{1}\setminus \Sigma_{s}}$, we need additional deformation of contours. 

\vspace{0.2cm}We first treat the contour deformations in the $\zeta$-variable. Recall the definition \eqref{def of RTa} of $\widetilde{\mathcal{R}}^{T,a}$. For $\zeta \in \Sigma_{s}$, we use $\Phi_{+}(\zeta) = \Psi_{-}(\zeta)$ and the jumps for $T$ \eqref{jumps for T inside support} to obtain
\begin{align}\label{lol30}
e^{2N\Phi(\zeta)}T(\zeta) \begin{pmatrix}
1 \\ 0
\end{pmatrix} = e^{2N(\Phi_{+}(\zeta)-2\phi_{+}(\zeta))}T_{+}(\zeta) \begin{pmatrix}
0 \\ 1
\end{pmatrix} - e^{2N\Psi_{-}(\zeta)}T_{-}(\zeta)\begin{pmatrix}
0 \\ 1
\end{pmatrix}.
\end{align}
We substitute \eqref{lol30} in \eqref{lol29}, and then split the integral over $\Sigma_{s}\subset \gamma_{\zeta}$ in \eqref{lol29} into two parts. For the second term in \eqref{lol30}, the contour $\Sigma_{s}$ is deformed outwards to $\Sigma_{s,\mathrm{out}}$, see Figure \ref{fig: cases xi=0 for Phi and Psi further deformations}. Because $\Psi_{\pm}(\zeta)=\Phi_{\mp}(\zeta)$ for $\zeta \in \Sigma_{1}$, and since $\Sigma_{1}\subset \mathcal{N}_{\Phi}\cap \mathcal{N}_{\psi}$, the signs of 
\begin{align*}
\re \big(\Phi(\zeta + \epsilon (\zeta-c^{-1}))-\Phi(s)\big) \qquad \mbox{ and } \qquad \re \big(\Psi(\zeta-\epsilon (\zeta-c^{-1})))-\Psi(s)\big)
\end{align*}
are different for all $\zeta \in \Sigma_{1}$, provided $\epsilon = \epsilon(\zeta) \in \mathbb{R}$ is small enough ($\epsilon$ non necessarily positive), see also the signs around $\Sigma_{1}$ in Figure \ref{fig: cases xi=0 for Phi and Psi}. In particular, we have $\re \Psi(\zeta) < \re \Psi(s)$ for $\zeta \in \Sigma_{s,\mathrm{out}}$. For the first term in \eqref{lol30}, the dominant part is $e^{2N(\Phi_{+}(\zeta)-2 \phi_{+}(\zeta))}$, and by Definition \ref{def: Phi and Psi}, we have $\Psi = \Phi-2\phi$. Therefore, we deform $\Sigma_{s}$ inwards to $\Sigma_{s,\mathrm{in}}$, and this contour is chosen such that $\re \Psi(\zeta) < \re \Psi(s)$ for $\zeta \in \Sigma_{s,\mathrm{in}}$, see Figure \ref{fig: cases xi=0 for Phi and Psi further deformations}.

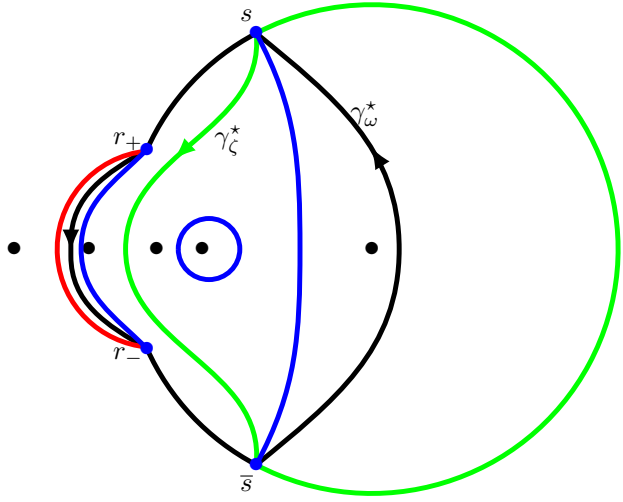
\begin{figure}
\begin{center}
\begin{tikzpicture}[master,scale = 0.9]
\node at (0,0) {};
\node at (2.1,0) {\color{black} \large $\bullet$};
\node at (-0.38,0) {\color{black} \large $\bullet$};
\node at (-1.05,0) {\color{black} \large $\bullet$};
\node at (-2.035,0) {\color{black} \large $\bullet$};
\node at (-3.13,0) {\color{black} \large $\bullet$};

\draw[green,line width=0.65 mm] ([shift=(-118:3.6cm)]2.1,0) arc (-118:118:3.6cm);
\draw[green,->-=0.30, line width=0.65 mm] ($(2.1,0)+(118:3.6)$) to [out=-84.5, in=90] (-1.5,0) to [out=-90, in=84.5] ($(2.1,0)+(-118:3.6)$);
\node at (-0,1.6) {$\gamma_{\zeta}^{\star}$};
\draw[black,line width=0.65 mm] ([shift=(-156:3.6cm)]2.1,0) arc (-156:-118:3.6cm);
\draw[black,line width=0.65 mm] ([shift=(118:3.6cm)]2.1,0) arc (118:156:3.6cm);
\draw[black,-<-=0.30, line width=0.65 mm] ($(2.1,0)+(118:3.6)$) to [out=-39.5, in=90] (2.5,0) to [out=-90, in=39.5] ($(2.1,0)+(-118:3.6)$);
\draw[black,->-=0.51, line width=0.65 mm] ($(2.1,0)+(156:3.6)$) to [out=-150, in=90] (-2.3,0.1) to [out=-90, in=90] (-2.3,0) to [out=-90, in=90] (-2.3,-0.1) to [out=-90, in=150] ($(2.1,0)+(-156:3.6)$);
\node at (2,2) {$\gamma_{\omega}^{\star}$};

\draw[red,line width=0.65 mm] ([shift=(95:1.45cm)]-1.05,0) arc (95:265:1.45cm);

\node at ($(2.1,0)+(118:3.6)$) {\color{blue} \large $\bullet$};
\node at ($(2.1,0)+(-118:3.6)$) {\color{blue} \large $\bullet$};
\node at ($(2.1,0)+(156:3.6)$) {\color{blue} \large $\bullet$};
\node at ($(2.1,0)+(-156:3.6)$) {\color{blue} \large $\bullet$};
\node at ($(2.1,0)+(118:3.9)$) {$s$};
\node at ($(2.1,0)+(-118:3.9)$) {$\overline{s}$};
\node at ($(2.1,0)+(156:3.9)$) {$r_{+}$};
\node at ($(2.1,0)+(-156:3.9)$) {$r_{-}$};

\draw[blue, line width=0.65 mm] ($(2.1,0)+(118:3.6)$) to [out=-62, in=90] (1.05,0) to [out=-90, in=62] ($(2.1,0)+(-118:3.6)$);
\draw[blue, line width=0.65 mm] ($(2.1,0)+(156:3.6)$) to [out=-135, in=90] (-2.15,0) to [out=-90, in=135] ($(2.1,0)+(-156:3.6)$);
\draw[blue,line width=0.65 mm] ([shift=(-180:0.45cm)]-0.28,0) arc (-180:180:0.45cm);
\end{tikzpicture}
\end{center}
\caption{\label{fig: cases xi=0 for Phi and Psi contour}
The contours $\gamma_{\zeta}^{\star}$ (green) and $\gamma_{\omega}^{\star}$ (black), with $(\alpha,\xi,\eta) = (0.4,0,-0.75)$.}
\end{figure}

\vspace{0.2cm}In the $\omega$-variable, we simply analytically continue the integrand and deform $\Sigma_{1}\setminus \Sigma_{s}$ outwards to $(\gamma_{1}\setminus \Sigma_{s})_{\mathrm{ext}}$, see Figure \ref{fig: cases xi=0 for Phi and Psi further deformations}. This contour is chosen such that $\re \Psi(\omega) > \re \Psi(s)$ for $\omega \in (\gamma_{1}\setminus \Sigma_{s})_{\mathrm{ext}}$. Since $\Phi_{+}(\omega) = \Psi_{-}(\omega)$ on $\Sigma_1$, the exponential factor of the integrand is $e^{-2N\Psi(\omega)}$ there. Also, for $\omega \in \Sigma_{1}\setminus \Sigma_{s}$, by \eqref{jumps for T inside support} we have
\begin{align}\label{lol31}
\begin{pmatrix} 1 & 0 \end{pmatrix} T_{+}^{-1}(\omega) = \begin{pmatrix} e^{-4N \phi_-(\omega)} & -1 \end{pmatrix} T_{-1}^{-1}(\omega),
\end{align} 
and we know from Lemma \ref{lem:Nphihigh} that $e^{-4N \phi(\omega)}$ remains bounded for $\omega \in (\gamma_{1}\setminus \Sigma_{s})_{\mathrm{ext}}$. 

\vspace{0.2cm}The result of the above deformations is that the integrand is uniformally exponentially small on the contours, as long as $\zeta$ stays away from $s,\overline{s}$, and that $\omega$ stays away from $s,\overline{s},r_{+},r_{-}$. By a similar analysis as the one done in \eqref{lol32}, we show that the contribution to \eqref{lol29} when $\zeta$ and $\omega$ are close to $s$ or $\overline{s}$ is $\bigO(N^{-\frac{1}{2}})$ as $N \to + \infty$. When $\omega$ is close to $r_{\pm}$, we know by Proposition \ref{prop:TandTinvsmall} that $T^{-1}(\omega) = \bigO(N^{1/6})$. Since $\Phi'(r_{\pm}) \neq 0 \neq \Psi'(r_{\pm})$, the contribution to \eqref{lol29} when $\zeta$ is close to $s$ or $\overline{s}$ and simultaneously $\omega$ close to $r_{+}$ or $r_{-}$ is
\begin{align*}
\leq C_{1}N^{\frac{1}{6}}\iint_{|x|^{2}+|y|^{2} \leq \epsilon^{2}} e^{-C_{2}N(|x|+y^{2})}dxdy \leq C_{3} N^{-\frac{17}{6}}
\end{align*}
for certain constant $C_{1},C_{2},C_{3}>0$ and all large enough $N$. In particular, this proves \eqref{eq:Hintegral}.

\subsection{The case $\xi = 0$ and $\eta = 0$}
At the center of the hexagon, we have $s = s(0,0;\alpha) = r_{+}$, $\overline{s} = r_{-}$, and $\Phi = -\Psi = \phi$ (see also Definition \ref{def: Phi and Psi}). The sets $\mathcal{N}_{\Phi}$ and $\mathcal{N}_{\Psi}$ are then given by Lemma \ref{lem:Nphihigh}:
\begin{align*}
\mathcal{N}_{\Phi} = \mathcal{N}_{\Psi} = \mathcal{N}_{\phi} = \Sigma_{0} \cup \Sigma_{\alpha} \cup \Sigma_{1}.
\end{align*}
Note that for $(\xi',\eta')=(0,\eta') \in \mathcal{L}_{\alpha}$ with $\eta'<0$, part of contour $\gamma_{\omega}^{\star}$ lies in the region $\mbox{int}(\overline{\Sigma_{\alpha} \cup \Gamma_{1}})$, see Figures \ref{fig: cases xi=0 for Phi and Psi} and \ref{fig: cases xi=0 for Phi and Psi further deformations}. As $\eta'\to \eta = 0$, $\Gamma_{1}$ tends to $\Sigma_{\alpha}$, so we need additional contour deformations to handle this case.
Consider the contours $\gamma_{\zeta}^{\star}:= \gamma_{1}$ and $\gamma_{\omega}^{\star} = \gamma_{\alpha}$. By Lemma \ref{lem:Nphihigh}, we have
\begin{align*} 
& \re \Phi(\omega) > 0,  \quad \mbox{for } \omega \in \gamma_\alpha \setminus \overline{\Sigma_{\alpha}} & & \re \Phi(\omega) = 0, \quad \mbox{for }  \omega \in \Sigma_{\alpha}, \\
& \re \Phi(\zeta) < 0, \quad \mbox{for } \zeta \in \gamma_{1} \setminus \overline{\Sigma_{1}} & & \re \Phi(\zeta) = 0, \quad \mbox{for } \zeta \in \Sigma_1.
\end{align*}

\begin{figure}
\begin{center}
\begin{tikzpicture}[master,scale = 0.9]
\node at (0,0) {};
\node at (2.1,0) {\color{black} \large $\bullet$};
\node at (-0.38,0) {\color{black} \large $\bullet$};
\node at (-1.05,0) {\color{black} \large $\bullet$};
\node at (-2.035,0) {\color{black} \large $\bullet$};
\node at (-3.13,0) {\color{black} \large $\bullet$};

\draw[blue,line width=0.65 mm] ([shift=(-118:3.6cm)]2.1,0) arc (-118:118:3.6cm);
\draw[green,->-=0.30, line width=0.65 mm] ($(2.1,0)+(118:3.6)$) to [out=-84.5, in=90] (-0.9,0) to [out=-90, in=84.5] ($(2.1,0)+(-118:3.6)$);
\draw[green,-<-=0.5,line width=0.65 mm] ($(2.1,0)+(118:3.6)$) to [out=118-45, in=90] (6.5,0) to [out=-90, in=-118+45] ($(2.1,0)+(-118:3.6)$);
\node at (7,1.6) {$\Sigma_{s,\mathrm{out}}$};
\draw[green,-<-=0.5,line width=0.65 mm] ($(2.1,0)+(118:3.6)$) to [out=118-45-60, in=90] (5,0) to [out=-90, in=-118+45+60] ($(2.1,0)+(-118:3.6)$);
\node at (4,1.6) {$\Sigma_{s,\mathrm{in}}$};
\draw[blue,line width=0.65 mm] ([shift=(-156:3.6cm)]2.1,0) arc (-156:-118:3.6cm);
\draw[blue,line width=0.65 mm] ([shift=(118:3.6cm)]2.1,0) arc (118:156:3.6cm);
\draw[black,->-=0.50, line width=0.65 mm] ($(2.1,0)+(118:3.6)$) to [out=118+45, in=-156-75] ($(2.1,0)+(156:3.6)$);
\draw[black,-<-=0.50, line width=0.65 mm] ($(2.1,0)+(-118:3.6)$) to [out=-118-45, in=156+75] ($(2.1,0)+(-156:3.6)$);
\draw[black,-<-=0.30, line width=0.65 mm] ($(2.1,0)+(118:3.6)$) to [out=-39.5, in=90] (2.5,0) to [out=-90, in=39.5] ($(2.1,0)+(-118:3.6)$);
\draw[black,->-=0.51, line width=0.65 mm] ($(2.1,0)+(156:3.6)$) to [out=-150, in=90] (-2.3,0.1) to [out=-90, in=90] (-2.3,0) to [out=-90, in=90] (-2.3,-0.1) to [out=-90, in=150] ($(2.1,0)+(-156:3.6)$);
\node at (-2.2,2.6) {$(\gamma_{1}\setminus \Sigma_{s})_{\mathrm{out}}$};

\node at (-2.2,-2.6) {$(\gamma_{1}\setminus \Sigma_{s})_{\mathrm{out}}$};

\draw[red,line width=0.65 mm] ([shift=(95:1.45cm)]-1.05,0) arc (95:265:1.45cm);

\node at ($(2.1,0)+(118:3.6)$) {\color{blue} \large $\bullet$};
\node at ($(2.1,0)+(-118:3.6)$) {\color{blue} \large $\bullet$};
\node at ($(2.1,0)+(156:3.6)$) {\color{blue} \large $\bullet$};
\node at ($(2.1,0)+(-156:3.6)$) {\color{blue} \large $\bullet$};
\node at ($(2.1,0)+(118:3.9)$) {$s$};
\node at ($(2.1,0)+(-118:3.9)$) {$\overline{s}$};
\node at ($(2.1,0)+(156:3.9)$) {$r_{+}$};
\node at ($(2.1,0)+(-156:3.9)$) {$r_{-}$};

\draw[blue, line width=0.65 mm] ($(2.1,0)+(118:3.6)$) to [out=-62, in=90] (1.05,0) to [out=-90, in=62] ($(2.1,0)+(-118:3.6)$);
\draw[blue, line width=0.65 mm] ($(2.1,0)+(156:3.6)$) to [out=-135, in=90] (-2.15,0) to [out=-90, in=135] ($(2.1,0)+(-156:3.6)$);
\draw[blue,line width=0.65 mm] ([shift=(-180:0.45cm)]-0.28,0) arc (-180:180:0.45cm);
\end{tikzpicture}
\end{center}
\vspace{-2cm}
\caption{\label{fig: cases xi=0 for Phi and Psi further deformations}
The further contour deformations that we need to consider to handle the case $\xi = 0$ and $\eta < 0$.}
\end{figure}
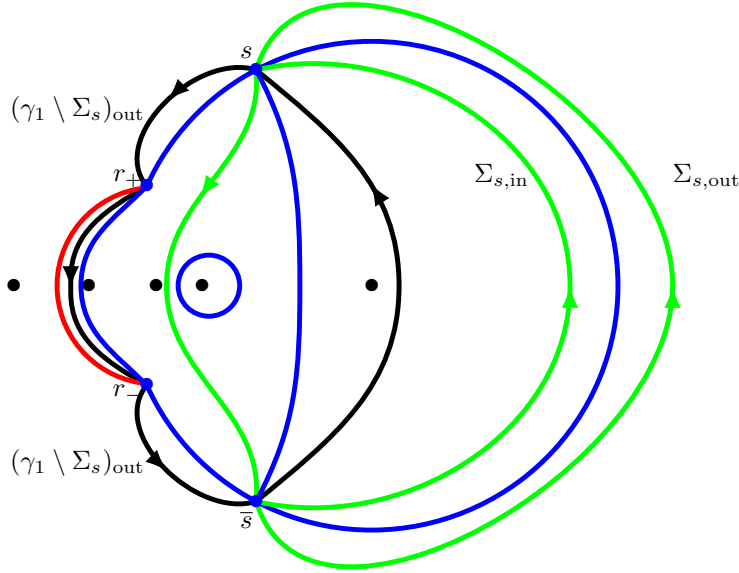
For simplicity, we consider the sequence $\{(x_{N},y_{N})=(N,N)\}_{N \geq 1}$, so that $\xi_{N} := x_{N}/N-1 =0$ and $\eta_{N}:= y_{N}/N-1 =0$ for all $N$. In the same way as done in Proposition \ref{prop:deformationhigh}, we find
\begin{equation}\label{lol33}
\mathcal I(x_{N},y_{N};H) -\frac{1}{2\pi i} \int_{r_{-}}^{r_{+}} H(\zeta,\zeta) d\zeta =
 \frac{1}{(2\pi i)^2} \int_{\gamma_{\zeta}^{\star}} d\zeta
\int_{\gamma_{\omega}^{\star}}  \frac{d\omega}{\omega-\zeta}H(\omega,\zeta)
\widetilde{\mathcal{R}}^{T,a}(\omega,\zeta) e^{2N(\Phi(\zeta)-\Phi(\omega))},
\end{equation}
and we take the $+$ boundary value in \eqref{lol33} whenever $\omega \in \Sigma_{\alpha}$.

\vspace{0.2cm}For $\zeta \in \Sigma_{1}$, we use \eqref{lol30} to split the integrand into two parts, and again we deform the integral associated to the first term slightly inwards, and the other one slightly outwards. As a result, both deformed integrals have exponentially decaying integrands.

\vspace{0.2cm}For $\omega \in \Sigma_{\alpha}$, $\widetilde{\mathcal{R}}^{T,a}(\omega,\zeta)$ is given by the second line of \eqref{def of RTa}, and thus the dominant $\omega$-part in the integrand is
\begin{align*}
e^{-2N \Phi(\omega)} \begin{pmatrix} 1 & - e^{4N \phi(\omega)} \end{pmatrix} T^{-1}(\omega) = e^{-2N \phi(\omega)} \begin{pmatrix} 1 & 0 \end{pmatrix} T^{-1}(\omega) - e^{2N \phi(\omega)} \begin{pmatrix} 0 & 1 \end{pmatrix} T^{-1}(\omega).
\end{align*}
For the first term, we deform $\Sigma_{\alpha}$ outwards so that $\re \phi(\omega) > 0$, and for the first term, we deform $\Sigma_{\alpha}$ inwards so that $\re \phi(\omega) < 0$. 

\vspace{0.2cm}On the deformed contours, the integrand is uniformly exponentially small, as long as $\zeta$ and $\omega$ are bounded away from $r_{+}$ and $r_{-}$. For $\zeta$ and $\omega$ close to $r_{\pm}$, by Proposition \ref{prop:TandTinvsmall} we have $T(\zeta) = \bigO(N^{1/6})$ and $T^{-1}(\omega) = \bigO(N^{1/6})$. The contribution to \eqref{lol33} when $\zeta$ and $\omega$ are close to $r_{+}$ and $r_{-}$ is thus bounded by
\begin{align*}
\leq C_{1}N^{\frac{1}{3}}\iint_{|x|^{2}+|y|^{2} \leq \epsilon^{2}}\frac{e^{-C_{2}N(x^{2}+y^{2})}}{\sqrt{x^{2}+y^{2}}}dxdy \leq C_{3} N^{-\frac{1}{6}}
\end{align*}
for certain $C_{1},C_{2},C_{3}>0$ and for all large enough $N$. This finishes the proof of Proposition \ref{prop:doubleintegrallimit}.

\appendix

\small

\end{document}